\newif\ifappendix
\newif\ifnoappendix
\appendixtrue
\noappendixfalse

\ifappendix
\documentclass[sigconf]{acmart}
\else 
\documentclass[acmsmall]{acmart}
\fi
\AtBeginDocument{%
  }

\setcopyright{cc}
\setcctype{by-nc-nd}
\acmJournal{PACMMOD}
\acmYear{2026} \acmVolume{4} \acmNumber{3 (SIGMOD)} \acmArticle{239}
\acmMonth{6} \acmDOI{10.1145/3802116}

\usepackage{array}
\usepackage{breqn}
\usepackage{booktabs,makecell,multirow}
\usepackage[makeroom]{cancel}
\usepackage{caption}
\usepackage{enumitem}
\usepackage{wrapfig}
\usepackage{tikz,pgfplots}
\usetikzlibrary{positioning}
\usetikzlibrary{arrows}
\usetikzlibrary{patterns}
\usetikzlibrary{shapes.geometric}
\usetikzlibrary{decorations.pathreplacing}
\usetikzlibrary{calligraphy}
\usetikzlibrary{hobby}
\usetikzlibrary{calc}

\usepackage{pifont}

\usepackage{thm-restate}
\newtheorem{theorem}{Theorem}[section]

\newtheorem{lemma}[theorem]{Lemma}
\newtheorem{claim}[theorem]{Claim}

\usepackage{cleveref}

\usepackage{xspace}
\newcommand{\sketch}{Sublime}
\newcommand{\sketchcms}{\sketch\textsubscript{CMS}\xspace}
\newcommand{\sketchcs}{\sketch\textsubscript{CS}\xspace}
\newcommand{\sketchmg}{\sketch\textsubscript{MG}\xspace}
\newcommand{\counterencoding}{\underline{va}riable-\underline{l}ength count\underline{e}r}
\newcommand{\counterencodingabbrv}{VALE}
\newcommand{\parentheses}[1]{\left(#1\right)}
\newcommand{\expectation}[1]{\mathbb{E}\left[ #1 \right]}
\newcommand{\variance}[1]{\mathrm{Var}\left( #1 \right)}

\graphicspath{{./figures/}}

\ifappendix
\usepackage{draftwatermark}
\SetWatermarkScale{0.2}
\SetWatermarkAngle{270}
\SetWatermarkText{Accepted to SIGMOD 2026}
\SetWatermarkHorCenter{0.95\paperwidth}
\SetWatermarkVerCenter{0.5\paperheight}
\fi

\begin{document}

%%
%% The "title" command has an optional parameter,
%% allowing the author to define a "short title" to be used in page headers.
\title{\sketch: Sublinear~Error~\&~Space~for~Unbounded~Skewed~Streams}

%%
%% The "author" command and its associated commands are used to define
%% the authors and their affiliations.
\author{Navid Eslami}
\email{navideslami@cs.toronto.edu}
\affiliation{%
  \institution{University of Toronto}
  \city{Toronto}
  \country{Canada}
}

\author{Ioana O. Bercea}
\email{bercea@kth.se}
\affiliation{%
  \institution{KTH Royal Institute of Technology}
  \city{Stockholm}
  \country{Sweden}
}

\author{Rasmus Pagh}
\email{pagh@di.ku.dk}
\affiliation{%
  \institution{BARC, University of Copenhagen}
  \city{Copenhagen}
  \country{Denmark}
}

\author{Niv Dayan}
\email{nivdayan@cs.toronto.edu}
\affiliation{%
  \institution{University of Toronto}
  \city{Toronto}
  \country{Canada}
}

%%
%% By default, the full list of authors will be used in the page
%% headers. Often, this list is too long, and will overlap
%% other information printed in the page headers. This command allows
%% the author to define a more concise list
%% of authors' names for this purpose.
%\renewcommand{\shortauthors}{Trovato et al.}

%%
%% The abstract is a short summary of the work to be presented in the
%% article.
\begin{abstract}
    Modern stream processing systems often need to track the frequency of
    distinct keys in a data stream in real-time. Since maintaining exact counts
    can require a prohibitive amount of memory, many applications rely on
    compact, probabilistic data structures known as frequency estimation
    sketches to approximate them. However, mainstream frequency estimation
    sketches fall short in two critical aspects. First, they are
    memory-inefficient under skewed workloads because they use uniformly-sized
    counters to count the keys, thus wasting memory on storing the leading
    zeros of many small counts. Second, their estimation error deteriorates at
    least linearly with the length of the stream{\textemdash}which may grow
    indefinitely{\textemdash}because they rely on a fixed number of counters.

    We present \sketch, a framework that generalizes frequency estimation
    sketches to address these challenges. To reduce memory footprint under
    skew, \sketch\ begins with short counters and dynamically elongates them as
    they overflow, storing their extensions within the same cache line. It
    employs efficient bit manipulation routines to quickly locate and access a
    counter's extensions. To maintain accuracy as the stream grows, \sketch\
    also expands its number of counters at a configurable rate, exposing a new
    spectrum of accuracy-memory tradeoffs that applications can tune to their
    needs. We apply \sketch\ to both Count-Min Sketch and Count Sketch. Through
    theoretical analysis and empirical evaluation, we show that \sketch\
    significantly improves accuracy and memory over the state of the art while
    maintaining competitive or \mbox{superior~performance}.
\end{abstract}

%%
%% The code below is generated by the tool at http://dl.acm.org/ccs.cfm.
%% Please copy and paste the code instead of the example below.
%%
\begin{CCSXML}
    <ccs2012>
        <concept>
            <concept_id>10003752.10003809.10010055.10010057</concept_id>
            <concept_desc>Theory of computation~Sketching and sampling</concept_desc>
            <concept_significance>500</concept_significance>
        </concept>
        <concept>
            <concept_id>10003752.10003809.10010055.10010056</concept_id>
            <concept_desc>Theory of computation~Bloom filters and hashing</concept_desc>
            <concept_significance>500</concept_significance>
        </concept>
        <concept>
            <concept_id>10003752.10003809.10010055.10010058</concept_id>
            <concept_desc>Theory of computation~Lower bounds and information complexity</concept_desc>
            <concept_significance>500</concept_significance>
        </concept>
        <concept>
            <concept_id>10002951.10003227.10003351.10003446</concept_id>
            <concept_desc>Information systems~Data stream mining</concept_desc>
            <concept_significance>500</concept_significance>
        </concept>
    </ccs2012>
\end{CCSXML}
\ccsdesc[500]{Theory of computation~Sketching and sampling}
\ccsdesc[500]{Theory of computation~Bloom filters and hashing}
\ccsdesc[500]{Theory of computation~Lower bounds and information complexity}
\ccsdesc[500]{Information systems~Data stream mining}

%%
%% Keywords. The author(s) should pick words that accurately describe
%% the work being presented. Separate the keywords with commas.
\keywords{Frequency Estimation Sketch, Data Growth, Scalability.}

\received{17 October 2025}
\received[revised]{5 February 2026}
\received[accepted]{11 March 2026}

%%
%% This command processes the author and affiliation and title
%% information and builds the first part of the formatted document.
\maketitle

\section{Introduction}~\label{sec:introduction}
\textbf{Data Streams.}
With data volumes growing exponentially, it has become infeasible for many
applications to compute statistics by scanning entire datasets. At the same
time, many modern systems must monitor and analyze data streams that grow at
high speeds in real time, where storing the full stream (e.g., all packets
transmitted over a network) is impractical. Nevertheless, users rely on these
statistics to make timely decisions and extract insights from the data (e.g.,
to detect anomalies in network
traffic~\cite{NetworkAnomalyDetection1,NetworkAnomalyDetection2}, identify
trending content on social media networks~\cite{ScoutSketch}, or monitor system
performance~\cite{OctoSketch,Canopy}). One way to address this challenge is to
incrementally maintain statistics as new data arrives or old data expires. This
approach works well for simple metrics such as sums or means, which can be
updated exactly using basic arithmetic operations~\cite{Canopy}. However, many
other important statistics, such as cardinalities, quantiles, or item
frequencies, cannot be maintained so easily. To address this challenge, data
sketches have emerged as a popular approach. A data sketch is a compact data
structure maintaining a small amount of information about the data that
approximates a given statistic while providing accuracy guarantees.

\textbf{Frequency Estimation Sketches.}
One of the most fundamental statistics in data stream analysis is
frequency{\textemdash}the number of times each key appears in the stream.
Tracking these counts exactly requires maintaining a counter for every distinct
key, which leads to a substantial memory overhead if the stream contains many
unique keys. To overcome this issue, a \emph{Frequency Estimation Sketch}
approximates the counts using a compact, fixed-size structure. At a high level,
this structure maps the key domain to a smaller codomain of counters (e.g., via
hashing) and uses the counter a key maps to as an estimate of that key's
frequency. In this way, a frequency estimation sketch significantly reduces
memory usage, though it introduces estimation errors due to the possibility of
multiple keys mapping to the same counter. Frequency estimation sketches have
become key components across many systems, including query
optimizers~\cite{Compass,QueryOptimizationSketches,ConvolutionSketch},
network~monitors~\cite{SketchBasedChangeDetection,WhatsNew,SketchingStreamsNet,TrafficManagementSurvey,RandomizedAdmissionEstimation,Blink,LOFT},
streaming SQL engines~\cite{Flink}, sensor data
analyzers~\cite{SensorDataAnalytics,RobustAggregationSensorNetworks}, and
genome analysis pipelines~\cite{CMSkmer,SetMinSketch}.

\textbf{Different Types of Error.}
Frequency estimation sketches differ in the nature of the errors they
introduce. The Count-Min Sketch~\cite{CountMin} either matches or overestimates
a key's true count. This one-sided error makes the Count-Min Sketch suitable
for security-critical applications such as detecting
denial-of-service~\cite{CoDDoS} and hardware row hammer~\cite{CoMeT} attacks,
where misclassifying malicious behavior is unacceptable. The Count
Sketch~\cite{CountSketch} can both overestimate and underestimate a key's
count. Yet, these errors cancel out on average, yielding unbiased estimates.
This property prevents the error from compounding when combining or multiplying
estimates, making the Count Sketch useful for tasks such as query
optimization~\cite{Compass,QueryOptimizationSketches,ConvolutionSketch} and
covariance estimation~\cite{ASCS}. The Misra-Gries structure~\cite{MisraGries}
either matches or underestimates a key's true count. This makes Misra-Gries
effective in identifying the most frequent keys (i.e., the heavy
hitters)~\cite{NetworkAnomalyDetection1,NetworkAnomalyDetection2}, since the
keys it tracks are guaranteed to have a relatively high frequency. Many
variants of these data structures have been proposed, each offering distinct
tradeoffs between accuracy, space, and
performance~\cite{CountMeanMin,SALSA,BitSense,PyramidSketch,StingySketch,AdaptiveCounterSplicing,CounterTree,TreeSensing,TailoredSketch,AugmentedSketch,HeavyGuardian,HeavyKeeper,WavingSketch,ElasticSketch,MVSketch,SpaceSaving,BatchUpdateMisraGries,BitMatcher,JigsawSketch,MicroscopeSketch,PSketch,TightSketch}.

\textbf{Goals.}
Ideally, a frequency estimation sketch should provide high accuracy using a
modest memory footprint. At the same time, it should attain consistently high
performance as a data stream grows. This paper shows that all existing
frequency estimation sketches fall short of at least one of these goals due to
two~core~problems:

\textbf{Problem~1: Skew.}
In many real-world data streams, key frequencies are highly skewed, with a
small number of heavy hitters appearing far more than other keys (e.g.,
following a Zipfian or, more generally, a heavy-tailed
distribution)~\cite{Memento,AdditiveErrorCounters,StingySketch}. As a result,
most counters within a frequency estimation sketch hold small values while only
a few hold large values. Existing frequency estimation sketches use
uniformly-sized counters, which must be long enough to represent the maximum
counter value. Consequently, they waste a significant amount of memory storing
the unused higher-order bits of small counter values. Several prior works
attempt to address this problem by utilizing smaller counters and merging or
sharing those that
overflow~\cite{SALSA,BitSense,PyramidSketch,StingySketch,AdaptiveCounterSplicing,CounterTree,TreeSensing,TailoredSketch}.
We show that these approaches compromise either accuracy or performance.
Therefore, minimizing memory wastage under skew remains a challenge.

\textbf{Problem~2: Unbounded Data Growth.}
In most applications, data streams continuously grow over
time~\cite{Compass,QueryOptimizationSketches,ConvolutionSketch,SketchBasedChangeDetection,WhatsNew,TrafficManagementSurvey}.
Yet, all existing frequency estimation sketches are allocated with a fixed size
from the get-go. This causes their estimation error to scale at least linearly
with the stream's unknown length. One cannot construct a larger frequency
estimation sketch to control the error without losing all prior information, as
the stream is too large to be stored, rescanned, and reinserted into a new
sketch. Thus, applications are unable to maintain tight~error~bounds~as the
stream grows. This problem is critical when deploying frequency estimation
sketches at scale, as recently reported by the creators of the Apache Data
Sketches library~\cite{SimonsYahoo,DataSketches}. Recent works have tackled the
challenge of accommodating unbounded data growth for filter data
structures~\cite{PaghExpandability,TaffyFilters,InfiniFilter,AlephFilter,Zeno,BeyondBloomTutorial,Memento,Diva,Aeris}.
This paper is the first to address this problem for frequency estimation
sketches. In fact, to the best of our knowledge, this problem has not
previously been studied in the context of any type of sketch. We expect this
work to open up interesting directions for sketches supporting other
statistics, such as cardinalities~\cite{HLL,HLLL,ULL},
quantiles~\cite{GK,KLL,SplineSketch}, graph~analytics~\cite{LpSamplers,AGM}.

\textbf{Solution: \sketch.}
We introduce \sketch\footnote{\sketch\ is a play on words referring to the
\textbf{subli}n\textbf{e}arly scaling error and memory footprint of our
framework. It also describes, in the philosophical tradition of Immanuel Kant,
the feeling of awe experienced when contemplating something vast or infinite.
This notion of infinity resonates with the theme of our work, which addresses
the challenge of accommodating indefinite data growth.}, a framework that
generalizes frequency estimation sketches to address the above two challenges.
To address Problem~1 (Skew), \sketch\ allocates short counters upfront and
dynamically extends them as they overflow. It colocates each counter and its
extension within the same cache line and employs specialized bit manipulation
routines to decode them in constant time. To resolve Problem~2 (Unbounded Data
Growth), \sketch\ expands the overall number of counters as the stream grows.
In doing so, it achieves sublinear error bounds and memory footprint with
respect to the stream's length. It further introduces a new Pareto frontier
between these metrics, allowing applications to pick the tradeoff that best
suits their needs.

\textbf{Additional Contributions.}
\begin{enumerate}
    \item We show that while traditional Count-Min and Count Sketches exhibit
        an estimation error that grows linearly with the stream size, variants
        designed to handle
        skew~\cite{BitSense,PyramidSketch,StingySketch,AdaptiveCounterSplicing,CounterTree,TreeSensing,TailoredSketch,SALSA}
        cause the error to grow super-linearly, raising the question of how to
        address Problem~1 without exacerbating Problem~2
        (\Cref{sec:problem_analysis_skew}).

    \item We are the first to identify the problem of designing frequency
        estimation sketches that maintain both error and memory usage sublinear
        in the stream length
        (\Cref{sec:problem_analysis_unbounded_growth}).

    \item We introduce a constant-time \underline{va}riable-\underline{l}ength
        data \underline{e}ncoding scheme~(\counterencodingabbrv), applicable
        beyond frequency estimation sketches, and use it to encode
        variable-length counters
        (\mbox{\Cref{sec:accommodating_skew}}).

    \item We describe how to expand and contract a frequency estimation sketch
        based on the stream's length while supporting constant-time queries and
        enabling fine-grained control over the accuracy-memory tradeoff. We also
        characterize this new tradeoff space
        (\Cref{sec:accommodating_unknown_stream_lengths}).

    \item To demonstrate \sketch's generality, we apply it to three
        representative frequency estimation sketches: Count-Min Sketch
        (Sections~\ref{sec:accommodating_skew}
        and
        \ref{sec:accommodating_unknown_stream_lengths}),
        Count Sketch
        (\Cref{sec:cs}),
        and Misra-Gries
    \ifappendix
        (\Cref{sec:mg}).
    \else
        (the Appendix~\cite{SublimeArxiv}).
    \fi

    \item We prove a lower bound on the minimum space required for any
        expandable frequency estimation sketch to achieve a target error. We
        show that \sketch's memory footprint approximately meets this bound
        (\Cref{sec:memory_analysis}).

    \item We evaluate \sketch\ against state-of-the-art frequency estimation
        sketches and show that it simultaneously 1)~reduces the memory
        footprint under skew, 2)~tightly bounds error as the stream grows, and
        3)~maintains equal or better performance. We also apply \sketch\ to
        estimate the size of a join between two tables and show that it
        produces more accurate results than other frequency estimation sketches
        (\Cref{sec:evaluation}).
\end{enumerate}
\vspace{-2mm}

\begin{table}
    \centering
    \bgroup
    \def\arraystretch{1.025}
    \small
    \begin{tabular}{cc}
        \toprule
        \textbf{Symbol} & \textbf{Definition} \\
        \midrule
        $N$ & The sum of the counts of all keys in the stream. \\
        $x, y$ & Example keys from the stream. \\
        $f(x)$ & The ground-truth count of key~$x$. \\
        $\hat{f}(x)$ & The estimated count of key~$x$. \\
        $w$ & Number of counters/slots. \\
        $d$ & Number of independent arrays. \\
        $h_i(x)$ & Hash function mapping key~$x$ to a counter. \\
        $\sigma_i(x)$ & Hash function defining key~$x$'s update direction ($\pm 1$). \\
        %$N^{\text{res}(i)}$ & Total count of all but the~$i$ most frequent keys. \\
        \cmidrule{1-2}
        $c$ & Number of counters in each chunk. \\
        $s$ & Length of each stub in bits. \\
        $W(\cdot)$ & Size function. \\
        %\cmidrule{1-2}
        %AAE & Average Absolute Error, i.e., $\text{avg}_x|\hat{f}(x) - f(x)|$. \\
        %ARE & Average Relative Error, i.e., $\text{avg}_x\frac{|\hat{f}(x) - f(x)|}{f(x)}$. \\
        %\makecell[c]{Top-$w$ \\[-2pt] AAE} & Average Absolute Error among the~$w$ most frequent items. \\
        \bottomrule
    \end{tabular}
    \egroup
    \caption{Definitions of terms and symbols.}
    \ifnoappendix
      \vspace{-8mm}
    \fi
    \label{tab:term_and_symbol_definitions}
\end{table}

\section{Background}~\label{sec:background}
A stream is a sequence of key-value pairs~$\left\langle (x_1,\Delta_1),
(x_2,\Delta_2), \dots \right\rangle$, where~$x_i$ is the $i$-th key that
appears in the stream and~$\Delta_i = \pm1$ represents whether its count was
incremented due to an insertion ($\Delta_i=1$) or decremented due to a deletion
($\Delta_i=-1$). This is known as the \emph{General Turnstile Model} of data
streams~\cite{StreamModels}. We denote key~$x$'s current count as~$f(x)$,
accounting for its insertions minus its deletions. We represent the current
total count of the keys in the stream with~$N=\sum_x f(x)=\sum_i \Delta_i$.
\Cref{tab:term_and_symbol_definitions}
lists the terms and symbols used
throughout the paper.

\textbf{Exhaustive Approach.}
The simplest approach to tracking frequencies is to store each key and its
count in a hash table. Although this approach is fully accurate, it entails a
high memory overhead for three reasons: (i)~It stores all keys in the stream.
(ii)~It stores one counter for each unique key. (iii)~The hash table can be as
much as half empty to support collision resolution and expansions. These
different overheads multiply with the number of streams an application
monitors. These problems imply a memory footprint that is linear in the number
of unique keys in the stream, which can be of the same magnitude as the total
key count~$N$.

Frequency estimation (FE) sketches address these problems by trading accuracy
for space. That is, for any key~$x$, an FE sketch returns an
estimate~$\hat{f}(x)$ of key $x$'s count~$f(x)$. We focus on Count-Min
Sketch~\cite{CountMin} for now, as it is the simplest and most widely used FE
sketch~\cite{ADSMassiveDatasetBook,SketchesTrafficManagementSurvey}.
\ifappendix
Later, we describe Count Sketch~\cite{CountSketch}
(\Cref{sec:cs}) and
Misra-Gries~\cite{MisraGries}
(\Cref{sec:mg}) and apply our
techniques~to~them. 
\else
Later in \Cref{sec:cs}, we
describe Count Sketch~\cite{CountSketch} and apply our techniques to it. We do
the same for Misra-Gries~\cite{MisraGries} in the~Appendix~\cite{SublimeArxiv}. 
\fi

\textbf{Count-Min Sketch (CMS).}
The core design element of CMS is an array consisting of~$w$ counters, each
initialized to zero. We insert a key~$x$ into this array by hashing it to one
counter and incrementing it. We delete key~$x$ by decrementing its
corresponding counter.\footnote{Since CMS does not store the actual keys, users
must ensure that deletions only target previously inserted keys.} The counter
that key~$x$ hashes to provides an estimate of~$x$'s count. Due to hash
collisions, however, multiple keys can map to the same counter. Because of
this, CMS may overestimate a key's true count. This error is at most~$N/w$ in
expectation since the~$N$ keys are randomly hashed into~$w$ counters.
Increasing the array's size makes collisions less likely, thereby improving
accuracy. Even then, the error may deviate from its expectation. The
probability that errors do not exceed the expectation ``too much'' is known as
the sketch's \emph{Confidence}. An application of Markov's inequality shows
that for any key~$x$, the error does not exceed~$e \cdot N/w$ with a confidence
of at least~$1-e^{-1}$, i.e., $\Pr(|\hat{f}(x)-f(x)| \leq e \cdot N/w) \geq
1-e^{-1}$. Here, we use Euler's constant~$e$ due to convention. Any other
constant would also work and would replace~$e$ in both the error bound and the
confidence. This implies that the smaller the constant becomes, the more
accurate we expect the FE sketch to be and the lower confidence drops.

\begin{figure}
    \centering
    \begin{tikzpicture}
        \def\arraycount{3}
        \def\arrayw{5}
        \ifappendix
            \def\arrayh{0.25}
        \else
            \def\arrayh{0.3}
        \fi
        \def\arraylen{10}
        \ifappendix
            \def\arraysep{0.25}
        \else
            \def\arraysep{0.3}
        \fi
        \def\counterw{0.5}

        \foreach \i in {1, ..., \arraycount} {
            \draw[black] (0,-\i*\arrayh+1.5*\arrayh-\i*\arraysep+\arraysep) rectangle (\arrayw,-\i*\arrayh+0.5*\arrayh-\i*\arraysep+\arraysep);
            \foreach \j in {2, ..., \arraylen} {
                \draw (\j*\counterw-\counterw,-\i*\arrayh+1.5*\arrayh-\i*\arraysep+\arraysep) -- (\j*\counterw-\counterw,-\i*\arrayh+0.5*\arrayh-\i*\arraysep+\arraysep);
            }
        }
        \ifappendix
            \def\texty{4.5*\arrayh}
            \def\inserttextx{-2.25*\counterw}
        \else
            \def\texty{4.0*\arrayh}
            \def\inserttextx{-3*\counterw}
        \fi
        \node[inner sep=1pt,align=center] (key) at (\inserttextx,\texty) {\small Insert $x$};
        \draw (key.south) -- (\inserttextx,-1.5*\arrayh-1.5*\arraysep);

        \def\pa{3.0}
        \def\pb{1.0}
        \def\pc{2.0}
        \ifappendix
            \draw[-stealth] (\inserttextx,0.5*\arrayh+0.5*\arraysep) -| (\pa*\counterw+0.5*\counterw,0.5*\arrayh) node[pos=0.095,above=2pt,align=center,inner sep=0pt] {\footnotesize $h_1(x)$, $+1$};
            \draw[-stealth] (\inserttextx,-0.5*\arrayh-0.5*\arraysep) -| (\pb*\counterw+0.5*\counterw,-0.5*\arrayh-1.0*\arraysep) node[pos=0.148,above=2pt,align=center,inner sep=0pt] {\footnotesize $h_2(x)$, $+1$};
            \draw[-stealth] (\inserttextx,-1.5*\arrayh-1.5*\arraysep) -| (\pc*\counterw+0.5*\counterw,-1.5*\arrayh-2.0*\arraysep) node[pos=0.116,above=2pt,align=center,inner sep=0pt] {\footnotesize $h_3(x)$, $+1$};
        \else
            \draw[-stealth] (\inserttextx,0.5*\arrayh+0.5*\arraysep) -| (\pa*\counterw+0.5*\counterw,0.5*\arrayh) node[pos=0.1005,above=2pt,align=center,inner sep=0pt] {\small $h_1(x)$, $+1$};
            \draw[-stealth] (\inserttextx,-0.5*\arrayh-0.5*\arraysep) -| (\pb*\counterw+0.5*\counterw,-0.5*\arrayh-1.0*\arraysep) node[pos=0.145,above=2pt,align=center,inner sep=0pt] {\small $h_2(x)$, $+1$};
            \draw[-stealth] (\inserttextx,-1.5*\arrayh-1.5*\arraysep) -| (\pc*\counterw+0.5*\counterw,-1.5*\arrayh-2.0*\arraysep) node[pos=0.1185,above=2pt,align=center,inner sep=0pt] {\small $h_3(x)$, $+1$};
        \fi

        \ifappendix
            \def\querytextx{\arrayw+2.25*\counterw}
        \else
            \def\querytextx{\arrayw+3*\counterw}
        \fi
        \node[inner sep=1pt,align=center] (query) at (\querytextx,\texty) {\small Query $q$ \\[-2pt] \small Result $=1$};
        \draw[stealth-] (query.south) -- (\querytextx,-1.5*\arrayh-1.5*\arraysep) node[pos=0.5,below=2pt,rotate=90,inner sep=2pt] {\small Min};

        \def\pa{8.0}
        \def\pb{4.0}
        \def\pc{7.0}
        \ifappendix
            \draw (\querytextx,0.5*\arrayh+0.5*\arraysep) -| (\pa*\counterw+0.5*\counterw,0.5*\arrayh) node[pos=0.095,above=2pt,align=center,inner sep=0pt] {\footnotesize $h_1(q)$};
            \draw (\querytextx,-0.5*\arrayh-0.5*\arraysep) -| (\pb*\counterw+0.5*\counterw,-0.5*\arrayh-1.0*\arraysep) node[pos=0.0445,above=2pt,align=center,inner sep=0pt] {\footnotesize $h_2(q)$};
            \draw (\querytextx,-1.5*\arrayh-1.5*\arraysep) -| (\pc*\counterw+0.5*\counterw,-1.5*\arrayh-2.0*\arraysep) node[pos=0.074,above=2pt,align=center,inner sep=0pt] {\footnotesize $h_3(q)$};
        \else
            \draw (\querytextx,0.5*\arrayh+0.5*\arraysep) -| (\pa*\counterw+0.5*\counterw,0.5*\arrayh) node[pos=0.095,above=2pt,align=center,inner sep=0pt] {\small $h_1(q)$};
            \draw (\querytextx,-0.5*\arrayh-0.5*\arraysep) -| (\pb*\counterw+0.5*\counterw,-0.5*\arrayh-1.0*\arraysep) node[pos=0.0502,above=2pt,align=center,inner sep=0pt] {\small $h_2(q)$};
            \draw (\querytextx,-1.5*\arrayh-1.5*\arraysep) -| (\pc*\counterw+0.5*\counterw,-1.5*\arrayh-2.0*\arraysep) node[pos=0.0773,above=2pt,align=center,inner sep=0pt] {\small $h_3(q)$};
        \fi

        \draw[fill=gray!20] (\pa*\counterw,0.5*\arrayh) rectangle (\pa*\counterw+\counterw,-0.5*\arrayh);
        \draw[fill=gray!20] (\pb*\counterw,-0.5*\arrayh-1.0*\arraysep) rectangle (\pb*\counterw+\counterw,-1.5*\arrayh-1.0*\arraysep);
        \draw[fill=gray!20] (\pc*\counterw,-1.5*\arrayh-2.0*\arraysep) rectangle (\pc*\counterw+\counterw,-2.5*\arrayh-2.0*\arraysep);
        \ifappendix
            \node[inner sep=0pt] at (\pa*\counterw+0.5*\counterw,-0.0*\arrayh-0.0*\arraysep) {\footnotesize $1$};
            \node[inner sep=0pt] at (\pb*\counterw+0.5*\counterw,-1.0*\arrayh-1.0*\arraysep) {\footnotesize $4$};
            \node[inner sep=0pt] at (\pc*\counterw+0.5*\counterw,-2.0*\arrayh-2.0*\arraysep) {\footnotesize $3$};
        \else 
            \node[inner sep=0pt] at (\pa*\counterw+0.5*\counterw,-0.0*\arrayh-0.0*\arraysep) {\small $1$};
            \node[inner sep=0pt] at (\pb*\counterw+0.5*\counterw,-1.0*\arrayh-1.0*\arraysep) {\small $4$};
            \node[inner sep=0pt] at (\pc*\counterw+0.5*\counterw,-2.0*\arrayh-2.0*\arraysep) {\small $3$};
        \fi

        %\node[inner sep=0pt,anchor=west] at (-3.05*\counterw,3.0*\arrayh) {\textbf{A) CMS}};
    \end{tikzpicture}
    \ifnoappendix
      \vspace{-3mm}
    \fi
    \caption{CMS maintains~$d$ counter arrays of size~$w$ and hashes keys into
    them to estimate their frequencies. Insertions are illustrated on the left
    and queries on the right.}
    \ifnoappendix
      \vspace{-5mm}
    \fi
    \label{fig:cms_overview}
\end{figure}

Yet, with a single array, confidence is low. The reason is that all keys
hashing to the same counter as a frequent key will suffer from inflated
estimates. Moreover, since there are many keys with low or moderate counts,
there is a non-negligible chance that enough of them map to the same counter
and inflate estimates beyond the error bound. CMS addresses these issues by
allocating~$d$ independent counter arrays, each with a dedicated hash
function~$h_i$. CMS inserts or deletes a key by applying the operation to each
array, and it answers a query by querying all arrays and taking the minimum of
the estimates.\footnote{CMS uses pairwise-independent hash functions to bound
the probability of collisions and overestimation~\cite{CountMin}.} Returning
the minimum for queries boosts confidence exponentially with respect to the
number of arrays~$d$, since the minimum estimate only carries severe errors
when all estimates err significantly. Formally, with $d$ arrays of length $w$,
CMS guarantees an error of at most~$e \cdot N/w$ with a confidence
of~$1-e^{-d}$, i.e., $\Pr(|\hat{f}(x)-f(x)| \leq e \cdot N/w) \geq 1-e^{-d}$
for any key~$x$. \Cref{fig:cms_overview}
shows an example of insertions and queries applied to three~counter~arrays.

Since CMS uses a fixed number of tightly packed counters and does not store the
stream's keys, it significantly improves the memory footprint compared to the
exhaustive approach of tracking each key's exact count. Nevertheless, CMS still
suffers from several memory inefficiencies, which we explore in the following
section.

\section{Problem Analysis}\label{sec:problem_analysis}
We show that mainstream FE sketches such as CMS suffer from two core
limitations: 1)~they waste a significant amount of memory under
skew{\textemdash}memory that could otherwise be used to improve accuracy or
confidence{\textemdash}and 2)~their estimation error grows rapidly with the
stream length. These shortcomings prevent current FE sketches from achieving
accuracy at scale.

\subsection{Challenge 1: Skew}\label{sec:problem_analysis_skew}
Most real-world workloads are skewed, meaning that a few heavy hitters receives
the majority of insertions, while all other keys have low
counts~\cite{Memento,AdditiveErrorCounters,StingySketch}. In CMS, however, all
counters are uniformly sized, so each must be provisioned to accommodate the
largest possible value. This leads to substantial memory waste. Ideally, each
counter should be sized just large enough to fit its value, thereby saving
space or enabling more counters to improve accuracy or confidence. This issue
has been explored in recent work; we now review existing solutions and their
limitations.

\textbf{Hybrid Methods.}
A common strategy for tackling skew is to track the heavy
hitters separately in a hash table while leaving the rest of the keys to
CMS~\cite{AugmentedSketch,HeavyKeeper,HeavyGuardian,MVSketch,WavingSketch,ElasticSketch}.
Such hybrid structures aim to cap the maximum counter value in CMS, thereby
permitting shorter counters. Yet, this approach does not fundamentally resolve
the issue. The reasons are twofold: The keys tracked by CMS may still follow a
skewed distribution, leading to significant memory wastage. Moreover, the heavy
hitters can vary over time. Identifying the next heavy hitter and ``promoting''
it from CMS into the hash table is error-prone, since collisions in CMS can
cause many infrequent keys to be misclassified as heavy hitters.

\textbf{Compression-based Methods.}
Other approaches compress the counters using algorithms such as compressive
sensing~\cite{TreeSensing,BitSense} or hypergraph peeling~\cite{CodingSketch}.
Yet, these methods impose heavy decompression overheads on queries.

\begin{figure}
    \centering
    \begin{tikzpicture}
        \ifappendix
            \def\arrayw{3.5}
            \def\arrayh{0.25}
        \else 
            \def\arrayw{5.25}
            \def\arrayh{0.35}
        \fi
        \def\methodsep{1.2}
        \def\offdots{0.25}
        \def\countercount{4}
        \def\counterw{\arrayw/\countercount}

        % Counter Sharing
        \def\sharingarraywmult{0.8}
        \ifappendix
            \def\treelevelsep{0.2}
        \else
            \def\treelevelsep{0.15}
        \fi
        \def\fragmentw{\sharingarraywmult*\counterw/2}
        \def\offfragment{\fragmentw/3}
        \def\currentarrayw{\sharingarraywmult*\arrayw}
        \def\currentcounterw{\sharingarraywmult*\counterw}
        \draw (0,0) rectangle (\currentarrayw,\arrayh);
        \foreach \i in {2, ..., \countercount} {
            \draw (\i*\currentcounterw-\currentcounterw,0) -- (\i*\currentcounterw-\currentcounterw,\arrayh);
        }
        \ifappendix
            \node[inner sep=0pt,anchor=east] (shared_label) at (0.5*\currentcounterw,\treelevelsep+1.5*\arrayh) {\footnotesize Shared};
        \else
            \node[inner sep=0pt,anchor=east] (shared_label) at (0.65*\currentcounterw,\treelevelsep+1.5*\arrayh) {\small Shared};
        \fi
        \foreach \i in {1, ..., 2} {
            \def\countermidx{2*\i*\currentcounterw-\currentcounterw}
            \ifnum \i = 2 {
                \draw[fill=gray!20] (\countermidx-\fragmentw/2,\treelevelsep+\arrayh) rectangle (\countermidx+\fragmentw/2,\treelevelsep+2*\arrayh);
                \draw (\countermidx-\offfragment,\treelevelsep+\arrayh) -- (\countermidx-0.5*\currentcounterw,\arrayh);
                \draw (\countermidx+\offfragment,\treelevelsep+\arrayh) -- (\countermidx+0.5*\currentcounterw,\arrayh);
            }
            \else {
                \draw (\countermidx-\fragmentw/2,\treelevelsep+\arrayh) rectangle (\countermidx+\fragmentw/2,\treelevelsep+2*\arrayh);
                \draw[stealth-] (\countermidx-\offfragment,\treelevelsep+\arrayh) -- (\countermidx-0.5*\currentcounterw,\arrayh);
                \draw[stealth-] (\countermidx+\offfragment,\treelevelsep+\arrayh) -- (\countermidx+0.5*\currentcounterw,\arrayh);
            }
            \fi
        }
        \def\countermidx{\currentarrayw/2}
        \draw (\countermidx-\fragmentw/2,2*\treelevelsep+2*\arrayh) rectangle (\countermidx+\fragmentw/2,2*\treelevelsep+3*\arrayh);
        \draw[stealth-] (\countermidx-\offfragment,2*\treelevelsep+2*\arrayh) -- (\currentarrayw/4,\treelevelsep+2*\arrayh);
        \draw (\countermidx+\offfragment,2*\treelevelsep+2*\arrayh) -- (\currentarrayw/4*3,\treelevelsep+2*\arrayh);

        \node[inner sep=0pt] at (-\offdots,\arrayh/2) {\small $\dots$};
        \node[inner sep=0pt] at (\currentarrayw+\offdots,\arrayh/2) {\small $\dots$};
        \def\counterarray{{0,"001101","001000","001000","111000"}}
        \foreach \i in {1, ..., \countercount} {
            \def\countermidx{\i*\currentcounterw-\currentcounterw/2}
            \pgfmathparse{\counterarray[\i]}
            \let\countervalue\pgfmathresult
            \ifappendix
                \node[inner sep=0pt] at (\countermidx,\arrayh/2) {\scriptsize \countervalue};
            \else
                \node[inner sep=0pt] at (\countermidx,\arrayh/2) {\small \countervalue};
            \fi
        }
        \def\counterarray{{0,"00",0}}
        \foreach \i in {1, ..., 2} {
            \def\countermidx{2*\i*\currentcounterw-\currentcounterw}
            \pgfmathparse{\counterarray[\i]}
            \let\countervalue\pgfmathresult
            \ifnum \i = 2 {
            }
            \else {
              \ifappendix
                  \node[inner sep=0pt] at (\countermidx,\treelevelsep+1.5*\arrayh) {\scriptsize \countervalue};
              \else
                  \node[inner sep=0pt] at (\countermidx,\treelevelsep+1.5*\arrayh) {\small \countervalue};
              \fi
            }
            \fi
        }
        \ifappendix
            \node[inner sep=0pt] at (\currentarrayw/2,2*\treelevelsep+2.5*\arrayh) {\scriptsize 10};
        \else
            \node[inner sep=0pt] at (\currentarrayw/2,2*\treelevelsep+2.5*\arrayh) {\small 10};
        \fi

        \def\labely{3*\treelevelsep+3.25*\arrayh}
        \node[inner sep=0pt] (sharing_label) at (\currentarrayw/2,\labely) {\small A) Counter Sharing};

        % Counter Merging
        \def\mergingoffx{\sharingarraywmult*\arrayw+\methodsep}
        \def\mergingarraywmult{1}
        \def\currentarrayw{\mergingarraywmult*\arrayw}
        \def\currentcounterw{\mergingarraywmult*\counterw}
        \draw (\mergingoffx,0) rectangle (\mergingoffx+\currentarrayw,\arrayh);
        \foreach \i in {2, ..., \countercount} {
            \ifnum \i = 3 {
                \draw (\mergingoffx+\i*\currentcounterw-\currentcounterw,0) -- (\mergingoffx+\i*\currentcounterw-\currentcounterw,\arrayh);
            }
            \else \ifnum \i = 4 {
                \draw (\mergingoffx+\i*\currentcounterw-\currentcounterw,0) -- (\mergingoffx+\i*\currentcounterw-\currentcounterw,\arrayh);
            }
            \fi \fi
        }
        \ifappendix
            \draw[decorate,decoration={brace,raise=1pt,amplitude=2pt}] (\mergingoffx,\arrayh) -- (\mergingoffx+2*\currentcounterw,\arrayh)
                node[pos=0.5,above=4pt,inner sep=1pt] {\footnotesize Merged};
        \else
            \draw[decorate,decoration={brace,raise=1pt,amplitude=2pt}] (\mergingoffx,\arrayh) -- (\mergingoffx+2*\currentcounterw,\arrayh)
                node[pos=0.5,above=4pt,inner sep=1pt] {\small Merged};
        \fi
        \node[inner sep=0pt] at (\mergingoffx-\offdots,\arrayh/2) {\small $\dots$};
        \node[inner sep=0pt] at (\mergingoffx+\currentarrayw+\offdots,\arrayh/2) {\small $\dots$};
        \ifappendix
            \node[inner sep=0pt] at (\mergingoffx+\currentcounterw,\arrayh/2) {\scriptsize 0000110010000000};
            \node[inner sep=0pt] at (\mergingoffx+2.5*\currentcounterw,\arrayh/2) {\scriptsize 00100000};
            \node[inner sep=0pt] at (\mergingoffx+3.5*\currentcounterw,\arrayh/2) {\scriptsize 11100000};
        \else
            \node[inner sep=0pt] at (\mergingoffx+\currentcounterw,\arrayh/2) {\small 0000110010000000};
            \node[inner sep=0pt] at (\mergingoffx+2.5*\currentcounterw,\arrayh/2) {\small 00100000};
            \node[inner sep=0pt] at (\mergingoffx+3.5*\currentcounterw,\arrayh/2) {\small 11100000};
        \fi

        \node[inner sep=0pt] (merging_label) at (\mergingoffx+\currentarrayw/2,\labely) {\small B) Counter Merging};

        % CMS
        \def\cmsmidx{\sharingarraywmult*\arrayw/2+\methodsep/2+\currentarrayw/2}
        \def\cmsarraywmult{1.5}
        \def\cmsoffx{\cmsmidx-\cmsarraywmult*\arrayw/2}
        \def\cmsoffy{-0.75}
        \draw (\cmsoffx,\cmsoffy) rectangle (\cmsoffx+\cmsarraywmult*\arrayw,\cmsoffy+\arrayh);
        \foreach \i in {2, ..., \countercount} {
            \draw (\cmsoffx+\cmsarraywmult*\i*\counterw-\cmsarraywmult*\counterw,\cmsoffy) -- (\cmsoffx+\cmsarraywmult*\i*\counterw-\cmsarraywmult*\counterw,\cmsoffy+\arrayh);
        }
        \node[inner sep=0pt] at (\cmsoffx-\offdots,\cmsoffy+\arrayh/2) {\small $\dots$};
        \node[inner sep=0pt] at (\cmsoffx+\cmsarraywmult*\arrayw+\offdots,\cmsoffy+\arrayh/2) {\small $\dots$};
        \def\counterarray{{0,"001101001000","001000000000","001000000000","111000000000"}}
        \foreach \i in {1, ..., \countercount} {
            \pgfmathparse{\counterarray[\i]}
            \let\countervalue\pgfmathresult
            \ifappendix
                \node[inner sep=0pt] at (\cmsoffx+\cmsarraywmult*\i*\counterw-\cmsarraywmult*\counterw/2,\cmsoffy+\arrayh/2) {\scriptsize \countervalue};
            \else
                \node[inner sep=0pt] at (\cmsoffx+\cmsarraywmult*\i*\counterw-\cmsarraywmult*\counterw/2,\cmsoffy+\arrayh/2) {\small \countervalue};
            \fi
        }

        \def\labely{\cmsoffy-0.25}
        \node[inner sep=0pt] (cms_label) at (\cmsoffx+\cmsarraywmult*\arrayw/2,\labely) {\small CMS};
        \ifappendix
            \draw[-stealth] (\cmsoffx+\cmsarraywmult*\arrayw/4,\cmsoffy+1.5*\arrayh) -- (\sharingarraywmult*\arrayw/2,-0.5*\arrayh);
            \draw[-stealth] (\cmsoffx+\cmsarraywmult*\arrayw/4*2.72,\cmsoffy+1.5*\arrayh) -- (\mergingoffx+\mergingarraywmult*\arrayw/2,-0.5*\arrayh);
        \else
            \draw[-stealth] (\cmsoffx+\cmsarraywmult*\arrayw/4,\cmsoffy+1.3*\arrayh) -- (\sharingarraywmult*\arrayw/2,-0.3*\arrayh);
            \draw[-stealth] (\cmsoffx+\cmsarraywmult*\arrayw/4*2.72,\cmsoffy+1.3*\arrayh) -- (\mergingoffx+\mergingarraywmult*\arrayw/2,-0.3*\arrayh);
        \fi
    \end{tikzpicture}
    \ifappendix
        \vspace{-1.55mm}
    \fi
    \ifnoappendix
      \vspace{-3mm}
    \fi
    \caption{Counter sharing and counter merging encode CMS's counters in less
    space in exchange for blowing up some counter values. Here, the bits in
    each counter are in increasing order of significance from left to right.}
    \ifnoappendix
      \vspace{-8mm}
    \fi
    \label{fig:counter_sharing_merging}
\end{figure}

\textbf{Counter Sharing.}
Counter sharing approaches store the counter values in a hierarchy of
counters~\cite{CounterTree}. The bottom level employs short counters to store
the lower-order bits of each counter, while the upper levels handle overflows
using shared counters. For example,
\Cref{fig:counter_sharing_merging}{\nobreakdash-}A)
encodes four adjacent counters within CMS using 6{\nobreakdash-}bit counters in
the bottom level and 2{\nobreakdash-}bit counters in the upper levels. Whenever
a counter in this hierarchy overflows, it increments its parent and resets to
zero. A key's count is retrieved by concatenating its corresponding
bottom-level counter with the shared counters along its path to the root or
until reaching an internal node to which no overflow has occurred. Such an
internal node is identified using additional metadata. The core issue with this
approach is that the error rapidly deteriorates as the stream grows. In the
worst-case scenario, after~$N$ insertions into an array of size~$w$, each
counter overflows into~$\Theta(\log (N/w))$ levels above it. Since the counters
in the higher levels are shared, this blows up the values of~$\Theta(N/w)$
other counters, increasing the estimation error by the same factor. To mitigate
this issue, prior works experiment with varying the counter sizes and the
hierarchy's
fanout~\cite{BitSense,PyramidSketch,StingySketch,AdaptiveCounterSplicing,TreeSensing,TailoredSketch}.
Yet, they do not fundamentally resolve the problem.

\textbf{Counter Merging.}
Counter merging is an alternative approach that starts with 8-bit counters and
merges an overflowing counter with an adjacent counter into a longer counter
that sums their values~\cite{SALSA}.
\Cref{fig:counter_sharing_merging}{\nobreakdash-}B)
shows an example applied to CMS. After~$N$ insertions into an array of~$w$
counters, each counter will be merged with~$\Theta(\log (N/w))$ other counters,
increasing the error by the same factor. Merging only continues until each
counter becomes four bytes long, at which point error degradation stops.
Counter merging has a higher metadata overhead than counter sharing for
indicating which counters are merged, leaving less space for storing counters.
Parsing this metadata adds more complexity than alternative methods like
counter sharing,~reducing~performance.

\textbf{Deletions.}
Deletions deteriorate the accuracy of both counter sharing and counter merging
methods since these techniques ambiguate how counters should be ``separated''
as they decrease.

\ifnoappendix
\begin{wrapfigure}{r}{0.5\textwidth}
    \begin{tikzpicture}
        \def\w{2}
        \def\ox{-2}
        \def\oy{-2}
        \def\xaxislength{3.0}
        \def\yaxislength{2.00}
        \def\ooverlen{0.1}
        \def\xkk{0.9}
        \def\pkk{0.9}
        \def\labelx{\ox+\xaxislength+0.5}
        \def\labely{\oy+\pkk*\yaxislength-0.1}
        \def\labellinex{\labelx+2.4}
        \def\labelliney{\labely+0.22}
        \def\labellineydiffcountersharing{0.92}
        \def\labellineydiffcountermerging{2.04*\labellineydiffcountersharing}
        \def\labellinelen{0.5}
        \def\haveaxislabels{0}

        % Axes
        \draw[thick,-stealth] (\ox-\ooverlen,\oy) -- (\ox+\xaxislength,\oy);
        \draw[thick,-stealth] (\ox,\oy-\ooverlen) -- (\ox,\oy+\yaxislength);
        \node[inner sep=0pt,align=center] (x_label) at (\ox+\xaxislength/2,\oy-\haveaxislabels*0.9-0.3) {\small Total Key Count~$N$};
        \node[inner sep=0pt,rotate=90] (y_label) at (\ox-\haveaxislabels*0.7-0.3,\oy+\yaxislength/2) {\small Error};

        % CMS
        \node[inner sep=1pt,circle,fill=black] (start) at (\ox,\oy) {};
        \node[inner sep=1pt,circle,fill=black] (cms_end) at (\ox+\xkk*\xaxislength,\oy+\pkk*\yaxislength/\w) {};
        \draw (start) -- (cms_end);
        \node[inner sep=0pt,anchor=west,align=left] (cms_label) at (\labelx,\labely) {\footnotesize CMS \\[-1pt] \footnotesize Error~$={\scriptstyle \frac{N}{w}}$};

        \node[inner sep=1pt,circle,fill=black] (cms_legend_start) at (\labellinex,\labelliney) {};
        \node[inner sep=1pt,circle,fill=black] (cms_legend_end) at (\labellinex+\labellinelen,\labelliney) {};
        \draw (cms_legend_start) -- (cms_legend_end);

        % Counter Sharing
        \def\countersharingintersectionx{\ox+0.25*\xaxislength}
        \def\countersharingintersectiony{\oy+0.123*\yaxislength}
        \ifnum \haveaxislabels = 1 {
            \node[inner sep=1pt,circle,fill=black] (counter_sharing_intersection) at (\countersharingintersectionx,\countersharingintersectiony) {};
        }
        \else {
            \node[inner sep=0pt] (counter_sharing_intersection) at (\countersharingintersectionx,\countersharingintersectiony) {};
        }
        \fi

        \node[inner sep=1pt,circle,fill=black] (counter_sharing_end) at (\ox+\xkk*\xaxislength,\oy+\pkk*\yaxislength) {};
        \draw[dashed] plot[hobby] coordinates { (start) (\ox+0.13*\xaxislength,\oy+0.025*\yaxislength) (counter_sharing_intersection) (counter_sharing_end) };
        \node[inner sep=0pt,below=0.60 of cms_label.south west,anchor=west,align=left] (counter_sharing_label) {\footnotesize Counter Sharing \\[-1pt] \footnotesize Error~$={\scriptstyle \Theta\left( \min\left( \frac{N}{w}, w \right) \cdot \frac{N}{w} \right)}$};

        \node[inner sep=1pt,circle,fill=black] (counter_sharing_legend_start) at (\labellinex,\labelliney-\labellineydiffcountersharing) {};
        \node[inner sep=1pt,circle,fill=black] (counter_sharing_legend_end) at (\labellinex+\labellinelen,\labelliney-\labellineydiffcountersharing) {};
        \draw[dashed] (counter_sharing_legend_start) -- (counter_sharing_legend_end);

        % Counter Merging
        \def\countermergingintersectionx{\ox+0.612*\xaxislength}
        \def\countermergingintersectiony{\oy+0.305*\yaxislength}
        \ifnum \haveaxislabels = 1 {
            \node[inner sep=1pt,circle,fill=black] (counter_merging_intersection) at (\countermergingintersectionx,\countermergingintersectiony) {};
        }
        \else {
            \node[inner sep=0pt] (counter_merging_intersection) at (\countermergingintersectionx,\countermergingintersectiony) {};
        }
        \fi
        \def\longdashlen{3pt}
        \def\shortdashlen{0.5pt}
        \def\dashsep{1pt}
        \draw[dash pattern={on \longdashlen off \dashsep on \shortdashlen off \dashsep}] plot[hobby] coordinates { (start) (\ox+0.15*\xaxislength,\oy+0.050*\yaxislength) (counter_merging_intersection) };
        \node[inner sep=1pt,circle,fill=black] (counter_merging_end) at (\ox+\xkk*\xaxislength,\oy+\pkk*\yaxislength/\w*1.1) {};
        \draw[dash pattern={on \longdashlen off \dashsep on \shortdashlen off \dashsep}] (counter_merging_intersection) -- (counter_merging_end);
        \node[inner sep=0pt,below=0.60 of counter_sharing_label.south west,anchor=west,align=left] (counter_merging_label) {\footnotesize Counter Merging \\[-1pt] \footnotesize Error~$={\scriptstyle \Theta\left( \min\left( \log \frac{N}{w}, 32 \right) \cdot \frac{N}{w} \right)}$};

        \node[inner sep=1pt,circle,fill=black] (counter_merging_start) at (\labellinex,\labelliney-\labellineydiffcountermerging) {};
        \node[inner sep=1pt,circle,fill=black] (counter_merging_end) at (\labellinex+\labellinelen,\labelliney-\labellineydiffcountermerging) {};
        \draw[dash pattern={on \longdashlen off \dashsep on \shortdashlen off \dashsep}] (counter_merging_start) -- (counter_merging_end);

        % Focal Points and lines
        \ifnum \haveaxislabels = 1 {
            \draw[densely dotted] (\ox,\oy+\pkk*\yaxislength/\w) -- (cms_end);
            \node[inner sep=2pt,anchor=east] (y_cms_end_label) at (\ox,\oy+\pkk*\yaxislength/\w) {\footnotesize $\scriptstyle 2^{32}$};

            \draw[densely dotted] (\ox+\pkk*\xaxislength,\oy) -- (counter_sharing_end);
            \node[inner sep=3pt,anchor=north] (x_end_label) at (\ox+\pkk*\xaxislength,\oy) {\footnotesize $\scriptstyle w \cdot 2^{32}$};

            \draw[densely dotted] (\ox,\oy+\pkk*\yaxislength) -- (counter_sharing_end);
            \node[inner sep=2pt,anchor=east] (y_counter_sharing_label) at (\ox,\oy+\pkk*\yaxislength) {\footnotesize $\scriptstyle \approx w \cdot 2^{32}$};

            \draw[densely dotted] (\countersharingintersectionx,\oy) -- (counter_sharing_intersection);
            \node[inner sep=4.25pt,anchor=north] (x_counter_sharing_intersection_label) at (\countersharingintersectionx,\oy) {\footnotesize $\scriptstyle \Theta(w)$};

            \draw[densely dotted] (\ox,\countersharingintersectiony) -- (counter_sharing_intersection);
            \node[inner sep=2pt,anchor=east] (y_counter_sharing_intersection_label) at (\ox,\countersharingintersectiony) {\footnotesize $\scriptstyle \Theta(1)$};

            \draw[densely dotted] (\countermergingintersectionx,\oy) -- (counter_merging_intersection);
            \node[inner sep=3pt,anchor=north] (x_counter_merging_intersection_label) at (\countermergingintersectionx,\oy) {\footnotesize $\scriptstyle \approx w \cdot 2^{24}$};

            \draw[densely dotted] (\ox,\countermergingintersectiony) -- (counter_merging_intersection);
            \node[inner sep=2pt,anchor=east] (y_counter_merging_intersection_label) at (\ox,\countermergingintersectiony) {\footnotesize $\scriptstyle \approx 2^{24}$};
        }
        \fi
    \end{tikzpicture}
    \vspace{-8mm}
    \caption{Under the same memory budget as a CMS instance with~32-bit
    counters, counter sharing and counter merging can lead to lower accuracy
    when processing a growing stream.}
    \label{fig:counter_sharing_deterioration}
\end{wrapfigure}
\fi

\textbf{Comparison.}
Prior work has shown counter sharing and counter merging to dominate hybrid and
compression-based methods in terms of both accuracy and
performance~\cite{SALSA,TailoredSketch}. Thus, in
\Cref{fig:counter_sharing_deterioration},
we qualitatively compare counter sharing and counter merging to a CMS with the
same memory footprint to illustrate their error degradation with data growth.
While these methods perform well for short streams, their error exceeds that of
CMS as the stream grows. We empirically verify this behavior in
Experiment~\hyperlink{experiment:expansion}{4} of \Cref{sec:evaluation}.

\subsection{Challenge 2: Unbounded Growth}\label{sec:problem_analysis_unbounded_growth}
In many streaming applications, the data stream grows continuously, and it is
imperative to estimate the count of each key from the stream's beginning. For
instance, query optimizers often track the number of times each key occurs in a
table column to estimate the size of a join on that
column~\cite{Compass,QueryOptimizationSketches,ConvolutionSketch}. As new data
is inserted, the total number of keys~$N$ increases. Yet, the full count
history of each key is still needed, since every occurrence contributes equally
to the join size regardless of when it arrived.

\ifappendix
\begin{figure}
    \centering
    \begin{tikzpicture}
        \def\w{2}
        \def\ox{-2}
        \def\oy{-2}
        \def\xaxislength{3.2}
        \def\yaxislength{2.05}
        \def\ooverlen{0.1}
        \def\xkk{0.9}
        \def\pkk{0.9}
        \def\labelx{\ox+\xaxislength+1.00}
        \def\labely{\oy+\pkk*\yaxislength-0.2}
        \def\labellinex{\labelx+2.15}
        \def\labelliney{\labely+0.22}
        \def\labellineydiffcountersharing{0.89}
        \def\labellineydiffcountermerging{2.04*\labellineydiffcountersharing}
        \def\labellinelen{0.5}
        \def\haveaxislabels{0}

        % Axes
        \draw[thick,-stealth] (\ox-\ooverlen,\oy) -- (\ox+\xaxislength,\oy);
        \draw[thick,-stealth] (\ox,\oy-\ooverlen) -- (\ox,\oy+\yaxislength);
        \node[inner sep=0pt,align=center] (x_label) at (\ox+\xaxislength/2,\oy-\haveaxislabels*0.9-0.3) {\small Total Key Count~$N$};
        \node[inner sep=0pt,rotate=90] (y_label) at (\ox-\haveaxislabels*0.7-0.3,\oy+\yaxislength/2) {\small Error};

        % CMS
        \node[inner sep=1pt,circle,fill=black] (start) at (\ox,\oy) {};
        \node[inner sep=1pt,circle,fill=black] (cms_end) at (\ox+\xkk*\xaxislength,\oy+\pkk*\yaxislength/\w) {};
        \draw (start) -- (cms_end);
        \node[inner sep=0pt,anchor=west,align=left] (cms_label) at (\labelx,\labely) {\footnotesize CMS \\[-1pt] \footnotesize Error~$={\scriptstyle \frac{N}{w}}$};

        \node[inner sep=1pt,circle,fill=black] (cms_legend_start) at (\labellinex,\labelliney) {};
        \node[inner sep=1pt,circle,fill=black] (cms_legend_end) at (\labellinex+\labellinelen,\labelliney) {};
        \draw (cms_legend_start) -- (cms_legend_end);

        % Counter Sharing
        \def\countersharingintersectionx{\ox+0.25*\xaxislength}
        \def\countersharingintersectiony{\oy+0.123*\yaxislength}
        \ifnum \haveaxislabels = 1 {
            \node[inner sep=1pt,circle,fill=black] (counter_sharing_intersection) at (\countersharingintersectionx,\countersharingintersectiony) {};
        }
        \else {
            \node[inner sep=0pt] (counter_sharing_intersection) at (\countersharingintersectionx,\countersharingintersectiony) {};
        }
        \fi

        \node[inner sep=1pt,circle,fill=black] (counter_sharing_end) at (\ox+\xkk*\xaxislength,\oy+\pkk*\yaxislength) {};
        \draw[dashed] plot[hobby] coordinates { (start) (\ox+0.13*\xaxislength,\oy+0.025*\yaxislength) (counter_sharing_intersection) (counter_sharing_end) };
        \node[inner sep=0pt,below=0.60 of cms_label.south west,anchor=west,align=left] (counter_sharing_label) {\footnotesize Counter Sharing \\[-1pt] \footnotesize Error~$={\scriptstyle \Theta\left( \min\left( \frac{N}{w}, w \right) \cdot \frac{N}{w} \right)}$};

        \node[inner sep=1pt,circle,fill=black] (counter_sharing_legend_start) at (\labellinex,\labelliney-\labellineydiffcountersharing) {};
        \node[inner sep=1pt,circle,fill=black] (counter_sharing_legend_end) at (\labellinex+\labellinelen,\labelliney-\labellineydiffcountersharing) {};
        \draw[dashed] (counter_sharing_legend_start) -- (counter_sharing_legend_end);

        % Counter Merging
        \def\countermergingintersectionx{\ox+0.612*\xaxislength}
        \def\countermergingintersectiony{\oy+0.305*\yaxislength}
        \ifnum \haveaxislabels = 1 {
            \node[inner sep=1pt,circle,fill=black] (counter_merging_intersection) at (\countermergingintersectionx,\countermergingintersectiony) {};
        }
        \else {
            \node[inner sep=0pt] (counter_merging_intersection) at (\countermergingintersectionx,\countermergingintersectiony) {};
        }
        \fi
        \def\longdashlen{3pt}
        \def\shortdashlen{0.5pt}
        \def\dashsep{1pt}
        \draw[dash pattern={on \longdashlen off \dashsep on \shortdashlen off \dashsep}] plot[hobby] coordinates { (start) (\ox+0.15*\xaxislength,\oy+0.050*\yaxislength) (counter_merging_intersection) };
        \node[inner sep=1pt,circle,fill=black] (counter_merging_end) at (\ox+\xkk*\xaxislength,\oy+\pkk*\yaxislength/\w*1.1) {};
        \draw[dash pattern={on \longdashlen off \dashsep on \shortdashlen off \dashsep}] (counter_merging_intersection) -- (counter_merging_end);
        \node[inner sep=0pt,below=0.60 of counter_sharing_label.south west,anchor=west,align=left] (counter_merging_label) {\footnotesize Counter Merging \\[-1pt] \footnotesize Error~$={\scriptstyle \Theta\left( \min\left( \log \frac{N}{w}, 32 \right) \cdot \frac{N}{w} \right)}$};

        \node[inner sep=1pt,circle,fill=black] (counter_merging_start) at (\labellinex,\labelliney-\labellineydiffcountermerging) {};
        \node[inner sep=1pt,circle,fill=black] (counter_merging_end) at (\labellinex+\labellinelen,\labelliney-\labellineydiffcountermerging) {};
        \draw[dash pattern={on \longdashlen off \dashsep on \shortdashlen off \dashsep}] (counter_merging_start) -- (counter_merging_end);

        % Focal Points and lines
        \ifnum \haveaxislabels = 1 {
            \draw[densely dotted] (\ox,\oy+\pkk*\yaxislength/\w) -- (cms_end);
            \node[inner sep=2pt,anchor=east] (y_cms_end_label) at (\ox,\oy+\pkk*\yaxislength/\w) {\footnotesize $\scriptstyle 2^{32}$};

            \draw[densely dotted] (\ox+\pkk*\xaxislength,\oy) -- (counter_sharing_end);
            \node[inner sep=3pt,anchor=north] (x_end_label) at (\ox+\pkk*\xaxislength,\oy) {\footnotesize $\scriptstyle w \cdot 2^{32}$};

            \draw[densely dotted] (\ox,\oy+\pkk*\yaxislength) -- (counter_sharing_end);
            \node[inner sep=2pt,anchor=east] (y_counter_sharing_label) at (\ox,\oy+\pkk*\yaxislength) {\footnotesize $\scriptstyle \approx w \cdot 2^{32}$};

            \draw[densely dotted] (\countersharingintersectionx,\oy) -- (counter_sharing_intersection);
            \node[inner sep=4.25pt,anchor=north] (x_counter_sharing_intersection_label) at (\countersharingintersectionx,\oy) {\footnotesize $\scriptstyle \Theta(w)$};

            \draw[densely dotted] (\ox,\countersharingintersectiony) -- (counter_sharing_intersection);
            \node[inner sep=2pt,anchor=east] (y_counter_sharing_intersection_label) at (\ox,\countersharingintersectiony) {\footnotesize $\scriptstyle \Theta(1)$};

            \draw[densely dotted] (\countermergingintersectionx,\oy) -- (counter_merging_intersection);
            \node[inner sep=3pt,anchor=north] (x_counter_merging_intersection_label) at (\countermergingintersectionx,\oy) {\footnotesize $\scriptstyle \approx w \cdot 2^{24}$};

            \draw[densely dotted] (\ox,\countermergingintersectiony) -- (counter_merging_intersection);
            \node[inner sep=2pt,anchor=east] (y_counter_merging_intersection_label) at (\ox,\countermergingintersectiony) {\footnotesize $\scriptstyle \approx 2^{24}$};
        }
        \fi
    \end{tikzpicture}
    \vspace{3mm}
    \caption{Under the same memory budget as a CMS instance with~32-bit
    counters, counter sharing and counter merging can lead to lower accuracy
    when processing a growing stream.}
    \label{fig:counter_sharing_deterioration}
\end{figure}
\fi

Unbounded stream growth is also a challenge for applications that only require
key counts over recent time windows. For example, network management systems
detect anomalous or malicious activity by counting packets from each source
within fixed time
windows~\cite{SketchBasedChangeDetection,WhatsNew,TrafficManagementSurvey}. For
bursty streams such as network traffic, the number of keys~$N$ in each window
is unpredictable. Bounding~$N$ by defining windows based on insertion count
fails to resolve the issue, as the notion of time is lost.

In existing FE sketches, the number of counters~$w$ is fixed at allocation
time. Consequently, the expected error degrades linearly with~$N$ (e.g., $N/w$
in the case of CMS). In hope of balancing accuracy and memory, practitioners
can tune~$w$ upfront. This tuning is non-trivial, however, as having too few
counters leads to significant errors, while having too many wastes memory.
Deletions make the problem more challenging by causing~$N$ to fluctuate over
time. Worse yet, even if the ultimate stream length is known upfront,
allocating a large CMS from the start wastes memory during the early stages
when the stream is short. 

\textbf{Summary.}
An ideal FE sketch should: 1)~compactly encode counter values without incurring
memory waste from skew, 2)~scale its size with the stream's length to control
both the memory footprint and the error rate,
3)~support deletions without sacrificing accuracy, and 4)~provide
  high-performance queries, insertions, and deletions. Is it possible to
  achieve these goals simultaneously?

\section{\sketch}\label{sec:sublime} 
We present \sketch, a framework for generalizing an FE sketch to optimize its
memory footprint and error rate with respect to the stream's skew and its
length. \Cref{sec:accommodating_skew}
shows how to adapt to skew by compactly encoding each counter as a short
integer and extending it as it overflows.
\Cref{sec:accommodating_unknown_stream_lengths}
shows how to control both accuracy and space for a stream of unknown length by
expanding and contracting the FE sketch. At the same time, \sketch\ supports
deletions and maintains high performance. We first describe \sketch\ as applied
to the Count-Min Sketch~(CMS), denoted as \sketchcms. 
We apply \sketch\ to Count Sketch in \Cref{sec:cs}
and to
\ifappendix
Misra-Gries in
\Cref{sec:mg}.
\else
Misra-Gries~in~the~Appendix~\cite{SublimeArxiv}.
\fi

\subsection{Accommodating Skew}\label{sec:accommodating_skew}
Under skew, many counter values in CMS are small and do not use all the bits in
their counters. To avoid wasting space under any workload, \sketchcms employs a
general-purpose \counterencoding~(\counterencodingabbrv) encoding. Upfront,
\counterencodingabbrv\ stores each counter as a fixed-size integer called a
\emph{Stub}. When a stub overflows, \counterencodingabbrv\ elongates it using a
dedicated variable-length \emph{Extension}. We adaptively tune the stub length
based on the workload's skew to control the rate of stub overflows, thereby
optimizing the overall memory footprint. To improve cache behavior and quickly
reconstruct a counter, \counterencodingabbrv\ colocates a stub and its
corresponding extension in the same cache line. It quickly finds a stub's
extension using a bit manipulation workflow that leverages rank
and~select~operations.

\begin{figure}
    \centering
    \begin{tikzpicture}
        \ifappendix
            \def\arrayw{7.5}
            \def\arrayh{0.3}
        \else
            \def\arrayw{9.0}
            \def\arrayh{0.35}
        \fi
        \def\stubw{0.15*\arrayw}
        \def\stubcountl{2}
        \def\overflowbitmapw{0.15*\arrayw}
        \def\extensionsarrayw{0.3*\arrayw}
        \def\bitw{0.025*\arrayw}
        \def\extensionw{2*\bitw}
        \def\extensioncount{3}
        \def\labely{-2.0*\arrayh}

        \draw[black] (0,0) rectangle (\arrayw,\arrayh);

        % Overflow Bitmap
        \draw[very thick] (\overflowbitmapw,0) -- (\overflowbitmapw,\arrayh);
        \node[inner sep=2pt,anchor=west] (overflow_bitmap) at (0,0.5*\arrayh) {\small $0 1$};
        \node[inner sep=2pt] (overflow_dots) at (0.5*\overflowbitmapw+\bitw,0.5*\arrayh) {\small $\dots$};
        \draw[decorate,decoration={brace,mirror,raise=1pt,amplitude=2pt}] (0,0) -- (\overflowbitmapw,0)
            node[pos=0.5,below=4pt,inner sep=1pt,align=center] (overflow_bitmap_label) {\small Overflows \\[-4pt] \small Bitmap};

        % Mode Flag
        \draw[very thick] (\arrayw-\extensionsarrayw-\bitw,0) -- (\arrayw-\extensionsarrayw-\bitw,\arrayh);
        \node[inner sep=1.5pt] (mode_bit) at (\arrayw-\extensionsarrayw-0.5*\bitw,0.5*\arrayh) {\small 0};
        \node[inner sep=1pt] (mode_label) at (\arrayw-\extensionsarrayw-0.5*\bitw,\labely) {\small Mode};
        \draw[-stealth] ($(mode_bit.south)+(0,-0.5pt)$) -- (mode_label.north);

        % Extensions
        \draw[very thick] (\arrayw-\extensionsarrayw,0) -- (\arrayw-\extensionsarrayw,\arrayh);
        \foreach \i in {1, ..., \extensioncount} {
            \ifnum \i = 3 {
                \draw[black] (\arrayw-\extensionsarrayw+\i*\extensionw,0) -- (\arrayw-\extensionsarrayw+\i*\extensionw,\arrayh);
            }
            \else {
                \draw[black,dotted] (\arrayw-\extensionsarrayw+\i*\extensionw,0) -- (\arrayw-\extensionsarrayw+\i*\extensionw,\arrayh);
            }
            \fi
        }
        \node[inner sep=1.75pt] (extension_0) at (\arrayw-\extensionsarrayw+0.5*\extensionw,0.5*\arrayh) {\small \textit{01}};
        \node[inner sep=1.75pt] (extension_1) at (\arrayw-\extensionsarrayw+1.5*\extensionw,0.5*\arrayh) {\small \textit{10}};
        \node[inner sep=1.75pt] (extension_2) at (\arrayw-\extensionsarrayw+2.5*\extensionw,0.5*\arrayh) {\small 11};
        \node[inner sep=0pt] (extension_dots) at (\arrayw-0.5*\extensionsarrayw+0.5*\extensioncount*\extensionw,0.5*\arrayh) {\small $\dots$};
        \node[inner sep=1pt] (delimiter_label) at (\arrayw-\extensionsarrayw+4.7*\extensionw,2*\arrayh) {\small Delimiter};
        \draw[-stealth] plot[hobby] coordinates { (extension_2.north) ($(extension_2.north)+(0.4*\extensionw,0.9*\arrayh)$) (delimiter_label.west) };
        \draw[decorate,decoration={brace,mirror,raise=1pt,amplitude=2pt}] (\arrayw-\extensionsarrayw,0) -- (\arrayw,0)
            node[pos=0.5,below=4pt,inner sep=1pt] (extensions_label) {\small Extension Pool};

        % Stubs
        \foreach \i in {1, ..., \stubcountl} {
            \draw[black] (\overflowbitmapw+\i*\stubw,0) -- (\overflowbitmapw+\i*\stubw,\arrayh);
        }
        \node[inner sep=1.8pt] (stub_0) at (\overflowbitmapw+0.5*\stubw,0.5*\arrayh) {\small 111111};
        \node[inner sep=1.8pt] (stub_1) at (\overflowbitmapw+1.5*\stubw,0.5*\arrayh) {\small 101010};
        \node[inner sep=1pt] (stub_dots) at (0.5*\overflowbitmapw+0.5*\stubcountl*\stubw+0.5*\arrayw-0.5*\bitw-0.5*\extensionsarrayw,0.5*\arrayh) {\small $\dots$};
        \draw[decorate,decoration={brace,mirror,raise=1pt,amplitude=2pt}] (\overflowbitmapw,0) -- (\arrayw-\extensionsarrayw-\bitw,0)
            node[pos=0.5,below=4pt,inner sep=1pt] (stub_label) {\small Stubs};

        % Example Counters
        \ifappendix
            \def\counterw{1.9}
        \else
            \def\counterw{2.25}
        \fi
        \def\counterh{0.3}
        \def\countersep{0.5}
        \def\counterx{0.5*\arrayw-0.5*\countersep-0.5*\counterw}
        \def\countery{1.0}
        \def\counterlow{0.38*\counterw}

        \ifappendix
            \draw (\counterx-0.5*\counterw+1.09*\counterlow,\countery) -- (\counterx-0.5*\counterw+1.09*\counterlow,\countery+\counterh);
        \else
            \draw (\counterx-0.5*\counterw+1.115*\counterlow,\countery) -- (\counterx-0.5*\counterw+1.115*\counterlow,\countery+\counterh);
        \fi
        \node[inner sep=1.5pt] (counter_0_lower_bits) at (\counterx-0.5*\counterw+0.5*\counterlow,\countery+0.5435*\counterh) {\small 111111};
        \draw[-stealth] (counter_0_lower_bits.south) -- (stub_0.north);
        \node[inner sep=1.5pt] (counter_0_higher_bits) at (\counterx+0.5*\counterlow,\countery+0.525*\counterh) {\small 0000$\dots$};
        \node[inner sep=0pt] (counter_0_label) at (\counterx,\countery+1.75*\counterh) {\small Counter 0};

        \def\counterx{0.5*\arrayw+0.5*\countersep+0.5*\counterw}

        \ifappendix
            \draw (\counterx-0.5*\counterw+1.12*\counterlow,\countery) -- (\counterx-0.5*\counterw+1.12*\counterlow,\countery+\counterh);
        \else
            \draw (\counterx-0.5*\counterw+1.13*\counterlow,\countery) -- (\counterx-0.5*\counterw+1.13*\counterlow,\countery+\counterh);
        \fi
        \node[inner sep=1.5pt] (counter_1_lower_bits) at (\counterx-0.5*\counterw+0.5*\counterlow,\countery+0.5*\counterh) {\small 101010};
        \draw[-stealth] (counter_1_lower_bits.south) -- (stub_1.north);
        \node[inner sep=1.0pt] (counter_1_higher_bits) at (\counterx+0.5*\counterlow,\countery+0.5*\counterh) {\small 1010$\dots$};
        \ifappendix
            \draw[decorate,decoration={brace,raise=1pt,amplitude=2pt}] ($(counter_1_higher_bits.south west)+(12pt,0)$) -- ($(counter_1_higher_bits.south west)+(1.5pt,0)$)
                node[pos=0.5,below=2pt,inner sep=0pt] (counter_1_higher_bits_brace) { };
        \else
            \draw[decorate,decoration={brace,raise=1pt,amplitude=2pt}] ($(counter_1_higher_bits.south west)+(12.5pt,0)$) -- ($(counter_1_higher_bits.south west)+(2.0pt,0)$)
                node[pos=0.5,below=2pt,inner sep=0pt] (counter_1_higher_bits_brace) { };
        \fi
        \draw[decorate,decoration={brace,raise=1pt,amplitude=2pt}] (\arrayw-\extensionsarrayw,\arrayh) -- (\arrayw-\extensionsarrayw+2*\extensionw,\arrayh)
            node[pos=0.5,above=2pt,inner sep=0pt] (counter_1_extensions_brace) { };
        \draw[-stealth] (counter_1_higher_bits_brace.south) -- (counter_1_extensions_brace.north);
        \node[inner sep=0pt] (counter_1_label) at (\counterx,\countery+1.75*\counterh) {\small Counter 1};

        \def\axisl{0.5*\arrayw-0.8*\counterw}
        \def\axisr{0.5*\arrayw+0.8*\counterw}
        \def\axisy{\countery+3*\counterh}
    \end{tikzpicture}
    \vspace{-2.3mm}
    \caption{\counterencodingabbrv\ encodes a chunk of~$c$ counters in a cache
    line. It encodes the~$s$ lower-order bits of each counter in a stub and
    stores its remaining higher-order bits in extensions comprised of 2-bit
    fragments representing base-3 digits. Here, we use a stub length of~$s=6$.}
    \label{fig:counter_chunk}
    \vspace{-5mm}
\end{figure}

\textbf{Chunking.}
To colocate stubs and their extensions in 512-bit cache lines,
\counterencodingabbrv\ partitions each counter array into contiguous
\emph{Chunks}, each occupying a cache line. A chunk stores its counters' stubs
in a contiguous array and tracks the counters that overflow their stubs in an
\emph{Overflows Bitmap} using one bit per counter. \counterencodingabbrv\
stores the extensions of these overflowing counters contiguously in the
remaining space towards the end of the chunk, called the \emph{Extension Pool}.
Counter~0 and Counter~1 in \Cref{fig:counter_chunk}
are examples of a non-overflowing and an overflowing counter, respectively.

\textbf{Parameters.}
\counterencodingabbrv\ has two global parameters: the stub length in bits~$s$
and the number of counters per chunk~$c$. With more skewed workloads, having
shorter stubs and more counters per chunk reduces space consumption. The reason
is that skew shrinks most counter values, enabling the use of shorter stubs for
storing them. Skew also reduces the average length of the extensions, leaving
more space in a chunk to store more counters.

Later in this section, we discuss how to adapt the tuning of these parameters
to the workload as the stream evolves. When the workload is uniform, this
procedure tunes the stub length so that all counter values fit in stubs,
allowing it to remove the overflows bitmaps and yield the configuration of a
standard CMS. We illustrate the impact of this procedure on the memory
footprint in \Cref{sec:evaluation}.

\textbf{Extensions.}
An extension consists of one or more 2{\nobreakdash-}bit \emph{Fragments} that
encode the higher-order bits that exceed the length of the corresponding stub.
We use 2{\nobreakdash-}bit fragments to encode these bits at the finest possible granularity.
We use the fragment~$11$ as a delimiter at the end
of each extension. The other three combinations of bits are used to encode each
extension's value in base 3. Specifically, the base{\nobreakdash-}3
digits~$(0)_3$, $(1)_3$, and $(2)_3$ are represented as the
fragments~\textit{00}, \textit{10}, and \textit{01}, respectively. We store
fragments in increasing order of their digits' significance from left to right.
Formally, a value~$v$ is encoded in an extension by representing each of
its~$\left\lfloor \log_3 v \right\rfloor + \nolinebreak 1$ digits in a
fragment, with the~$j${\nobreakdash-}th digit being~\mbox{$v_j=\left\lfloor
v/3^j \right\rfloor \mod 3$}. We decode~$v$ back into a 32-bit binary integer
by computing the sum~$3^0 \cdot v_0 + 3^1 \cdot v_1 + \dots$. At the end of
this section, we show how to efficiently compute this sum without
multiplication~operations.

For example, Counter~1 in
\Cref{fig:counter_chunk},
has the value~21 in its lower-order bits and the value~5 in its higher-order
bits. Since~5 has two digits in base 3, it is encoded as the extension~$\langle
(2)_3, (1)_3, \text{delimiter} \rangle=\langle \text{\textit{01}},
\text{\textit{10}}, 11 \rangle$. Decoding this extension yields~$3^0 \cdot 2 +
3^1 \cdot 1 = 5$, \mbox{as expected}.

Forming a chunk's extensions from 2-bit fragments allows them to fit in 1-2
machine words in most cases. We prove in \Cref{sec:memory_analysis}
that using these extensions in conjunction with
short stubs enables \counterencodingabbrv\ to encode each counter's value in
space close to the length of its binary encoding. Even still, since extensions
are slightly less space-efficient than stubs due to their base{\nobreakdash-}3
encoding and their delimiters, we adapt the stub length at runtime to optimize
the memory footprint.

\ifappendix
\begin{figure}
    \centering
    \pgfdeclarelayer{background layer}
    \pgfsetlayers{background layer,main}
    \begin{tikzpicture}
        \def\arrayw{8.5}
        \def\arrayh{0.3}
        \def\stubw{0.09*\arrayw}
        \def\stubcountl{2}
        \def\stubcountr{2}
        \def\overflowbitmapw{0.18*\arrayw}
        \def\extensionsarrayw{0.4*\arrayw}
        \def\bitw{0.025*\arrayw}
        \def\extensionw{2*\bitw}
        \def\extensioncount{6}
        \def\partsep{0.15*\arrayw}

        \def\labely{-8.0*\arrayh}

        % Overflow Bitmap
        \draw[black] (0,0) rectangle (\overflowbitmapw,\arrayh);
        \draw (\overflowbitmapw,0) -- (\overflowbitmapw+\bitw,0);
        \draw (\overflowbitmapw,\arrayh) -- (\overflowbitmapw+\bitw,\arrayh);
        \draw[very thick] (\overflowbitmapw,0) -- (\overflowbitmapw,\arrayh);
        \node[inner sep=2pt,anchor=west] (overflow_bitmap) at (0,0.5*\arrayh) {\small $0 1 0 \text{\textbf{1}}$};
        \node[inner sep=0pt] (overflow_bitmap_dots) at (0.5*\overflowbitmapw+1.4*\bitw,0.5*\arrayh) {\small $\dots$};
        \node[inner sep=0pt] (overflow_bitmap_label) at (0.5*\overflowbitmapw,\labely) {\small Overflows};
        \node[inner sep=1pt] (rank_value) at (0.5*\overflowbitmapw,-0.75*\arrayh) {\small $\texttt{rank}(3,$~Overflows$)=1$};

        % Middle Dots
        \node[inner sep=0pt] (middle_dots) at (\overflowbitmapw+\bitw+0.5*\partsep,0.5*\arrayh) {\small $\dots$};

        % Extensions
        \draw[black] (\overflowbitmapw+\bitw+\partsep+\bitw,0) rectangle (\overflowbitmapw+\bitw+\partsep+\bitw+\extensionsarrayw,\arrayh);
        \draw[very thick] (\overflowbitmapw+\bitw+\partsep+\bitw,0) -- (\overflowbitmapw+\bitw+\partsep+\bitw,\arrayh);
        \draw (\overflowbitmapw+\bitw+\partsep,0) -- (\overflowbitmapw+\bitw+\partsep+\bitw,0);
        \draw (\overflowbitmapw+\bitw+\partsep,\arrayh) -- (\overflowbitmapw+\bitw+\partsep+\bitw,\arrayh);
        \foreach \i in {1, ..., \extensioncount} {
            \ifnum \i = 3 {
                \draw[black] (\overflowbitmapw+\bitw+\partsep+\bitw+\i*\extensionw,0) -- (\overflowbitmapw+\bitw+\partsep+\bitw+\i*\extensionw,\arrayh);
            }
            \else \ifnum \i = 5 {
                \draw[black] (\overflowbitmapw+\bitw+\partsep+\bitw+\i*\extensionw,0) -- (\overflowbitmapw+\bitw+\partsep+\bitw+\i*\extensionw,\arrayh);
            }
            \else {
                \draw[black,dotted] (\overflowbitmapw+\bitw+\partsep+\bitw+\i*\extensionw,0) -- (\overflowbitmapw+\bitw+\partsep+\bitw+\i*\extensionw,\arrayh);
            }
            \fi \fi
        }

        \begin{scope}[name prefix=extension_]
            \node[inner sep=0pt] (0) at (\overflowbitmapw+\bitw+\partsep+\bitw+0.51*\extensionw,0.5*\arrayh) {\small \textit{01}};
            \node[inner sep=0pt] (1) at (\overflowbitmapw+\bitw+\partsep+\bitw+1.5*\extensionw,0.5*\arrayh) {\small \textit{10}};
            \draw[decorate,decoration={brace,raise=1pt,amplitude=2pt}] (\overflowbitmapw+\bitw+\partsep+\bitw,\arrayh) -- (\overflowbitmapw+\bitw+\partsep+\bitw+3*\extensionw,\arrayh)
                node[pos=0.5,above=4pt,inner sep=1pt] (stub_length) {\small Skipped};
            \node[inner sep=0pt] (2) at (\overflowbitmapw+\bitw+\partsep+\bitw+2.5*\extensionw,0.5*\arrayh) {\small 11};

            \node[inner sep=0pt] (3) at (\overflowbitmapw+\bitw+\partsep+\bitw+3.5*\extensionw,0.5*\arrayh) {\small \textit{01}};
            \node[inner sep=0pt] (4) at (\overflowbitmapw+\bitw+\partsep+\bitw+4.5*\extensionw,0.5*\arrayh) {\small 11};
            % Highlight the example extension
            \begin{pgfonlayer}{background layer}
                \draw[fill=gray!20] (\overflowbitmapw+\bitw+\partsep+\bitw+3*\extensionw,0) rectangle (\overflowbitmapw+\bitw+\partsep+\bitw+5*\extensionw,\arrayh);
            \end{pgfonlayer}
            \node[inner sep=0pt] (5) at (\overflowbitmapw+\bitw+\partsep+\bitw+5.5*\extensionw,0.5*\arrayh) {\small \textit{10}};

            \node[inner sep=0pt] (dots) at (\overflowbitmapw+\bitw+\partsep+\bitw+0.5*\extensioncount*\extensionw+0.5*\extensionsarrayw,0.5*\arrayh) {\small $\dots$};
            \node[inner sep=0pt] (label) at (\overflowbitmapw+\bitw+\partsep+\bitw+0.5*\extensionsarrayw,\labely) {\small Extension Pool};
        \end{scope}

        % Select Optimization
        \begin{scope}[name prefix=shifted_extension_]
            \def\offy{1.25*\arrayh}
            \node[inner sep=0pt] (ampersand) at (\overflowbitmapw+\bitw+\partsep+\bitw-0.5*\extensionw,0.5*\arrayh-\offy) {\small \&};
            \node[inner sep=0pt] (0) at (\overflowbitmapw+\bitw+\partsep+\bitw+0.51*\extensionw,0.5*\arrayh-\offy) {\small 00};
            \node[inner sep=0pt] (1) at (\overflowbitmapw+\bitw+\partsep+\bitw+1.5*\extensionw,0.5*\arrayh-\offy) {\small 11};
            \node[inner sep=0pt] (2) at (\overflowbitmapw+\bitw+\partsep+\bitw+2.5*\extensionw,0.5*\arrayh-\offy) {\small 01};
            \node[inner sep=0pt] (3) at (\overflowbitmapw+\bitw+\partsep+\bitw+3.5*\extensionw,0.5*\arrayh-\offy) {\small 10};
            \node[inner sep=0pt] (4) at (\overflowbitmapw+\bitw+\partsep+\bitw+4.5*\extensionw,0.5*\arrayh-\offy) {\small 11};
            \node[inner sep=0pt] (5) at (\overflowbitmapw+\bitw+\partsep+\bitw+5.5*\extensionw,0.5*\arrayh-\offy) {\small 11};
            \node[inner sep=0pt] (dots) at (\overflowbitmapw+\bitw+\partsep+\bitw+0.5*\extensioncount*\extensionw+0.5*\extensionsarrayw,0.5*\arrayh-\offy) {\small $\dots$};
            \node[inner sep=0pt,right=0.2 of dots] (label) {\small Right Shift};
        \end{scope}

        \draw (\overflowbitmapw+\bitw+\partsep+\bitw,-1.5*\arrayh) -- (\overflowbitmapw+\bitw+\partsep+\bitw+\extensioncount*\extensionw,-1.5*\arrayh);

        \begin{scope}[name prefix=anded_extension_]
            \def\offy{2.75*\arrayh}
            \node[inner sep=0pt] (0) at (\overflowbitmapw+\bitw+\partsep+\bitw+0.51*\extensionw,0.5*\arrayh-\offy) {\small 00};
            \node[inner sep=0pt] (1) at (\overflowbitmapw+\bitw+\partsep+\bitw+1.5*\extensionw,0.5*\arrayh-\offy) {\small 10};
            \node[inner sep=0pt] (2) at (\overflowbitmapw+\bitw+\partsep+\bitw+2.5*\extensionw,0.5*\arrayh-\offy) {\small 01};
            \node[inner sep=0pt] (3) at (\overflowbitmapw+\bitw+\partsep+\bitw+3.5*\extensionw,0.5*\arrayh-\offy) {\small 00};
            \node[inner sep=0pt] (4) at (\overflowbitmapw+\bitw+\partsep+\bitw+4.5*\extensionw,0.5*\arrayh-\offy) {\small 11};
            \node[inner sep=0pt] (5) at (\overflowbitmapw+\bitw+\partsep+\bitw+5.5*\extensionw,0.5*\arrayh-\offy) {\small 10};
            \node[inner sep=0pt] (dots) at (\overflowbitmapw+\bitw+\partsep+\bitw+0.5*\extensioncount*\extensionw+0.5*\extensionsarrayw,0.5*\arrayh-\offy) {\small $\dots$};
        \end{scope}

        \begin{scope}[name prefix=mask_]
            \def\offy{3.75*\arrayh}
            \node[inner sep=0pt] (ampersand) at (\overflowbitmapw+\bitw+\partsep+\bitw-0.5*\extensionw,0.5*\arrayh-\offy) {\small \&};
            \node[inner sep=0pt] (0) at (\overflowbitmapw+\bitw+\partsep+\bitw+0.51*\extensionw,0.5*\arrayh-\offy) {\small 01};
            \node[inner sep=0pt] (1) at (\overflowbitmapw+\bitw+\partsep+\bitw+1.5*\extensionw,0.5*\arrayh-\offy) {\small 01};
            \node[inner sep=0pt] (2) at (\overflowbitmapw+\bitw+\partsep+\bitw+2.5*\extensionw,0.5*\arrayh-\offy) {\small 01};
            \node[inner sep=0pt] (3) at (\overflowbitmapw+\bitw+\partsep+\bitw+3.5*\extensionw,0.5*\arrayh-\offy) {\small 01};
            \node[inner sep=0pt] (4) at (\overflowbitmapw+\bitw+\partsep+\bitw+4.5*\extensionw,0.5*\arrayh-\offy) {\small 01};
            \node[inner sep=0pt] (5) at (\overflowbitmapw+\bitw+\partsep+\bitw+5.5*\extensionw,0.5*\arrayh-\offy) {\small 01};
            \node[inner sep=0pt] (dots) at (\overflowbitmapw+\bitw+\partsep+\bitw+0.5*\extensioncount*\extensionw+0.5*\extensionsarrayw,0.5*\arrayh-\offy) {\small $\dots$};
            \node[inner sep=0pt,right=0.2 of dots] (label) {\small Mask};
        \end{scope}

        \draw (\overflowbitmapw+\bitw+\partsep+\bitw,-4.0*\arrayh) -- (\overflowbitmapw+\bitw+\partsep+\bitw+\extensioncount*\extensionw,-4.0*\arrayh);

        \begin{scope}[name prefix=ends]
            \def\offy{5.25*\arrayh}
            \node[inner sep=0pt] (0) at (\overflowbitmapw+\bitw+\partsep+\bitw+0.51*\extensionw,0.5*\arrayh-\offy) {\small 00};
            \node[inner sep=0pt] (1) at (\overflowbitmapw+\bitw+\partsep+\bitw+1.5*\extensionw,0.5*\arrayh-\offy) {\small 00};
            \node[inner sep=0pt] (2) at (\overflowbitmapw+\bitw+\partsep+\bitw+2.5*\extensionw,0.5*\arrayh-\offy) {\small 01};
            \node[inner sep=0pt] (3) at (\overflowbitmapw+\bitw+\partsep+\bitw+3.5*\extensionw,0.5*\arrayh-\offy) {\small 00};
            \node[inner sep=0pt] (4) at (\overflowbitmapw+\bitw+\partsep+\bitw+4.5*\extensionw,0.5*\arrayh-\offy) {\small 0\textbf{1}};
            \node[inner sep=0pt] (5) at (\overflowbitmapw+\bitw+\partsep+\bitw+5.5*\extensionw,0.5*\arrayh-\offy) {\small 00};
            \node[inner sep=0pt] (dots) at (\overflowbitmapw+\bitw+\partsep+\bitw+0.5*\extensioncount*\extensionw+0.5*\extensionsarrayw,0.5*\arrayh-\offy) {\small $\dots$};
            \node[inner sep=0pt,right=0.2 of dots] (label) {\small Delimiters};
        \end{scope}

        \node[inner sep=1pt,anchor=west] (select_value) at (\overflowbitmapw+\bitw+\partsep+\bitw,-6.25*\arrayh) {\small $\texttt{select}(1,$ Delimiters$)=9$};
        \draw[-stealth] ($(rank_value.south east)+(-3pt,0)$) |- (select_value.west);
    \end{tikzpicture}
    \caption{\sketchcms applies rank and select operations to locate an
    overflowing counter's extension.}
    \vspace{-5mm}
    \label{fig:extension_location_example}
\end{figure}
\fi

\textbf{Locating an Extension.}
\counterencodingabbrv\ keeps the extensions in each extension pool in the same
order as that of the counters. It locates the~$i${\nobreakdash-}th counter's
extension by counting the number of preceding bits set to~1s in the overflows
bitmap. This is equivalent to the number of overflowing counters preceding
the~$i${\nobreakdash-}th counter, denoted by~$m$. Skipping over the first~$m$
extensions in the extension pool leads to the position of the target extension.
From this position, \counterencodingabbrv\ traverses the extension backwards to
decode its value.

\Cref{fig:extension_location_example}
illustrates how to identify the 3rd counter's extension in a chunk. Since the
3rd bit in the overflows bitmap is~1, the 3rd counter has an extension. As
there is one bit set to~1 before the 3rd bit in the overflows bitmap, we skip
one extension to locate the 3rd counter's extension,~shown~in~gray.

\ifnoappendix
\begin{figure}
    \vspace{-0.5mm}
    \centering
    \pgfdeclarelayer{background layer}
    \pgfsetlayers{background layer,main}
    \begin{tikzpicture}
        \def\arrayw{9.5}
        \def\arrayh{0.35}
        \def\stubw{0.09*\arrayw}
        \def\stubcountl{2}
        \def\stubcountr{2}
        \def\overflowbitmapw{0.18*\arrayw}
        \def\extensionsarrayw{0.4*\arrayw}
        \def\bitw{0.025*\arrayw}
        \def\extensionw{2*\bitw}
        \def\extensioncount{6}
        \def\partsep{0.15*\arrayw}

        \def\labely{-7.5*\arrayh}

        % Overflow Bitmap
        \draw[black] (0,0) rectangle (\overflowbitmapw,\arrayh);
        \draw (\overflowbitmapw,0) -- (\overflowbitmapw+\bitw,0);
        \draw (\overflowbitmapw,\arrayh) -- (\overflowbitmapw+\bitw,\arrayh);
        \draw[very thick] (\overflowbitmapw,0) -- (\overflowbitmapw,\arrayh);
        \node[inner sep=2pt,anchor=west] (overflow_bitmap) at (0,0.5*\arrayh) {\small $0 1 0 \text{\textbf{1}}$};
        \node[inner sep=0pt] (overflow_bitmap_dots) at (0.5*\overflowbitmapw+1.4*\bitw,0.5*\arrayh) {\small $\dots$};
        \node[inner sep=0pt] (overflow_bitmap_label) at (0.5*\overflowbitmapw,\labely) {\small Overflows};
        \node[inner sep=1pt] (rank_value) at (0.5*\overflowbitmapw,-0.75*\arrayh) {\small $\texttt{rank}(3,$~Overflows$)=1$};

        % Middle Dots
        \node[inner sep=0pt] (middle_dots) at (\overflowbitmapw+\bitw+0.5*\partsep,0.5*\arrayh) {\small $\dots$};

        % Extensions
        \draw[black] (\overflowbitmapw+\bitw+\partsep+\bitw,0) rectangle (\overflowbitmapw+\bitw+\partsep+\bitw+\extensionsarrayw,\arrayh);
        \draw[very thick] (\overflowbitmapw+\bitw+\partsep+\bitw,0) -- (\overflowbitmapw+\bitw+\partsep+\bitw,\arrayh);
        \draw (\overflowbitmapw+\bitw+\partsep,0) -- (\overflowbitmapw+\bitw+\partsep+\bitw,0);
        \draw (\overflowbitmapw+\bitw+\partsep,\arrayh) -- (\overflowbitmapw+\bitw+\partsep+\bitw,\arrayh);
        \foreach \i in {1, ..., \extensioncount} {
            \ifnum \i = 3 {
                \draw[black] (\overflowbitmapw+\bitw+\partsep+\bitw+\i*\extensionw,0) -- (\overflowbitmapw+\bitw+\partsep+\bitw+\i*\extensionw,\arrayh);
            }
            \else \ifnum \i = 5 {
                \draw[black] (\overflowbitmapw+\bitw+\partsep+\bitw+\i*\extensionw,0) -- (\overflowbitmapw+\bitw+\partsep+\bitw+\i*\extensionw,\arrayh);
            }
            \else {
                \draw[black,dotted] (\overflowbitmapw+\bitw+\partsep+\bitw+\i*\extensionw,0) -- (\overflowbitmapw+\bitw+\partsep+\bitw+\i*\extensionw,\arrayh);
            }
            \fi \fi
        }

        \begin{scope}[name prefix=extension_]
            \node[inner sep=0pt] (0) at (\overflowbitmapw+\bitw+\partsep+\bitw+0.51*\extensionw,0.5*\arrayh) {\small \textit{01}};
            \node[inner sep=0pt] (1) at (\overflowbitmapw+\bitw+\partsep+\bitw+1.5*\extensionw,0.5*\arrayh) {\small \textit{10}};
            \draw[decorate,decoration={brace,raise=1pt,amplitude=2pt}] (\overflowbitmapw+\bitw+\partsep+\bitw,\arrayh) -- (\overflowbitmapw+\bitw+\partsep+\bitw+3*\extensionw,\arrayh)
                node[pos=0.5,above=4pt,inner sep=1pt] (stub_length) {\small Skipped};
            \node[inner sep=0pt] (2) at (\overflowbitmapw+\bitw+\partsep+\bitw+2.5*\extensionw,0.5*\arrayh) {\small 11};

            \node[inner sep=0pt] (3) at (\overflowbitmapw+\bitw+\partsep+\bitw+3.5*\extensionw,0.5*\arrayh) {\small \textit{01}};
            \node[inner sep=0pt] (4) at (\overflowbitmapw+\bitw+\partsep+\bitw+4.5*\extensionw,0.5*\arrayh) {\small 11};
            % Highlight the example extension
            \begin{pgfonlayer}{background layer}
                \draw[fill=gray!20] (\overflowbitmapw+\bitw+\partsep+\bitw+3*\extensionw,0) rectangle (\overflowbitmapw+\bitw+\partsep+\bitw+5*\extensionw,\arrayh);
            \end{pgfonlayer}
            \node[inner sep=0pt] (5) at (\overflowbitmapw+\bitw+\partsep+\bitw+5.5*\extensionw,0.5*\arrayh) {\small \textit{10}};

            \node[inner sep=0pt] (dots) at (\overflowbitmapw+\bitw+\partsep+\bitw+0.5*\extensioncount*\extensionw+0.5*\extensionsarrayw,0.5*\arrayh) {\small $\dots$};
            \node[inner sep=0pt] (label) at (\overflowbitmapw+\bitw+\partsep+\bitw+0.5*\extensionsarrayw,\labely) {\small Extension Pool};
        \end{scope}

        % Select Optimization
        \begin{scope}[name prefix=shifted_extension_]
            \def\offy{1.25*\arrayh}
            \node[inner sep=0pt] (ampersand) at (\overflowbitmapw+\bitw+\partsep+\bitw-0.5*\extensionw,0.5*\arrayh-\offy) {\small \&};
            \node[inner sep=0pt] (0) at (\overflowbitmapw+\bitw+\partsep+\bitw+0.51*\extensionw,0.5*\arrayh-\offy) {\small 00};
            \node[inner sep=0pt] (1) at (\overflowbitmapw+\bitw+\partsep+\bitw+1.5*\extensionw,0.5*\arrayh-\offy) {\small 11};
            \node[inner sep=0pt] (2) at (\overflowbitmapw+\bitw+\partsep+\bitw+2.5*\extensionw,0.5*\arrayh-\offy) {\small 01};
            \node[inner sep=0pt] (3) at (\overflowbitmapw+\bitw+\partsep+\bitw+3.5*\extensionw,0.5*\arrayh-\offy) {\small 10};
            \node[inner sep=0pt] (4) at (\overflowbitmapw+\bitw+\partsep+\bitw+4.5*\extensionw,0.5*\arrayh-\offy) {\small 11};
            \node[inner sep=0pt] (5) at (\overflowbitmapw+\bitw+\partsep+\bitw+5.5*\extensionw,0.5*\arrayh-\offy) {\small 11};
            \node[inner sep=0pt] (dots) at (\overflowbitmapw+\bitw+\partsep+\bitw+0.5*\extensioncount*\extensionw+0.5*\extensionsarrayw,0.5*\arrayh-\offy) {\small $\dots$};
            \node[inner sep=0pt,right=0.2 of dots] (label) {\small Right Shift};
        \end{scope}

        \draw (\overflowbitmapw+\bitw+\partsep+\bitw,-1.5*\arrayh) -- (\overflowbitmapw+\bitw+\partsep+\bitw+\extensioncount*\extensionw,-1.5*\arrayh);

        \begin{scope}[name prefix=anded_extension_]
            \def\offy{2.75*\arrayh}
            \node[inner sep=0pt] (0) at (\overflowbitmapw+\bitw+\partsep+\bitw+0.51*\extensionw,0.5*\arrayh-\offy) {\small 00};
            \node[inner sep=0pt] (1) at (\overflowbitmapw+\bitw+\partsep+\bitw+1.5*\extensionw,0.5*\arrayh-\offy) {\small 10};
            \node[inner sep=0pt] (2) at (\overflowbitmapw+\bitw+\partsep+\bitw+2.5*\extensionw,0.5*\arrayh-\offy) {\small 01};
            \node[inner sep=0pt] (3) at (\overflowbitmapw+\bitw+\partsep+\bitw+3.5*\extensionw,0.5*\arrayh-\offy) {\small 00};
            \node[inner sep=0pt] (4) at (\overflowbitmapw+\bitw+\partsep+\bitw+4.5*\extensionw,0.5*\arrayh-\offy) {\small 11};
            \node[inner sep=0pt] (5) at (\overflowbitmapw+\bitw+\partsep+\bitw+5.5*\extensionw,0.5*\arrayh-\offy) {\small 10};
            \node[inner sep=0pt] (dots) at (\overflowbitmapw+\bitw+\partsep+\bitw+0.5*\extensioncount*\extensionw+0.5*\extensionsarrayw,0.5*\arrayh-\offy) {\small $\dots$};
        \end{scope}

        \begin{scope}[name prefix=mask_]
            \def\offy{3.75*\arrayh}
            \node[inner sep=0pt] (ampersand) at (\overflowbitmapw+\bitw+\partsep+\bitw-0.5*\extensionw,0.5*\arrayh-\offy) {\small \&};
            \node[inner sep=0pt] (0) at (\overflowbitmapw+\bitw+\partsep+\bitw+0.51*\extensionw,0.5*\arrayh-\offy) {\small 01};
            \node[inner sep=0pt] (1) at (\overflowbitmapw+\bitw+\partsep+\bitw+1.5*\extensionw,0.5*\arrayh-\offy) {\small 01};
            \node[inner sep=0pt] (2) at (\overflowbitmapw+\bitw+\partsep+\bitw+2.5*\extensionw,0.5*\arrayh-\offy) {\small 01};
            \node[inner sep=0pt] (3) at (\overflowbitmapw+\bitw+\partsep+\bitw+3.5*\extensionw,0.5*\arrayh-\offy) {\small 01};
            \node[inner sep=0pt] (4) at (\overflowbitmapw+\bitw+\partsep+\bitw+4.5*\extensionw,0.5*\arrayh-\offy) {\small 01};
            \node[inner sep=0pt] (5) at (\overflowbitmapw+\bitw+\partsep+\bitw+5.5*\extensionw,0.5*\arrayh-\offy) {\small 01};
            \node[inner sep=0pt] (dots) at (\overflowbitmapw+\bitw+\partsep+\bitw+0.5*\extensioncount*\extensionw+0.5*\extensionsarrayw,0.5*\arrayh-\offy) {\small $\dots$};
            \node[inner sep=0pt,right=0.2 of dots] (label) {\small Mask};
        \end{scope}

        \draw (\overflowbitmapw+\bitw+\partsep+\bitw,-4.0*\arrayh) -- (\overflowbitmapw+\bitw+\partsep+\bitw+\extensioncount*\extensionw,-4.0*\arrayh);

        \begin{scope}[name prefix=ends]
            \def\offy{5.25*\arrayh}
            \node[inner sep=0pt] (0) at (\overflowbitmapw+\bitw+\partsep+\bitw+0.51*\extensionw,0.5*\arrayh-\offy) {\small 00};
            \node[inner sep=0pt] (1) at (\overflowbitmapw+\bitw+\partsep+\bitw+1.5*\extensionw,0.5*\arrayh-\offy) {\small 00};
            \node[inner sep=0pt] (2) at (\overflowbitmapw+\bitw+\partsep+\bitw+2.5*\extensionw,0.5*\arrayh-\offy) {\small 01};
            \node[inner sep=0pt] (3) at (\overflowbitmapw+\bitw+\partsep+\bitw+3.5*\extensionw,0.5*\arrayh-\offy) {\small 00};
            \node[inner sep=0pt] (4) at (\overflowbitmapw+\bitw+\partsep+\bitw+4.5*\extensionw,0.5*\arrayh-\offy) {\small 0\textbf{1}};
            \node[inner sep=0pt] (5) at (\overflowbitmapw+\bitw+\partsep+\bitw+5.5*\extensionw,0.5*\arrayh-\offy) {\small 00};
            \node[inner sep=0pt] (dots) at (\overflowbitmapw+\bitw+\partsep+\bitw+0.5*\extensioncount*\extensionw+0.5*\extensionsarrayw,0.5*\arrayh-\offy) {\small $\dots$};
            \node[inner sep=0pt,right=0.2 of dots] (label) {\small Delimiters Bitmap};
        \end{scope}

        \node[inner sep=1pt,anchor=west] (select_value) at (\overflowbitmapw+\bitw+\partsep+\bitw,-6.25*\arrayh) {\small $\texttt{select}(1,$ Delimiters$)=9$};
        \draw[-stealth] ($(rank_value.south east)+(-3pt,0)$) |- (select_value.west);
    \end{tikzpicture}
    \vspace{-3mm}
    \caption{\sketchcms applies rank and select operations to locate an
    overflowing counter's extension.}
    \vspace{-5mm}
    \label{fig:extension_location_example}
\end{figure}
\fi

Although one can locate an overflowing counter's extension by looping over the
fragments in the extension pool, doing so incurs high CPU overheads. Instead,
\counterencodingabbrv\ quickly locates an extension using rank and select
operations~\cite{Poppy,FlorianRankandSelect,CompactPATTrees,RankandSelectDict,GQF}.
Given a bitmap~$B$, the $\texttt{rank}(i,B)$ operation counts the number of~1s
strictly before the $i$-th bit, and the $\text{select}(i,B)$ operation returns
the position of the~$i${\nobreakdash-}th 1. Later in this section, we show how
to use specialized CPU instructions to implement these operations efficiently. 

The first step is to determine~$m$, the number of overflowing counters to the
left of the~$i${\nobreakdash-}th counter. As this is the number of 1s before
the $i${\nobreakdash-}th bit in the overflows bitmap, \counterencodingabbrv\
applies the rank operation to evaluate~$m=\texttt{rank}(i,$~overflows$)$.

The second step is to identify the target extension's delimiter by skipping
over the~$m$ delimiters preceding it. We do this using several bit manipulation
techniques. First, we convert the extension pool into a so-called
\emph{Delimiters Bitmap}. This bitmap contains one bit set to~1 at the position
of every delimiter and~0s everywhere else, as illustrated at the bottom right
of
\Cref{fig:extension_location_example}.
To derive this bitmap, we take the bitwise-and of the extension pool with a
version of itself shifted to the right by one bit. The result is a bitmap with
a~1 at the end of any two consecutive~1s in the original extension pool, as
illustrated in
\Cref{fig:extension_location_example}.
We mask out all bits corresponding to the first bit of a fragment to ensure
only the end of each delimiter is a~1. In sum, the delimiters bitmap is derived
as
\begin{center}
    delimiters $=$ extensions \& $($extensions >‌> 1$)$ \& $0101\dots0101$.
\end{center}
Finally, we use the select operation to
evaluate~$\texttt{select}(m,$~delimiters$)$, which skips over the first~$m$
delimiters and returns the position of the~$m${\nobreakdash-}th delimiter. We
sequentially apply the bitwise operations above to each machine word comprising
the chunk's extensions. Locating extensions in this way is efficient, as a
chunk's extensions typically fit in 1-2 words. 
%We will shortly show that updates rarely modify extensions, enabling high insertion throughput.

\Cref{fig:extension_location_example} shows an example of locating the 3rd
counter's extension using this workflow. Here,
$\texttt{rank}(3,$~overflows$)=1$. After deriving the delimiters bitmap by
manipulating the extension pool, \counterencodingabbrv\
evaluates~$\texttt{select}(1,$~delimiters$)=9$ to conclude that the target
extension ends at the 9th bit.

\textbf{Tails.}
In the rare event that many heavy hitters hash to the same chunk, many of the
chunk's counters will overflow, potentially growing the cumulative length of
the extensions beyond the size of the chunk. \counterencodingabbrv\ handles
this case by allocating an external array of 32-bit integers called
\emph{Tails} for the chunk, with one integer per counter. It stores
the~$i${\nobreakdash-}th counter's higher-order bits in
the~$i${\nobreakdash-}th tail in binary. Each chunk contains a mode flag
indicating if it has a tails array, as depicted in
\Cref{fig:counter_chunk_with_tails}.
If it does, we repurpose the space taken up by the extension pool to store a
pointer to the tails array. We ensure that the extension pool is large enough
to accommodate this pointer (i.e., at least 48 bits\footnote{The upper 16 bits
in a 64-bit pointer are unused on many operating systems due to their use of
4-level paging~\cite{Intel4LevelPaging}. As such, we only store the lower 48
bits of the pointer. That said, modern machines and operating systems have
moved towards using 5-level paging~\cite{Intel5LevelPaging}, which uses 57-bit
addresses. Despite this shift, operating systems like Linux still default to
using 48-bit addresses for compatibility reasons~\cite{Linux5LevelPaging}.})
when tuning the number of counters per chunk and the stub size. Any space after
the pointer is unused. 

\begin{figure}
    \centering
    \begin{tikzpicture}
        \ifappendix
            \def\arrayw{7.5}
            \def\arrayh{0.3}
        \else
            \def\arrayw{9.0}
            \def\arrayh{0.35}
        \fi
        \def\tailsarrayw{0.5*\arrayw}
        \def\tailsarrayh{1.0*\arrayh}
        \def\tailsarrayoffsetx{0.52*\arrayw}
        \def\tailsarrayoffsety{3.0*\arrayh}
        \def\tailsptrarrowoffset{0.05*\tailsarrayw}
        \def\stubw{0.15*\arrayw}
        \def\tailw{1.1*\stubw}
        \def\stubcountl{2}
        \def\overflowbitmapw{0.15*\arrayw}
        \def\extensionsarrayw{0.3*\arrayw}
        \def\bitw{0.025*\arrayw}
        \def\extensionw{2*\bitw}
        \def\extensioncount{4}
        \def\labely{-2.25*\arrayh}
        \def\ptrw{10*\bitw}
        \def\extensioncount{3}

        \draw[black] (0,0) rectangle (\arrayw,\arrayh);

        % Overflow Bitmap
        \draw[very thick] (\overflowbitmapw,0) -- (\overflowbitmapw,\arrayh);
        \node[inner sep=2pt,anchor=west] (overflow_bitmap) at (0,0.5*\arrayh) {\small $1 1$};
        \node[inner sep=0pt] (overflow_bitmap_dots) at (0.5*\overflowbitmapw+\bitw,0.5*\arrayh) {\small $\dots$};

        % Spill Bit and Unused Part
        \draw[very thick] (\arrayw-\extensionsarrayw-\bitw,0) -- (\arrayw-\extensionsarrayw-\bitw,\arrayh);
        %\draw[pattern=north east lines,pattern color=gray!50] (\arrayw-\extensionsarrayw-\bitw,0) rectangle (\arrayw-\extensionsarrayw,\arrayh);
        \node[inner sep=1.5pt] (mode_bit) at (\arrayw-\extensionsarrayw-0.5*\bitw,0.5*\arrayh) {\small $1$};
        \node[inner sep=1pt,below=0.4*\arrayh of mode_bit] (mode_bit_label) {\small Mode};

        % Spill Pointer
        \draw[pattern=north east lines,pattern color=gray!50] (\arrayw-\extensionsarrayw+\ptrw,0) rectangle (\arrayw,\arrayh);
        \draw[very thick] (\arrayw-\extensionsarrayw,0) -- (\arrayw-\extensionsarrayw,\arrayh);
        \node[inner sep=1.6pt] (tails_ptr) at (\arrayw-\extensionsarrayw+0.5*\ptrw,0.5*\arrayh) {\small Pointer};

        % Stub Counters
        \foreach \i in {1, ..., \stubcountl} {
            \draw[black] (\overflowbitmapw+\i*\stubw,0) -- (\overflowbitmapw+\i*\stubw,\arrayh);
        }
        \node[inner sep=1.6pt] (stub_0) at (\overflowbitmapw+0.5*\stubw,0.5*\arrayh) {\small $111111$};
        \node[inner sep=1.6pt] (stub_1) at (\overflowbitmapw+1.5*\stubw,0.5*\arrayh) {\small $000101$};
        \node[inner sep=1pt] (stub_dots) at (0.5*\overflowbitmapw+0.5*\stubcountl*\stubw+0.5*\arrayw-0.5*\bitw-0.5*\extensionsarrayw,0.5*\arrayh) {\small $\dots$};

        % Tails
        \draw[black] (\tailsarrayoffsetx,\tailsarrayoffsety) rectangle (\tailsarrayoffsetx+\tailsarrayw,\tailsarrayoffsety-\tailsarrayh);
        \foreach \i in {1, ..., \stubcountl} {
            \draw[black] (\tailsarrayoffsetx+\i*\tailw,\tailsarrayoffsety) -- (\tailsarrayoffsetx+\i*\tailw,\tailsarrayoffsety-\tailsarrayh);
        }
        \node[inner sep=1.6pt] (tail_0) at (\tailsarrayoffsetx+0.3*\tailw,\tailsarrayoffsety-0.5*\tailsarrayh) {\small $1000$};
        \node[inner sep=0pt,right=1pt of tail_0] (tail_0_dots) {\small $\dots$};
        \node[inner sep=1.6pt] (tail_1) at (\tailsarrayoffsetx+1.3*\tailw,\tailsarrayoffsety-0.5*\tailsarrayh) {\small $1110$};
        \node[inner sep=0pt,right=1pt of tail_1] (tail_1_dots) {\small $\dots$};
        \node[inner sep=1pt] (tail_dots) at (\tailsarrayoffsetx+0.5*\tailsarrayw+0.5*\stubcountl*\tailw,\tailsarrayoffsety-0.5*\tailsarrayh) {\small $\dots$};
        \draw[densely dotted] (tails_ptr.north) |- (\tailsarrayoffsetx-\tailsptrarrowoffset,\tailsarrayoffsety/2);
        \draw[-stealth,densely dotted] (\tailsarrayoffsetx-\tailsptrarrowoffset,\tailsarrayoffsety/2) |- (\tailsarrayoffsetx,\tailsarrayoffsety-\tailsarrayh/2);
        \node[inner sep=2pt,anchor=west] (tails_label) at (\tailsarrayoffsetx+\tailsarrayw,\tailsarrayoffsety-\tailsarrayh/2) {\small Tails};

        % Example Counters
        \ifappendix
            \def\counterw{1.9}
        \else
            \def\counterw{2.25}
        \fi
        \def\counterh{0.3}
        \def\countersep{0.5}
        \def\counterx{0.5*\arrayw-0.5*\countersep-0.5*\counterw}
        \ifappendix
            \def\countery{1.4}
        \else
            \def\countery{1.6}
        \fi
        \def\counterlow{0.38*\counterw}

        \ifappendix
            \draw (\counterx-0.5*\counterw+1.09*\counterlow,\countery) -- (\counterx-0.5*\counterw+1.09*\counterlow,\countery+\counterh);
        \else
            \draw (\counterx-0.5*\counterw+1.11*\counterlow,\countery) -- (\counterx-0.5*\counterw+1.11*\counterlow,\countery+\counterh);
        \fi
        \node[inner sep=1.5pt] (counter_0_lower_bits) at (\counterx-0.5*\counterw+0.5*\counterlow,\countery+0.5435*\counterh) {\small $111111$};
        \draw[-stealth] (counter_0_lower_bits.south) -- (stub_0.north);
        \node[inner sep=1.5pt] (counter_0_higher_bits) at (\counterx+0.5*\counterlow,\countery+0.535*\counterh) {\small $1000\dots$};
        \draw[-stealth] (counter_0_higher_bits.south) -- (\tailsarrayoffsetx+0.5*\tailw,\tailsarrayoffsety);
        \node[inner sep=0pt] (counter_0_label) at (\counterx,\countery+1.75*\counterh) {\small Counter 0};

        \def\counterx{0.5*\arrayw+0.5*\countersep+0.5*\counterw}

        \draw (\counterx-0.5*\counterw+1.09*\counterlow,\countery) -- (\counterx-0.5*\counterw+1.09*\counterlow,\countery+\counterh);
        \node[inner sep=1.5pt] (counter_1_lower_bits) at (\counterx-0.5*\counterw+0.5*\counterlow,\countery+0.5*\counterh) {\small $000101$};
        \draw[-stealth] (counter_1_lower_bits.south) -- (stub_1.north);
        \node[inner sep=1.0pt] (counter_1_higher_bits) at (\counterx+0.5*\counterlow,\countery+0.5*\counterh) {\small $1110\dots$};
        \draw[-stealth] (counter_1_higher_bits.south) -- (\tailsarrayoffsetx+1.5*\tailw,\tailsarrayoffsety);
        \node[inner sep=0pt] (counter_1_label) at (\counterx,\countery+1.75*\counterh) {\small Counter 1};
    \end{tikzpicture}
    \ifnoappendix
        \vspace{-2mm}
    \fi
    \caption{When a chunk's extensions outgrow its space, it stores a pointer
    to a dedicated external array of tails to store the counters' higher-order
    bits.}
    \vspace{-5mm}
    \label{fig:counter_chunk_with_tails}
\end{figure}

\textbf{Incrementing Counters.}
When an insertion maps to the~$i${\nobreakdash-}th counter within a chunk,
\sketchcms increments it by adding one to the lower-order bits stored in
the~$i${\nobreakdash-}th stub, i.e., stub[$i$]. Most of the time, stub[$i$]'s
increment does not carry over to the counter's higher-order bits, allowing the
insertion to terminate early and achieve high throughput.

Whenever stub[$i$]'s increment carries over, \sketchcms adds one to the
counter's extension or tail. For an extension, this entails incrementing its
first digit and propagating a carry to the other digits as necessary. This
carry may propagate beyond the end of the extension, leading to the addition of
a new digit. \sketchcms makes room for the fragment representing this digit by
shifting subsequent extensions to the right. If the~$i${\nobreakdash-}th
counter did not have an extension, a new extension encoding the value~1 is
created for it. We use rank and select operations to determine the appropriate
position of this extension in the extension pool. Updating a counter's
higher-order bits is much simpler if they are stored within a tail's binary
integer. For the heavy hitters (which comprise most of the stream), this yields
faster insertions, as their counters often use tails.

\textbf{Decrementing Counters.}
Analogously, when a deletion maps to the~$i${\nobreakdash-}th counter in a
chunk, \sketchcms decrements the lower-order bits in~stub[$i$]. If stub[$i$]
borrows from the counter's higher-order bits as a result, \sketchcms decrements
the corresponding extension or tail. This may trigger the removal of a digit
from the counter's extension, in which case subsequent extensions are shifted
to the left to keep the extension pool tightly packed.

\textbf{Adaptive Tuning.}
To minimize space usage and ensure that a small fraction of the extension pools
are unused at any time, \sketchcms adaptively retunes \counterencodingabbrv's
number of counters per chunk~$c$ and its stub length~$s$. Retuning is triggered
whenever more than~$1\%$ of the chunks use tails arrays (due to insertions
growing the extensions), or there are more than two unused bits per counter
(due to deletions shrinking the extensions).\footnote{These conditions lead to
at most 20\% overprovisioned space upfront when counters are short (1 byte).
This fraction decreases quickly as the counters grow.} During tuning,
\sketchcms scans the arrays and computes a coarse histogram of the counter
values based on their bit length. It uses this histogram to calculate the
expected total length of a chunk's extensions for a given combination of
parameters~$c$ and $s$. For each combination, it applies Chebyshev's inequality
to the total extension length within a chunk to derive a conservative bound on
the proportion of chunks that will use tails arrays. It uses this bound to
estimate \sketchcms's memory footprint when using the corresponding tuning, and
picks the tuning that minimizes the memory footprint. The result sets the stub
length~$s$ to be slightly longer than the length of the average counter value.
This bounds the number of extensions and makes the number of stored tails
arrays and their memory overhead negligible. The amortized cost of retuning
\counterencodingabbrv\ is small, as many more updates than the number of
counters are required to trigger retuning. We demonstrate this under
Experiment~\hyperlink{experiment:vale_tuning}{3} of \Cref{sec:evaluation}.

\textbf{Impact of Cache Line Size.}
The aforementioned tuning procedure also allows \sketchcms to support different
cache line sizes. Having smaller cache lines leads to the tuning procedure
storing fewer counters per chunk. This reduction increases the variance of the
total size of a chunk's extensions, causing the tuning procedure to slightly
increase the overprovisioned space for the chunk's extension
pool.\footnote{We have observed that reducing the cache line size from 512
bits to 256 and 128 bits increases the proportion of overprovisioned memory
by~5\% and 6\%, respectively.} Conversely, having larger cache lines does not
increase the required overprovisioned space, though it may degrade query
performance if we increase the chunk size to match the cache line. This is
because locating a counter's extension becomes more expensive as extension
pools grow, as we process them word by word. \sketchcms easily addresses this
problem by storing multiple chunks per cache line (e.g., two 512-bit chunks
within a 1024{\nobreakdash-}bit cache line).

\textbf{General Applicability.}
\counterencodingabbrv\ is able to compactly encode arbitrary variable-length
data by treating the binary representation of each data item as the value of a
counter. Thus, it can be used in other applications operating on such data to
improve memory~efficiency.

\hypertarget{optimization:rank_and_select}{\textbf{Optimization~1: Hardware Accelerated Rank and Select.}}
\sketchcms utilizes specialized CPU instructions to implement rank and select
operations. It implements~$\texttt{rank}(i,B)$ by applying the \texttt{POPCNT}
instruction to the first~$\left\lceil i/64 \right\rceil$ words of the
bitmap~$B$ to count the~1s before the~$i${\nobreakdash-}th
bit.\footnote{Formally, denoting the~$j${\nobreakdash-}th word in~$B$ as~$B_j$,
this process corresponds to computing \texttt{POPCNT}$ \left(B_{\lfloor i/64
\rfloor}\text{~\&~}((1 \text{ <‌< } (i \mod 64))-1) \right) +
\sum_{j=0}^{\lfloor i/64 \rfloor -1}$ \texttt{POPCNT}$(B_j)$.} It also
implements~$\texttt{select}(i,B)$ by counting the 1s in~$B$'s words to locate
the word containing the $i${\nobreakdash-}th~1. Once found, it applies the x86
BMI instructions \texttt{PDEP} and \texttt{TZCNT} to evaluate the target bit's
position within that word, as described in \cite{GQF}.\footnote{If
the~$j${\nobreakdash-}th word $B_j$ contains the~$i${\nobreakdash-}th 1 and
there are~$r$ 1s in the preceding words, the result is computed as~$64 \cdot j
+ $\texttt{TZCNT}$($\texttt{PDEP}$((1 \text{ <‌< } (i-r)) - 1, B_j))$, where the
first operand of \texttt{PDEP} is the payload and the second operand is
the~mask~\cite{PDEP}.} Each of these commands is applied to an average of 1-2
machine words, as the bitmaps \sketchcms operates over are 1-2 words long.

\sketchcms includes efficient fallback implementations of rank and select for
processors that lack the above instructions (e.g., ARM). We employ two lookup
tables structured as direct-access arrays: one for counting the number of 1s in
a byte to compute rank, and one for applying select within a byte. The entry
corresponding to the value of a byte and the parameter of the operation in each
table indicates the result of the relevant operation with these inputs.
\sketchcms iteratively applies these tables to the bytes within a bitmap to
evaluate rank and select. Doing so slightly degrades query performance, as
queries must always retrieve the extension of a counter. Nevertheless,
insertions and deletions remain unaffected since they rarely access a
counter's~extension.

\hypertarget{optimization:lookup_tables}{\textbf{Optimization~2: Updating Stubs with Lookup Tables.}}
Since stubs are not byte-aligned, modifying them can be costly. This is because
extracting and writing back a stub to the machine words containing it entails
five extra bitwise operations on top of the addition operation. \sketchcms
bypasses this cost by directly updating the stub's words using a single
addition. It achieves this using two precomputed lookup tables. The first table
stores the position of the word each stub lies in, and the second table stores
the stub's position within its word. \sketchcms updates a stub by adding or
subtracting one from the bit indicated by the second table in the word
indicated by the first. For example, in the context of
\Cref{fig:counter_chunk}, the first lookup table indicates that the~1st stub
lies in the 1st word, which has a binary representation of~$111111101010\dots$.
The second lookup table indicates that this stub starts at the~6th bit in the
1st word, allowing us to increment it by adding~$000000100000\dots$, thus
yielding~$111111011010\dots$ as the updated word.

\hypertarget{optimization:prefetching}{\textbf{Optimization~3: Prefetching Chunks.}}
We allow multiple chunks to be read in parallel within a single operation as
well as across multiple operations using prefetching techniques similar to that
of~\cite{StingySketch}. Specifically, we maintain a prefetching queue for each
array in \sketchcms that records the eight most recent requests for chunks.
Each time a chunk is requested from an array, a software prefetch instruction
is issued for it, and the request is pushed into the array's queue. When a
queue fills up, the chunk of its oldest request will be present in the L1
cache, allowing it to be read efficiently while other chunks are being loaded
into the cache. 

\hypertarget{optimization:hash_splicing}{\textbf{Optimization~4: Hash Splicing.}}
Recall that a key is hashed into each of the~$d$ arrays during an update or
query. Instead of applying~$d$ separate hash functions and incurring their CPU
overheads, \sketchcms applies a single hash function to the key to compute a
long hash. It splices the result into~$d$ sub-hashes of equal length and uses
each as the hash value of an array. 

\hypertarget{optimization:fast_division}{\textbf{Optimization~5: Fast Division.}}
Determining the chunk of a counter at offset~$i$ in an array can be expensive
if done by divinding $i$ by the number of counters per chunk~$c$, as $c$ may
not be a power of two. We avoid this cost by simulating the division through
fixed-point multiplication~\cite{HackersDelight}. Specifically, we compute the
expression~$\left\lfloor \left(i/2^{\lceil \log_2 c \rceil}\right) \cdot
\left(2^{\lceil \log_2 c \rceil}/c\right) \right\rfloor = \left\lfloor i/c
\right\rfloor$ by approximating the multiplicands on the left-hand side as
fractional binary numbers with a fixed number of precision bits. To this end,
we precompute the latter multiplicand's fixed-point representation
with~$\left\lceil \log_2 w \right\rceil$ precision bits, where~$w$ is the size
of an array. This many bits of precision ensures that, when multiplying by a
counter offset~$i$ smaller than $w$, the result has an error of at most one,
allowing \sketchcms to safely truncate it to compute the chunk's (integral)
offset. For the former multiplicand, we treat~$i$'s binary representation as a
fixed-point fractional value with~$\left\lceil \log_2 c \right\rceil$ precision
bits. Finally, we multiply these values using standard integer multiplication
and shift the result~$\left\lceil \log_2 w \right\rceil + \left\lceil \log_2 c
\right\rceil$ bits to the right to remove the precision
bits~and~derive~the~answer. 

\hypertarget{optimization:horner_hack}{\textbf{Optimization~6: Horner's Method with Shifts and Adds.}}
\sketchcms decodes a counter's extension with
digits~$v_0, v_1, \cdots, v_k$ by computing the sum~$3^0 \cdot v_0 + 3^1 \cdot
v_1 + \dots + 3^k \cdot v_k$. We speed up this process using Horner's
method~\cite{Horner,HornerKnuth} and implementing multiplication by~3 using
shift and add operations~\cite{HackersDelight}. Specifically, we compute a
running sum starting from the last term (i.e., initializing the sum to~$3^0
\cdot v_k$) and move towards the first term, handling them one by one. That is,
we handle~$v_{k-1}$ by multiplying the running sum by~3 and adding~$v_{k-1}$ to
derive the sum~$3^0 \cdot v_{k-1} + 3^1 \cdot v_k$, and so on. We multiply the
running sum by~3 by adding it to a copy of itself shifted to the left by one
bit (which is equivalent to multiplication~by~2).

\begin{figure}
    \centering
    \pgfdeclarelayer{background layer}
    \pgfsetlayers{background layer,main}
    \begin{tikzpicture}
        \def\arraycount{3}
        \ifappendix
            \def\arrayw{1}
            \def\arrayh{0.25}
            \def\arraysep{0.25}
        \else
            \def\arrayw{1.25}
            \def\arrayh{0.3}
            \def\arraysep{0.3}
        \fi
        \def\cmssep{0.28}
        \def\cmstitlesep{0.1}
        \def\arrowsep{0.2*\cmssep}
        \def\midy{0.5*\arraycount*\arraysep-0.5*\arraysep+0.5*\arraycount*\arrayh}
        \def\countercount{2} % Have to change this thing manually in the for loops, sadly
        \def\counterw{\arrayw/\countercount}

        %\node[inner sep=0pt,rotate=90] (chain_label) at (-0.75*\cmssep,\midy) {\small Chain of Sketches};

        % First CMS
        \def\counterarray{{3,1,3,1,4,0}}
        \foreach \i in {1, ..., \arraycount} {
            \draw[gray] (0,\i*\arraysep+\i*\arrayh-\arrayh-\arraysep) rectangle (\arrayw,\i*\arraysep+\i*\arrayh-\arraysep);
            \foreach \j in {1, ..., 2} {
                \pgfmathtruncatemacro{\counterindex}{2*\i+\j-3}
                \pgfmathparse{\counterarray[\counterindex]}
                \let\countervalue\pgfmathresult
                \node[inner sep=0pt] at (\j*\counterw-\counterw/2,\i*\arraysep+\i*\arrayh-\arrayh/2-\arraysep) {\small \textcolor{gray}{$\countervalue$}};
                \draw[gray] (\j*\counterw,\i*\arraysep+\i*\arrayh-\arrayh-\arraysep) rectangle (\j*\counterw,\i*\arraysep+\i*\arrayh-\arraysep);
            }
        }

        % Second CMS
        \def\counterarray{{3,5,4,8,6,3,6,5,9,1,8,2}}
        \foreach \i in {1, ..., \arraycount} {
            \draw (\arrayw+\cmssep,\i*\arraysep+\i*\arrayh-\arrayh-\arraysep) rectangle (3*\arrayw+\cmssep,\i*\arraysep+\i*\arrayh-\arraysep);
            \foreach \j in {1, ..., 4} {
                \pgfmathtruncatemacro{\counterindex}{4*\i+\j-5}
                \pgfmathparse{\counterarray[\counterindex]}
                \let\countervalue\pgfmathresult
                \node[inner sep=0pt] at (\arrayw+\cmssep+\j*\counterw-\counterw/2,\i*\arraysep+\i*\arrayh-\arrayh/2-\arraysep) {\small $\countervalue$};
                \draw (\arrayw+\cmssep+\j*\counterw,\i*\arraysep+\i*\arrayh-\arrayh-\arraysep) rectangle (\arrayw+\cmssep+\j*\counterw,\i*\arraysep+\i*\arrayh-\arraysep);
            }
        }

        % Last CMS
        \def\lastsketcharrayw{4*\arrayw}
        \def\leftcountercount{4}
        \def\rightcountercount{4}
        \def\counterarray{{3,5,4,8,3,5,4,8,6,3,6,5,6,3,6,5,9,1,8,2,9,1,8,2}}
        \foreach \i in {1, ..., \arraycount} {
            \draw (3*\arrayw+2*\cmssep,\i*\arraysep+\i*\arrayh-\arrayh-\arraysep) rectangle (3*\arrayw+\lastsketcharrayw+2*\cmssep,\i*\arraysep+\i*\arrayh-\arraysep);
            \foreach \j in {1, ..., \leftcountercount} {
                \draw (3*\arrayw+2*\cmssep+\j*\counterw,\i*\arraysep+\i*\arrayh-\arrayh-\arraysep) -- (3*\arrayw+2*\cmssep+\j*\counterw,\i*\arraysep+\i*\arrayh-\arraysep);
                \pgfmathtruncatemacro{\counterindex}{8*\i+\j-9}
                \pgfmathparse{\counterarray[\counterindex]}
                \let\countervalue\pgfmathresult
                \node[inner sep=0pt] at (3*\arrayw+2*\cmssep+\j*\counterw-\counterw/2,\i*\arraysep+\i*\arrayh-\arrayh/2-\arraysep) {\small \countervalue};
            }
            \foreach \j in {1, ..., \rightcountercount} {
                \draw (3*\arrayw+\lastsketcharrayw+2*\cmssep-\j*\counterw,\i*\arraysep+\i*\arrayh-\arrayh-\arraysep) -- (3*\arrayw+\lastsketcharrayw+2*\cmssep-\j*\counterw,\i*\arraysep+\i*\arrayh-\arraysep);
                \pgfmathtruncatemacro{\counterindex}{8*\i+8-\j-8}
                \pgfmathparse{\counterarray[\counterindex]}
                \let\countervalue\pgfmathresult
                \node[inner sep=0pt] at (3*\arrayw+\lastsketcharrayw+2*\cmssep-\j*\counterw+\counterw/2,\i*\arraysep+\i*\arrayh-\arrayh/2-\arraysep) {\small \countervalue};
            }
        }

        % Expansion Lines
        \draw[-stealth] plot[hobby] coordinates { (2*\arrayw+\cmssep,3*\arrayh+2.25*\arraysep) (3*\arrayw+1.5*\cmssep,3*\arrayh+3*\arraysep) (4*\arrayw+2*\cmssep,3*\arrayh+2.25*\arraysep) };
        \draw[-stealth] plot[hobby] coordinates { (2*\arrayw+\cmssep,3*\arrayh+2.25*\arraysep) (3*\arrayw+1.5*\cmssep,3*\arrayh+3.5*\arraysep) (6*\arrayw+2*\cmssep,3*\arrayh+2.25*\arraysep) };
        \node[inner sep=0pt] (expansion_title) at (3*\arrayw+1.5*\cmssep,3*\arrayh+4.25*\arraysep) {\small Expansion};

        \node[inner sep=1pt,anchor=north] (first_expansion_threshold) at (\arrayw+0.5*\cmssep,-\arraysep/2) {\small $N > 4$};
        \draw[densely dotted] (\arrayw+0.5*\cmssep,3*\arrayh+3*\arraysep) -- (first_expansion_threshold.north);

        \node[inner sep=1pt,anchor=north] (second_expansion_threshold) at (3*\arrayw+1.5*\cmssep,-\arraysep/2) {\small $N > 16$};
        \draw[densely dotted] (3*\arrayw+1.5*\cmssep,3*\arrayh+3*\arraysep) -- (second_expansion_threshold.north);

        % Sketch Titles
        \def\sketchtitley{4.75*\arrayh+4.25*\arraysep}
        %\node[inner sep=0pt] (sketch_0_label) at (\arrayw/2,\sketchtitley) {\small Sketch 0};
        %\node[inner sep=0pt] (sketch_1_label) at (2*\arrayw+\cmssep,\sketchtitley) {\small Sketch 1};
        %\node[inner sep=0pt] (sketch_2_label) at (5*\arrayw+2*\cmssep,\sketchtitley) {\small Sketch 2};

        \ifappendix
            \node[inner sep=1pt,anchor=north] (third_expansion_threshold) at (7*\arrayw+3.7*\cmssep,-\arraysep/2) {\small $N > 64$};
            \draw[densely dotted] (7*\arrayw+3.7*\cmssep,3*\arrayh+3*\arraysep) -- (third_expansion_threshold.north);
        \else
            \node[inner sep=1pt,anchor=north] (third_expansion_threshold) at (7*\arrayw+7.0*\cmssep,-\arraysep/2) {\small $N > 64$};
            \draw[densely dotted] (7*\arrayw+7.0*\cmssep,3*\arrayh+3*\arraysep) -- (third_expansion_threshold.north);
        \fi

        \def\texty{\arraycount*\arraysep+\arraycount*\arrayh+\cmstitlesep+0.2}
        \def\arrowoffsetx{3*\arrayw+1.85*\cmssep}
        \def\arrowoffsety{\arraysep/2}
        \ifappendix
            \def\querytextx{3*\arrayw+\lastsketcharrayw+2.4*\cmssep}
        \else
            \def\querytextx{3*\arrayw+\lastsketcharrayw+4.0*\cmssep}
        \fi
        \node[inner sep=1pt,align=center] (query) at (\querytextx,\texty) {\small Query~$q$ \\[-3pt] \small Result $=$ 5};
        \draw[-stealth] (\querytextx,\arrayh+\arrowoffsety) -- node[pos=0.5,below=1pt,rotate=90,inner sep=2pt] {\small Min} (query.south);

        \def\posarray{{0,6,3,1}}
        \def\arrayx{3*\arrayw+2*\cmssep}
        \foreach \i in {1, ..., \arraycount} {
            \def\arrayy{\i*\arrayh+\i*\arraysep-\arrayh-\arraysep}
            \pgfmathsetmacro{\pos}{\posarray[\i]}
            \draw (\arrayx+\lastsketcharrayw-\pos*\counterw-\counterw/2,\arrayy+\arrayh) |- (\querytextx,\arrayy+\arrayh+\arrowoffsety);
            \begin{pgfonlayer}{background layer}
                \draw[fill=gray!20] (\arrayx+\lastsketcharrayw-\pos*\counterw-\counterw,\arrayy) rectangle (\arrayx+\lastsketcharrayw-\pos*\counterw,\arrayy+\arrayh);
            \end{pgfonlayer}
        }
    \end{tikzpicture}
    \vspace{-1.4mm}
    \ifnoappendix
        \vspace{-0.6mm}
    \fi
    \caption{\sketchcms expands by concatenating each array with a copy of
    itself. It maps a key to a counter in an array by taking the appropriate
    number of bits from its hash.}
    \vspace{-5mm}
    \label{fig:sketch_chain}
\end{figure}

\subsection{Accommodating Unbounded Growth}\label{sec:accommodating_unknown_stream_lengths}
We now address the problem of maintaining bounded estimation error for
frequency estimation sketches over growing streams. Recall from
\Cref{sec:problem_analysis} that processing
growing streams with existing FE sketches (e.g., CMS) yields an estimation
error that increases at least linearly with the total key count~$N$. Since~$N$
is typically unknown in advance, bounding this error requires overprovisioning
memory upfront, leading to significant waste. \sketch\ addresses this
limitation by generalizing an FE sketch to increase its number of counters as
the stream grows. We also show how to support deletions and contractions
without deteriorating accuracy.

In the case of CMS, \sketchcms expands whenever~$N$ exceeds a tunable
threshold. An expansion concatenates the current CMS with a copy of itself,
doubling the size of each array. This operation is inexpensive since it copies
counters without needing to rehash the keys. These larger arrays reduce the
collision probability of future insertions. \Cref{fig:sketch_chain}
illustrates
\sketchcms just after the second expansion took place (on the right), as well
as its state just before the first expansion (left) and the second (middle).
Each half of an array in the largest sketch is a copy of the corresponding
array just before the second expansion. The counters in each array before this
expansion sum to the expansion threshold of~16, plus the total value of the
counters copied from the previous expansion, i.e., 4. Queries follow the same
algorithm as CMS, targeting the newly expanded, largest sketch. \sketchcms
maintains~$d$ arrays across expansions to ensure a stable confidence bound on
the estimation errors while retaining the same~$O(d)$ time complexity~as~CMS.

\textbf{Consistent Hash Functions.}
\sketchcms uses the same hash function for each of its arrays across
expansions. It hashes a key to a counter within an array of size~$w$ by taking
the~$\log_2 w$ lower-order bits of its hash. This ensures that after an
expansion, a key hashes to either the same counter it would have hashed to
before or its copy.\footnote{Consistent hashing also grants \sketchcms
mergeability: the ability to merge sketches constructed on distinct streams to
represent their union without needing to rescan them.} For example, the queried
key~$q$ in \Cref{fig:sketch_chain}
has~$h_1(q)=011\dots$
as its hash. Since the size of the array before expansion was~4, we only needed
the first~$\log_2 4=2$ bits of $q$'s hash, i.e., 01, to map it to the 2nd
counter in the first array. After expansion, we use the first~$\log_2 8=3$ bits
of the hash, i.e., 011, to map it to the~6th counter, which is a copy of the
2nd counter before expansion.

\textbf{Initial Size.}
By default, \sketchcms initializes each of its~$d$ arrays to a single chunk
(i.e., a cache line) containing~$c$ counters, as described in
\Cref{sec:accommodating_skew}. Thus, it
avoids wasting memory from the onset in case the stream grows slowly.

\ifappendix
\begin{figure}
    \centering
    \pgfdeclarelayer{background layer}
    \pgfsetlayers{background layer,main}
    \begin{tikzpicture}
        \def\arraycount{1}
        \def\arrayw{2}
        \def\arrayh{0.25}
        \def\arraysep{0.1}
        \def\cmssep{1.0}
        \def\cmstitley{0.7}
        \def\insertdeletesep{1.45}
        \def\hashsep{\insertdeletesep-0.4}
        \def\arrowsep{0.2*\cmssep}
        \def\smallhashbracesep{0}
        \def\largehashbracesep{0.3}
        \def\midy{0.5*\arraycount*\arraysep-0.5*\arraysep+0.5*\arraycount*\arrayh}
        \def\countercount{4} % Have to change this thing manually in the for loops, sadly
        \def\counterw{\arrayw/\countercount}

        % Earlier Sketch
        \def\counterarray{{0,9,1,8,2}}
        \foreach \i in {1, ..., \arraycount} {
            \draw (0,\i*\arraysep+\i*\arrayh-\arrayh-\arraysep) rectangle (\arrayw,\i*\arraysep+\i*\arrayh-\arraysep);
            \foreach \j in {1, ..., 4} {
                \pgfmathsetmacro{\countervalue}{\counterarray[\j]}
                \node[inner sep=0pt] at (\j*\counterw-\counterw/2,\i*\arraysep+\i*\arrayh-\arrayh/2-\arraysep) {\small \countervalue};
                \draw (\j*\counterw,\i*\arraysep+\i*\arrayh-\arrayh-\arraysep) rectangle (\j*\counterw,\i*\arraysep+\i*\arrayh-\arraysep);
            }
        }
        \node[inner sep=0pt,align=center] (cms_penultimate) at (0.5*\arrayw-0.265,\cmstitley) {\small 1) Earlier FE Sketch \\[-4pt] \small \hspace{18pt} Before Expansion};

        % Current Sketch
        \def\counterarray{{0,8,1,4,1,8,3,7,2}}
        \foreach \i in {1, ..., \arraycount} {
            \draw (\arrayw+\cmssep,\i*\arraysep+\i*\arrayh-\arrayh-\arraysep) rectangle (2*\arrayw+\cmssep,\i*\arraysep+\i*\arrayh-\arraysep);
            \draw[fill=gray!20] (2*\arrayw+\cmssep,\i*\arraysep+\i*\arrayh-\arrayh-\arraysep) rectangle (3*\arrayw+\cmssep,\i*\arraysep+\i*\arrayh-\arraysep);
            \foreach \j in {1, ..., 8} {
                \pgfmathtruncatemacro{\countervalue}{\counterarray[\j]}
                \node[inner sep=0pt] at (\arrayw+\cmssep+\j*\counterw-\counterw/2,\i*\arraysep+\i*\arrayh-\arrayh/2-\arraysep) {\small $\countervalue$};
                \draw (\arrayw+\cmssep+\j*\counterw,\i*\arraysep+\i*\arrayh-\arrayh-\arraysep) rectangle (\arrayw+\cmssep+\j*\counterw,\i*\arraysep+\i*\arrayh-\arraysep);
            }
        }
        \draw[densely dotted] (2*\arrayw+\cmssep,\arrayh+\arraysep) -- (2*\arrayw+\cmssep,-9*\arraysep);
        \node[inner sep=0pt,align=center] (cms_last) at (2*\arrayw+\cmssep-0.2,\cmstitley) {\small 2) Current FE Sketch \\[-4pt] \small \hspace{20pt} Before Contraction};

        % Extraction
        \def\extractiony{-2.25*\arraysep}
        \draw[-stealth] (\arrayw,\arrayh/2) -| (\arrayw+0.6*\cmssep,\extractiony) -- (\arrayw+0.80*\cmssep,\extractiony);
        \def\counterarray{{0,9,1,8,2,9,1,8,2}}
        \node[inner sep=0pt] at (\arrayw+0.95*\cmssep,\extractiony-0.04) {\small $-$};
        \foreach \i in {1, ..., 8} {
            \pgfmathtruncatemacro{\countervalue}{\counterarray[\i]}
            \node[inner sep=0pt] at (\arrayw+\cmssep+\i*\counterw-\counterw/2,\extractiony) {\small $\countervalue$};
        }
        \def\resulty{-6.5*\arraysep}
        \draw (\arrayw+0.97*\cmssep,\extractiony/2+\resulty/2) -- (3*\arrayw+1.03*\cmssep,\extractiony/2+\resulty/2);
        \def\counterarray{{0,-1,0,-4,-1,-1,2,-1,0}}
        \foreach \i in {1, ..., 8} {
            \pgfmathtruncatemacro{\countervalue}{\counterarray[\i]}
            \node[inner sep=0pt] at (\arrayw+\cmssep+\i*\counterw-\counterw/2,\resulty) {\small $\countervalue$};
        }
        \node[inner sep=0pt] (updates_label) at (3*\arrayw+1.415*\cmssep,\resulty) {\small Deltas};

        % Transfer
        \def\resulty{-8.0*\arraysep}
        \draw[decorate,decoration={brace,raise=1pt,amplitude=2pt,mirror}] (\arrayw+\cmssep,\resulty) -- (2*\arrayw+\cmssep,\resulty)
            node[pos=0.5,below=1.5pt,inner sep=0pt] (last_sketch_left_half) { };
        \def\addy{-2.25*\arraysep}
        \draw[-stealth] (last_sketch_left_half) |- (\arrayw+0.4*\cmssep,\resulty-3.0*\arraysep) |- (\arrayw+0.05*\cmssep,\addy);
        \node[inner sep=0pt] at (-\counterw/10,\addy-0.0175) {\small $+$};
        \foreach \j in {1, ..., 4} {
            \pgfmathtruncatemacro{\countervalue}{\counterarray[\j]}
            \node[inner sep=0pt] at (\j*\counterw-\counterw/2,\addy) {\small $\countervalue$};
        }

        \draw[decorate,decoration={brace,raise=1pt,amplitude=2pt,mirror}] (2*\arrayw+\cmssep,\resulty) -- (3*\arrayw+\cmssep,\resulty)
            node[pos=0.5,below=1.5pt,inner sep=0pt] (last_sketch_right_half) { };
        \def\addy{-6.0*\arraysep}
        \draw[-stealth] (last_sketch_right_half) |- (\arrayw+0.2*\cmssep,\resulty-5.0*\arraysep) |- (\arrayw+0.05*\cmssep,\addy);
        \node[inner sep=0pt] at (-\counterw/10,\addy-0.0175) {\small $+$};
        \foreach \i in {5, ..., 8} {
            \pgfmathtruncatemacro{\countervalue}{\counterarray[\i]}
            \node[inner sep=0pt] at (\i*\counterw-4.5*\counterw,\addy) {\small $\countervalue$};
        }

        \def\resulty{-9.5*\arraysep}
        \draw (-0.03*\cmssep,\addy/2+\resulty/2) -- (\arrayw+0.03*\cmssep,\addy/2+\resulty/2);

        % New Current Sketch
        \def\newlastsketchy{-\arrayh-9.0*\arraysep}
        \draw (0,\newlastsketchy) rectangle (\arrayw,\newlastsketchy+\arrayh);
        \def\counterarray{{0,7,3,3,1}}
        \foreach \i in {1, ..., 4} {
            \ifnum \i = 4 {
            }
            \else {
                \draw (\i*\counterw,\newlastsketchy) -- (\i*\counterw,\newlastsketchy+\arrayh);
            }
            \fi
            \pgfmathtruncatemacro{\countervalue}{\counterarray[\i]}
            \node[inner sep=0pt] at (\i*\counterw-\counterw/2,\newlastsketchy+\arrayh/2) {\small $\countervalue$};
        }
        \node[inner sep=0pt,align=center] (cms_last_new) at (\arrayw/2-0.2,\newlastsketchy-1.5*\arrayh) {\small 3) Current FE Sketch \\[-4pt] \small \hspace{14pt} After Contraction};
    \end{tikzpicture}
    \caption{During contractions, \sketchcms extracts the recent updates by
    subtracting the previous FE sketch. It adds the results to the previous
    sketch to transfer the updates.}
    \vspace{-5mm}
    \label{fig:contraction}
\end{figure}
\fi

\textbf{Expansion Thresholds.} 
By default, \sketchcms expands whenever~$N$ quadruples. As shown at the bottom
of
\Cref{fig:sketch_chain},
expansions occur when the number of keys~$N$ reaches 4, 16, and 64. These
thresholds cause the number of counters in each array to grow as~$\sqrt{N}$. As
a consequence, we achieve a sublinear memory footprint and sublinear errors. To
see why, recall from \Cref{sec:background} that a CMS
allocated with arrays of~$w$ counters has an expected error of~$N/w$. By
making~$w$ grow as~$\sqrt{N}$, \sketchcms instead obtains an expected error
of~$O(N/\sqrt{N})=O(\sqrt{N})$ even when the stream's length is unknown.
Although keys inserted when the sketch was smaller contribute disproportionally
to the error due to their counts duplicating with expansions, they are
exponentially fewer in number. Hence, such keys only increase the error by a
small constant factor.
\Cref{thm:cms-w}, proven in
\ifappendix
\Cref{sec:upper_bounds_cms},
\else
the Appendix~\cite{SublimeArxiv},
\fi
formalizes these intuitions. In sum, compared to a standard CMS with linear
errors and a fixed memory footprint, \sketchcms achieves significantly lower
errors at a slightly higher space cost. As such, it is a better fit for
applications that require maintaining low errors as the stream grows
indefinitely. We shortly show how to set the expansion thresholds to open a new
accuracy-memory tradeoff space.

\textbf{Deletions and Contractions.} 
\sketchcms handles deletions using the same procedure as CMS. It contracts
after many deletions to save memory. The simplest way to contract is to revert
an expansion by folding the halves of each array onto each other and adding up
the counters. Yet, this approach would inflate the error since it adds the
counters copied during the last expansion to the originals, thereby doubling
them. Instead, \sketchcms keeps a record of the sketch's state just before each
expansion. It extracts the recent updates made to the current sketch by
subtracting the state of the sketch before the last expansion from its left and
right halves. We illustrate this for a single array in
\Cref{fig:contraction}.
The result is an array of deltas indicating how each counter from the earlier
sketch and its copy have changed. \sketchcms applies these deltas to the
original counters in the earlier sketch. Finally, it discards the current
sketch and uses the earlier sketch to handle future~operations.

\ifnoappendix
\begin{figure}
    \centering
    \pgfdeclarelayer{background layer}
    \pgfsetlayers{background layer,main}
    \begin{tikzpicture}
        \def\arraycount{1}
        \ifappendix
            \def\arrayw{2}
            \def\arrayh{0.25}
            \def\arraysep{0.1}
            \def\cmssep{1.0}
            \def\cmstitley{0.7}
        \else
            \def\arrayw{2.5}
            \def\arrayh{0.3}
            \def\arraysep{0.1}
            \def\cmssep{2.0}
            \def\cmstitley{0.8}
        \fi
        \def\insertdeletesep{1.45}
        \def\hashsep{\insertdeletesep-0.4}
        \def\arrowsep{0.2*\cmssep}
        \def\smallhashbracesep{0}
        \def\largehashbracesep{0.3}
        \def\midy{0.5*\arraycount*\arraysep-0.5*\arraysep+0.5*\arraycount*\arrayh}
        \def\countercount{4} % Have to change this thing manually in the for loops, sadly
        \def\counterw{\arrayw/\countercount}

        % Earlier Sketch
        \def\counterarray{{0,9,1,8,2}}
        \foreach \i in {1, ..., \arraycount} {
            \draw (0,\i*\arraysep+\i*\arrayh-\arrayh-\arraysep) rectangle (\arrayw,\i*\arraysep+\i*\arrayh-\arraysep);
            \foreach \j in {1, ..., 4} {
                \pgfmathsetmacro{\countervalue}{\counterarray[\j]}
                \node[inner sep=0pt] at (\j*\counterw-\counterw/2,\i*\arraysep+\i*\arrayh-\arrayh/2-\arraysep) {\small \countervalue};
                \draw (\j*\counterw,\i*\arraysep+\i*\arrayh-\arrayh-\arraysep) rectangle (\j*\counterw,\i*\arraysep+\i*\arrayh-\arraysep);
            }
        }
        \node[inner sep=0pt,align=center] (cms_penultimate) at (0.5*\arrayw-0.265,\cmstitley) {\small 1) Earlier FE Sketch \\[-4pt] \small \hspace{9pt} Before Expansion};

        % Current Sketch
        \def\counterarray{{0,8,1,4,1,8,3,7,2}}
        \foreach \i in {1, ..., \arraycount} {
            \draw (\arrayw+\cmssep,\i*\arraysep+\i*\arrayh-\arrayh-\arraysep) rectangle (2*\arrayw+\cmssep,\i*\arraysep+\i*\arrayh-\arraysep);
            \draw[fill=gray!20] (2*\arrayw+\cmssep,\i*\arraysep+\i*\arrayh-\arrayh-\arraysep) rectangle (3*\arrayw+\cmssep,\i*\arraysep+\i*\arrayh-\arraysep);
            \foreach \j in {1, ..., 8} {
                \pgfmathtruncatemacro{\countervalue}{\counterarray[\j]}
                \node[inner sep=0pt] at (\arrayw+\cmssep+\j*\counterw-\counterw/2,\i*\arraysep+\i*\arrayh-\arrayh/2-\arraysep) {\small $\countervalue$};
                \draw (\arrayw+\cmssep+\j*\counterw,\i*\arraysep+\i*\arrayh-\arrayh-\arraysep) rectangle (\arrayw+\cmssep+\j*\counterw,\i*\arraysep+\i*\arrayh-\arraysep);
            }
        }
        \draw[densely dotted] (2*\arrayw+\cmssep,\arrayh+\arraysep) -- (2*\arrayw+\cmssep,-9*\arraysep);
        \node[inner sep=0pt,align=center] (cms_last) at (2*\arrayw+\cmssep-0.2,\cmstitley) {\small 2) Current FE Sketch \\[-4pt] \small \hspace{11pt} Before Contraction};

        % Extraction
        \def\extractiony{-2.25*\arraysep}
        \draw[-stealth] (\arrayw,\arrayh/2) -| (\arrayw+0.6*\cmssep,\extractiony) -- (\arrayw+0.80*\cmssep,\extractiony);
        \def\counterarray{{0,9,1,8,2,9,1,8,2}}
        \node[inner sep=0pt] at (\arrayw+0.95*\cmssep,\extractiony-0.04) {\small $-$};
        \foreach \i in {1, ..., 8} {
            \pgfmathtruncatemacro{\countervalue}{\counterarray[\i]}
            \node[inner sep=0pt] at (\arrayw+\cmssep+\i*\counterw-\counterw/2,\extractiony) {\small $\countervalue$};
        }
        \def\resulty{-6.5*\arraysep}
        \draw (\arrayw+0.97*\cmssep,\extractiony/2+\resulty/2) -- (3*\arrayw+1.03*\cmssep,\extractiony/2+\resulty/2);
        \def\counterarray{{0,-1,0,-4,-1,-1,2,-1,0}}
        \foreach \i in {1, ..., 8} {
            \pgfmathtruncatemacro{\countervalue}{\counterarray[\i]}
            \node[inner sep=0pt] at (\arrayw+\cmssep+\i*\counterw-\counterw/2,\resulty) {\small $\countervalue$};
        }
        \ifappendix
            \node[inner sep=0pt] (updates_label) at (3*\arrayw+1.415*\cmssep,\resulty) {\small Deltas};
        \else
            \node[inner sep=0pt] (updates_label) at (3*\arrayw+1.3*\cmssep,\resulty) {\small Deltas};
        \fi

        % Transfer
        \def\resulty{-8.0*\arraysep}
        \draw[decorate,decoration={brace,raise=1pt,amplitude=2pt,mirror}] (\arrayw+\cmssep,\resulty) -- (2*\arrayw+\cmssep,\resulty)
            node[pos=0.5,below=1.5pt,inner sep=0pt] (last_sketch_left_half) { };
        \def\addy{-2.25*\arraysep}
        \draw[-stealth] (last_sketch_left_half) |- (\arrayw+0.4*\cmssep,\resulty-3.0*\arraysep) |- (\arrayw+0.05*\cmssep,\addy);
        \node[inner sep=0pt] at (-\counterw/10,\addy-0.0175) {\small $+$};
        \foreach \j in {1, ..., 4} {
            \pgfmathtruncatemacro{\countervalue}{\counterarray[\j]}
            \node[inner sep=0pt] at (\j*\counterw-\counterw/2,\addy) {\small $\countervalue$};
        }

        \draw[decorate,decoration={brace,raise=1pt,amplitude=2pt,mirror}] (2*\arrayw+\cmssep,\resulty) -- (3*\arrayw+\cmssep,\resulty)
            node[pos=0.5,below=1.5pt,inner sep=0pt] (last_sketch_right_half) { };
        \def\addy{-6.0*\arraysep}
        \draw[-stealth] (last_sketch_right_half) |- (\arrayw+0.2*\cmssep,\resulty-5.0*\arraysep) |- (\arrayw+0.05*\cmssep,\addy);
        \node[inner sep=0pt] at (-\counterw/10,\addy-0.0175) {\small $+$};
        \foreach \i in {5, ..., 8} {
            \pgfmathtruncatemacro{\countervalue}{\counterarray[\i]}
            \node[inner sep=0pt] at (\i*\counterw-4.5*\counterw,\addy) {\small $\countervalue$};
        }

        \def\resulty{-9.5*\arraysep}
        \draw (-0.03*\cmssep,\addy/2+\resulty/2) -- (\arrayw+0.03*\cmssep,\addy/2+\resulty/2);

        % New Current Sketch
        \def\newlastsketchy{-\arrayh-9.0*\arraysep}
        \draw (0,\newlastsketchy) rectangle (\arrayw,\newlastsketchy+\arrayh);
        \def\counterarray{{0,7,3,3,1}}
        \foreach \i in {1, ..., 4} {
            \ifnum \i = 4 {
            }
            \else {
                \draw (\i*\counterw,\newlastsketchy) -- (\i*\counterw,\newlastsketchy+\arrayh);
            }
            \fi
            \pgfmathtruncatemacro{\countervalue}{\counterarray[\i]}
            \node[inner sep=0pt] at (\i*\counterw-\counterw/2,\newlastsketchy+\arrayh/2) {\small $\countervalue$};
        }
        \node[inner sep=0pt,align=center] (cms_last_new) at (\arrayw/2-0.2,\newlastsketchy-1.5*\arrayh) {\small 3) Current FE Sketch \\[-4pt] \small \hspace{5pt} After Contraction};
    \end{tikzpicture}
    \vspace{-1mm}
    \caption{During contractions, \sketchcms extracts the recent updates by
    subtracting the previous FE sketch. It adds the results to the previous
    sketch to transfer the updates.}
    \label{fig:contraction}
    \vspace{-4mm}
\end{figure}
\fi

The contraction threshold is set to the average of the two previous expansion
thresholds to avoid the problem of updates frequently resizing the structure.
As such, a contraction is triggered when the majority of the stream's keys are
deleted. Therefore, \sketchcms always handles most of the stream using arrays
that are at most twice as large as~$\sqrt{N}$. Thus, the expected error and the
size of each array remain the same as before, despite the state of the sketch
depending on the sequence of insertions and deletions.

Maintaining the sketch record increases the memory footprint by at most a
factor of~$2$ compared to allocating and expanding a single sketch. When the
total key count fluctuates, the memory saved by contracting outweighs this
cost.

\textbf{Rich Space of Tradeoffs.}
By varying the expansion thresholds, we expose a rich spectrum of
accuracy-memory tradeoffs. We encode these policies using a \emph{Size
Function}, denoted as~$W(N)$, which captures the growth rate of the number of
counters in each array with respect to~$N$. The faster the size function grows
with respect to~$N$, the sooner \sketchcms expands, thereby tracking more of
the stream's keys accurately in exchange for a larger memory footprint. By
default, the size function is set to~$W(N)=\sqrt{N}$. We generalize it to a
\emph{Sublinear Power} of the form~$W(N)=N^\alpha/\epsilon$, with
constants~$\alpha \in [0,1)$ and $\epsilon > 0$. This corresponds to
having~$1/\epsilon$ chunks in each array at the beginning and expanding
when~$N$ grows by a factor of~$2^{1/\alpha}$. Note that this factor is greater
than~two.

\ifappendix
\begin{figure}
    \centering
    \begin{tikzpicture}
        \def\arraycount{3}
        \def\arrayw{5}
        \def\arrayh{0.25}
        \def\arraysep{0.6}
        \def\arraylen{10}
        \def\counterw{0.5}

        \foreach \i in {1, ..., \arraycount} {
            \draw[black] (0,-\i*\arrayh+1.5*\arrayh-\i*\arraysep+\arraysep) rectangle (\arrayw,-\i*\arrayh+0.5*\arrayh-\i*\arraysep+\arraysep);
            \foreach \j in {2, ..., \arraylen} {
                \draw (\j*\counterw-\counterw,-\i*\arrayh+1.5*\arrayh-\i*\arraysep+\arraysep) -- (\j*\counterw-\counterw,-\i*\arrayh+0.5*\arrayh-\i*\arraysep+\arraysep);
            }
        }

        \def\texty{5*\arrayh}
        \def\inserttextx{-\counterw}
        \node[inner sep=1pt,align=center] (key) at (\inserttextx,\texty) {\small Insert $x$};
        \draw (key.south) -- (\inserttextx,-1.5*\arrayh-1.7*\arraysep);

        \def\pa{2.0}
        \def\pb{3.0}
        \def\pc{1.0}
        \draw[-stealth] (\inserttextx,0.5*\arrayh+0.3*\arraysep) -| (\pa*\counterw+0.5*\counterw,0.5*\arrayh) node[pos=0.22,above=2pt,align=center,inner sep=0pt] {\footnotesize $h_1(x)$, $+\sigma_1(x)$};
        \draw[-stealth] (\inserttextx,-0.5*\arrayh-0.7*\arraysep) -| (\pb*\counterw+0.5*\counterw,-0.5*\arrayh-1.0*\arraysep) node[pos=0.171,above=2pt,align=center,inner sep=0pt] {\footnotesize $h_2(x)$, $+\sigma_2(x)$};
        \draw[-stealth] (\inserttextx,-1.5*\arrayh-1.7*\arraysep) -| (\pc*\counterw+0.5*\counterw,-1.5*\arrayh-2.0*\arraysep) node[pos=0.206,above=2pt,align=center,inner sep=0pt] {\footnotesize \hspace{4.5mm} $h_3(x)$, $+\sigma_3(x)$};

        \def\querytextx{\arrayw+1*\counterw}
        \node[inner sep=1pt,align=center] (query) at (\querytextx,\texty) {\small Query $q$ \\[-2pt] \small Result $=2$};
        \draw[stealth-] (query.south) -- (\querytextx,-1.5*\arrayh-1.7*\arraysep) node[pos=0.5,below=2pt,rotate=90,inner sep=2pt] {\small Median};

        \def\pa{7.0}
        \def\pb{8.0}
        \def\pc{6.0}
        \draw (\querytextx,0.5*\arrayh+0.3*\arraysep) -| (\pa*\counterw+0.5*\counterw,0.5*\arrayh) node[pos=0.300,above=2pt,align=center,inner sep=0pt] {\footnotesize $h_1(q)$, $\times \sigma_1(q)=-1$};
        \draw (\querytextx,-0.5*\arrayh-0.7*\arraysep) -| (\pb*\counterw+0.5*\counterw,-0.5*\arrayh-1.0*\arraysep) node[pos=0.291,above=2pt,align=center,inner sep=0pt] {\footnotesize $h_2(q)$, $\times \sigma_2(q)=1$ \hspace{4mm}};
        \draw (\querytextx,-1.5*\arrayh-1.7*\arraysep) -| (\pc*\counterw+0.5*\counterw,-1.5*\arrayh-2.0*\arraysep) node[pos=0.213,above=2pt,align=center,inner sep=0pt] {\footnotesize $h_3(q)$, $\times \sigma_3(q)=1$};

        \draw[fill=gray!20] (\pa*\counterw,0.5*\arrayh) rectangle (\pa*\counterw+\counterw,-0.5*\arrayh);
        \draw[fill=gray!20] (\pb*\counterw,-0.5*\arrayh-1.0*\arraysep) rectangle (\pb*\counterw+\counterw,-1.5*\arrayh-1.0*\arraysep);
        \draw[fill=gray!20] (\pc*\counterw,-1.5*\arrayh-2.0*\arraysep) rectangle (\pc*\counterw+\counterw,-2.5*\arrayh-2.0*\arraysep);
        \node[inner sep=0pt] at (\pa*\counterw+0.5*\counterw,-0.0*\arrayh-0.0*\arraysep) {\footnotesize $-2$};
        \node[inner sep=0pt] at (\pb*\counterw+0.5*\counterw,-1.0*\arrayh-1.0*\arraysep) {\footnotesize $9$};
        \node[inner sep=0pt] at (\pc*\counterw+0.5*\counterw,-2.0*\arrayh-2.0*\arraysep) {\footnotesize $-4$};

        %\node[inner sep=0pt,anchor=west] at (-3.1*\counterw,3.0*\arrayh) {\textbf{B) CS}};
    \end{tikzpicture}
    \caption{CS assigns a random update direction to each key and uses it to
    insert them into~$d$ counter arrays of size~$w$. The random update
    directions enable unbiased estimation.}
    \label{fig:cs_overview}
\end{figure}
\fi

When using a sublinear power as the size function, \sketchcms achieves an
expected error of~$O(N/W(N))=O(\epsilon \cdot N^{1-\alpha})$.\footnote{The
constant factor hidden by the asymptotics is~$(2^{1/\alpha}-1) \cdot
\frac{1-2^{(\alpha-1) \cdot \log_2 N}}{2^{1/\alpha-1}-1}$, which is at
most~$\frac{2^{1/\alpha}-1}{2^{1/\alpha-1}-1}$ and converges to~$\log_2 N$ as
$\alpha$ tends to 1.} One may hope to achieve a constant expected error by
growing the number of counters linearly with the data size, i.e., by using a
linear size function of the form~$W(N)=N/\epsilon$. Yet, \sketchcms obtains an
error bound of~$\log_2 N \cdot \epsilon$ in this case.
\Cref{sec:memory_analysis} shows that without
knowing the ultimate length of the stream, these expected error bounds are the
best achievable by any FE sketch of (approximately) the same size as
\sketchcms. The following theorem, proven in \ifappendix
\Cref{sec:upper_bounds_cms}, \else
the Appendix~\cite{SublimeArxiv},
\fi
formalizes the above error bounds:\footnote{
\ifappendix
In \Cref{sec:mg}, we show that
similar bounds hold for \sketch\ when it is applied to Misra-Gries.
\else
In the Appendix~\cite{SublimeArxiv}, we show that similar bounds hold for
\sketch\ when it is applied to Misra-Gries.
\fi
}

\begin{restatable}{theorem}{thmcms}
    \label{thm:cms-w}
    When using a size function~$W(\cdot)$ and a single array on a stream with a
    total key count of~$N$, \sketchcms provides an estimation error~$E(N)$
    satisfying
	$$
    \expectation{E(N)} \leq  
    \begin{cases}
        O(1) \cdot \epsilon N^{1-\alpha} & \text{if $W(N)=N^{\alpha}/\epsilon$, for $\alpha \in [0,1)$,} \\
		\log_2 N \cdot \epsilon & \text{if $W(N)=N/\epsilon$.}
	\end{cases}
    $$
    Compared to a fixed-size CMS allocated upfront with an array of~$W(N)$
    counters with knowledge of~$N$, the expected error terms above are higher
    by at most a constant and a logarithmic factor.
\end{restatable}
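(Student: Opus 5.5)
We decompose the error of a query for key $x$ by the phase in which each colliding insertion arrived. Let the expansions happen at thresholds $N_0 < N_1 < \dots < N_k \le N$, where the array has $w_j = W(N_j)$ counters during phase $j$, i.e.\ for the updates with index in $(N_{j-1}, N_j]$; set $N_{-1}=0$ and $N_{k+1}=N$. Consistent hashing means that a key $y \neq x$ hashed into the array during phase $j$ is added to $x$'s current counter iff the low-order $\log_2 w_j$ bits of $h(y)$ and $h(x)$ agree. The reason is that the counter $y$ incremented is copied into every descendant counter sharing those bits, and $x$'s final counter is one of them. Hence
\[
E(N) = \sum_{j} \sum_{y \neq x} f_j(y)\,\mathbf{1}\!\left[h(y) \equiv h(x) \bmod w_j\right],
\]
where $f_j(y)$ is $y$'s net count contributed in phase $j$. (In the insertion-only analysis, and more generally for the net counts, deletions only lower these terms.) By pairwise independence each indicator has probability $1/w_j$, so
\[
\expectation{E(N)} \le \sum_{j=0}^{k+1} \frac{N_j - N_{j-1}}{w_j} \le \sum_{j} \frac{N_j}{W(N_j)}.
\]
This reduces the theorem to bounding a deterministic sum over the geometric sequence of thresholds.

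For the sublinear power $W(N)=N^\alpha/\epsilon$, the thresholds grow by the factor $r = 2^{1/\alpha}$ each time the size doubles, so $N_j \approx N r^{-(k+1-j)}$. Phase $j$ then contributes at most $\epsilon N_j^{1-\alpha}$, up to the factor $r-1$ from the phase length. Summing the resulting geometric series gives
\[
(r-1)\,\epsilon N^{1-\alpha} \sum_{i\ge 0} r^{-(1-\alpha)i} = O(1)\cdot \epsilon N^{1-\alpha},
\]
since $r^{1-\alpha} = 2^{1/\alpha - 1} > 1$ for $\alpha<1$. This recovers the constant quoted in the footnote. For $\alpha=0$ there are no expansions and the bound is just the CMS bound $\epsilon N$. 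The last, partial phase ending at $N$ has $w \ge W(N)/2$, so it costs at most another constant factor.

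For the linear size function $W(N)=N/\epsilon$, doubling happens each time $N$ doubles, so there are about $\log_2 N$ phases. Each phase contributes $(N_j - N_{j-1})/w_j \le N_j/W(N_j) = \epsilon$, which gives $\log_2 N \cdot \epsilon$.

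The comparison clause follows because a fixed CMS with $W(N)$ counters has expected error $N/W(N)$, which equals $\epsilon N^{1-\alpha}$ or $\epsilon$ respectively.

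The main obstacle I expect is making the first step rigorous: identifying exactly which earlier-phase updates land in $x$'s final counter, and handling rounding. Array sizes are powers of two times the chunk size $c$, so the actual $w_j$ only approximates $W(N_j)$. I would first fix the thresholds as exactly the points where $W$ doubles relative to the initial size, and absorb the rounding into the $O(1)$. A secondary care point is deletions: they make the per-phase net counts ambiguous, so I would state the bound in terms of the total positive mass assigned to each phase.
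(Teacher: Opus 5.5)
Your proposal is correct and is essentially the paper's proof. The paper's per-epoch lemma, $\expectation{E(N_{i+1})} \le \frac{1}{W_i}\sum_{j=0}^{i}2^{i-j}(N_{j+1}-N_j)$, is exactly your sum $\sum_j (N_{j+1}-N_j)/W_j$, since $2^{i-j}/W_i = 1/W_j$; it is obtained by the same consistent-hashing indicator and linearity-of-expectation argument, and both cases then reduce to the same geometric-series computations.

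One small slip: your displayed series starts at $i=0$ and carries the factor $r-1$, which is the phase-length factor from the convention ``$W(N_j)$ counters during $(N_j,N_{j+1}]$''. It does not match your own convention ``$W(N_j)$ counters during $(N_{j-1},N_j]$''. As a result, your constant exceeds the footnote's $\frac{2^{1/\alpha}-1}{2^{1/\alpha-1}-1}$ by a factor $2^{1/\alpha-1}$. This is harmless for the $O(1)$ claim.
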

Applying Markov's inequality to the expected error bounds above yields the same
confidence bounds as CMS. Moreover, employing~$d$ independent arrays in
\sketchcms raises this confidence to the power of~$d$. These confidence bounds
remain stable as the stream grows.

\ifnoappendix
\begin{figure}
    \centering
    \begin{tikzpicture}
        \def\arraycount{3}
        \def\arrayw{5}
        \def\arrayh{0.3}
        \def\arraysep{0.25}
        \def\arraylen{10}
        \def\counterw{0.5}

        \foreach \i in {1, ..., \arraycount} {
            \draw[black] (0,-\i*\arrayh+1.5*\arrayh-\i*\arraysep+\arraysep) rectangle (\arrayw,-\i*\arrayh+0.5*\arrayh-\i*\arraysep+\arraysep);
            \foreach \j in {2, ..., \arraylen} {
                \draw (\j*\counterw-\counterw,-\i*\arrayh+1.5*\arrayh-\i*\arraysep+\arraysep) -- (\j*\counterw-\counterw,-\i*\arrayh+0.5*\arrayh-\i*\arraysep+\arraysep);
            }
        }

        \def\texty{5*\arrayh}
        \def\inserttextx{-4.5*\counterw}
        \node[inner sep=1pt,align=center] (key) at (\inserttextx,\texty) {\small Insert $x$};
        \draw (key.south) -- (\inserttextx,-1.5*\arrayh-1.5*\arraysep);

        \def\pa{2.0}
        \def\pb{3.0}
        \def\pc{1.0}
        \draw[-stealth] (\inserttextx,0.5*\arrayh+0.5*\arraysep) -| (\pa*\counterw+0.5*\counterw,0.5*\arrayh) node[pos=0.1175,above=2pt,align=center,inner sep=0pt] {\footnotesize $h_1(x)$, $+\sigma_1(x)$};
        \draw[-stealth] (\inserttextx,-0.5*\arrayh-0.5*\arraysep) -| (\pb*\counterw+0.5*\counterw,-0.5*\arrayh-1.0*\arraysep) node[pos=0.1029,above=2pt,align=center,inner sep=0pt] {\footnotesize $h_2(x)$, $+\sigma_2(x)$};
        \draw[-stealth] (\inserttextx,-1.5*\arrayh-1.5*\arraysep) -| (\pc*\counterw+0.5*\counterw,-1.5*\arrayh-2.0*\arraysep) node[pos=0.0938,above=2pt,align=center,inner sep=0pt] {\footnotesize \hspace{4.5mm} $h_3(x)$, $+\sigma_3(x)$};

        \def\querytextx{\arrayw+4.5*\counterw}
        \node[inner sep=1pt,align=center] (query) at (\querytextx,\texty) {\small Query $q$ \\[-2pt] \small Result $=2$};
        \draw[stealth-] (query.south) -- (\querytextx,-1.5*\arrayh-1.5*\arraysep) node[pos=0.5,below=2pt,rotate=90,inner sep=2pt] {\small Median};

        \def\pa{7.0}
        \def\pb{8.0}
        \def\pc{6.0}
        \draw (\querytextx,0.5*\arrayh+0.5*\arraysep) -| (\pa*\counterw+0.5*\counterw,0.5*\arrayh) node[pos=0.165,above=2pt,align=center,inner sep=0pt] {\footnotesize $h_1(q)$, $\times \sigma_1(q)=-1$};
        \draw (\querytextx,-0.5*\arrayh-0.5*\arraysep) -| (\pb*\counterw+0.5*\counterw,-0.5*\arrayh-1.0*\arraysep) node[pos=0.1367,above=2pt,align=center,inner sep=0pt] {\footnotesize $h_2(q)$, $\times \sigma_2(q)=1$ \hspace{4mm}};
        \draw (\querytextx,-1.5*\arrayh-1.5*\arraysep) -| (\pc*\counterw+0.5*\counterw,-1.5*\arrayh-2.0*\arraysep) node[pos=0.1324,above=2pt,align=center,inner sep=0pt] {\footnotesize $h_3(q)$, $\times \sigma_3(q)=1$};

        \draw[fill=gray!20] (\pa*\counterw,0.5*\arrayh) rectangle (\pa*\counterw+\counterw,-0.5*\arrayh);
        \draw[fill=gray!20] (\pb*\counterw,-0.5*\arrayh-1.0*\arraysep) rectangle (\pb*\counterw+\counterw,-1.5*\arrayh-1.0*\arraysep);
        \draw[fill=gray!20] (\pc*\counterw,-1.5*\arrayh-2.0*\arraysep) rectangle (\pc*\counterw+\counterw,-2.5*\arrayh-2.0*\arraysep);
        \node[inner sep=0pt] at (\pa*\counterw+0.5*\counterw,-0.0*\arrayh-0.0*\arraysep) {\footnotesize $-2$};
        \node[inner sep=0pt] at (\pb*\counterw+0.5*\counterw,-1.0*\arrayh-1.0*\arraysep) {\footnotesize $9$};
        \node[inner sep=0pt] at (\pc*\counterw+0.5*\counterw,-2.0*\arrayh-2.0*\arraysep) {\footnotesize $-4$};

        %\node[inner sep=0pt,anchor=west] at (-3.1*\counterw,3.0*\arrayh) {\textbf{B) CS}};
    \end{tikzpicture}
    \vspace{-2mm}
    \caption{CS assigns a random update direction to each key and uses it to
    insert them into~$d$ counter arrays of size~$w$. The random update
    directions enable unbiased estimation.}
    \vspace{-3mm}
    \label{fig:cs_overview}
\end{figure}
\fi

\begin{table}
    \centering
    \ifappendix
    \setlength{\tabcolsep}{3pt}
    \fi
    \begin{tabular}{cccccccc}
        \toprule

        \ifappendix
        \textbf{\footnotesize FE Sketch}
        & \footnotesize \textbf{Bias}
        & \footnotesize \textbf{Error Bound}
        & \footnotesize \textbf{Confidence}
        & \footnotesize \textbf{Insert}
        & \footnotesize \textbf{Delete}
        & \footnotesize \textbf{Query} \\
        \else
        \textbf{\small FE Sketch} 
        & \small \textbf{Bias}
        & \small \textbf{Error Bound}
        & \small \textbf{Confidence}
        & \small \textbf{Insert}
        & \small \textbf{Delete}
        & \small \textbf{Query} \\
        \fi

        \midrule

        \ifappendix
        \footnotesize CMS 
        & \footnotesize Over 
        & \footnotesize $\frac{e \cdot N}{w}$ 
        & \footnotesize $1-e^{-d}$ 
        & \footnotesize $O(d)$ 
        & \footnotesize $O(d)$ 
        & \footnotesize $O(d)$ \\
        \else
        \small CMS 
        & \small Over 
        & \small $\frac{e \cdot N}{w}$ 
        & \small $1-e^{-d}$ 
        & \small $O(d)$ 
        & \small $O(d)$ 
        & \small $O(d)$ \\
        \fi

        \ifappendix
        \footnotesize CS 
        & \footnotesize Unbiased 
        & \footnotesize $e \cdot \sqrt{\frac{\sum_y (f(y))^2}{w}}$ 
        & \footnotesize $\geq 1-e^{-d}$ 
        & \footnotesize $O(d)$ 
        & \footnotesize $O(d)$ 
        & \footnotesize $O(d)$ \\
        \else
        \small CS 
        & \small Unbiased 
        & \small $e \cdot \sqrt{\frac{\sum_y (f(y))^2}{w}}$ 
        & \small $\geq 1-e^{-d}$ 
        & \small $O(d)$ 
        & \small $O(d)$ 
        & \small $O(d)$ \\
        \fi

        \bottomrule
    \end{tabular}
    \caption{CMS and CS strike different tradeoffs.}
    \label{tab:classic_sketch_stats}
    \ifnoappendix
      \vspace{-8mm}
    \fi
\end{table}

\subsection{Applying \sketch\ to Count Sketch}\label{sec:cs}
As discussed in \Cref{sec:introduction}, many FE
sketches differ from CMS in their error guarantees and use cases. We now
demonstrate the generality of \sketch\ by applying it to Count Sketch to derive
\sketchcs. We also discuss applying \sketch\ to any kind of FE sketch, whether
existing or yet to be proposed.

\textbf{Count Sketch (CS).}
Recall from \Cref{sec:introduction} that unbiased
estimates equal the ground-truth frequency on average. This allows for
multiplying estimates, as required in join
size~\cite{Compass,QueryOptimizationSketches,ConvolutionSketch} and
covariance~\cite{ASCS} estimation, without the risk of compounding errors. CS
provides such unbiased estimates and reaps its benefits.\footnote{One can also
achieve unbiased estimation using CMS by, in each array, subtracting the
average of the counter values a query key does not hash to from its
counter~\cite{CountMeanMin}. Yet, the resulting estimate has a slightly higher
variance~than~that~of~CS.} \Cref{tab:classic_sketch_stats}
summarizes the properties of CS and compares it to CMS.

Similarly to CMS, the core component of CS is a counter array of size~$w$
coupled with a hash function~$h$ that maps each key to a counter. Unlike CMS,
CS further associates each key with a random ``update direction'' from the
set~$\{-1,+1\}$ by hashing it with another hash function~$\sigma$. It inserts a
new key~$x$ by incrementing or decrementing its counter according
to~$\sigma(x)$, as shown in
\Cref{fig:cs_overview}.
CS deletes~$x$ by subtracting~$\sigma(x)$ from its counter to undo its
insertion. Since each key's update direction is random, the expected
interference on~$x$'s counter from any other key is zero due to the positive
and negative possibilities cancelling out. Thus, the value of the counter along
the update direction is an estimate of~$x$'s frequency. The zero expected
interference removes bias, though it introduces two-sided errors (both under
and overestimations). 

Although the estimates are unbiased and accurate in expectation for any
workload, they may suffer from large deviations. This is because many keys
hashing to the same counter may have the same update direction, despite the
zero expected interference. Skew exacerbates this phenomenon since a heavy
hitter drastically changes the estimates of the keys hashing to the same
counter, no matter its update direction. Therefore, to bound the probability of
severe errors, we compute the estimation variance.\footnote{CS uses 4-wise
independent hash functions~$\sigma$ to bound the variance and the probability
of large errors~\cite{CountSketch}.} Denoting the frequency of key~$y$ by
$f(y)$, this variance can be expressed as~$V=\sum_y (f(y))^2/w$. This lines up
with the intuition that larger arrays make hash collisions less likely,
reducing the variance. A straight-forward application of Chebyshev's inequality
shows that the error does not exceed~$e \cdot \sqrt{V}$ with a confidence of at
least~$1-e^{-2}$.\footnote{CS's error bound is~$\sqrt{V} = O(\sqrt{N/w})$ when
the workload is more uniform due to the squared summands being small. This
bound is smaller than CMS's error, i.e., $N/w$.}

Similarly to CMS, CS boosts confidence by employing~$d$ independent arrays,
each with its own hash functions~$h_i$ and $\sigma_i$. It answers a query by
returning the median estimate, as shown in
\Cref{fig:cs_overview}.
The median is returned, as it only has large errors when most estimates err
significantly, which becomes exponentially less likely given more arrays.
Formally, by a Chernoff bound, CS guarantees an error bound of at most~$e \cdot
\sqrt{V}$ with a confidence of at least~$1-e^{-d}$.\footnote{The precise
confidence given by the Chernoff bound is~$1-e^{-(e^4/8) \cdot d}$, which is
higher than the text's simplified expression.}

\textbf{Accommodating Skew.}
To avoid the space wastage of uniformly-sized counters, \sketchcs applies
\counterencodingabbrv\ to CS. It stores the absolute value of each counter
using \counterencodingabbrv\ as before, while representing the sign of each
counter by appending a sign bit to its stub. 

\textbf{Accommodating Unbounded Growth.}
Since the estimation variance~$V$ determines CS's error bound, \sketchcs
controls the error by expanding based on the growth of the variance. That is,
it expands its number of counters, i.e., the denominator~$w$ of the variance,
to balance out the numerator~$F=\sum_{y} (f(y))^2$. This translates to
expanding when the numerator~$F$ more than doubles, which is in contrast with
how \sketchcms expands when~$N$ (the numerator of CMS's error bound) more than
doubles. We encode different expansion policies by defining the size function
based on the numerator~$F$, i.e., as~$W(F)$. The quantity~$F$ is also known as
the squared~$\ell_2$-norm of the stream.

In
\ifappendix
\Cref{sec:upper_bounds_cs},
\else 
the Appendix~\cite{SublimeArxiv},
\fi
we prove through a similar analysis to that of \sketchcms, the following
variance bounds for \sketchcs:
\begin{restatable}{theorem}{thmcs}
    \label{thm:cs-w}
    When using a size function~$W(\cdot)$ and a single array on a stream with a
    squared~$\ell_2$-norm of $F = \sum_y f(y)^2$, \sketchcs provides unbiased
    estimates with a variance~$V(F)$ satisfying
	$$
    V(F) \leq  
    \begin{cases}
        O(1) \cdot \epsilon F^{1-\alpha} & \text{if $W(F)=F^{\alpha}/\epsilon$, for $\alpha \in [0,1)$, } \\
        \log_2 F \cdot \epsilon & \text{if $W(F)=F/\epsilon$.}
	\end{cases}
    $$
    Compared to a fixed-size CS allocated upfront with an array of~$W(N)$
    counters with  knowledge of~$F$, the variance terms above are higher by at
    most a constant and a logarithmic\footnote{This logarithmic term can be
    bounded in terms of the total key count~$N$, as $F \leq N^2$ and thus
    $\log_2 F \leq 2 \cdot \log_2 N$.} factor.
\end{restatable}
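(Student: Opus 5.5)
We decompose the final counter of the query key $x$ in the current (largest) array into contributions from each expansion epoch, and bound the variance of each contribution separately. Let the expansions occur at thresholds $F_0 < F_1 < \dots < F_k \le F$, where epoch $j$ uses an array of size $w_j = W(F_j)$. Each key $y \neq x$ with update direction $\sigma(y)$ has frequency $f(y)=\sum_j f_j(y)$, where $f_j(y)$ is its net count inserted during epoch $j$. Because of consistent hashing (lower-order bits of a single hash) and copying on expansion, the final counter of $x$ equals $\sigma(x) f(x) + \sum_{y\neq x}\sigma(y)\sum_j f_j(y)\,\mathbf{1}[h(y)\equiv h(x) \bmod w_j]$. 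The $\sigma$ values are pairwise independent and have zero mean, and they are independent of $h$. Hence the estimate $\sigma(x)\cdot$counter is unbiased, and its variance is
\[
V = \sum_{y\neq x}\mathbb{E}\Bigl[\Bigl(\sum_j f_j(y)\,\mathbf{1}[h(y)\equiv h(x) \bmod w_j]\Bigr)^2\Bigr].
\]

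To bound this, we apply Cauchy--Schwarz with weights, or more simply split the inner sum according to the smallest array size at which $y$ collides with $x$. Collision at size $w_j$ implies collision at every smaller size, so the indicator events are nested. For pairwise-independent $h$, $\Pr[\text{collide mod } w_j]=1/w_j$. Expanding the square, the cross term $f_i(y)f_j(y)$ with $i\le j$ contributes $f_i(y)f_j(y)/w_j$. Using $2ab\le a^2 t + b^2/t$ with a geometric weight $t$, we get
\[
V \le O(1)\sum_j \frac{1}{w_j}\sum_y \Bigl(\sum_{i\le j} f_i(y)\Bigr)^2 \le O(1)\sum_j \frac{F_j'}{w_j},
\]
where $F_j'$ is the squared $\ell_2$-norm of the stream prefix up to the end of epoch $j$, so $F_j'\le F_{j+1}$. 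An alternative, cleaner route is to telescope: write $g_j(y)=\sum_{i\le j} f_i(y)$, so that the contribution is $\sum_j (g_j-g_{j-1})\,\mathbf{1}_j$. Abel summation then rewrites this as $g_k\mathbf{1}_k + \sum_{j<k} g_j(\mathbf{1}_j-\mathbf{1}_{j+1})$, whose terms have disjoint-ish supports because the indicator events are nested. This gives $V\le O(1)\sum_j \|g_j\|_2^2/w_j$, up to a small constant from the cross terms.

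The remaining step is the geometric sum. With $W(F)=F^\alpha/\epsilon$, expansions happen when $F$ grows by a factor $2^{1/\alpha}$, so $F_{j+1}/w_j = \epsilon F_{j+1}^{1-\alpha}\cdot 2$ (roughly). Since the $F_j$ grow geometrically with ratio $2^{1/\alpha}>2$ and the exponent $1-\alpha>0$, the sum is dominated by its last term, giving $O(1)\cdot\epsilon F^{1-\alpha}$. This mirrors the constant in the \sketchcms footnote. In the linear case $W(F)=F/\epsilon$, every term is $O(\epsilon)$. There are $\log_2 F$ epochs because the array doubles each time $F$ doubles, which gives $\log_2 F\cdot\epsilon$. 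Comparing to $F/W(F)$ for a fixed-size CS yields the claimed constant and logarithmic overheads.

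The main obstacle is the cross terms between epochs: a key's counts from different epochs are correlated through nested collisions, and the variance is not simply additive, unlike the CMS expectation argument. I would first try the Abel-summation/nested-event decomposition above. If constants become messy, I would fall back on $(\sum_j a_j)^2\le(\sum_j \lambda_j^{-1})\sum_j \lambda_j a_j^2$ with $\lambda_j$ proportional to $w_j^{-1/2}$, and then check that the geometric growth still yields only an $O(1)$ factor in the sublinear case and at most a $\log_2 F$ factor in the linear case.
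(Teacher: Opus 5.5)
Your setup matches the paper's proof: per-epoch net counts $f_j(y)$, nested collision indicators $\mathbf{1}_j$ with $\Pr[\mathbf{1}_j=1]=1/w_j$, unbiasedness from pairwise-independent $\sigma$, and $\mathbb{E}[\mathbf{1}_i\mathbf{1}_j]=1/w_j$ for $i\le j$. The gap is in how you handle the cross terms.

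Your weighted AM--GM step ends at $V\le O(1)\sum_j F_j'/w_j$. This charges every epoch its \emph{full} prefix norm $F_j'=\|g_j\|_2^2\approx F_{j+1}$ at rate $1/w_j$. In the linear case $F_{j+1}/w_j=2\epsilon$ per epoch, so you get at best $2\epsilon\log_2 F$, and more once the AM--GM constants are included. Your sentence ``every term is $O(\epsilon)$ \dots\ which gives $\log_2 F\cdot\epsilon$'' therefore does not follow, and the theorem's linear bound has no hidden constant. The sublinear case is fine as you wrote it, because the $O(1)$ absorbs the loss.

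The paper instead uses the fact that the cross terms telescope exactly, $f_j^2+2f_jg_{j-1}=g_j^2-g_{j-1}^2$, so
\[
\mathbb{E}\Bigl[\Bigl(\sum_j f_j(y)\,\mathbf{1}_j\Bigr)^2\Bigr]=\sum_j\frac{g_j(y)^2-g_{j-1}(y)^2}{w_j}.
\]
Summing over keys gives $V\le\sum_j (F_{j+1}-F_j)/W_j$, so each epoch is charged only its \emph{increment}. In the linear case every term is $F_j/W_j=\epsilon$, which is what yields $\epsilon\log_2 F$.

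Your Abel-summation route actually gives this identity; you just undersold it. Write $X_y=\sum_j f_j(y)\mathbf{1}_j$. Because $\mathbf{1}_{j+1}\le\mathbf{1}_j$, the indicators $\mathbf{1}_k$ and $\mathbf{1}_j-\mathbf{1}_{j+1}$ for $j<k$ have exactly disjoint supports, not ``disjoint-ish'' ones. So there are no cross terms at all, and
$\mathbb{E}[X_y^2]=g_k^2/w_k+\sum_{j<k}g_j^2\,(1/w_j-1/w_{j+1})$,
which is the same telescoped sum. With doubling, the coefficient of $g_j^2$ is $1/(2w_j)$, not merely $O(1/w_j)$. That factor $\tfrac12$ is exactly what makes each completed epoch cost $\epsilon$ instead of $2\epsilon$.

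Your Cauchy--Schwarz fallback cannot rescue the linear case. Even with optimal weights it bounds $\mathbb{E}[X_y^2]$ by $\bigl(\sum_j |f_j(y)|/\sqrt{w_j}\bigr)^2$. For a single key whose frequency grows by a factor $\sqrt2$ per epoch, this is $\Theta(\epsilon\log^2 F)$, a full logarithmic factor too large. With your specific choice $\lambda_j\propto w_j^{-1/2}$, it also loses a polynomial factor in the sublinear case once $\alpha>2/3$.
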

Applying Chebyshev's inequality and employing~$d$ independent arrays yields
stable confidence bounds that match those of CS.

\sketchcs maps keys to consistent counters and update directions by employing
the same hash functions~$h_1,\dots,h_d$ and $\sigma_1,\dots,\sigma_d$ across
expansions. It handles insertions, deletions, and queries as in the case of a
standard CS.

\textbf{Tracking the Squared~$\ell_2$-norm~$F$.}
Similarly to \sketchcms, where we have to track the total number of keys~$N$ to
determine when to expand, \sketchcs must track the squared~$\ell_2$-norm~$F$.
Yet, one cannot simply increment a value with each insertion to track~$F$. This
is because, due to the squared summands, inserting a frequent key grows~$F$
more than inserting an infrequent key. While it is possible to use the counter
arrays within \sketchcs itself to estimate~$F$, the resulting estimate has a
confidence that is determined by the number of arrays~$d$. For very long
streams, this estimate eventually errs significantly, leading to mistimed
expansions and contractions that adversely affect the accuracy and the memory
footprint. We would like to have an estimate of the squared~$\ell_2$-norm~$F$
with very high confidence to support the entire stream. 

To this end, we employ AMS sketches, which are sketches tailored to
approximating the~$\ell_2$-norm of the stream~\cite{AMS}. Each AMS sketch is a
single counter that is updated with random update directions derived by hashing
the stream's keys, similarly to CS. Squaring the value of the counter yields an
estimate of~$F$. It is known that by employing~$4 \cdot \log_2 N$ AMS sketches,
one can estimate~$F$ within a factor of 2 with a high confidence
of~$1-1/N$~\cite{AMS}, which allows for supporting the entire stream. Since~$N$
is at most the maximum value representable in a 64-bit machine word in
practice, \sketchcs employs~64 AMS sketches to track~$F$. It uses SIMD to
quickly update\footnote{More generally, it is possible to maintain~$k$ AMS
sketches, where~$k$ is the machine word size, in constant time with high
probability to approximate~$F$. Since~$k=\Omega(\log n)$ is standard in both
theory and practice, it is possible to maintain a high confidence estimate
of~$F$ in constant time without the use of SIMD. We omit the corresponding
algorithm and its proof of correctness due to space constraints.} and query
these sketches after each insertion or deletion.

\textbf{Applying \sketch\ to Other FE Sketches.}
Using the same strategy as above, \sketch\ can be adapted to any FE sketch with
an error bound of the form~$X/w$, where~$X$ is a measurable parameter of the
workload that the error depends on, and~$w$ is the FE sketch's size. For
example, in the case of CMS and Misra-Gries, $X$ is the stream's total key
count~$N$, whereas in the case of CS, it is the stream's squared~$\ell_2$-norm
$F$. In all these cases, $w$ is the number of counters within the FE sketch. By
expanding when the quantity~$X$ more than doubles and defining the size
function based on~$X$ (i.e., as $W(X)$), one obtains \sketch's sublinear
scaling of error and memory.

\section{Memory Analysis}\label{sec:memory_analysis}
So far in Section~\ref{sec:sublime}, we characterized
\sketch's error with respect to the number of counters it uses. We now
incorporate \counterencodingabbrv\ into the analysis to prove an upper bound on
\sketch's memory footprint in bits
(\Cref{thm:all_counters_space}). We also prove a memory footprint
lower bound for any FE sketch that processes a growing stream without knowledge
of its ultimate length (Theorems~\ref{thm:sketch_expandability_special_case}
and \ref{thm:sketch_expandability}). This lower
bound is parameterized by the sketch's accuracy. We observe that \sketch's
upper bound is within a small logarithmic factor of this lower bound,
demonstrating \sketch's ability to give rise to FE sketches that attain a
desired level of accuracy with a close-to-optimal memory~footprint.

\textbf{Upper Bound.}
As described in
\Cref{sec:accommodating_skew}, \sketch\
adaptively tunes \counterencodingabbrv\ to reclaim memory and ensure that a
small fraction of its bits are unused at any time. This implies that the memory
footprint is dominated by the space occupied by the stubs and the extensions.
We quantify this cost for a single counter based on its count as:
\begin{lemma}
    \label{lemma:single_counter_space}
    When working with unsigned counters and stubs of length~$s$ bits,
    \counterencodingabbrv\ encodes a counter value~$v$ in $s$ bits if $v < 2^s$
    and in at most $4 + 1.26 \cdot \log_2 v$ bits otherwise. Signed counters
    use one more bit for encoding the sign in both cases.
\end{lemma}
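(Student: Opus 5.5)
The plan is to read the space cost directly off the encoding described in \Cref{sec:accommodating_skew}, splitting on whether the stub overflows. If $v < 2^s$, the counter is stored entirely in its stub. Its overflows-bitmap bit is $0$ and it has no extension, so it uses exactly $s$ bits of stub. Following the statement, I charge only stub and extension bits here; the per-counter bitmap bit is accounted for separately, inside the constant in the other case.

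If $v \ge 2^s$, the stub holds the $s$ low-order bits of $v$. The extension encodes the high part $u = \lfloor v/2^s \rfloor \ge 1$ in base $3$. I would count its length as follows:
\begin{itemize}
\item $u$ has $\lfloor \log_3 u \rfloor + 1$ base-3 digits;
\item each digit takes a $2$-bit fragment;
\item the delimiter adds one more $2$-bit fragment.
\end{itemize}
The extension therefore uses at most $2\log_3 u + 4$ bits. Since $2/\log_2 3 \approx 1.2619$, this is at most $1.262\,\log_2 u + 4$. Adding the $s$ stub bits and the overflow bit gives a total of
\[ s + 1 + 4 + 1.262\,(\log_2 v - s), \]
using $\log_2 u \le \log_2 v - s$.

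The step I expect to be the main obstacle is landing exactly on the claimed constant $4 + 1.26\log_2 v$. The extension factor $1.262$ exceeds $1.26$, while the stub bits only gain about $0.262\,s$ against the $1.262$ coefficient; I need that gain to cover the slack.

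Rewriting the total, it is at most
\[ 5 + 1.262\log_2 v - 0.262\,s. \]
The term $0.262\,s$ absorbs the extra overflow bit once $s \ge 4$. For small $s$ I would instead use the floor in $\lfloor \log_3 u \rfloor$ and the bound $u \le v/2^s$ more carefully to recover the missing fraction of a bit. As a fallback, I could drop the overflow bit from the per-counter cost, treating it as chunk metadata as the statement's $s$-bit first case implicitly does, and round $2/\log_2 3$ to $1.26$ with the leftover $0.002\log_2 v$ absorbed by $0.262\,s$ whenever $s$ is at least a small constant.

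For signed counters, \sketchcs appends one sign bit to each stub and stores $|v|$ with the same encoding. Both bounds therefore increase by exactly one bit.
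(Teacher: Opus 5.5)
Your fallback is the paper's proof, but your primary route fails, and the repair you suggest for small $s$ cannot work.

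If you charge the overflows-bitmap bit to an overflowing counter, the claimed bound is false for $s\le 3$. At $v=2^s$ you have $u=1$, so the extension is one digit fragment plus the delimiter ($4$ bits), and your total is $s+5$. The bound is only $4+1.26\log_2 v=4+1.26s$; for instance $s=1$, $v=2$ gives $6>5.26$. Handling the floor more carefully cannot help, because $\lfloor\log_3 1\rfloor=0$ leaves no slack there. Your claim that $0.262s$ absorbs the extra bit once $s\ge 4$ also ignores the $0.002\log_2 v$ term. With $s=4$ and $v=16\cdot 3^{14}<2^{27}$, your accounting gives $4+1+32=37$ bits against a bound of about $36.999$.

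Charging the bitmap bit only to overflowing counters is inconsistent anyway, since every counter owns one. The statement's first case (exactly $s$ bits when $v<2^s$) fixes the accounting as stub plus extension only, and that is all the paper's proof counts.

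With that accounting the argument is one line. The extension has $\lfloor\log_3\lfloor v/2^s\rfloor\rfloor+2$ fragments, so it occupies at most $2\log_3 2\cdot\log_2\lfloor v/2^s\rfloor+4\le 4-s+1.26\log_2 v$ bits; adding the $s$ stub bits gives the result. The last inequality is where rounding $2\log_3 2\approx 1.2619$ down to $1.26$ is paid for by the $0.26s$ saving. It needs $0.0019\log_2 v\le 0.26s$, i.e.\ roughly $\log_2 v\le 140s$, rather than ``$s$ at least a small constant''. Every counter that fits in a machine word meets this condition, even with $s=1$. Your treatment of signed counters matches the paper.
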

\begin{proof}
    If~$v < 2^s$, it fits in its stub and occupies~$s$ bits. Otherwise, its~$s$
    lower-order bits are stored in its stub and its higher-order bits
    encoding~$\lfloor v/2^s \rfloor$ are stored in an extension. This extension
    has~$\lfloor \log_3 \lfloor v/2^s \rfloor \rfloor + 2$ fragments to
    represent the digits and the delimiter. Thus, it occupies
    $$ 2 \cdot (\lfloor \log_3 \lfloor v/2^s \rfloor \rfloor + 2) \leq 2 \cdot (\log_3 2 \cdot \log_2 \lfloor v/2^s \rfloor + 2) \leq 4 - s + 1.26 \cdot \log_2 v $$
    bits. This, added to the~$s$ bits of space consumed by the stub, yields the
    result.
\end{proof}
Applying \Cref{lemma:single_counter_space} to a counter array while
accounting for \counterencodingabbrv's tuning yields the following lemma.
Intuitively, it shows that the worst-case scenario is when the counter values
are similar, i.e., their distribution is uniform. This is because skew gives
\counterencodingabbrv\ more opportunities to save space.
\begin{lemma}
    \label{thm:counter_array_per_counter_space}
    Consider a counter array of size~$w$ and let~$k$ be the sum of its
    counters' absolute values. When the counters are unsigned,
    \counterencodingabbrv\ encodes the array in~$w \cdot \left[ 4.41 + 1.26
    \cdot \log_2 (k/w) \right]$ bits. When the counters are signed, it uses one
    extra bit per counter to encode their signs.
\end{lemma}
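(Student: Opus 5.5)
The plan is to sum the per-counter costs from \Cref{lemma:single_counter_space} over the $w$ counters, using the freedom \counterencodingabbrv's adaptive tuning has in choosing the stub length $s$. Let $v_1,\dots,v_w$ be the absolute values of the counters, with $\sum_i v_i = k$. The analysis fixes a concrete stub length, $s = \lceil \log_2 (k/w) \rceil$ (or $s=1$ if $k<w$). Since the tuning procedure picks the footprint-minimizing $s$, its memory is at most the cost under this particular choice, up to the small overprovisioned slack it maintains. That slack is what the constant $4.41$ must absorb, beyond the $4$ from the lemma.

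Under this choice, every counter costs at most $\max(s,\, 4+1.26\log_2 v_i)$ bits. I would bound this uniformly by $4 + 1.26\log_2 \max(v_i, 2^{s})$, which holds because $s \le 1.26\, s$. Replacing $2^s$ by something of order $k/w$ gives the per-counter bound $4 + 1.26 \log_2 \max(v_i, c\,k/w)$ for a constant $c \le 2$. The ceiling contributes at most one extra bit, i.e.\ at most $1.26 \cdot \log_2 2$ after scaling, and the rounding and tuning slack must be booked carefully against the $0.41$.

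Summing over $i$ and using the concavity of $\log_2$, Jensen's inequality gives $\sum_i \log_2 \max(v_i, k/w) \le w \log_2 \bigl(\frac1w\sum_i \max(v_i,k/w)\bigr) \le w \log_2 (2k/w)$, since $\sum_i \max(v_i, k/w) \le k + k = 2k$. This produces $w\,[4 + 1.26 + 1.26\log_2(k/w)]$. That is slightly above the target, so the constant needs tightening. The better route is to bound $\max(s, 4+1.26\log_2 v)$ directly by a single concave function of $v$, roughly $4.41 + 1.26\log_2(v/(k/w))+1.26\log_2(k/w)$ in the regime $v \ge 2^s$, and by $s \le \log_2(k/w)+1$ otherwise. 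One then applies Jensen to that concave function, using that the mean of the $v_i$ is exactly $k/w$. This is where uniform counters emerge as the worst case, matching the remark preceding the statement.

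The main obstacle will be landing exactly on the constant $4.41$: reconciling the ceiling in $s$, the floor terms from the lemma, and the tuning's unused bits. I would first try the direct concave-envelope approach with $s = \lceil\log_2(k/w)\rceil$ and check the constant numerically in the worst case. The signed case then follows immediately, since \Cref{lemma:single_counter_space} adds exactly one bit per counter.
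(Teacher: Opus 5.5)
Your skeleton matches the paper's: sum the per-counter bounds of \Cref{lemma:single_counter_space}, bound $s$ by $\log_2(k/w)+O(1)$, and apply Jensen with mean $k/w$. However, the step that actually produces the constant is missing.

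Your first route bounds the cost by $4+1.26\log_2\max(v_i,2^s)$. This spends the additive $4$ without using it, so the threshold competing with $v_i$ stays at $2^s\approx k/w$. Jensen then yields $\log_2((1+c)k/w)$ and a constant of $5.26$ or worse, as you noticed.

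The ``better route'' does not repair this. The function you display for $v\ge 2^s$ simplifies to $4.41+1.26\log_2 v$, which is just \Cref{lemma:single_counter_space} with a weaker constant. Gluing it to the constant $s$ for $v<2^s$ gives a function that jumps upward at $v=2^s$. It is therefore not concave, and Jensen cannot be applied to it.

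The missing idea is to pull the $4$ out of the max \emph{before} comparing with $s$, which yields a genuinely concave majorant. For $s\ge 4$,
\[
\max\bigl(s,\;4+1.26\log_2 v\bigr)\;\le\;4+1.26\max\bigl(s-4,\;\log_2 v\bigr)\;\le\;4+1.26\log_2\bigl(2^{s-4}+v\bigr),
\]
and Jensen on $v\mapsto\log_2(2^{s-4}+v)$ bounds the total by $w\,[4+1.26\log_2(2^{s-4}+k/w)]$. This is exactly the paper's proof: it has $2^{s-4}\le k/(4w)$, giving $1.26\log_2(5/4)\approx 0.41$. With your $s=\lceil\log_2(k/w)\rceil$ one has $2^{s-4}<k/(8w)$, which would even give $1.26\log_2(9/8)\approx 0.21$. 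Keeping your two regimes separate can also be made to work: apply Jensen only over the overflowing counters, then maximize over their fraction. You have not set that up, though.

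Second, your justification of the stub length is not the one available. The paper does not appeal to optimality of the tuning. It uses the stated design property that $s$ is never more than $2$ bits above the average counter length, so $s\le 2+\frac1w\sum_i\log_2|v_i|\le 2+\log_2(k/w)$ by Jensen.

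Your claim that the footprint at the tuned $s$ is at most the stub-plus-extension cost at $s=\lceil\log_2(k/w)\rceil$ is not supplied by the tuning. The tuning minimizes an \emph{estimated} footprint (a histogram plus a Chebyshev bound on tails, jointly over $c$ and $s$), and only at retuning time. Your comparison point may itself spill into tails.

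Relatedly, the $0.41$ is not slack for unused or overprovisioned bits, which this lemma ignores entirely. It is exactly $1.26\log_2(5/4)$, the price of the $2$-bit allowance in $s$.

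Finally, there is nothing to do for $k<w$. The stated bound is false for very small $k/w$ (it becomes negative), and the paper's step $\log_2\frac{k}{4w}\le 1.26\log_2\frac{k}{4w}$ tacitly assumes $k\ge 4w$.
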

\begin{proof}
    Applying \Cref{lemma:single_counter_space} to every counter
    within the array and summing results in a total memory footprint of~$m \leq
    \sum_i \max(s, 4+1.26 \cdot \log_2 |v_i|)$ bits. To simplify this
    expression, we bound the stub length~$s$ based on the total absolute
    counter values. Recalling \counterencodingabbrv's tuning mechanics from
    \Cref{sec:accommodating_skew},
    \sketch\ uses stubs that are slightly longer than the length of the average
    counter. In fact, it never sets~$s$ to be more than~2 bits longer than the
    average counter length. Formally, this corresponds to~$s \leq 2 +
    \frac{1}{w} \cdot \sum_i \log_2 |v_i|$, where~$v_i$ denote the signed value
    of the~$i${\nobreakdash-}th counter in the array. Since the logarithm is a
    concave function, we use Jensen's inequality~\cite{Jensen} to bound the sum
    on the right-hand side as~$\log_2(\sum_i |v_i|/w)$. Noting that the
    absolute counter values~$|v_i|$ sum to~$k$ yields $s \leq 2+\log_2(k/w) = 4
    + \log_2(k/(4w))$. Plugging into the memory bound from before yields~$m
    \leq \sum_i 4 + \max(\log_2 (k/(4w)), 1.26 \cdot \log_2 |v_i|)$. We further
    bound each individual max term by 
    \ifappendix
    \begin{align*}
      \max & \left( 1.26 \cdot \log_2 \frac{k}{4w}, 1.26 \cdot \log_2|v_i| \right) \\
      & \leq 1.26 \cdot \max\left( \log_2 \frac{k}{4w}, \log_2|v_i| \right) \\ 
      & \leq 1.26 \cdot \log_2\left( \frac{k}{4w} + |v_i| \right),
    \end{align*}
    \else
    $$ \max\left( 1.26 \cdot \log_2 \frac{k}{4w}, 1.26 \cdot \log_2|v_i| \right) \leq 1.26 \cdot \max\left( \log_2 \frac{k}{4w}, \log_2|v_i| \right) \leq 1.26 \cdot \log_2\left( \frac{k}{4w} + |v_i| \right), $$
    \fi
    thus implying~$m \leq \sum_i 4 + 1.26 \cdot \log(k/(4w) + |v_i|)$.
    Rearranging the sum and once again using the concavity of the logarithm to
    apply Jensen's inequality yields the desired result:
    \ifappendix
    \begin{align*}
      m &\leq w \cdot \left[4 + 1.26 \cdot \frac{1}{w} \cdot \sum_i \log_2(k/(4w)+k/w) \right] \\ 
      & \leq w \cdot \left[ 4.41 + 1.26 \cdot \log_2 (k/w) \right]. 
    \end{align*}
    \else
    $$ m \leq w \cdot \left[4 + 1.26 \cdot \frac{1}{w} \cdot \sum_i \log_2(k/(4w)+k/w) \right] \leq w \cdot \left[ 4.41 + 1.26 \cdot \log_2 (k/w) \right]. $$
    \fi
    \vspace{-8pt}
\end{proof}

We cannot simply set~$k$ to the stream length~$N$ and multiply the space bound
of~\Cref{thm:counter_array_per_counter_space} by the number of
arrays~$d$ to derive \sketch's memory footprint. This is because \sketch\
increases the sum of the absolute values of the counters during expansions by
copying them. By accounting for this, we show the following space bound for
\sketch:
\begin{theorem}
    \label{thm:all_counters_space}
    When using a sublinear power size as the function~$W(N)$, \sketchcms, has a
    memory footprint of~$\approx d \cdot W(N) \cdot \left[ 4.41 + 1.26 \cdot
    \log_2 N/W(N) \right]$. With a linear size function, \sketchcms
    uses~$\approx d \cdot W(N) \cdot \left[4.41 + 1.26 \cdot \log_2(N \log_2
    N/W(N)) \right]$ bits of memory. The same bounds hold for \sketchcs
    with~$W(F)$ instead of $W(N)$ and with an additive overhead of one bit per
    counter.
\end{theorem}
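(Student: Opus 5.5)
The plan is to apply \Cref{thm:counter_array_per_counter_space} to each of the $d$ arrays of the current (largest) sketch, then sum. The only missing input is a bound on $k$, the sum of the absolute counter values in one array. For a fixed-size CMS this is just $N$. Here, however, every expansion duplicates all counters, so we need to understand how much mass is inflated by copying. The memory bound then follows by substituting this $k$ and the array size $w=W(N)$ into $w\,[4.41+1.26\log_2(k/w)]$ and multiplying by $d$.

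\textbf{Bounding $k$.} Order the expansions $j=1,\dots,J$. Let $N_j$ denote the expansion thresholds, with $N_J\le N$, and let $w_j$ be the array size after expansion $j$. Updates that arrive while the array has size $w_j$ are copied once at each of the later expansions. Their contribution to $k$ is therefore multiplied by $w_J/w_j$, i.e.\ by $2^{J-j}$. This gives
\[
k\le \sum_{j}(N_{j+1}-N_j)\cdot \frac{W(N)}{W(N_{j+1})},
\qquad\text{roughly}\qquad
k\lesssim W(N)\sum_j \frac{N_{j+1}}{W(N_{j+1})}.
\]
For deletions in the turnstile model, each deleted unit subtracts from counters that already carry the copied mass, so absolute values are at most the insertion-only mass. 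Contractions restore an earlier state plus deltas, so the same bound applies up to constants.

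For a sublinear power $W(N)=N^\alpha/\epsilon$, the terms $N_j/W(N_j)=\epsilon N_j^{1-\alpha}$ grow geometrically in $j$, with ratio $2^{(1-\alpha)/\alpha}>1$. The sum is dominated by its last term, so $k=O(N)$. This is essentially the same geometric-series computation as in \Cref{thm:cms-w}: the footnoted constant $(2^{1/\alpha}-1)/(2^{1/\alpha-1}-1)$ reappears, and $\log_2(k/W(N))=\log_2(N/W(N))+O(1)$. Absorbing the $O(1)$ inside the bracket is what the ``$\approx$'' in the statement hides.

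For the linear size function the ratio is $1$: every one of the $\log_2 N$ doubling epochs contributes $\Theta(N)$ after copying. Hence $k\le N\log_2 N$, which gives exactly the stated $\log_2(N\log_2 N/W(N))$ term.

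\textbf{Count Sketch.} I would repeat the argument with counters signed. \Cref{thm:counter_array_per_counter_space} then gives an additive one bit per counter. Expansions are triggered by $F$, so $w=W(F)$, while $k$ is still bounded by total insertions times copy factors, i.e.\ via $N$ as before. To match the statement's form with $W(F)$, I would argue that the same geometric-sum structure holds with thresholds indexed by $F$.

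The main obstacle I expect is making the copy-multiplicity accounting rigorous under deletions and contractions. In particular, I need to argue that the absolute values of signed CS counters do not exceed the copied insertion mass, and that the stored record of pre-expansion sketches adds only a constant factor (at most $2$, as stated earlier). I would handle the record by noting that the sizes of the sketches in the record form a geometric series, then bound $|v_i|$ by the triangle inequality over all updates hitting that counter.
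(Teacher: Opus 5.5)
Your treatment of \sketchcms\ follows the paper's argument. The copy-inflated mass of one array is exactly the sum $\sum_j 2^{i-j}(N_{j+1}-N_j)$ from the proof of \Cref{thm:cms-w}; it equals $W_i$ times the expected-error bound there. That sum is $O(N)$ for a sublinear power and $\approx N\log_2 N$ for a linear size function. Plugging it into \Cref{thm:counter_array_per_counter_space} with $w=W(N)$ and multiplying by $d$ gives the claim. There is one minor slip: updates arriving while the array has size $W(N_j)$ are multiplied by $W(N)/W(N_j)$, not $W(N)/W(N_{j+1})$. The resulting factor of $2$ is absorbed by the ``$\approx$''.

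The gap is in the \sketchcs\ part, where you keep bounding $k$ ``via $N$ as before'' while the thresholds are indexed by $F$. The $N$-increments between consecutive $F$-thresholds need not grow geometrically, so $k=O(N)$ is false under $F$-driven expansion. For example, take $\alpha=1/2$, insert $n$ distinct keys once each, and then insert one fresh key $n$ times.
\begin{itemize}
\item Now $N=2n$ and $F\approx n^2$, so $\log_2(N/W(F))=O(1)$.
\item The heavy key triggers roughly $\tfrac12\log_2 n$ doublings. These replicate the phase-one counters, which have magnitude about $n^{1/4}$ even after sign cancellation, across all $\approx n/\epsilon$ counters.
\item Hence $k\gg N$, and the counters need $\Omega(\log n)$ bits each on average.
\end{itemize}

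The missing observation is that every insertion raises $F$ by $2f(x)+1\ge 1$. So the number of insertions during an $F$-epoch is at most that epoch's increase in $F$, and therefore $k\le\sum_j 2^{i-j}(F_{j+1}-F_j)$. This is exactly the sum controlled in \Cref{thm:cs-w}: it is $O(F)$ for $W(F)=F^\alpha/\epsilon$ and $\approx F\log_2 F$ for $W(F)=F/\epsilon$. The \sketchcs\ bound is therefore $d\,W(F)\,[4.41+1.26\log_2(F/W(F))]+d\,W(F)$, with $F\log_2 F$ in place of $F$ inside the logarithm in the linear case. In other words, ``$W(F)$ instead of $W(N)$'' must also replace $N$ by $F$ inside the logarithm. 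The paper's one-line proof only asserts that the mass is $\approx N$ and leaves this point implicit, so this is the step you need to supply.
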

\begin{proof}
    By a similar analysis to that of
    \Cref{thm:cms-w}, the sum
    of the absolute counter values within each array in \sketch\ when using a
    sublinear power as the size function is~$\approx N$, modulo a small
    constant factor. This constant factor become especially negligible in the
    space bound, since it will appear within a logarithm. Similarly, when the
    size function is linear, we have~$k \approx N \log N$. Plugging these
    values into the bound of~\Cref{thm:counter_array_per_counter_space}
    and multiplying by~$d$ proves the result.
\end{proof}

\ifappendix
\begin{figure*}
    \centering
    \includegraphics[width=0.65\textwidth]{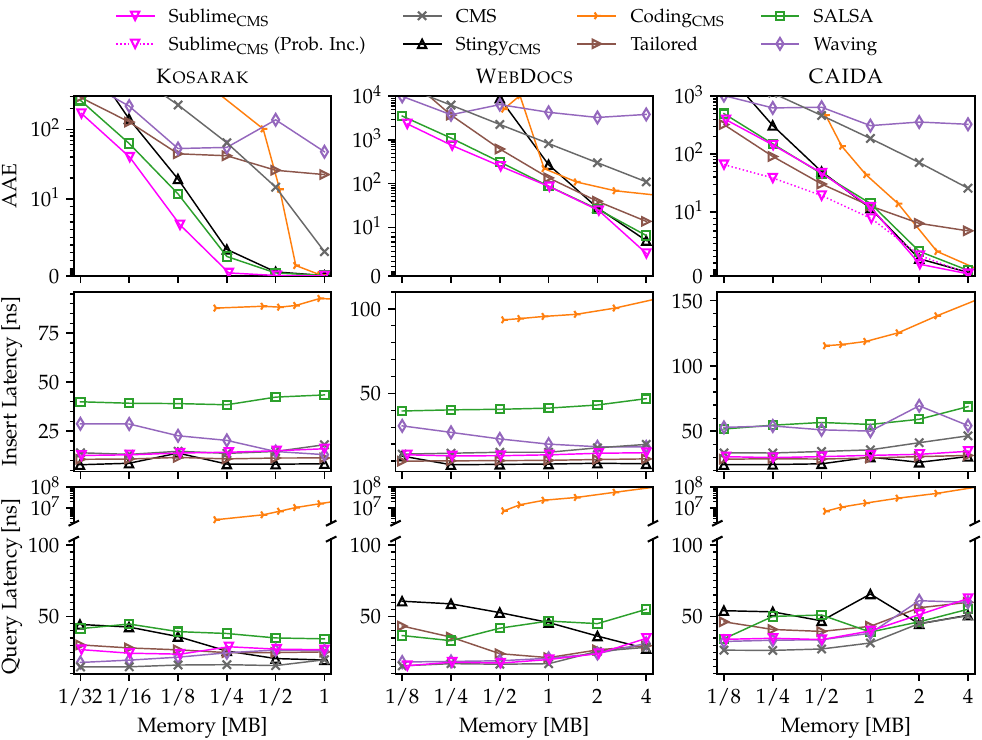} 
    \hspace*{4pt}
    \small
    \begin{minipage}{0.31\textwidth}
    \setlength{\tabcolsep}{2.5pt}
    \begin{tabular}[b]{ccccccc} 
        \toprule 
        \multirow{2}{*}[0.0em]{Memory [MB]} & \multicolumn{2}{c}{$\textsc{Kosarak}$} & \multicolumn{2}{c}{$\textsc{WebDocs}$} & \multicolumn{2}{c}{$\textsc{CAIDA}$} \\ 
        & \hspace*{1.5mm}$c$ & $s$ & \hspace*{1.5mm}$c$ & $s$\hspace*{0.75mm} & \hspace*{0.75mm}$c$ & $s$ \\ 
        \midrule 
        1/32 & \hspace*{1.5mm}39 & 10 & \hspace*{1.5mm}- & -\hspace*{0.75mm} & \hspace*{0.75mm}- & - \\ 
        1/16 & \hspace*{1.5mm}42 & 9 & \hspace*{1.5mm}- & -\hspace*{0.75mm} & \hspace*{0.75mm}- & - \\ 
        1/8 & \hspace*{1.5mm}43 & 7 & \hspace*{1.5mm}33 & 12\hspace*{0.75mm} & \hspace*{0.75mm}39 & 10 \\ 
        1/4 & \hspace*{1.5mm}68 & 5 & \hspace*{1.5mm}34 & 12\hspace*{0.75mm} & \hspace*{0.75mm}42 & 9 \\ 
        1/2 & \hspace*{1.5mm}75 & 5 & \hspace*{1.5mm}39 & 10\hspace*{0.75mm} & \hspace*{0.75mm}45 & 8 \\ 
        1 & \hspace*{1.5mm}81 & 4 & \hspace*{1.5mm}38 & 10\hspace*{0.75mm} & \hspace*{0.75mm}51 & 7 \\ 
        2 & \hspace*{1.5mm}- & - & \hspace*{1.5mm}50 & 7\hspace*{0.75mm} & \hspace*{0.75mm}64 & 5 \\ 
        4 & \hspace*{1.5mm}- & - & \hspace*{1.5mm}70 & 5\hspace*{0.75mm} & \hspace*{0.75mm}75 & 4 \\ 
        \bottomrule 
    \end{tabular} 
        \captionof{table}{Tuning \counterencodingabbrv's parameters for
        each plot point to the left according to the table above enables high
        performance and accuracy for \sketchcms.}
    \vspace{7.6cm}
    \end{minipage} \\
    \vspace*{-7.0cm}
    \caption{\sketchcms achieves the highest accuracy under real-world
    workloads while maintaining equal or superior insertion and query
    performance. We use the optimal tuning of the baselines at each point of
    the curves. The table on the right details the number of counters per
    chunk~$c$ and the stub length~$s$ used by \sketchcms in each point.}
    \vspace*{-2.0mm}
    \label{fig:accuracy_bench}
\end{figure*}
\fi

\textbf{Lower Bound.}
We prove a memory footprint lower bound for FE sketches by reducing them to
filters. That is, we use an FE sketch to implement filter functionality, which
allows us to inherit known memory lower bounds for filtering. A filter is a
compact data structure that answers whether a query key exists in a given set
of keys. A filter never returns a false negative, but it may return a false
positive with a probability called the \emph{False Positive Rate (FPR)}. The
lower the desired FPR is, i.e., the higher the target accuracy level, the
larger the filter's memory footprint must be. It is known that a filter that
supports a set of a growing size of~$n$ with an FPR of~$\delta$ must use at
least~$n \cdot [ \Omega( \log 1/\delta + \log\log n) - \Theta(1)]$ bits of
memory at some point in time~\cite{PaghExpandability}. Thus, an FE sketch that
can implement such a filter must also use at least this much memory.

Our reduction works as follows. Consider a FE sketch (e.g., CMS). Suppose this
sketch never underestimates a key's count and guarantees an error of at
most~$E(N)$ with a confidence of~$1-\delta$ for streams of length~$N$. We use
it to implement a filter with an FPR of~$\delta$ over a set of
size~$n=N/(E(N)+1)$. Specifically, we insert each key in the set~$E(N)+1$ times
into the FE sketch. This causes the FE sketch to return an estimate of at
least~$E(N)+1$ during a query to an existing key within the set. In contrast,
the FE sketch's accuracy guarantees imply that a query to a non-existent key
returns an estimate of at most~$E(N)$ with a probability of~$1-\delta$. Thus,
by thresholding the estimate by~$E(N)+1$, we return a true positive for
existing keys and a false positive for non-existent keys with an FPR of at
most~$\delta$. The following theorem formalizes these intuitions:

\begin{theorem}
    \label{thm:sketch_expandability_special_case}
    Consider an FE sketch with only overestimation errors that processes
    growing streams with an error bound of~$E(N)$ and a confidence
    of~$1-\delta$. Such a sketch must use at least~$N/E(N) \cdot [\Omega\left(
    \log 1/\delta + \log\log (N/E(N)) \right) - \Theta(1)]$ bits of memory at
    some point in time while processing a stream.
\end{theorem}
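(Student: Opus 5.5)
We prove the statement by reduction. Given an overestimation-only FE sketch $S$, we build from it an expandable filter, and then invoke the lower bound of~\cite{PaghExpandability}: any filter that supports a set of growing size $n$ with FPR $\delta$ must use $n\cdot[\Omega(\log 1/\delta+\log\log n)-\Theta(1)]$ bits at some point in time. The filter's memory is exactly the memory of $S$ plus $O(\log N)$ bits of bookkeeping. Substituting $n = N/(E(N)+1)=\Theta(N/E(N))$ (for $E(N)\ge 1$) then gives the claimed bound, with constants absorbed into the $\Omega$ and $\Theta$.

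\textbf{Construction.} The filter simulates $S$ on a derived stream. To insert a key $x$ into the filter, we feed $S$ the insertion of $x$ a total of $E(N)+1$ times, where $N$ is the current length of the derived stream. To answer a query $q$, we return ``present'' iff $\hat f(q)\ge E(N)+1$.

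\textbf{No false negatives.} For a key in the set, $f(x)\ge E(N)+1$. Since $S$ never underestimates, $\hat f(x)\ge f(x)\ge E(N)+1$, so the filter answers ``present''.

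\textbf{False positives.} For a key $q$ not in the set, $f(q)=0$. The error guarantee then gives $\hat f(q)\le E(N)$ with probability at least $1-\delta$, so the FPR is at most $\delta$.

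\textbf{Size correspondence.} A set of size $n$ corresponds to a stream of length $N$ with $n(E(N)+1)=N$, up to rounding. Because $S$ handles growing streams without knowing the final length, the filter likewise supports a growing set without knowing its final size. This matches the setting of~\cite{PaghExpandability}.

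\textbf{Main obstacle.} The delicate step is that the multiplicity $E(N)+1$ depends on the stream length $N$, which keeps growing, while the lower bound is stated with respect to the final $N$. I would handle this by assuming $E$ is nondecreasing, which holds for every error bound considered here. Each key is inserted $E(N_t)+1$ times, where $N_t$ is the stream length at that moment. At query time the threshold is the current $E(N)+1$, so the no-false-negative argument needs $f(x)\ge E(N)+1$ to hold for keys inserted earlier, when $E$ was smaller.

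To repair this, I would use one of two schemes.
\begin{enumerate}
\item Fix a target $N$ in advance and apply the lower bound in its ``at some point while growing up to $N$'' form. Here the filter lower bound concerns a set that grows to size $n$, and we insert every key $E(N)+1$ times with $N$ the final length. This is legitimate because, as used in~\cite{PaghExpandability}, the growing-set adversary reveals the elements over time while the target size is a parameter.
\item Alternatively, top up earlier keys. This is not possible, since the filter does not store them. So I would instead rely on monotonicity together with the doubling structure: insert with multiplicity $E(2^{\lceil\log_2 N\rceil})+1$ and check that this changes $n$ only by a constant factor. This step needs care, since it requires $E(2N)=O(E(N))$, i.e.\ $E$ polynomially bounded.
\end{enumerate}
I would try the first scheme first, since it matches the reduction informally described before the theorem. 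Finally, I would note that $\log\log(N/(E(N)+1))$ and $\log\log(N/E(N))$ differ by $O(1)$, which is absorbed in the $\Theta(1)$ term.
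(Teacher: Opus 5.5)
Your reduction is the paper's argument: insert each key $E(N)+1$ times and threshold queries at $E(N)+1$, which gives no false negatives from the one-sided error and an FPR of at most $\delta$ from the confidence bound, then inherit the Pagh--Segev--Wieder expandable-filter lower bound with $n=N/(E(N)+1)$. The paper implicitly treats $N$ as a fixed target (your first scheme) and never discusses the growing multiplicity; your second scheme would not prevent false negatives for early keys, since their multiplicity stays below later thresholds, whereas the paper's appendix proof of \Cref{thm:sketch_expandability} handles growth by re-inserting earlier keys inside a white-box encoding argument, where the encoder does know them.
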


\Cref{thm:sketch_expandability_special_case} only holds for FE
sketches that do not underestimate (e.g., CMS). It is non-trivial to adapt the
same argument to FE sketches that can underestimate (e.g., CS), since
underestimations can introduce false negatives and violate filtering semantics.
Nevertheless, by accounting for the estimation variance, we reason about the
false negatives and
\ifappendix
prove in \Cref{sec:ommitted_proofs_lower}
\else 
prove in our Appendix~\cite{SublimeArxiv}
\fi
that the same lower bound holds for these sketches as well:
\begin{restatable}{theorem}{thmsketchexpandabilitylabel}
    \label{thm:sketch_expandability}
    Consider an FE sketch providing estimates with an expected absolute error
    of~$E(N)$ and a variance of~$\delta \cdot (E(N))^2$ for any key when
    processing a stream of size~$N$. Such an FE sketch must use at
    least~$N/E(N) \cdot [\Omega(\log 1/\delta + \log\log (N/E(N))) -
    \Theta(1)]$ bits of memory at some point in time while processing a stream.
\end{restatable}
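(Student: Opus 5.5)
We reduce to the expandable-filter lower bound of~\cite{PaghExpandability}, as in the proof sketch for \Cref{thm:sketch_expandability_special_case}. The difference is that the estimator may now err in both directions, so false negatives can occur. We therefore build a filter with one-sided error by a second-moment argument.

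Given a set $S$ of size $n$ that grows over time, we insert each key of $S$ exactly $T$ times into the FE sketch, with $T=\lceil 2E(N)\rceil+1$ (up to a constant). This gives a stream of length $N=nT$, so $n=\Theta(N/E(N))$. A query on $x$ answers ``present'' iff $\hat f(x)\ge T/2$. Write $\mu=\expectation{|\hat f(x)-f(x)|}\le E(N)$ and $\variance{|\hat f(x)-f(x)|}\le\delta E(N)^2$; we bound errors through the absolute deviation $Z=|\hat f(x)-f(x)|$.
\begin{itemize}
\item For $x\notin S$ we have $f(x)=0$, so a false positive requires $Z\ge T/2\ge \mu+E(N)/2$. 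Chebyshev on $Z$ then gives probability at most $\delta E(N)^2/(E(N)/2)^2=4\delta$.
\item For $x\in S$, a false negative requires $Z\ge T/2$, which by the same bound also has probability at most $4\delta$.
\end{itemize}

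Filters are not allowed false negatives, so we repair them. For each inserted key, the filter additionally queries the sketch at insertion time. Keys that would be falsely rejected are stored explicitly in an auxiliary exact dictionary. Their expected number is at most $4\delta n$, costing $O(\delta n\log U)$ bits. This cost can be charged against the lower bound only if it is smaller than $n\log(1/\delta)$, which is not clearly true.

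A cleaner alternative avoids the dictionary: take the lower bound of~\cite{PaghExpandability} for filters with two-sided error $O(\delta)$. The information-theoretic argument there (counting the sets compatible with a memory state) tolerates a constant fraction of false negatives, giving $n[\Omega(\log 1/\delta+\log\log n)-\Theta(1)]$ bits. I would adopt this route. The step I would carry out is to re-inspect the counting argument and check that if a $(1-4\delta)$ fraction of $S$ is accepted and at most a $4\delta$ fraction of the universe is accepted, the $\log(1/\delta)$ term survives. The $\log\log n$ term comes from expandability, through a sequence of doubling set sizes, and is unaffected by one-sided versus two-sided error.

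Finally, substitute $n=\Theta(N/E(N))$. The constant $4$ in $4\delta$ is absorbed into $\Omega(\cdot)$ and $\Theta(1)$, and the statement follows.

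The main obstacle is a subtlety in the construction. The sketch's guarantee holds for the stream at each length, but the filter must answer correctly at all times as $S$ grows. This requires $T$ to be fixed while $E(N)$ changes with $N$. I would handle it by choosing $T$ using the final $N$ of each doubling phase and appealing to monotonicity of $E$. An alternative is to insert keys with multiplicity $\lceil 2E(N_{\text{current}})\rceil+1$ and re-threshold accordingly, checking that total length stays $\Theta(nE(N))$. The second concern is verifying that the two-sided version of the expandable filter bound holds. If it does not, I fall back on the auxiliary-dictionary fix, with $\delta$ small enough that the dictionary cost is lower order.
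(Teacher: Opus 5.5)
Your thresholding step is fine up to constants, since Chebyshev applies to $Z$ because $\mathrm{Var}(|Y|)\le\mathrm{Var}(Y)$. The problem is that the step you defer is the entire difficulty, and the justification you give for it is false.

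\textbf{The two-sided filter bound you rely on fails.} Thresholding produces a filter with two-sided error, so the Pagh--Segev--Wieder theorem, which forbids false negatives, does not apply. The $\log\log n$ term is \emph{not} robust to false negatives in the sense you propose to check (a $1-4\delta$ fraction of $S$ accepted). Here is a counterexample. An expandable filter can keep only the filters of the last $k=\lceil\log_2(1/\delta)\rceil$ doubling phases, each with FPR $\delta/k$, and discard older ones. It then rejects at most a $2\delta$ fraction of $S$, has FPR at most $\delta$, and uses $O(n\log(1/\delta))$ bits at all times. For constant $\delta$ that is $O(n)$, far below $n\log\log n$. The $\log\log n$ term exists only because every early key must still be recognized after $\Theta(\log n)$ doublings. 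So a two-sided version has to be proved from a per-key guarantee at the final time, which means redoing the expandability argument rather than citing it.

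\textbf{Neither fallback closes the gap.} The auxiliary dictionary costs $\Theta(\delta n\log u)$ bits for a universe of size $u$. This is not lower order when $u\gg n$, which the lower bound needs. Checking at insertion time also misses keys whose estimates are pulled down later by collisions or expansions.

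\textbf{The growing-$E(N)$ issue is also unresolved.} With per-phase multiplicities, old keys fall below later thresholds; monotonicity of $E$ works against you here, and this is exactly the forgetting described above. With a single $T\approx E(N_{\mathrm{final}})$ the filter is valid at every time. However, the inherited bound holds at an unknown intermediate time $t$ with $n_t=N_t/T$, giving only $N_t/E(N_{\mathrm{final}})$ in place of $N_t/E(N_t)$. This can be a polynomial loss when $t$ is early. Finally, a filter cannot top up old keys, since it does not store them.

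\textbf{What the paper does instead.} It avoids the filter abstraction and runs a Pagh--Segev--Wieder-style encoding argument directly on the sketch.
\begin{enumerate}
\item It splits $n$ keys into $\ell=\frac12\log_2 n$ blocks of length $2^i\sqrt n$. After block $i$, it tops every key's count up to $E(N_i)$ at the stream level, assuming linearity so the decoder can rescale the state instead of knowing the early keys.
\item The law of total variance gives $\mathrm{Var}(\hat f_\ell(x))=\sum_{i=1}^{\ell}\hat e_i(x)$ with $\hat e_i(x)=\mathbb{E}[(\mathbb{E}[\hat f_\ell(x)\mid\hat f_i(x)]-f_\ell(x))^2]$. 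Hence some phase $i^*$ has $\hat e_{i^*}(x)\le 2\delta E(N)^2/\ell$ for half the universe. Splitting the variance budget over $\Theta(\log n)$ phases is precisely where the $\log\log n$ comes from, because it yields an effective error rate of $O(\delta/\log n)$.
\item Given $\mathcal{S}_{i^*}$ and the later blocks, the decoder computes $g(x)=\mathbb{E}[\hat f_\ell(x)\mid\hat f_{i^*}(x)]$. Each early key gets a flag bit, and badly underestimated keys are written verbatim. The rest are encoded as indices into the set of keys whose $g$-value exceeds what the later blocks account for; after derandomization this set has relative size $O(\delta/\log n)$.
\end{enumerate}
Underestimates are harmless in this accounting. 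A verbatim key costs $\log_2 u'$ bits, the same as in the entropy baseline $n\log_2 u'$, so it only forfeits its savings. The flags cost one bit per key, which is absorbed into the $\Theta(1)$. A data-structure reduction with an auxiliary dictionary cannot reproduce this accounting.
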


One can verify from \Cref{thm:all_counters_space} that, when
fixing the confidence using a constant number of arrays~$d$ while using a
sublinear power (resp. linear) size function~$W(N)$ to achieve an error bound
of~$E(N)$, \sketch\ uses a memory of~$O(N/E(N) \cdot \log (N/E(N)))$ (resp.
$O(N/E(N) \cdot \log N \cdot \log\log N)$) bits. This almost matches the memory
lower bound of \Cref{thm:sketch_expandability}
with a slight difference in the logarithmic
terms. One can extend \sketch\ to match these lower bounds by compressing the
counter arrays via coarsely quantizing their counters and applying run-length
encoding. Doing so can potentially reduce space consumption by as much
as~$5\times$ for long streams in practice. These compression techniques may
compromise update and query performance, though we expect to be able to resolve
this degradation using techniques based on rank and
select~\cite{Poppy,FlorianRankandSelect,CompactPATTrees,RankandSelectDict,Succincter}.

\section{Evaluation}\label{sec:evaluation}
We empirically evaluate \sketch.
Experiments~\hyperlink{experiment:accuracy}{1}, \hyperlink{experiment:skew}{2},
\hyperlink{experiment:vale_tuning}{3}, and
\hyperlink{experiment:vale_tuning}{6} focus on accommodating skew, while
Experiments~\hyperlink{experiment:expansion}{4},
\hyperlink{experiment:contraction}{5}, and
\hyperlink{experiment:l2_size_function}{7} deal with unbounded data growth.
Experiments~\hyperlink{experiment:accuracy}{1}-\hyperlink{experiment:contraction}{5}
all employ \sketchcms. Experiments~\hyperlink{experiment:arruacy_unbiased}{6}
and \hyperlink{experiment:l2_size_function}{7} showcase \sketchcs.
Experiment~\hyperlink{experiment:join_size}{8} applies both variants to join
size estimation.

\textbf{Platform.}
We run the experiments on a Fedora 41 machine with an Intel Xeon w7-2495X
processor (4.8 GHz) with 24 cores. Our machine features an 80 KB L1 cache and a
2 MB L2 cache for each core, a 45 MB shared L3 cache, and 64 GB of RAM.

\ifnoappendix
\begin{figure*}
    \centering
    \includegraphics[width=0.8\textwidth]{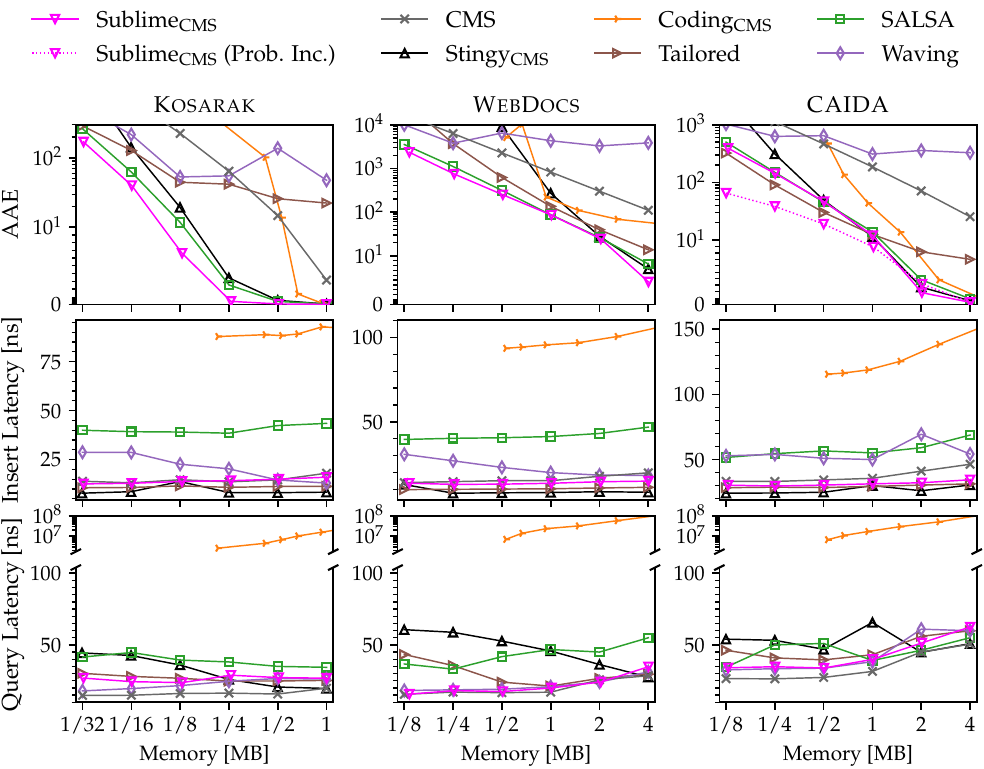} 
    \hspace*{8.5mm}
    \small
    \begin{minipage}{0.7\textwidth}
    \vspace*{2mm}
    \setlength{\tabcolsep}{4.03mm}
    \begin{tabular}[b]{ccccccc} 
        \toprule 
        \multirow{2}{*}[0.0em]{Memory [MB]} & \multicolumn{2}{c}{\textsc{\hspace*{1.2mm}Kosarak}} & \multicolumn{2}{c}{\hspace*{0.3mm}\textsc{WebDocs}} & \multicolumn{2}{c}{\hspace*{0.5mm}\textsc{CAIDA}} \\ 
        & \hspace*{1.5mm}$c$ & $s$ & \hspace*{1.5mm}$c$ & $s$\hspace*{0.75mm} & \hspace*{0.75mm}$c$ & $s$ \\ 
        \midrule 
        1/32 & \hspace*{1.5mm}39 & 10 & \hspace*{1.5mm}- & -\hspace*{0.75mm} & \hspace*{0.75mm}- & - \\ 
        1/16 & \hspace*{1.5mm}42 & 9 & \hspace*{1.5mm}- & -\hspace*{0.75mm} & \hspace*{0.75mm}- & - \\ 
        1/8 & \hspace*{1.5mm}43 & 7 & \hspace*{1.5mm}33 & 12\hspace*{0.75mm} & \hspace*{0.75mm}39 & 10 \\ 
        1/4 & \hspace*{1.5mm}68 & 5 & \hspace*{1.5mm}34 & 12\hspace*{0.75mm} & \hspace*{0.75mm}42 & 9 \\ 
        1/2 & \hspace*{1.5mm}75 & 5 & \hspace*{1.5mm}39 & 10\hspace*{0.75mm} & \hspace*{0.75mm}45 & 8 \\ 
        1 & \hspace*{1.5mm}81 & 4 & \hspace*{1.5mm}38 & 10\hspace*{0.75mm} & \hspace*{0.75mm}51 & 7 \\ 
        2 & \hspace*{1.5mm}- & - & \hspace*{1.5mm}50 & 7\hspace*{0.75mm} & \hspace*{0.75mm}64 & 5 \\ 
        4 & \hspace*{1.5mm}- & - & \hspace*{1.5mm}70 & 5\hspace*{0.75mm} & \hspace*{0.75mm}75 & 4 \\ 
        \bottomrule 
    \end{tabular} 
        \captionof{table}{Tuning \counterencodingabbrv's parameters for
        each plot point to the left according to the table above enables high
        performance and accuracy for \sketchcms.}
    \end{minipage} \\
    \vspace{-7mm}
    \caption{\sketchcms achieves the highest accuracy under real-world
    workloads while maintaining equal or superior insertion and query
    performance. We use the optimal tuning of the baselines at each point of
    the curves. The table above details the number of counters per chunk~$c$
    and the stub length~$s$ used by \sketchcms.}
    \vspace*{-8.0mm}
    \label{fig:accuracy_bench}
\end{figure*}
\fi

\textbf{Baselines.}
We compare \sketchcms to a traditional CMS, as well as the variants described
in \Cref{sec:problem_analysis} that cater to
skew. We use SALSA~\cite{SALSA} to represent counter merging methods and Waving
sketch~\cite{WavingSketch} to represent hybrid methods, as they are the most
accurate and the fastest in their category. For counter sharing methods, we
employ the following three baselines: Stingy sketch~\cite{StingySketch} is the
state of the art in terms of throughput since it implements the classic counter
sharing design with a hierarchy of shared byte-aligned counters that prefetches
the counters, as described in
\Cref{sec:problem_analysis}. Coding
sketch~\cite{CodingSketch} is the most accurate counter sharing method that
compresses its hierarchy of counters, trading performance for memory. Tailored
sketch~\cite{TailoredSketch} is the most space-efficient (and thus accurate)
counter sharing method, as it probabilistically increments each counter in the
classic hierarchy, thereby shortening the counters' length in exchange for
introducing two-sided errors. Among these six baselines, only Coding sketch is
tunable, though its tuning is manual and complex. This is because its
compression algorithm obscures the level of skew in the workload to which the
tuning should be adapted. We implement \sketch\ in \texttt{C++} and use the
open-source \texttt{C++} implementations of the baselines. The aforementioned
implementations are all single-threaded. We construct all FE sketches
with~$d=3$ arrays. This parameter setting is common in practice since it yields
a confidence bound of~$\approx 95\%$, which is significantly higher than the
confidence bounds resulting from~$d=1$ ($\approx 63\%$) and $d=2$ ($\approx
86\%$) while being comparable~to~$d=4$~($\approx 98\%$).

\ifappendix
\begin{figure}
    \centering
    \includegraphics[width=\columnwidth]{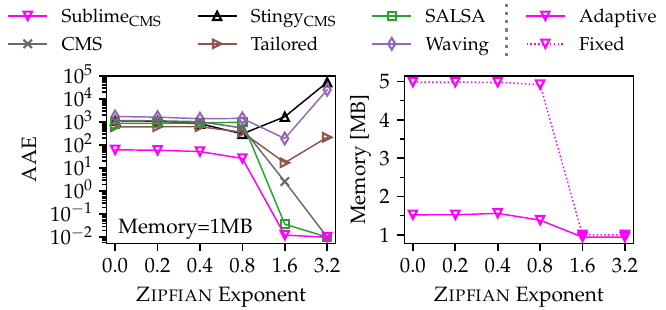}
    \begin{tikzpicture}
        \node[inner sep=1pt,anchor=north,align=left] (A) at (0,0) {\textbf{A)}};
        \node[inner sep=1pt,right=3.5 of A,align=center] (B) {\textbf{B)}};
        \node[inner sep=1pt,left=1.0 of A,align=center] (dummy) {};
    \end{tikzpicture}
    \vspace{-3mm}
    \caption{By adapting \counterencodingabbrv\ to skew, \sketchcms
    achieves the highest accuracy among all baselines (Part A) and a smaller
    memory footprint compared to using a fixed tuning (Part B).}
    \label{fig:skew_vale_tuning_bench}
\end{figure}
\fi

\textbf{Datasets and Workloads.}
Following prior
work~\cite{CountMeanMin,SALSA,BitSense,PyramidSketch,StingySketch,AdaptiveCounterSplicing,CounterTree,TreeSensing,TailoredSketch,AugmentedSketch,HeavyGuardian,HeavyKeeper,WavingSketch,ElasticSketch,MVSketch,SpaceSaving,BatchUpdateMisraGries,BitMatcher,JigsawSketch,MicroscopeSketch,PSketch,TightSketch},
we compare the baselines in a single-threaded setting by employing three
real-world~streams:
\begin{itemize}
    \item \textsc{Kosarak}: 8M clicks from a news portal~\cite{Kosarak},
      anonymized as 8-byte integers. This dataset is highly skewed, resembling
      a \textsc{Zipfian} distribution with an exponent parameter of~$\approx
      3.5$.
    \item \textsc{WebDocs}: 300M words extracted from 1.7M online English
        documents~\cite{WebDocs}. Each word in this dataset is encoded as
        an 8-byte integer. This dataset follows a skewed distribution with a
        \textsc{Zipfian} exponent of~$\approx 1.0$.
    \item \textsc{CAIDA}: 27M anonymized 5-tuples describing the network
      connection of packets from 10 traces taken from backbone Internet links
      in 2018~\cite{CAIDA2018}. Each 5-tuple is represented as a 13-byte
      string. This dataset has a \textsc{Zipfian} exponent of~$\approx 2.0$.
\end{itemize}
We evaluate the accuracy of each baseline by querying it with each distinct key
in the stream and compute the \emph{Average Absolute Error (AAE)} of the
estimates from the ground-truth. We focus on absolute errors rather than
relative errors to keep our presentation consistent with the theoretical error
bounds studied in the literature. We have found the relative errors to behave
similarly to the absolute errors.

\ifnoappendix
\begin{figure}
    \centering
    \includegraphics[width=0.99\columnwidth]{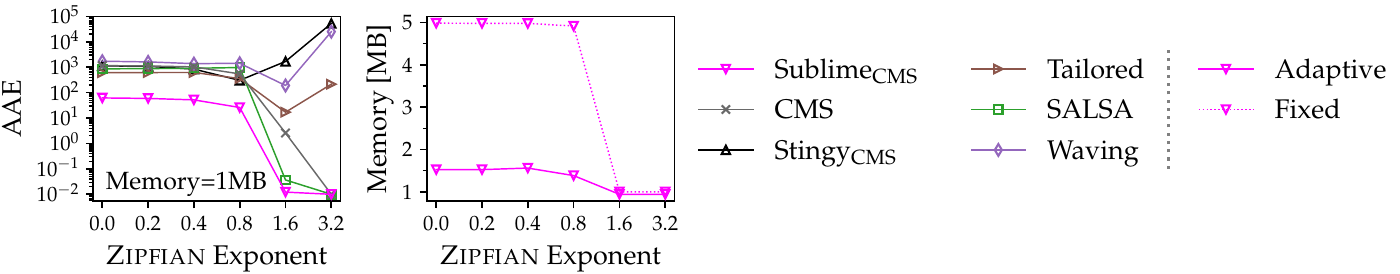}
    \\
    \begin{tikzpicture}
        \node[inner sep=1pt,anchor=north,align=left] (A) at (0,0) {\textbf{A)}};
        \node[inner sep=1pt,right=2.87 of A,align=center] (B) {\textbf{B)}};
        \node[inner sep=1pt,left=1.05 of A,align=center] (dummy) {};
    \end{tikzpicture}
    \hspace*{7.15cm}
    \vspace{-3mm}
    \caption{By adapting \counterencodingabbrv\ to skew, \sketchcms
    achieves the highest accuracy among all baselines (Part A) and a smaller
    memory footprint compared to using a fixed tuning (Part B).}
    \vspace{-5mm}
    \label{fig:skew_vale_tuning_bench}
\end{figure}
\fi

\hypertarget{experiment:accuracy}{\textbf{Experiment 1: Accuracy and Speed vs. Memory.}}
In \Cref{fig:accuracy_bench}, we feed each workload to each baseline while
varying the allotted memory budget. This experiment features bounded data
growth, allowing us to focus on how well each baseline accommodates skew. Each
point on the curves represents a complete run of the experiment starting from
scratch. We tune Coding sketch and \sketchcms at each point to minimize their
error and showcase their highest accuracy. The table to the right of
\Cref{fig:accuracy_bench} details the tuned parameters of
\sketchcms{\textemdash}the number of counters per chunk~$c$ and the stub
length~$s$. We increase the memory budget by factors of 2 on the $x$-axis up to
4 MBs, though for the \textsc{Kosarak} dataset, we use smaller budgets due to
the workload trace~being~shorter.

\ifappendix
\begin{figure}
    \centering
    \hspace*{3.25pt}
    \includegraphics[width=0.95\columnwidth]{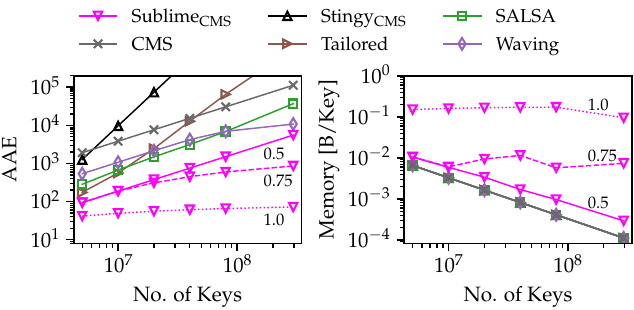}
    \vspace{1.1mm}
    \caption{\sketchcms achieves both a low error rate and a small memory
    footprint when processing growing streams.}
    \label{fig:expansion_bench}
\end{figure}

\begin{figure}
    \centering
    \includegraphics[width=0.98\columnwidth]{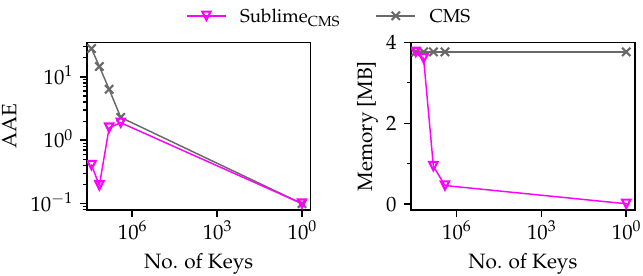}
    \caption{\sketchcms saves a significant amount of memory by contracting when
    many keys are deleted.}
    \label{fig:contraction_bench}
\end{figure}
\fi

\ifappendix
\begin{figure*}
    \centering
    \includegraphics[width=0.7\textwidth]{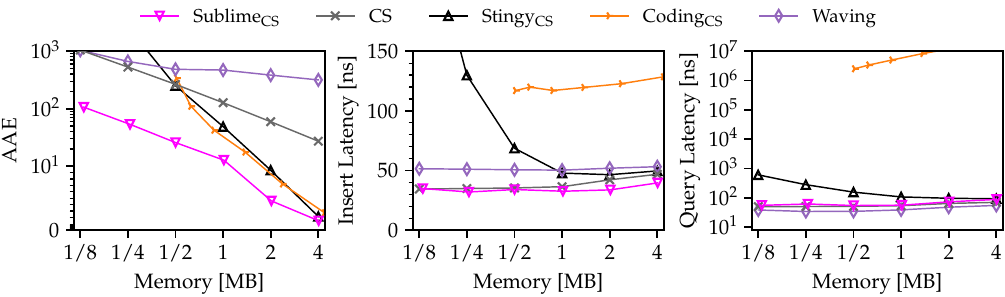}
    \hspace{4pt}
    \def\arraystretch{0.77}%  1 is the default, change whatever you need
    \small
    \begin{minipage}{0.28\textwidth}
    \setlength{\tabcolsep}{13.1pt}
    \begin{tabular}[b]{ccc} 
        \toprule 
        Memory [MB] & $c$ & $s$ \\ 
        \midrule 
        1/8 & 39 & 10 \\ 
        1/4 & 42 & 9 \\ 
        1/2 & 45 & 8 \\ 
        1 & 51 & 7 \\ 
        2 & 64 & 5 \\ 
        4 & 75 & 4 \\ 
        \bottomrule 
    \end{tabular} 
    \captionof{table}{Tuning \counterencodingabbrv\ as above leads to high
      performance and accuracy.}
    \vspace{2.45cm}
    \end{minipage} \\
    \vspace*{-3.05cm}
    \caption{\sketchcs has the lowest errors and fastest insertions compared to
    all other unbiased baselines. The table to the right presents the tuning of
    \counterencodingabbrv's parameters, i.e., the number of counters per
    chunk~$c$ and the stub length~$s$.}
    \label{fig:accuracy_unbiased_bench}
\end{figure*}
\fi

\Cref{fig:accuracy_bench}
shows that \sketchcms has a lower estimation error than most other baselines by
as much as an order of magnitude while providing similar or superior insertion
and query performance. This is because it extends overflowing counters without
affecting other counter values. Compared to the baseline with the closest
errors, i.e., SALSA, \sketchcms has lower errors by 40\% across the board while
providing faster insertions and queries by as much as~$3\times$ and
$2.4\times$. This performance improvement is due to \sketchcms's use of rank
and select primitives, as well as
Optimizations~\hyperlink{optimization:rank_and_select}{1} to
\hyperlink{optimization:horner_hack}{6} described in
\Cref{sec:accommodating_skew}. Coding
sketch exhibits slower insertion and query performance by~2 and 7 orders of
magnitude compared to the other baselines due to its decompression overheads.
As such, we henceforth exclude it from the experiments that evaluate
\sketchcms. When the memory budget is low under the \textsc{CAIDA} dataset,
Tailored sketch is more accurate than \sketchcms due to its probabilistic
incrementation strategy enabling it to store shorter counters in exchange for
introducing underestimations. Applying the same technique to \sketchcms yields
lower errors than Tailored sketch by~$2.38\times$, as shown by the dotted
curve.

\hypertarget{experiment:skew}{\textbf{Experiment 2: Accuracy under Varying Levels of Skew.}}
\Cref{fig:skew_vale_tuning_bench}-A) compares the estimation errors of the
baselines under the same memory budget as the amount of skew in the data
varies. Here, we generate streams of 100M keys from a \textsc{Zipfian}
distribution over 100K distinct keys. We vary the exponent parameter, i.e., the
skewness, on the~$x${\nobreakdash-}axis from~0.0 to 3.2, which correspond to a
uniform and a highly skewed dataset, respectively. Recall that, aside from
Coding sketch (which we exclude from the experiment due to its poor
performance), \sketchcms is the only tunable FE sketch. Thus, we allow it to
adapt \counterencodingabbrv's tuning to the workload's skew.

\Cref{fig:skew_vale_tuning_bench}-A) shows that \sketchcms provides the lowest
error under all degrees of skew. It outperforms all other baselines by an order
of magnitude for more uniform workloads and by~$3\times$ for more skewed
workloads. This is because it exposes a tuning space that allows
\counterencodingabbrv\ to save memory by employing longer and shorter stubs for
more uniform and skewed workloads, respectively. 

\ifnoappendix
\setlength{\tabcolsep}{4pt}
\begin{tabular}{cc}
    \hspace*{-6mm}
    \begin{minipage}{0.51\textwidth}
        \centering
        \includegraphics[width=\columnwidth]{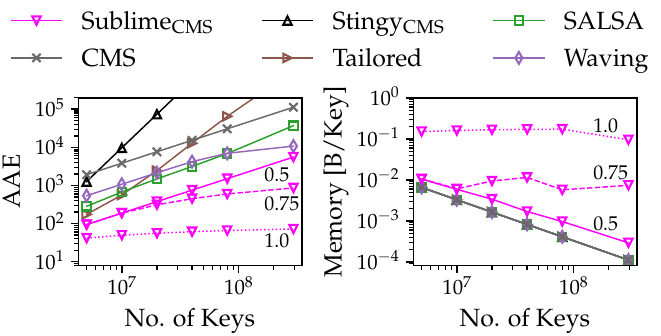}
        \vspace{-7mm}
        \captionof{figure}{\sketchcms achieves both a low error rate and a small memory
        footprint when processing~growing~streams.}
        \label{fig:expansion_bench}
    \end{minipage}
    &
    \begin{minipage}{0.47\textwidth}
        \centering
        \includegraphics[width=\columnwidth]{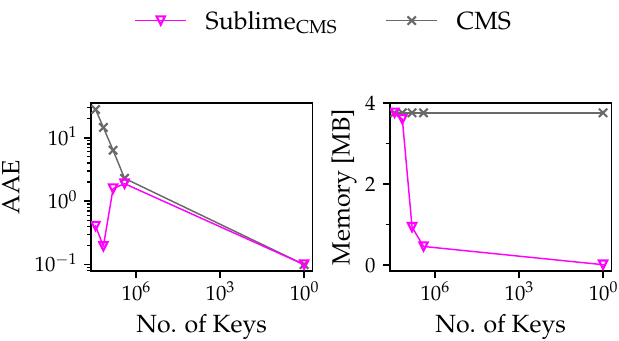}
        \vspace{-7mm}
        \captionof{figure}{\sketchcms saves a significant amount of memory by contracting when
        many keys~are~deleted.}
        \label{fig:contraction_bench}
    \end{minipage}
\end{tabular}
\vspace{2mm}
\fi

\hypertarget{experiment:vale_tuning}{\textbf{Experiment 3: Memory Savings of Adaptively Tuning \counterencodingabbrv.}}
We demonstrate the need for adapting \counterencodingabbrv's tuning as the
stream grows by comparing \sketchcms's memory footprint when using a fixed
tuning to when it adapts the tuning. We employ a total of 1M counters for both
versions and use~$c=64$ counters per chunk and stubs of length~$s=6$ bits for
the fixed tuning version.
\Cref{fig:skew_vale_tuning_bench}-B)
considers the \textsc{Zipfian} workloads from
Experiment~\hyperlink{experiment:skew}{2} and varies the skew on the $x$-axis.
As shown, adaptively tuning \counterencodingabbrv\ bounds the number of
allocated tails arrays and improves the memory footprint over using a fixed
tuning by as much as~$5\times$. With enough skew, the average extension length
decreases, and the fixed tuning version of \sketchcms avoids allocating many
tails arrays. Nevertheless, it does not optimally leverage the shorter
extension lengths and consumes~$\approx 10\%$ more memory than the adaptive
version. 

In this experiment, adaptations comprise less than~3\% of the processing time,
as their overhead is amortized. One can expect this overhead in mixed workloads
with both insertions and deletions to be similar to our insertion-only
workload. This is because our workload triggers just as many adaptations as
mixed workloads by growing the counters and causing many chunks to overflow.

\hypertarget{experiment:expansion}{\textbf{Experiment 4: Accuracy and Memory under Data Growth.}}
We now compare the accuracy and memory footprint of the baselines when the
stream grows indefinitely. Here, we allocate each baseline an initial memory
footprint of 32KB and insert the \textsc{WebDocs} dataset.
\Cref{fig:expansion_bench}
measures, for each baseline, its error and the ratio of its memory footprint to
the stream's length. We use the default configuration for \sketchcms, which has
a sublinear power size function of the form~$W(N)=N^\alpha/\epsilon$ with
$\alpha=0.5$ and $\epsilon=1$, as introduced in
\Cref{sec:accommodating_unknown_stream_lengths}.
We also compare versions of \sketchcms with the exponent~$\alpha$ of the size
function set to~$0.75$ and $1.0$. \Cref{fig:expansion_bench}
differentiates these versions by annotating each curve with the value
of~$\alpha$ used. 

\Cref{fig:expansion_bench}
shows that \sketchcms is the only FE sketch to exhibit sublinearly scaling
estimation errors. It achieves this by expanding its number of counters along
with the stream. This culminates in the three versions of \sketchcms (with
exponents~$\alpha$ of $0.5$, $0.75$, and $1.0$) improving the error over the
best baseline by one, two, and three orders of magnitude. The right subfigure
in \Cref{fig:expansion_bench} shows that, despite
expanding the number of counters, \sketchcms maintains a modest memory
footprint relative to the stream's length. Furthermore, we have found that
expansions take up only~0.5\% of the processing time, as they simply copy the
counters sequentially.

\hypertarget{experiment:contraction}{\textbf{Experiment 5: Contractions.}}
\Cref{fig:contraction_bench}
showcases
deletions and contractions in \sketchcms. Here, we insert the \textsc{CAIDA}
dataset and subsequently delete keys in a random order while measuring the
error and the memory footprint. We only compare \sketchcms to CMS since all
other baselines do not provide a deletion API.

\ifappendix
\begin{figure}
    \centering
    \includegraphics[width=0.91\columnwidth]{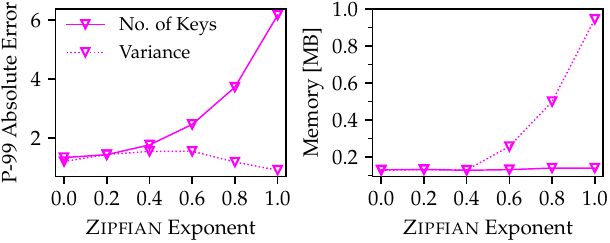}
    \caption{By expanding based on the estimation variance, \sketchcs maintains
    a stable confidence bound.}
    \label{fig:l2_size_function_bench}
\end{figure}
\fi

\ifappendix
\begin{figure*}
    \centering
    \includegraphics[width=0.86\textwidth]{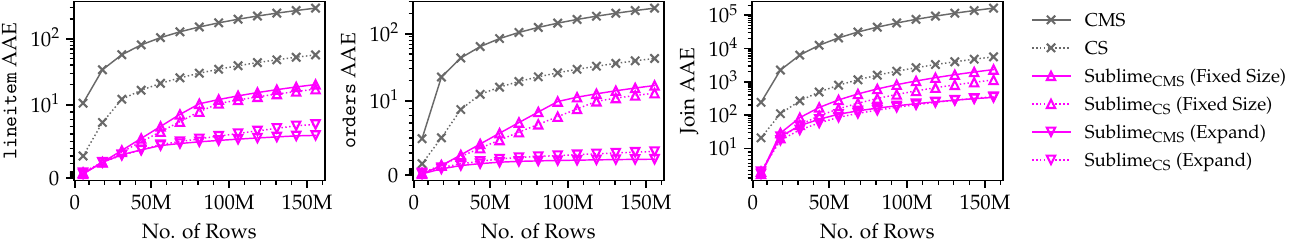}
    \vspace{-2mm}
    \caption{\sketch\ provides the most accurate frequency estimates of table
    entries and the size of their join.}
    \vspace{-2mm}
    \label{fig:join_size_bench}
\end{figure*}
\fi

As shown, \sketchcms maintains lower errors than a CMS of the same size by as
much as two orders of magnitude. This is because \counterencodingabbrv\ encodes
the counter values in less space, allowing the storage of more counters. As
keys are deleted, \sketchcms contracts to save memory, while CMS wastes memory
by remaining at the same size. This enables achieving a lower memory footprint
than CMS by 2-3 orders of magnitude while still being more accurate, despite
the reduced number of counters slightly increasing the error. Moreover, since
contractions sequentially scan the FE sketch, similarly to expansions, they
take up at most~0.5\% of the total execution time.

\ifnoappendix
\begin{figure*}
    \centering
    \includegraphics[width=0.79\textwidth]{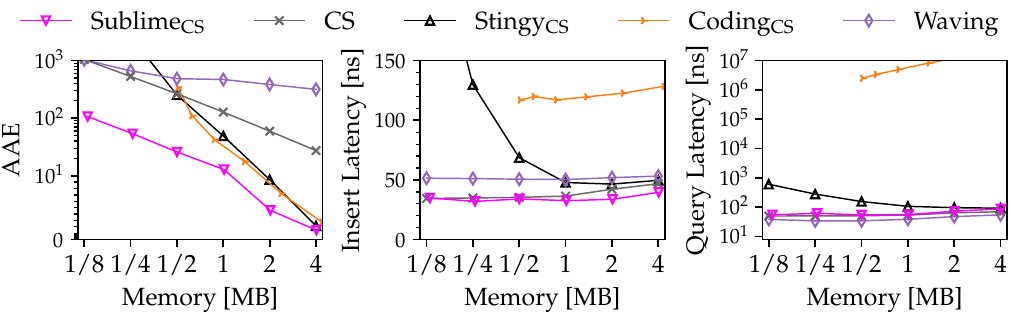}
    \hspace{1pt}
    \def\arraystretch{0.77}%  1 is the default, change whatever you need
    \small
    \begin{minipage}{0.19\textwidth}
    \setlength{\tabcolsep}{1.25pt}
    \begin{tabular}[b]{ccc} 
        \toprule 
        Memory [MB] & $c$ & $s$ \\ 
        \midrule 
        1/8 & 39 & 10 \\ 
        1/4 & 42 & 9 \\ 
        1/2 & 45 & 8 \\ 
        1 & 51 & 7 \\ 
        2 & 64 & 5 \\ 
        4 & 75 & 4 \\ 
        \bottomrule 
    \end{tabular} 
    \captionof{table}{Tuning \counterencodingabbrv\ as above leads to high
      performance and accuracy.}
    \vspace{1.89cm}
    \end{minipage} \\
    \vspace*{-2.69cm}
    \caption{\sketchcs has the lowest errors and fastest insertions compared to
    all other unbiased baselines. The table to the right presents the tuning of
    \counterencodingabbrv's parameters, i.e., the number of counters per
    chunk~$c$ and the stub length~$s$.}
    \vspace*{-5mm}
    \label{fig:accuracy_unbiased_bench}
\end{figure*}
\fi

\hypertarget{experiment:accuracy_unbiased}{\textbf{Experiment 6: Unbiased Accuracy and Speed vs. Memory.}}
We now turn to \sketchcs to demonstrate the applicability of \sketch\ beyond
CMS. We compare it to a traditional CS and Waving sketch~\cite{WavingSketch},
which is representative of unbiased hybrid methods. We also compare to Stingy
sketch~\cite{StingySketch} and Coding sketch~\cite{CodingSketch} applied to CS,
representing counter sharing methods with unbiased estimates. We exclude
SALSA~\cite{SALSA} as its library does not provide a variant with unbiased
estimates. We employ the \textsc{CAIDA} dataset and optimally tune Coding
sketch and \sketchcs.

\ifnoappendix
\begin{figure}
    \centering
    \includegraphics[width=0.50\columnwidth]{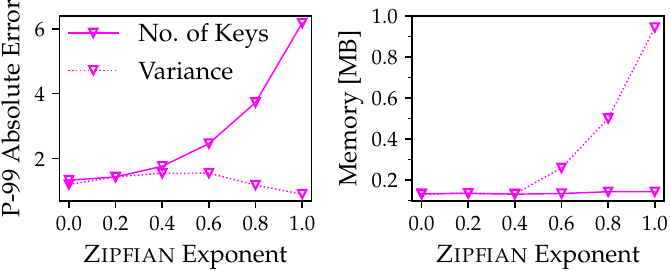}
    \vspace*{-2mm}
    \caption{By expanding based on the estimation variance, \sketchcs maintains
    a stable confidence bound.}
    \vspace*{-5mm}
    \label{fig:l2_size_function_bench}
\end{figure}
\fi

\Cref{fig:accuracy_unbiased_bench}
shows that, by leveraging the memory savings of \counterencodingabbrv\ to store
more counters, \sketchcs achieves lower errors than the next best baseline by
as much as an order of magnitude. It also has the fastest insertions while
closely matching the query speed of the most performant sketch due to its use
of Optimizations~\hyperlink{optimization:rank_and_select}{1} to
\hyperlink{optimization:horner_hack}{6} from \Cref{sec:accommodating_skew}.

\hypertarget{experiment:l2_size_function}{\textbf{Experiment 7: Expansion Based on Variance.}}
\Cref{fig:l2_size_function_bench}
illustrates how expanding \sketchcs based on the growth of the estimation
variance (as described
in~\Cref{sec:cs}) instead
of the number of keys enables a stable confidence bound. We demonstrate this
improvement by inserting~100M keys with increasing levels of skew. We measure
the P-99 absolute error of the estimates instead of the average, since it
showcases the errors that exceed their confidence bounds, whereas averages
obscure this information. We also report the final memory footprint after
ingesting~the~stream. 

As shown in
\Cref{fig:l2_size_function_bench},
expanding based on the estimation variance ensures stable tail errors in
exchange for a higher memory footprint. This is in contrast to expanding
\sketchcs based on the number of keys, which yields tail errors that increase
with skew. The reason is that a more skewed dataset grows the variance more
quickly, necessitating more frequent expansions to maintain a stable confidence
bound. As skew increases further, expanding based on the variance yields
diminishing returns. This is due to the stream's distinct keys dropping in
number with skew, making keys less likely to map to the same counter and create
severe errors.

\hypertarget{experiment:join_size}{\textbf{Experiment 8: Join Size Estimation.}}
\Cref{fig:join_size_bench}
applies both \sketchcms and \sketchcs to join size estimation. Here, we employ
a 100 GB TPC-H workload~\cite{TPCH}. We consider the \texttt{lineitem} and
\texttt{orders} tables, incrementally insert rows into each, and periodically
join them on their shared \texttt{ORDERKEY} column.
Following~\cite{JoinSizeEstimationFilterConditions}, each row in the orders
table is duplicated 1-3 times to simulate a many-to-many join. We construct FE
sketches tracking the count of each \texttt{ORDERKEY} value in these tables and
predict their counts in the join by multiplying the per-table estimates. We
compare CMS and CS, as well as fixed-size and expandable versions of \sketchcms
and \sketchcs. We set the initial memory budget to 8MB ($\approx0.1\%$ of the
data size) and distribute it to each sketch in proportion to the initial size
of its table. The expandable \sketchcms and \sketchcs grow as a sublinear power
(with a power of 0.75) of the tables. \Cref{fig:join_size_bench}
reports accuracy measurements for each table and the join result.

As shown in 
\Cref{fig:join_size_bench},
\sketch\ attains higher accuracy than CMS and CS by multiple orders of
magnitude due to their use of variable-length counters. Moreover, the
expandable \sketchcms and \sketchcs outperform their fixed-size counterparts by
as much as an order of magnitude. The fixed-size version of \sketchcs is more
accurate than the fixed-size version of \sketchcms on each separate table. This
is because the workload in this experiment is almost uniform, leading to the
positive and negative counts of colliding keys in \sketchcs to cancel out. When
allowed to expand, \sketchcms is more accurate than \sketchcs on individual
tables, as collisions reduce with expansions, and \sketchcms does not store a
sign bit for each counter. Nevertheless, both versions of \sketchcs have more
accurate join estimates than the corresponding version of \sketchcms. This is
due to \sketchcs returning unbiased estimates, which allows for multiplying
them without compounding errors.

\ifnoappendix
\begin{figure*}
    \centering
    \includegraphics[width=\textwidth]{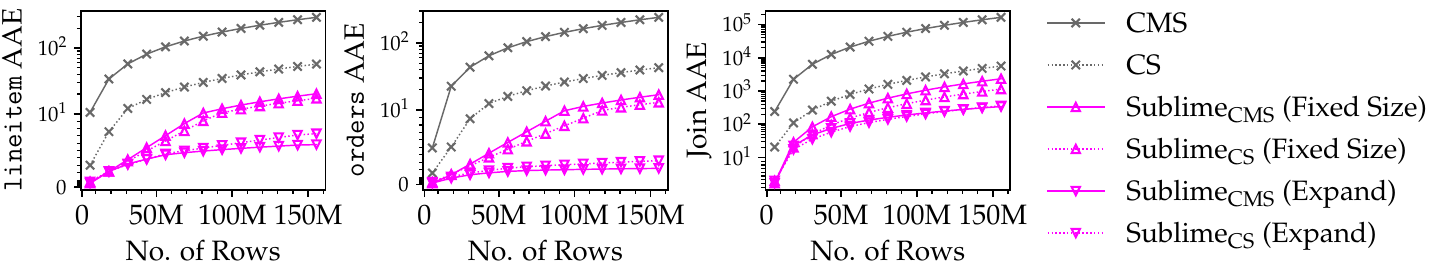}
    \vspace{-7mm}
    \caption{\sketch\ provides the most accurate frequency estimates of table
    entries and the size of their join.}
    \label{fig:join_size_bench}
    \vspace*{-5mm}
\end{figure*}
\fi

\section{Conclusion}
We introduced \sketch, the first framework for generalizing a frequency
estimation sketch to adapt to the skew and length of the stream. We showed that
\sketch\ improves the accuracy and memory of FE sketches beyond the state of
the art while maintaining high performance. Although \sketch\ is currently
single-threaded, we expect significant performance gains from parallelization
(e.g., through employing thread-local FE sketches to ingest subsets of the
stream and periodically merging them into a unified FE sketch for answering
queries~\cite{FastConcurrentDataSketches,FPGASketching}). \ifappendix Studying
the interplay of concurrent updates with expansions, contractions, and adaptive
tuning would be an interesting direction of future work. \fi More broadly,
\sketch\ opens a new research avenue on expandable data sketches applicable to
other core statistics, such as quantiles and cardinalities, which exhibit
similar error growth under workloads with growing datasets. This marks a
fundamental shift from fixed-size sketches with degrading accuracy to adaptive
structures that maintain stable accuracy for always-on analytics.

\begin{acks}
    We thank Yakub Tetek for early and insightful discussions and the reviewers
    for their constructive comments. This research was supported by the NSERC
    grant \#RGPIN-2023-03580.
\end{acks}

%\clearpage

\bibliographystyle{ACM-Reference-Format}
\bibliography{sublime}

@string{BIT = "{BIT}"}

@string{Computing = "Computing"}

@string{Computer = "{IEEE} Computer"}

@string{Academic = "Academic Press"}

@string{AMS = "American Mathematical Society"}

@string{Springer = "Springer-Verlag"}

@inproceedings{InfiniFilter,
  author = {Dayan, Niv and Bercea, Ioana O. and Reviriego, Pedro and Pagh,
            Rasmus },
  title = {InfiniFilter: Expanding Filters to Infinity and Beyond},
  year = {2023},
  issue_date = {June 2023},
  publisher = {ACM},
  address = {New York, NY, USA},
  volume = {1},
  number = {2},
  url = {https://doi.org/10.1145/3589285},
  doi = {10.1145/3589285},
  booktitle = {Proceedings of the 2023 International Conference on Management of
               Data},
  month = {jun},
  articleno = {140},
  numpages = {27},
  location = {Seattle, Washington, USA},
  series = {SIGMOD '23},
}

@article{AlephFilter,
  author = {Dayan, Niv and Bercea, Ioana O. and Pagh, Rasmus},
  title = {Aleph Filter: To Infinity in Constant Time},
  year = {2024},
  issue_date = {July 2024},
  publisher = {VLDB Endowment},
  volume = {17},
  number = {11},
  issn = {2150-8097},
  url = {https://doi.org/10.14778/3681954.3682027},
  doi = {10.14778/3681954.3682027},
  journal = {Proc. VLDB Endow.},
  month = jul,
  pages = {3644–3656},
  numpages = {13},
}

@inproceedings{Zeno,
  author = {Kim, Hyuhng Min and Eslami, Navid and Dayan, Niv},
  title = {Zeno Filter: To Infinity in Tiny Steps},
  year = {2026},
  publisher = {ACM},
  address = {New York, NY, USA},
  booktitle = {Proceedings of the 2026 International Conference on Management of
               Data},
  numpages = {24},
  location = {Bangalore, India},
  series = {SIGMOD '26},
}

@inproceedings{BeyondBloomTutorial,
  author = {Pandey, Prashant and Farach-Colton, Mart\'{\i}n and Dayan, Niv and
            Zhang, Huanchen},
  title = {Beyond Bloom: A Tutorial on Future Feature-Rich Filters},
  year = {2024},
  isbn = {9798400704222},
  publisher = {ACM},
  address = {New York, NY, USA},
  url = {https://doi.org/10.1145/3626246.3654681},
  doi = {10.1145/3626246.3654681},
  booktitle = {Companion of the 2024 International Conference on Management of
               Data},
  pages = {636–644},
  numpages = {9},
  location = {Santiago AA, Chile},
  series = {SIGMOD/PODS '24},
}

@article{TaffyFilters,
  author = {Apple, Jim},
  title = {Stretching your data with taffy filters},
  journal = {Software: Practice and Experience},
  year = {2022},
}

@article{PaghExpandability,
  title = {How to Approximate a Set without Knowing Its Size in Advance},
  author = {Rasmus Pagh and Gil Segev and Udi Wieder},
  journal = {2013 IEEE 54th Annual Symposium on Foundations of Computer Science},
  year = {2013},
  pages = {80-89},
  url = {https://api.semanticscholar.org/CorpusID:10365891},
}

@inproceedings{Poppy,
  author = "Zhou, Dong and Andersen, David G. and Kaminsky, Michael",
  editor = "Bonifaci, Vincenzo and Demetrescu, Camil and Marchetti-Spaccamela,
            Alberto",
  title = "Space-Efficient, High-Performance Rank and Select Structures on
           Uncompressed Bit Sequences",
  booktitle = "Experimental Algorithms",
  year = "2013",
  publisher = "Springer Berlin Heidelberg",
  address = "Berlin, Heidelberg",
  pages = "151--163",
  isbn = "978-3-642-38527-8",
}

@inproceedings{FlorianRankandSelect,
  author = {Kurpicz, Florian},
  year = {2022},
  title = {Engineering Compact Data Structures for Rank and Select Queries on
           Bit Vectors},
  pages = {257–272},
  eventtitle = {29th International Symposium on String Processing and
                Information Retrieval},
  eventtitleaddon = {SPIRE 2022},
  eventdate = {2022-11-08/2022-11-10},
  venue = {Concepci{\'{o}}n, Chile},
  booktitle = {String Processing and Information Retrieval {\textendash } 29th
               International Symposium, SPIRE 2022, Concepci{\'{o}}n, Chile,
               November 8{\textendash }10, 2022, Proceedings. Ed.: D. Arroyuelo},
  doi = {10.1007/978-3-031-20643-6_19},
  publisher = {Springer International Publishing},
  isbn = {978-3-031-20643-6},
  issn = {0302-9743},
  series = {Lecture Notes in Computer Science},
  language = {english},
  volume = {13617},
}

@phdthesis{CompactPATTrees,
  author = {Clark, David},
  title = {Compact PAT trees},
  year = {1997},
  publisher = "UWSpace",
  url = {http://hdl.handle.net/10012/64},
}

@inproceedings{RankandSelectDict,
  author = {Okanohara, Daisuke and Sadakane, Kunihiko},
  title = {Practical entropy-compressed rank/select dictionary},
  year = {2007},
  publisher = {Society for Industrial and Applied Mathematics},
  address = {USA},
  booktitle = {Proceedings of the Meeting on Algorithm Engineering \&
               Expermiments},
  pages = {60–70},
  numpages = {11},
  location = {New Orleans, Louisiana},
}

@inproceedings{Succincter,
  author = {Patrascu, Mihai},
  title = {Succincter},
  year = {2008},
  isbn = {9780769534367},
  publisher = {IEEE Computer Society},
  address = {USA},
  url = {https://doi.org/10.1109/FOCS.2008.83},
  doi = {10.1109/FOCS.2008.83},
  booktitle = {Proceedings of the 2008 49th Annual IEEE Symposium on Foundations
               of Computer Science},
  pages = {305–313},
  numpages = {9},
  series = {FOCS '08},
}

@inproceedings{GQF,
  author = {Pandey, Prashant and Bender, Michael A. and Johnson, Rob and Patro,
            Rob},
  title = {A General-Purpose Counting Filter: Making Every Bit Count},
  year = {2017},
  isbn = {9781450341974},
  publisher = {ACM},
  address = {New York, NY, USA},
  url = {https://doi.org/10.1145/3035918.3035963},
  doi = {10.1145/3035918.3035963},
  booktitle = {Proceedings of the 2017 ACM International Conference on
               Management of Data},
  pages = {775–787},
  numpages = {13},
  location = {Chicago, Illinois, USA},
  series = {SIGMOD '17},
}

@article{CountMin,
  author = {Cormode, Graham and Muthukrishnan, S.},
  title = {An improved data stream summary: the count-min sketch and its
           applications},
  year = {2005},
  issue_date = {April 2005},
  publisher = {Academic Press, Inc.},
  address = {USA},
  volume = {55},
  number = {1},
  issn = {0196-6774},
  url = {https://doi.org/10.1016/j.jalgor.2003.12.001},
  doi = {10.1016/j.jalgor.2003.12.001},
  journal = {J. Algorithms},
  month = apr,
  pages = {58–75},
  numpages = {18},
}

@article{CountSketch,
  title = {Finding frequent items in data streams},
  journal = {Theoretical Computer Science},
  volume = {312},
  number = {1},
  pages = {3-15},
  year = {2004},
  note = {Automata, Languages and Programming},
  issn = {0304-3975},
  doi = {https://doi.org/10.1016/S0304-3975(03)00400-6},
  url = {https://www.sciencedirect.com/science/article/pii/S0304397503004006},
  author = {Moses Charikar and Kevin Chen and Martin Farach-Colton},
}

@inproceedings{CountMeanMin,
  title = {New Estimation Algorithms for Streaming Data: Count-min Can Do More},
  author = {Fan Deng and Davood Rafiei},
  url = {https://webdocs.cs.ualberta.ca/~drafiei/papers/cmm.pdf},
}

@techreport{MisraGries,
  author = {Misra, Jayadev and Gries, David},
  title = {Finding Repeated Elements},
  year = {1982},
  publisher = {Cornell University},
  address = {USA},
}

@article{StingySketch,
  author = {Li, Haoyu and Chen, Qizhi and Zhang, Yixin and Yang, Tong and Cui,
            Bin},
  title = {Stingy sketch: a sketch framework for accurate and fast frequency
           estimation},
  year = {2022},
  issue_date = {March 2022},
  publisher = {VLDB Endowment},
  volume = {15},
  number = {7},
  issn = {2150-8097},
  url = {https://doi.org/10.14778/3523210.3523220},
  doi = {10.14778/3523210.3523220},
  journal = {Proc. VLDB Endow.},
  month = mar,
  pages = {1426–1438},
  numpages = {13},
}

@article{TailoredSketch,
  author = {Gao, Guoju and Qian, Zhaorong and Huang, He and Sun, Yu-E and Du,
            Yang},
  journal = {IEEE Transactions on Network Science and Engineering},
  title = {TailoredSketch: A Fast and Adaptive Sketch for Efficient Per-Flow
           Size Measurement},
  year = {2025},
  volume = {12},
  number = {1},
  pages = {505-517},
  doi = {10.1109/TNSE.2024.3503904},
}

@inproceedings{SpaceSaving,
  author = {Metwally, Ahmed and Agrawal, Divyakant and El Abbadi, Amr},
  title = {Efficient computation of frequent and top-k elements in data streams},
  year = {2005},
  isbn = {3540242880},
  publisher = {Springer-Verlag},
  address = {Berlin, Heidelberg},
  url = {https://doi.org/10.1007/978-3-540-30570-5_27},
  doi = {10.1007/978-3-540-30570-5_27},
  booktitle = {Proceedings of the 10th International Conference on Database
               Theory},
  pages = {398–412},
  numpages = {15},
  location = {Edinburgh, UK},
  series = {ICDT'05},
}

@inproceedings{UnbiasedSpaceSaving,
  author = {Ting, Daniel},
  title = {Data Sketches for Disaggregated Subset Sum and Frequent Item
           Estimation},
  year = {2018},
  isbn = {9781450347037},
  publisher = {ACM},
  address = {New York, NY, USA},
  url = {https://doi.org/10.1145/3183713.3183759},
  doi = {10.1145/3183713.3183759},
  booktitle = {Proceedings of the 2018 International Conference on Management of
               Data},
  pages = {1129–1140},
  numpages = {12},
  location = {Houston, TX, USA},
  series = {SIGMOD '18},
}

@article{ConvolutionSketch,
  author = {Heddes, Mike and Nunes, Igor and Givargis, Tony and Nicolau, Alex},
  title = {Convolution and Cross-Correlation of Count Sketches Enables Fast
           Cardinality Estimation of Multi-Join Queries},
  year = {2024},
  issue_date = {June 2024},
  publisher = {ACM},
  address = {New York, NY, USA},
  volume = {2},
  number = {3},
  url = {https://doi.org/10.1145/3654932},
  doi = {10.1145/3654932},
  journal = {Proc. ACM Manag. Data},
  month = may,
  articleno = {129},
  numpages = {26},
}

@inproceedings{Compass,
  author = {Izenov, Yesdaulet and Datta, Asoke and Rusu, Florin and Shin, Jun
            Hyung},
  title = {COMPASS: Online Sketch-based Query Optimization for In-Memory
           Databases},
  year = {2021},
  isbn = {9781450383431},
  publisher = {ACM},
  address = {New York, NY, USA},
  url = {https://doi.org/10.1145/3448016.3452840},
  doi = {10.1145/3448016.3452840},
  booktitle = {Proceedings of the 2021 International Conference on Management of
               Data},
  pages = {804–816},
  numpages = {13},
  location = {Virtual Event, China},
  series = {SIGMOD '21},
}

@phdthesis{QueryOptimizationSketches,
  title = {Query Optimization using Sketches in Relational Database Systems},
  author = {Yesdaulet Izenov},
  year = 2023,
  month = {Fall},
  address = {Merced, CA},
  note = {Available at \url{https://escholarship.org/uc/item/5k8334fs}},
  school = {University of California, Merced},
  type = {PhD thesis},
}

@online{DataSketches,
  author = {Rhodes Lee and Lang, Kevin Saydakov, Alaxander and Thaler, Justin
            and Liberty, Edo and Malkin, Jon},
  title = {DataSketches: A Java software library of stochastic streaming
           algorithms},
  year = {2025},
  url = {https://datasketches.apache.org/},
  urldate = {2025-04-15},
  journal = {DataSketches},
}

@inproceedings{BatchUpdateMisraGries,
  author = {Anderson, Daniel and Bevan, Pryce and Lang, Kevin and Liberty, Edo
            and Rhodes, Lee and Thaler, Justin},
  title = {A high-performance algorithm for identifying frequent items in data
           streams},
  year = {2017},
  isbn = {9781450351188},
  publisher = {ACM},
  address = {New York, NY, USA},
  url = {https://doi.org/10.1145/3131365.3131407},
  doi = {10.1145/3131365.3131407},
  booktitle = {Proceedings of the 2017 Internet Measurement Conference},
  pages = {268–282},
  numpages = {15},
  location = {London, United Kingdom},
  series = {IMC '17},
}

@inproceedings{AugmentedSketch,
  author = {Roy, Pratanu and Khan, Arijit and Alonso, Gustavo},
  title = {Augmented Sketch: Faster and More Accurate Stream Processing},
  year = {2016},
  isbn = {9781450335317},
  publisher = {ACM},
  address = {New York, NY, USA},
  url = {https://doi.org/10.1145/2882903.2882948},
  doi = {10.1145/2882903.2882948},
  booktitle = {Proceedings of the 2016 International Conference on Management of
               Data},
  pages = {1449–1463},
  numpages = {15},
  location = {San Francisco, California, USA},
  series = {SIGMOD '16},
}

@inproceedings{BitMatcher,
  author = {Shi, Qilong and Jia, Chengjun and Li, Wenjun and Liu, Zaoxing and
            Yang, Tong and Ji, Jianan and Xie, Gaogang and Zhang, Weizhe and Yu,
            Minlan},
  booktitle = {2024 IEEE 40th International Conference on Data Engineering
               (ICDE)},
  title = {BitMatcher: Bit-level Counter Adjustment for Sketches},
  year = {2024},
  volume = {},
  number = {},
  pages = {4815-4827},
  doi = {10.1109/ICDE60146.2024.00366},
}

@inproceedings{HeavyGuardian,
  author = {Yang, Tong and Gong, Junzhi and Zhang, Haowei and Zou, Lei and Shi,
            Lei and Li, Xiaoming},
  title = {HeavyGuardian: Separate and Guard Hot Items in Data Streams},
  year = {2018},
  isbn = {9781450355520},
  publisher = {ACM},
  address = {New York, NY, USA},
  url = {https://doi.org/10.1145/3219819.3219978},
  doi = {10.1145/3219819.3219978},
  booktitle = {Proceedings of the 24th ACM SIGKDD International Conference on
               Knowledge Discovery \& Data Mining},
  pages = {2584–2593},
  numpages = {10},
  location = {London, United Kingdom},
  series = {KDD '18},
}

@article{HeavyKeeper,
  author = {Yang, Tong and Zhang, Haowei and Li, Jinyang and Gong, Junzhi and
            Uhlig, Steve and Chen, Shigang and Li, Xiaoming},
  title = {HeavyKeeper: An Accurate Algorithm for Finding Top-$k$ Elephant Flows
           },
  year = {2019},
  issue_date = {October 2019},
  publisher = {IEEE Press},
  volume = {27},
  number = {5},
  issn = {1063-6692},
  url = {https://doi.org/10.1109/TNET.2019.2933868},
  doi = {10.1109/TNET.2019.2933868},
  journal = {IEEE/ACM Trans. Netw.},
  month = oct,
  pages = {1845–1858},
  numpages = {14},
}

@article{JigsawSketch,
  author = {Zhang, Boyu and Huang, He and Sun, Yu-E. and Du, Yang and Wang, Dan},
  title = {Jigsaw-Sketch: a fast and accurate algorithm for finding top-k
           elephant flows in high-speed networks},
  journal = {Science China Information Sciences},
  year = {2024},
  month = {Mar},
  day = {20},
  volume = {67},
  number = {4},
  pages = {142101},
  issn = {1869-1919},
  doi = {10.1007/s11432-022-3794-1},
  url = {https://doi.org/10.1007/s11432-022-3794-1},
}

@inproceedings{MVSketch,
  author = {Tang, Lu and Huang, Qun and Lee, Patrick P. C.},
  title = {MV-Sketch: A Fast and Compact Invertible Sketch for Heavy Flow
           Detection in Network Data Streams},
  year = {2019},
  publisher = {IEEE Press},
  url = {https://doi.org/10.1109/INFOCOM.2019.8737499},
  doi = {10.1109/INFOCOM.2019.8737499},
  booktitle = {IEEE INFOCOM 2019 - IEEE Conference on Computer Communications},
  pages = {2026–2034},
  numpages = {9},
  location = {Paris, France},
}

@inproceedings{WavingSketch,
  author = {Li, Jizhou and Li, Zikun and Xu, Yifei and Jiang, Shiqi and Yang,
            Tong and Cui, Bin and Dai, Yafei and Zhang, Gong},
  title = {WavingSketch: An Unbiased and Generic Sketch for Finding Top-k Items
           in Data Streams},
  year = {2020},
  isbn = {9781450379984},
  publisher = {ACM},
  address = {New York, NY, USA},
  url = {https://doi.org/10.1145/3394486.3403208},
  doi = {10.1145/3394486.3403208},
  booktitle = {Proceedings of the 26th ACM SIGKDD International Conference on
               Knowledge Discovery \& Data Mining},
  pages = {1574–1584},
  numpages = {11},
  location = {Virtual Event, CA, USA},
  series = {KDD '20},
}

@inproceedings{ElasticSketch,
  author = {Yang, Tong and Jiang, Jie and Liu, Peng and Huang, Qun and Gong,
            Junzhi and Zhou, Yang and Miao, Rui and Li, Xiaoming and Uhlig, Steve
            },
  title = {Elastic sketch: adaptive and fast network-wide measurements},
  year = {2018},
  isbn = {9781450355674},
  publisher = {ACM},
  address = {New York, NY, USA},
  url = {https://doi.org/10.1145/3230543.3230544},
  doi = {10.1145/3230543.3230544},
  booktitle = {Proceedings of the 2018 Conference of the ACM Special Interest
               Group on Data Communication},
  pages = {561–575},
  numpages = {15},
  location = {Budapest, Hungary},
  series = {SIGCOMM '18},
}

@inproceedings{MicroscopeSketch,
  author = {Wu, Yuhan and Jiang, Shiqi and Dong, Siyuan and Zhong, Zheng and
            Chen, Jiale and Hu, Yutong and Yang, Tong and Uhlig, Steve and Cui,
            Bin},
  title = {MicroscopeSketch: Accurate Sliding Estimation Using Adaptive Zooming},
  year = {2023},
  isbn = {9798400701030},
  publisher = {ACM},
  address = {New York, NY, USA},
  url = {https://doi.org/10.1145/3580305.3599432},
  doi = {10.1145/3580305.3599432},
  booktitle = {Proceedings of the 29th ACM SIGKDD Conference on Knowledge
               Discovery and Data Mining},
  pages = {2660–2671},
  numpages = {12},
  location = {Long Beach, CA, USA},
  series = {KDD '23},
}

@article{PSketch,
  author = {Li, Weihe and Patras, Paul},
  title = {P-Sketch: A Fast and Accurate Sketch for Persistent Item Lookup},
  year = {2023},
  issue_date = {April 2024},
  publisher = {IEEE Press},
  volume = {32},
  number = {2},
  issn = {1063-6692},
  url = {https://doi.org/10.1109/TNET.2023.3306897},
  doi = {10.1109/TNET.2023.3306897},
  journal = {IEEE/ACM Trans. Netw.},
  month = aug,
  pages = {987–1002},
  numpages = {16},
}

@inbook{StreamModels,
  author = {Golab, Lukasz},
  editor = {LIU, LING and {\"O}ZSU, M. TAMER},
  title = {Stream Models},
  bookTitle = {Encyclopedia of Database Systems},
  year = {2009},
  publisher = {Springer US},
  address = {Boston, MA},
  pages = {2834--2836},
  isbn = {978-0-387-39940-9},
  doi = {10.1007/978-0-387-39940-9_370},
  url = {https://doi.org/10.1007/978-0-387-39940-9_370},
}

@article{CMSkmer,
  title = {These are not the k-mers you are looking for: efficient online k-mer
           counting using a probabilistic data structure},
  author = {Zhang, Qingpeng and Pell, Jason and Canino-Koning, Rosangela and
            Howe, Adina Chuang and Brown, C Titus},
  journal = {PloS one},
  volume = {9},
  number = {7},
  pages = {e101271},
  year = {2014},
  publisher = {Public Library of Science San Francisco, USA},
}

@article{SetMinSketch,
  author = {Shibuya, Yoshihiro and Belazzougui, Djamal and Kucherov, Gregory},
  title = {Set-Min sketch: a probabilistic map for power-law distributions with
           application to k-mer annotation},
  elocation-id = {2020.11.14.382713},
  year = {2021},
  doi = {10.1101/2020.11.14.382713},
  publisher = {Cold Spring Harbor Laboratory},
  URL = {https://www.biorxiv.org/content/early/2021/02/25/2020.11.14.382713},
  eprint = {
            https://www.biorxiv.org/content/early/2021/02/25/2020.11.14.382713.full.pdf
            },
  journal = {bioRxiv},
}

@inproceedings{SketchBasedChangeDetection,
  author = {Krishnamurthy, Balachander and Sen, Subhabrata and Zhang, Yin and
            Chen, Yan},
  title = {Sketch-based change detection: methods, evaluation, and applications},
  year = {2003},
  isbn = {1581137737},
  publisher = {ACM},
  address = {New York, NY, USA},
  url = {https://doi.org/10.1145/948205.948236},
  doi = {10.1145/948205.948236},
  booktitle = {Proceedings of the 3rd ACM SIGCOMM Conference on Internet
               Measurement},
  pages = {234–247},
  numpages = {14},
  location = {Miami Beach, FL, USA},
  series = {IMC '03},
}

@article{WhatsNew,
  author = {Cormode, G. and Muthukrishnan, S.},
  journal = {IEEE/ACM Transactions on Networking},
  title = {What's new: finding significant differences in network data streams},
  year = {2005},
  volume = {13},
  number = {6},
  pages = {1219-1232},
  doi = {10.1109/TNET.2005.860096},
}

@inproceedings{TrafficManagementSurvey,
  title = {A survey of sketches in traffic measurement: Design, Optimization,
           Application and Implementation},
  author = {Shangsen Li and Lailong Luo and Deke Guo and Qianzhen Zhang and
            Pengtao Fu},
  year = {2020},
  url = {https://api.semanticscholar.org/CorpusID:236140679},
}

@inproceedings{CoMeT,
  author = {Bostanci, F. and Yüksel, Ismail and Olgun, Ataberk and Kanellopoulos
            , Konstantinos and Tuğrul, Yahya and Yaglikçi, Abdullah Giray and
            Sadrosadati, Mohammad and Mutlu, Onur},
  year = {2024},
  month = {03},
  pages = {593-612},
  title = {CoMeT: Count-Min-Sketch-based Row Tracking to Mitigate RowHammer at
           Low Cost},
  doi = {10.1109/HPCA57654.2024.00050},
}

@inproceedings{ASCS,
  author = {Dai, Zhenwei and Desai, Aditya and Heckel, Reinhard and Shrivastava,
            Anshumali},
  title = {Active Sampling Count Sketch (ASCS) for Online Sparse Estimation of a
           Trillion Scale Covariance Matrix},
  year = {2021},
  isbn = {9781450383431},
  publisher = {ACM},
  address = {New York, NY, USA},
  url = {https://doi.org/10.1145/3448016.3457327},
  doi = {10.1145/3448016.3457327},
  booktitle = {Proceedings of the 2021 International Conference on Management of
               Data},
  pages = {352–364},
  numpages = {13},
  location = {Virtual Event, China},
  series = {SIGMOD '21},
}

@article{Horner,
  title = {IX. A new method of solving numerical equations of all orders, by
           continuous approximation},
  author = {Horner, William George},
  journal = {Philosophical Transactions of the Royal Society of London},
  volume = {109},
  pages = {308--335},
  year = {1819},
  publisher = {The Royal Society},
}

@book{HornerKnuth,
  author = {Knuth, Donald E.},
  title = {The Art of Computer Programming, Volume 2 (3rd Ed.): Seminumerical
           Algorithms},
  year = {1997},
  publisher = {Addison-Wesley Professional},
  address = {USA},
}

@book{HackersDelight,
  author = {Warren, Jr., Henry S.},
  title = {Hacker's Delight},
  edition = {2nd},
  year = {2012},
  publisher = {Addison-Wesley Professional},
  address = {Upper Saddle River, NJ},
  isbn = {978-0321842688},
}

@inproceedings{BitSense,
  author = {Ding, Rui and Yang, Shibo and Chen, Xiang and Huang, Qun},
  title = {BitSense: Universal and Nearly Zero-Error Optimization for Sketch
           Counters with Compressive Sensing},
  year = {2023},
  isbn = {9798400702365},
  publisher = {ACM},
  address = {New York, NY, USA},
  url = {https://doi.org/10.1145/3603269.3604865},
  doi = {10.1145/3603269.3604865},
  booktitle = {Proceedings of the ACM SIGCOMM 2023 Conference},
  pages = {220–238},
  numpages = {19},
  location = {New York, NY, USA},
  series = {ACM SIGCOMM '23},
}

@inproceedings{SALSA,
  author = {Basat, Ran and Einziger, Gil and Mitzenmacher, Michael and Vargaftik
            , Shay},
  year = {2021},
  month = {04},
  pages = {864-875},
  title = {SALSA: Self-Adjusting Lean Streaming Analytics},
  doi = {10.1109/ICDE51399.2021.00080},
}

@article{PyramidSketch,
  author = {Yang, Tong and Zhou, Yang and Jin, Hao and Chen, Shigang and Li,
            Xiaoming},
  title = {Pyramid sketch: a sketch framework for frequency estimation of data
           streams},
  year = {2017},
  issue_date = {August 2017},
  publisher = {VLDB Endowment},
  volume = {10},
  number = {11},
  issn = {2150-8097},
  url = {https://doi.org/10.14778/3137628.3137652},
  doi = {10.14778/3137628.3137652},
  journal = {Proc. VLDB Endow.},
  month = aug,
  pages = {1442–1453},
  numpages = {12},
}

@inproceedings{AdaptiveCounterSplicing,
  author = {Gao, Guoju and Qian, Zhaorong and Huang, He and Du, Yang},
  year = {2023},
  month = {06},
  pages = {1-4},
  title = {An Adaptive Counter-Splicing-Based Sketch for Efficient Per-Flow Size
           Measurement},
  doi = {10.1109/IWQoS57198.2023.10188733},
}

@article{CounterTree,
  author = {Chen, Min and Chen, Shigang and Cai, Zhiping},
  title = {Counter Tree: A Scalable Counter Architecture for Per-Flow Traffic
           Measurement},
  year = {2017},
  issue_date = {April 2017},
  publisher = {IEEE Press},
  volume = {25},
  number = {2},
  issn = {1063-6692},
  url = {https://doi.org/10.1109/TNET.2016.2621159},
  doi = {10.1109/TNET.2016.2621159},
  journal = {IEEE/ACM Trans. Netw.},
  month = apr,
  pages = {1249–1262},
  numpages = {14},
}

@article{TreeSensing,
  author = {Liu, Zirui and Zhang, Yixin and Zhu, Yifan and Zhang, Ruwen and Yang
            , Tong and Xie, Kun and Wang, Sha and Li, Tao and Cui, Bin},
  title = {TreeSensing: Linearly Compressing Sketches with Flexibility},
  year = {2023},
  issue_date = {May 2023},
  publisher = {ACM},
  address = {New York, NY, USA},
  volume = {1},
  number = {1},
  url = {https://doi.org/10.1145/3588910},
  doi = {10.1145/3588910},
  journal = {Proc. ACM Manag. Data},
  month = may,
  articleno = {56},
  numpages = {28},
}

@inproceedings{TightSketch,
  author = {Li, Weihe and Patras, Paul},
  title = {Tight-Sketch: A High-Performance Sketch for Heavy Item-Oriented Data
           Stream Mining with Limited Memory Size},
  year = {2023},
  isbn = {9798400701245},
  publisher = {ACM},
  address = {New York, NY, USA},
  url = {https://doi.org/10.1145/3583780.3615080},
  doi = {10.1145/3583780.3615080},
  booktitle = {Proceedings of the 32nd ACM International Conference on
               Information and Knowledge Management},
  pages = {1328–1337},
  numpages = {10},
  location = {Birmingham, United Kingdom},
  series = {CIKM '23},
}

@inproceedings{CodingSketch,
  author = {Qizhi Chen and Yisen Hong and Yuhan Wu and Tong Yang and Bin Cui},
  title = {CodingSketch: A Hierarchical Sketch with Efficient Encoding and
           Recursive Decoding},
  year = {2024},
  cdate = {1704067200000},
  pages = {1592-1605},
  url = {https://doi.org/10.1109/ICDE60146.2024.00130},
  booktitle = {ICDE},
}

@inbook{ADSMassiveDatasetBook,
  author = {Medjedovic, Dzejla and Tahirovic, Emin and Dedovic, Ines},
  place = {Shelter Island, NY},
  title = {Algorithms and data structures for massive datasets},
  publisher = {Manning Publications Co.},
  chapter = {4},
  year = 2025,
}

@inproceedings{SketchesTrafficManagementSurvey,
  title = {A survey of sketches in traffic measurement: Design, Optimization,
           Application and Implementation},
  author = {Shangsen Li and Lailong Luo and Deke Guo and Qianzhen Zhang and
            Pengtao Fu},
  year = {2020},
  url = {https://api.semanticscholar.org/CorpusID:236140679},
}

@online{PDEP,
  author = {Felix Cloutier},
  title = {PDEP — Parallel Bits Deposit},
  year = {2023},
  url = {https://www.felixcloutier.com/x86/pdep},
  urldate = {2025-08-24},
}

@article{AMS,
  author = {Noga Alon and Yossi Matias and Mario Szegedy},
  title = {The space complexity of approximating the frequency moments},
  journal = {Journal of Computer and System Sciences},
  volume = {58},
  number = {1},
  pages = {137--147},
  year = {1999},
  publisher = {Elsevier},
  doi = {10.1006/jcss.1998.1627},
}

@inbook{SensorDataAnalytics,
  author = {Aggarwal, Charu C.},
  editor = {Aggarwal, Charu C.},
  title = {An Introduction to Sensor Data Analytics},
  bookTitle = {Managing and Mining Sensor Data},
  year = {2013},
  publisher = {Springer US},
  address = {Boston, MA},
  pages = {1--8},
  isbn = {978-1-4614-6309-2},
  doi = {10.1007/978-1-4614-6309-2_1},
  url = {https://doi.org/10.1007/978-1-4614-6309-2_1},
}

@article{RobustAggregationSensorNetworks,
  title = {Robust Aggregation in Sensor Networks},
  author = {George Kollios and John Byers and Jeffrey Considine and Marios
            Hadjieleftheriou and Feifei Li},
  journal = {Bulletin of the IEEE Computer Society Technical Committee on Data
             Engineering},
  year = {2005},
  note = {Special Issue on Data Management in Sensor Networks},
  url = {
         https://www2.cs.arizona.edu/classes/cs645/fall05/cs645-papers/DataAggregation/dataC.pdf
         },
}

@article{RandomizedAdmissionEstimation,
  author = {Ben Basat, Ran and Chen, Xiaoqi and Einziger, Gil and Friedman, Roy
            and Kassner, Yaron},
  title = {Randomized Admission Policy for Efficient Top-k, Frequency, and
           Volume Estimation},
  year = {2019},
  issue_date = {August 2019},
  publisher = {IEEE Press},
  volume = {27},
  number = {4},
  issn = {1063-6692},
  url = {https://doi.org/10.1109/TNET.2019.2918929},
  doi = {10.1109/TNET.2019.2918929},
  journal = {IEEE/ACM Trans. Netw.},
  month = aug,
  pages = {1432–1445},
  numpages = {14},
}

@inproceedings{SketchingStreamsNet,
  author = {Cormode, Graham and Garofalakis, Minos},
  title = {Sketching streams through the net: distributed approximate query
           tracking},
  year = {2005},
  isbn = {1595931546},
  publisher = {VLDB Endowment},
  booktitle = {Proceedings of the 31st International Conference on Very Large
               Data Bases},
  pages = {13–24},
  numpages = {12},
  location = {Trondheim, Norway},
  series = {VLDB '05},
}

@inproceedings{Blink,
  author = {Holterbach, Thomas and Molero, Edgar Costa and Apostolaki, Maria and
            Dainotti, Alberto and Vissicchio, Stefano and Vanbever, Laurent},
  title = {Blink: fast connectivity recovery entirely in the data plane},
  year = {2019},
  isbn = {9781931971492},
  publisher = {USENIX Association},
  address = {USA},
  booktitle = {Proceedings of the 16th USENIX Conference on Networked Systems
               Design and Implementation},
  pages = {161–176},
  numpages = {16},
  location = {Boston, MA, USA},
  series = {NSDI'19},
}

@inproceedings{LOFT,
  author = {Scherrer, Simon and Wu, Che-Yu and Chiang, Yu-Hsi and Rothenberger,
            Benjamin and Asoni, Daniele and Sateesan, Arish and Vliegen, Jo and
            Mentens, Nele and Hsiao, Hsu-Chun and Perrig, Adrian},
  year = {2021},
  month = {09},
  pages = {265-276},
  title = {Low-Rate Overuse Flow Tracer (LOFT): An Efficient and Scalable
           Algorithm for Detecting Overuse Flows},
  doi = {10.1109/SRDS53918.2021.00034},
}

@article{Flink,
  title = {Apache Flink{\texttrademark}: Stream and Batch Processing in a Single
           Engine},
  author = {Paris Carbone and Asterios Katsifodimos and Stephan Ewen and Volker
            Markl and Seif Haridi and Kostas Tzoumas},
  year = {2015},
  language = {English},
  volume = {36},
  pages = {28--38},
  journal = {Bulletin of the IEEE Computer Society Technical Committee on Data
             Engineering},
  publisher = {IEEE},
  number = {4},
}

@inproceedings{AdditiveErrorCounters,
  author = {Basat, Ran Ben and Einziger, Gil and Mitzenmacher, Michael and
            Vargaftik, Shay},
  booktitle = {IEEE INFOCOM 2020 - IEEE Conference on Computer Communications},
  title = {Faster and More Accurate Measurement through Additive-Error Counters},
  year = {2020},
  volume = {},
  number = {},
  pages = {1251-1260},
  doi = {10.1109/INFOCOM41043.2020.9155340},
}

@inproceedings{NetworkAnomalyDetection1,
  author = {Gu, Yu and McCallum, Andrew and Towsley, Don},
  title = {Detecting anomalies in network traffic using maximum entropy
           estimation},
  year = {2005},
  publisher = {USENIX Association},
  address = {USA},
  booktitle = {Proceedings of the 5th ACM SIGCOMM Conference on Internet
               Measurement},
  pages = {32},
  numpages = {1},
  location = {Berkeley, CA},
  series = {IMC '05},
}

@inproceedings{NetworkAnomalyDetection2,
  author = {Wagner, A. and Plattner, B.},
  booktitle = {14th IEEE International Workshops on Enabling Technologies:
               Infrastructure for Collaborative Enterprise (WETICE'05)},
  title = {Entropy based worm and anomaly detection in fast IP networks},
  year = {2005},
  pages = {172-177},
  doi = {10.1109/WETICE.2005.35},
}

@article{CoDDoS,
  author = {Xia, Jiqiang and Tian, Le and Hu, Yuxiang and Li, Ziyong and Sun,
            Penghao and Peng, Jianhua},
  title = {CoDDoS: Detecting and mitigating diverse DDoS attacks with
           programmable switches},
  year = {2025},
  issue_date = {Aug 2025},
  publisher = {Elsevier Science Publishers B. V.},
  address = {NLD},
  volume = {240},
  number = {C},
  issn = {0140-3664},
  url = {https://doi.org/10.1016/j.comcom.2025.108215},
  doi = {10.1016/j.comcom.2025.108215},
  journal = {Comput. Commun.},
  month = aug,
  numpages = {13},
}

@article{Memento,
  author = {Ben Basat, Ran and Einziger, Gil and Keslassy, Isaac and Orda, Ariel
            and Vargaftik, Shay and Waisbard, Erez},
  journal = {IEEE/ACM Transactions on Networking},
  title = {Memento: Making Sliding Windows Efficient for Heavy Hitters},
  year = {2022},
  volume = {30},
  number = {4},
  pages = {1440-1453},
  doi = {10.1109/TNET.2021.3132385},
}

@online{SimonsYahoo,
  author = {Rhodes, Lee},
  title = {Insights from Engineering Sketches for Production and Using Sketches
           at Scale},
  year = {2023},
  url = {https://simons.berkeley.edu/talks/lee-rhodes-yahoo-inc-2023-10-12},
  urldate = {2025-10-10},
}

@misc{Kosarak,
  author = {Bodon, Ferenc},
  title = {Frequent Itemset Mining Dataset Repository, Kosarak},
  year = {2012},
  url = {http://fimi.uantwerpen.be/data/},
  urldate = {2025-10-11},
}

@misc{WebDocs,
  author = {Lucchese, Claudio and Orlando, Salvatore and Perego, Raffaele and
            Silvestri, Fabrizio},
  title = {Frequent Itemset Mining Dataset Repository, Webdocs},
  year = {2012},
  url = {http://fimi.uantwerpen.be/data/},
  urldate = {2025-10-11},
}

@misc{CAIDA2018,
  title = {Anonymized Internet Traces 2018},
  howpublished = {\url{https://catalog.caida.org/dataset/passive_2018_pcap}},
  note = {Dates used: 2025-10-11. Accessed: 2025-10-11.},
  doi = {https://catalog.caida.org/dataset/passive_2018_pcap},
}

@misc{TPCH,
  title = {TPC Benchmark{\texttrademark} H (Decision Support)},
  author = {{Transaction Processing Performance Council (TPC)}},
  howpublished = {\url{https://www.tpc.org/tpch/}},
  note = {Accessed: 2025-12-03},
  year = {2024},
}

@misc{Linux5LevelPaging,
  author = {The kernel development community},
  title = {5-level paging},
  howpublished = {Linux Kernel Documentation},
  year = {2024},
  url = {https://docs.kernel.org/arch/x86/x86_64/5level-paging.html},
  note = {Accessed: 2026-01-19},
}

@techreport{Intel5LevelPaging,
  author = {Intel Corporation},
  title = {5-Level Paging and 5-Level EPT White Paper},
  institution = {Intel Corporation},
  year = {2018},
  month = {May},
  number = {335252-002},
  url = {
         https://www.intel.com/content/www/us/en/content-details/671442/5-level-paging-and-5-level-ept-white-paper.html
         },
  note = {Revision 1.1},
}

@manual{Intel4LevelPaging,
  author = {Intel Corporation},
  title = {Intel® 64 and IA-32 Architectures Software Developer’s Manual, Volume
           3A: System Programming Guide, Part 1},
  year = {2025},
  month = {October},
  note = {Order Number 253668},
  organization = {Intel Corporation},
  url = {
         https://www.intel.com/content/www/us/en/developer/articles/technical/intel-sdm.html
         },
}

@article{FastConcurrentDataSketches,
  author = {Rinberg, Arik and Spiegelman, Alexander and Bortnikov, Edward and
            Hillel, Eshcar and Keidar, Idit and Rhodes, Lee and Serviansky, Hadar
            },
  title = {Fast Concurrent Data Sketches},
  year = {2022},
  issue_date = {June 2022},
  publisher = {ACM},
  address = {New York, NY, USA},
  volume = {9},
  number = {2},
  issn = {2329-4949},
  url = {https://doi.org/10.1145/3512758},
  doi = {10.1145/3512758},
  journal = {ACM Trans. Parallel Comput.},
  month = apr,
  articleno = {6},
  numpages = {35},
}

@article{FPGASketching,
  author = {Kiefer, Martin and Poulakis, Ilias and Zacharatou, Eleni Tzirita and
            Markl, Volker},
  title = {Optimistic Data Parallelism for FPGA-Accelerated Sketching},
  year = {2023},
  issue_date = {January 2023},
  publisher = {VLDB Endowment},
  volume = {16},
  number = {5},
  issn = {2150-8097},
  url = {https://doi.org/10.14778/3579075.3579085},
  doi = {10.14778/3579075.3579085},
  journal = {Proc. VLDB Endow.},
  month = jan,
  pages = {1113–1125},
  numpages = {13},
}

@article{JoinSizeEstimationFilterConditions,
  author = {Vengerov, David and Menck, Andre Cavalheiro and Zait, Mohamed and
            Chakkappen, Sunil P.},
  title = {Join size estimation subject to filter conditions},
  year = {2015},
  issue_date = {August 2015},
  publisher = {VLDB Endowment},
  volume = {8},
  number = {12},
  issn = {2150-8097},
  url = {https://doi.org/10.14778/2824032.2824051},
  doi = {10.14778/2824032.2824051},
  journal = {Proc. VLDB Endow.},
  month = aug,
  pages = {1530–1541},
  numpages = {12},
}

@article{Diva,
  author = {Eslami, Navid and Bercea, Ioana O. and Dayan, Niv},
  title = {Diva: Dynamic Range Filter for Var-Length Keys and Queries},
  year = {2025},
  issue_date = {July 2025},
  publisher = {VLDB Endowment},
  volume = {18},
  number = {11},
  issn = {2150-8097},
  url = {https://doi.org/10.14778/3749646.3749664},
  doi = {10.14778/3749646.3749664},
  journal = {Proc. VLDB Endow.},
  month = jul,
  pages = {3923–3936},
  numpages = {14},
}

@inproceedings{Aeris,
  author = {Chesetti, Yuvaraj and Eslami, Navid and Zhang, Huanchen and Dayan,
            Niv and Pandey, Prashant},
  title = {Aeris Filter: A Strongly and Monotonically Adaptive Range Filter},
  year = {2026},
  issue_date = {June 2026},
  publisher = {ACM},
  address = {New York, NY, USA},
  booktitle = {Proceedings of the 2026 International Conference on Management of
               Data},
  pages = {1670–1684},
  numpages = {15},
  location = {Bangalore, India},
  series = {SIGMOD '26},
}

@inproceedings{OctoSketch,
  author = {Yinda Zhang and Peiqing Chen and Zaoxing Liu},
  title = {{OctoSketch}: Enabling {Real-Time}, Continuous Network Monitoring
           over Multiple Cores},
  booktitle = {21st USENIX Symposium on Networked Systems Design and
               Implementation (NSDI 24)},
  year = {2024},
  isbn = {978-1-939133-39-7},
  address = {Santa Clara, CA},
  pages = {1621--1639},
  url = {https://www.usenix.org/conference/nsdi24/presentation/zhang-yinda},
  publisher = {USENIX Association},
  month = apr,
}

@article{ScoutSketch,
  author = {Gao, Guoju and Ma, Tianyu and Huang, He and Sun, Yu-E and Wang,
            Haibo and Du, Yang and Chen, Shigang},
  title = {Scout Sketch+: Finding Both Promising and Damping Items
           Simultaneously in Data Streams},
  year = {2024},
  issue_date = {Dec. 2024},
  publisher = {IEEE Press},
  volume = {32},
  number = {6},
  issn = {1063-6692},
  url = {https://doi.org/10.1109/TNET.2024.3469196},
  doi = {10.1109/TNET.2024.3469196},
  journal = {IEEE/ACM Trans. Netw.},
  month = oct,
  pages = {5491–5506},
  numpages = {16},
}

@inproceedings{Canopy,
  author = {Kaldor, Jonathan and Mace, Jonathan and Bejda, Micha\l{} and Gao,
            Edison and Kuropatwa, Wiktor and O'Neill, Joe and Ong, Kian Win and
            Schaller, Bill and Shan, Pingjia and Viscomi, Brendan and
            Venkataraman, Vinod and Veeraraghavan, Kaushik and Song, Yee Jiun},
  title = {Canopy: An End-to-End Performance Tracing And Analysis System},
  year = {2017},
  isbn = {9781450350853},
  publisher = {ACM},
  address = {New York, NY, USA},
  url = {https://doi.org/10.1145/3132747.3132749},
  doi = {10.1145/3132747.3132749},
  booktitle = {Proceedings of the 26th Symposium on Operating Systems Principles
               },
  pages = {34–50},
  numpages = {17},
  location = {Shanghai, China},
  series = {SOSP '17},
}

@misc{SublimeArxiv,
  title = {Sublime: Sublinear Error \& Space for Unbounded Skewed Streams},
  author = {Navid Eslami and Ioana O. Bercea and Rasmus Pagh and Niv Dayan},
  year = {2026},
  eprint = {2603.14190},
  archivePrefix = {arXiv},
  primaryClass = {cs.DS},
  url = {https://arxiv.org/abs/2603.14190},
}

@article{Jensen,
  author = {Jensen, J. L. W. V.},
  title = {Sur les fonctions convexes et les inégalités entre les valeurs
           moyennes},
  journal = {Acta Mathematica},
  year = {1906},
  volume = {30},
  pages = {175--193},
  doi = {10.1007/bf02418571},
}

@inproceedings{AGM,
  author = {Ahn, Kook Jin and Guha, Sudipto and McGregor, Andrew},
  title = {Analyzing graph structure via linear measurements},
  year = {2012},
  publisher = {Society for Industrial and Applied Mathematics},
  address = {USA},
  booktitle = {Proceedings of the Twenty-Third Annual ACM-SIAM Symposium on
               Discrete Algorithms},
  pages = {459–467},
  numpages = {9},
  location = {Kyoto, Japan},
  series = {SODA '12},
}

@inproceedings{LpSamplers,
  author = {Jowhari, Hossein and Sa\u{g}lam, Mert and Tardos, G\'{a}bor},
  title = {Tight bounds for Lp samplers, finding duplicates in streams, and
           related problems},
  year = {2011},
  isbn = {9781450306607},
  publisher = {Association for Computing Machinery},
  address = {New York, NY, USA},
  url = {https://doi.org/10.1145/1989284.1989289},
  doi = {10.1145/1989284.1989289},
  booktitle = {Proceedings of the Thirtieth ACM SIGMOD-SIGACT-SIGART Symposium
               on Principles of Database Systems},
  pages = {49–58},
  numpages = {10},
  location = {Athens, Greece},
  series = {PODS '11},
}

@inproceedings{GK,
  author = {Greenwald, Michael and Khanna, Sanjeev},
  title = {Space-efficient online computation of quantile summaries},
  year = {2001},
  isbn = {1581133324},
  publisher = {Association for Computing Machinery},
  address = {New York, NY, USA},
  url = {https://doi.org/10.1145/375663.375670},
  doi = {10.1145/375663.375670},
  booktitle = {Proceedings of the 2001 ACM SIGMOD International Conference on
               Management of Data},
  pages = {58–66},
  numpages = {9},
  location = {Santa Barbara, California, USA},
  series = {SIGMOD '01},
}

@article{KLL,
  title = {Optimal Quantile Approximation in Streams},
  author = {Zohar S. Karnin and Kevin J. Lang and Edo Liberty},
  journal = {2016 IEEE 57th Annual Symposium on Foundations of Computer Science
             (FOCS)},
  year = {2016},
  pages = {71-78},
  url = {https://api.semanticscholar.org/CorpusID:15640305},
}

@article{SplineSketch,
  author = {\L{}ukasiewicz, Aleksander and T\v{e}tek, Jakub and Vesel\'{y},
            Pavel},
  title = {SplineSketch: Even More Accurate Quantiles with Error Guarantees},
  year = {2025},
  issue_date = {December 2025},
  publisher = {Association for Computing Machinery},
  address = {New York, NY, USA},
  volume = {3},
  number = {6},
  url = {https://doi.org/10.1145/3769827},
  doi = {10.1145/3769827},
  journal = {Proc. ACM Manag. Data},
  month = dec,
  articleno = {362},
  numpages = {26},
}

@inproceedings{HLL,
  TITLE = {{HyperLogLog: the analysis of a near-optimal cardinality estimation
           algorithm}},
  AUTHOR = {Flajolet, Philippe and Fusy, Eric and Gandouet, Olivier and Meunier,
            Fr{\'e}d{\'e}ric},
  URL = {https://hal.science/hal-00406166},
  BOOKTITLE = {{Analysis of Algorithms 2007 (AofA07)}},
  ADDRESS = {Juan les pins, France},
  EDITOR = {Philippe Jacquet},
  PAGES = {127--146},
  YEAR = {2007},
  MONTH = Jun,
  HAL_ID = {hal-00406166},
  HAL_VERSION = {v1},
}

@inproceedings{HLLL,
  author = {Karppa, Matti and Pagh, Rasmus},
  title = {HyperLogLogLog: Cardinality Estimation With One Log More},
  year = {2022},
  isbn = {9781450393850},
  publisher = {Association for Computing Machinery},
  address = {New York, NY, USA},
  url = {https://doi.org/10.1145/3534678.3539246},
  doi = {10.1145/3534678.3539246},
  booktitle = {Proceedings of the 28th ACM SIGKDD Conference on Knowledge
               Discovery and Data Mining},
  pages = {753–761},
  numpages = {9},
  location = {Washington DC, USA},
  series = {KDD '22},
}

@article{ULL,
  author = {Ertl, Otmar},
  title = {UltraLogLog: A Practical and More Space-Efficient Alternative to
           HyperLogLog for Approximate Distinct Counting},
  year = {2024},
  issue_date = {March 2024},
  publisher = {VLDB Endowment},
  volume = {17},
  number = {7},
  issn = {2150-8097},
  url = {https://doi.org/10.14778/3654621.3654632},
  doi = {10.14778/3654621.3654632},
  journal = {Proc. VLDB Endow.},
  month = mar,
  pages = {1655–1668},
  numpages = {14},
}

\ifappendix
\clearpage
\newpage

\appendix

\section{Applying \sketch\ to Misra-Gries}\label{sec:mg}
We now apply \sketch\ to Misra-Gries to derive \sketchmg.

\textbf{Misra-Gries (MG).}
By using dedicated counters to track the most frequent keys, MG achieves high
accuracy under skewed workloads. Moreover, by only incrementing a key's
dedicated counter when it is inserted, MG ensures that it never overestimates
the key counts. This property is ideal for finding the most frequent keys, as
described in \Cref{sec:introduction}, since it guarantees that the frequency of
those keys is at least as large as the estimates.

\begin{figure}
    \centering
    \begin{tikzpicture}
        \def\arrayw{4.75}
        \def\arrayh{0.25}
        \def\arraylen{5}
        \def\cellw{0.95}
        \def\elemw{0.65}
        \def\counterw{0.30}
        \def\annotationoffset{0.05}
        \def\bottomymul{2.5}
        \def\casebracewidth{1.5}
        \def\casebracewidthquery{0.55}

        \foreach \j in {1, ..., \arraylen} {
            %\draw[white,pattern=north east lines] (\j*\cellw-\counterw,0.5*\arrayh) rectangle (\j*\cellw-\cellw,-0.5*\arrayh);
            \draw[white,fill=gray!20] (\j*\cellw-\counterw,0.5*\arrayh) rectangle (\j*\cellw-\cellw,-0.5*\arrayh);
            \draw[dotted,thick] (\j*\cellw-\counterw,0.5*\arrayh) -- (\j*\cellw-\counterw,-0.5*\arrayh);
        }
        \foreach \j in {2, ..., \arraylen} {
            \draw (\j*\cellw-\cellw,0.5*\arrayh) -- (\j*\cellw-\cellw,-0.5*\arrayh);
        }
        \draw[black] (0,0.5*\arrayh) rectangle (\arrayw,-0.5*\arrayh);

        \node[inner sep=4pt] (key) at (-2.5*\cellw,2.5*\arrayh) {\small Insert $x$};
        \draw[decorate,decoration={brace,raise=1pt,amplitude=2pt,mirror}] ($(key.east)+(1pt,0.5*\casebracewidth)$) -- ($(key.east)+(1pt,-0.5*\casebracewidth)$);
        \node[inner sep=0pt,anchor=west,align=center] (case_1) at ($(key.east)+(2pt,0.35*\casebracewidth)$) {\footnotesize Exists \hspace*{1pt}};
        \node[inner sep=0pt,anchor=west,align=center] (case_2) at ($(key.east)+(2pt,0.03*\casebracewidth)$) {\footnotesize Does not exist, \\[-4pt] \footnotesize Empty slot};
        \node[inner sep=0pt,anchor=west,align=center] (case_3) at ($(key.east)+(2pt,-0.35*\casebracewidth)$) {\footnotesize Does not exist, \\[-4pt] \footnotesize No empty slot};

        \draw[-stealth] (case_1.east) -| (1.0*\cellw+\elemw+0.5*\counterw,0.5*\arrayh) node[pos=0.25,above=2pt,align=center,inner sep=0pt] {\footnotesize $+1$};
        \draw[-stealth] (case_2.east) -| (\arraylen*\cellw-0.5*\cellw,1.2*\arrayh) node[pos=0.1,above=1pt,align=center,inner sep=0pt] {\footnotesize Store $(x,1)$};
        \draw[decorate,decoration={brace,raise=1pt,amplitude=2pt}] (\arraylen*\cellw-\cellw,0.65*\arrayh) -- (\arraylen*\cellw,0.65*\arrayh);
        \draw (case_3.east) -| (-0.15*\cellw,-\bottomymul*\arrayh);
        \draw (-0.15*\cellw,-\bottomymul*\arrayh) -- (\arraylen*\cellw-0.5*\counterw,-\bottomymul*\arrayh) node[pos=0.1,below=2pt,align=center,inner sep=0pt] {\footnotesize $-1$};
        \foreach \j in {1, ..., \arraylen} {
            \draw[-stealth] (\j*\cellw-0.5*\counterw,-\bottomymul*\arrayh) -- (\j*\cellw-0.5*\counterw,-0.5*\arrayh);
        }

        \node[inner sep=4pt] (query) at (\arrayw+0.05*\cellw,4.25*\arrayh) {\small Query $q$};
        \draw[decorate,decoration={brace,raise=1pt,amplitude=2pt}] ($(query.west)+(-1pt,0.5*\casebracewidthquery)$) -- ($(query.west)+(-1pt,-0.5*\casebracewidthquery)$);
        \node[inner sep=0pt,anchor=east,align=center] (case_1) at ($(query.west)+(-2pt,0.3*\casebracewidthquery)$) {\hspace*{1pt} \footnotesize Exists, 4};
        \node[inner sep=0pt,anchor=east,align=center] (case_2) at ($(query.west)+(-2pt,-0.3*\casebracewidthquery)$) {\footnotesize Otherwise, 0};
        \draw[stealth-] (case_1.west) -| (2.0*\cellw+\elemw+0.5*\counterw,0.5*\arrayh);
        \node[inner sep=0pt] at (2.0*\cellw+0.5*\elemw,0.0*\arrayh) {\footnotesize $q$};
        \node[inner sep=0pt] at (2.0*\cellw+\elemw+0.5*\counterw,0.0*\arrayh) {\footnotesize $4$};

        %\node[inner sep=0pt,anchor=west] at (-3.2*\cellw,5.25*\arrayh) {\textbf{C) MG}};
    \end{tikzpicture}
    \caption{MG uses dedicated counters to track frequent keys.}
    \label{fig:mg_overview}
\end{figure}

\Cref{fig:mg_overview} presents an overview of MG. MG maintains an array of
slots, each storing a key (shown in gray) and a dedicated frequency counter
(shown in white). MG estimates the frequency of each key in the array as the
value of its counter. It returns an estimate of zero for untracked keys. When
inserting a key~$x$, MG first checks if it has a dedicated counter in some
slot. If it does, MG increments~$x$'s counter. If not, and the array has an
empty slot, MG stores~$x$ in it with a count of~$1$. If~$x$ is not in the array
and the array is full, MG decrements all counters and discards any keys whose
counters become zero. Such keys are unlikely to be among the most frequent, and
their slots are better used for tracking more frequent keys. Note that~$x$ is
not stored in the newly emptied slots and is only stored if it is inserted when
some slot is already empty. MG does not support deletions without deteriorating
the error bound~\cite{BatchUpdateMisraGries}. This is because it lacks
information about the untracked keys, which may become the heavy hitters after
the current heavy hitters are deleted, thereby failing to place them in its
array.

Having more slots in the array allows MG to track more keys without discarding
them, leading to more accurate estimates. Since MG is a deterministic data
structure that does not use hashing, unlike CMS and CS, it does not suffer from
hash collisions, and its estimations are within the error bounds with full
confidence. Using an array of~$w$ slots, MG estimates the frequency of any key
with an error of at most~$\frac{N}{w}$, i.e., $|\hat{f}(x)-f(x)| \leq
\frac{N}{w}$ for any key~$x$. Since it uses dedicated counters for the heavy
hitters, it also provides a strong ``\emph{Tail Guarantee},'' stating that the
error for any key is at most~$\min_{1 \leq i \leq w}
\frac{N^{\text{res}(i)}}{w-i+1}$, where $N^{\text{res}(i)}$ is the total
frequency of all keys not among the top~$i$. Note that~$N^{\text{res}(i)}$ is
much smaller than~$N$ when the data is skewed.

It is possible to modify MG to support unbiased estimation by probabilistically
replacing the array's least frequent key newly inserted
keys~\cite{UnbiasedSpaceSaving}. Other variations of MG achieve higher accuracy
in practice, but suffer from two-sided errors in exchange~\cite{SpaceSaving}.

Typical implementations of MG use a hash table to implement the array of keys,
yielding constant-time queries. They also optimize insertions by lazily
decrementing the counters and employing an auxiliary heap to locate and discard
keys with zero counts in~$O(\log w)$ time. Insertions can be sped up to
constant time using a specialized data structure similar to a linked list in
exchange for a higher memory footprint~\cite{SpaceSaving}.

\textbf{Accommodating Skew.}
The counts of the heavy hitters stored in MG may still exhibit skew among
themselves. This leads to memory wastage if the counter values are stored in
uniformly-sized counters, as in the case of CMS and CS. \sketchmg addresses
this issue by applying \counterencodingabbrv\ to the counters. Since
\counterencodingabbrv\ requires the counters to be tightly packed in memory
without any keys in-between them, \sketchmg stores the counter values in a
separate array that mirrors the ordering of the keys within the slots before
applying \counterencodingabbrv.

\textbf{Accommodating Unbounded Data Growth.}
As MG's estimation error is~$N/w$ (or $\min_{1 \leq i \leq w}
N^{\text{res}(i)}/(w-i+1)$ in the tail case), which mirrors that of CMS,
\sketchmg expands based on the growth of the total number of keys~$N$. Since MG
does not hash keys and can place them in any slot within the array, \sketchmg
expands on a slot-by-slot basis. This is in contrast to \sketchcms and
\sketchcs, which must double the size of their arrays to enable consistent
hashing. As such, \sketchmg smoothly increases its number of slots while
ensuring that it doubles every time~$N$ more than doubles. This approach
ensures that the number of counters in the array matches the size
function~$W(N)$ encoding the expansion policy at all times. Moreover, this
incremental approach always maintains at least as many counters as doubling the
number of counters in one go. Therefore, it allows for attaining errors that
grow at a slower rate{\textemdash}by a constant factor.

Because MG does not support deletions, \sketchmg does not need to support
contractions and can forgo storing a record of the sketch's state.

\textbf{Theoretical Analysis.} 
Following a similar argument as that for \sketchcms, we analyze the interaction
of the number of times the count of each is decremented in \sketchmg with the
size function to bound its estimation error:
\begin{restatable}{theorem}{thmmgnormal}
    \label{thm:mg-normal}
    When using a size function~$W(\cdot)$ on a stream with a total key count
    of~$N$, \sketchmg provides an estimation error~$E(N)$ satisfying
    $$
    E(N) \leq  
    \begin{cases}
        O(1) \cdot \epsilon N^{1-\alpha} & \text{if $W(N)=N^{\alpha}/\epsilon$, for $\alpha \in [0,1)$,} \\
        \ln N \cdot \epsilon & \text{if $W(N)=N/\epsilon$.}
    \end{cases}
    $$
    Compared to a fixed-size MG allocated upfront with an array of~$W(N)$
    counters with knowledge of~$N$, the error terms above are higher by at most
    a constant and a logarithmic factor.
\end{restatable}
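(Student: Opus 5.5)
The plan follows the classical Misra--Gries decrement-counting argument, adapted to a slot count that grows over time. In MG the error for any key $x$ is exactly the number of decrement rounds that affected $x$'s count. A key that is untracked when a round occurs loses, in effect, one arrival; a tracked key has its counter decremented by one. Either way, $0 \le f(x) - \hat{f}(x) \le D$, where $D$ is the total number of decrement rounds so far. So it suffices to bound $D$ as a function of $N$ under the growing slot count $w(t) = W(t)$, where $t$ is the number of insertions processed (and $N$ only grows, since \sketchmg has no deletions).

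The key step is an amortized potential argument. Let $\Phi$ be the sum of the stored counters. Each insertion raises $\Phi$ by at most one. A decrement round occurs only when all $w(t)$ slots are full, and it lowers $\Phi$ by exactly $w(t)$, with the triggering insertion itself discarded. Slot-by-slot expansions leave $\Phi$ unchanged. Since $\Phi \ge 0$, summing over all rounds gives
\[ \sum_{\text{rounds } r} w(t_r) \le N. \]
Because $W$ is nondecreasing, this bounds $D$. To make it concrete, split time into epochs $[2^{j}, 2^{j+1})$. Epoch $j$ receives at most $2^j$ insertions, and its slot count is at least $W(2^j)$. Its own arrivals can pay for at most $2^j / W(2^j)$ rounds in that epoch. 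Leftover potential from earlier epochs is still charged at the earlier, cheaper rates, so the global inequality together with monotonicity is what makes this charging valid. Hence
\[ D \le \sum_{j=0}^{\lfloor \log_2 N \rfloor} \frac{2^{j}}{W(2^{j})}, \]
up to a constant factor from rounding the epoch boundaries.

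The remaining work is to evaluate this sum in the two cases.
\begin{itemize}
\item For $W(N) = N^\alpha/\epsilon$, the sum becomes $\epsilon \sum_j 2^{j(1-\alpha)}$, a geometric series with ratio $2^{1-\alpha} > 1$. It is therefore $O(1) \cdot \epsilon N^{1-\alpha}$, with constant $2^{1-\alpha}/(2^{1-\alpha}-1)$.
\item For $W(N) = N/\epsilon$, each term equals $\epsilon$, so the sum is about $\epsilon \log_2 N$.
\end{itemize}
The claimed $\ln N$ needs the smoother incremental version of the argument. Since \sketchmg keeps exactly $W(t)$ slots at all times, we replace the epoch sum with the integral $\int_1^N \frac{dt}{W(t)}$, which equals $\epsilon \ln N$. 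This is the step needing care: the bound has to be written as $D \le \sum_t 1/W(t)$ rather than summed by epochs.

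The comparison is then immediate. A fixed MG with $W(N)$ slots sized for the known final $N$ has error $N/W(N)$. This equals $\epsilon N^{1-\alpha}$ in the sublinear case and $\epsilon$ in the linear case, so \sketchmg loses at most a constant factor in the first case and a $\ln N$ factor in the second.

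The main obstacle is justifying $D \le \sum_t 1/W(t)$. Rounds are not uniformly spread in time, and a burst of rounds late in the stream is paid for by potential accumulated earlier. I would handle this with a charging scheme. Each insertion at time $t$ deposits one unit of potential, and that unit can fund at most a $1/W(t)$ fraction of a later round, because that round costs $w(t_r) \ge W(t)$ units. Then $D = \sum_r 1 \le \sum_t 1/W(t)$ by monotonicity of $W$. The discarded triggering insertion only helps this accounting.
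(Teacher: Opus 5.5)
Your proposal is correct. It rests on the same two facts as the paper's \Cref{lemma:mg-w}: the stored mass never becomes negative, and the slot count never decreases. The packaging differs.

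The paper works epoch by epoch. It writes $C_i=C_{i-1}+(N_i-N_{i-1})-d_i\,(W(N_i)+1)$, solves for $d_i$, and uses summation by parts (with $C_j\ge 0$ and $W(N_{j+1})\ge W(N_j)$) to drop the $C_j$ terms. It then needs a different epoch partition for each case: one-slot steps for the linear case, and doublings of $W$ together with \Cref{claim:technical} for the sublinear case.

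Your charging, made precise by matching units to rounds in FIFO order, is the per-insertion version of that summation by parts. It yields the single bound $D\le\sum_{t\le N}1/W(t)$ for any nondecreasing schedule, so both cases reduce to evaluating one sum. It also applies verbatim to the actual slot-by-slot growth. The paper's lemma instead lets epoch $j$ use its end-of-epoch size $W(N_j)$ throughout. For the doubling epochs of the sublinear case this overstates the slot count by up to a factor of $2$, which the $O(1)$ absorbs. In return, the paper's per-epoch recurrence is the form it reuses, combined with a top-$k$ inequality, for the tail bound of \Cref{thm:mg-res}.

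There are two points to tighten.

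First, the end-of-stream inequality $\sum_r w(t_r)\le N$ does not by itself justify the charging. It is consistent with many rounds occurring early, while $w$ is still small. What you need, and what $\Phi\ge 0$ at all times gives, is the prefix form $\sum_{r:\,t_r\le T}w(t_r)\le T$ for every $T$. That is exactly what allows each unit to be matched only to a round at or after its arrival. The paper likewise uses $C_j\ge 0$ at every epoch boundary.

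Second, $\sum_{t=1}^{N}\epsilon/t$ is not bounded by $\int_1^N\epsilon\,dt/t=\epsilon\ln N$; the comparison only gives $\epsilon(1+\ln N)$. You can recover the stated form by charging each round $w(t_r)+1$ units, i.e.\ counting its discarded trigger, as the paper's denominators $W(N_j)+1$ do. This gives $D\le\sum_{t\le N}\epsilon/(t+\epsilon)\le\epsilon\ln(1+N/\epsilon)$, which is at most $\epsilon\ln N$ once $N(\epsilon-1)\ge\epsilon$. The paper's own final step $\epsilon(1+\ln i)\le\epsilon\ln N$ carries the same additive slack, so this is cosmetic.
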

Compared to the bounds from Theorems~\ref{thm:cms-w} and \ref{thm:cs-w}, the
logarithmic term in the linear case of~$W(N)$ is a natural logarithmic, which
is smaller by a constant factor. Note that setting~$\epsilon \leq 1$ in the
linear case trivially yields an error bound of~0, as \sketchmg will always grow
enough to store the exact counts of the keys without ever decrementing
counters. Interestingly, setting~$\epsilon$ to be even slightly larger than~1
creates the logarithmic overhead of \Cref{thm:mg-normal}.

In addition, MG's tail guarantee generalized to \sketchmg with the same
sublinearity property as that of \Cref{thm:mg-normal}:
\begin{restatable}{theorem}{thmmgres}
    \label{thm:mg-res}
    When using a size function~$W(\cdot)$ on a stream with a total key count
    of~$N$, \sketchmg provides an estimation error~$E(N)$ satisfying
    $$
    E(N) \leq  
    \begin{cases}
        O(1) \cdot \epsilon \parentheses{N^{\text{res}(k)}}^{1-\alpha} + O(k^{1/\alpha-1}) & \text{if $W(N)=N^{\alpha}/\epsilon$} \\
        \ln N^{\text{res}(k)} \cdot \epsilon + \log_2 k  & \text{if $W(N)=N/\epsilon$,}
    \end{cases}
    $$
    where~$N^{res(k)}$ is the total count of the keys not among the top~$k$.
\end{restatable}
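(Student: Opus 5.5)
The plan is to adapt the classical Misra-Gries tail-guarantee argument to a slot count that changes over time. Then the error is charged to epochs in the same geometric way as in the proof of \Cref{thm:mg-normal}, with the residual mass $N^{\text{res}(k)}$ playing the role that $N$ plays there.

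\textbf{Step 1: accounting identity.} For any key $y$, the error $f(y)-\hat{f}(y)$ is at most the number $D$ of decrement events. Consider the decrement event at time $t$, when the array has $w_t=W(N_t)$ slots, all full. That event removes $w_t+1$ units of mass from the stored total: one from each counter, plus the dropped arriving key. Slot-by-slot expansion only adds empty slots, so it never removes mass. Each of the top-$k$ keys keeps at least $f(y)-D$ in its counter, so the stored mass is at least $\sum_{y \in \text{top-}k}(f(y)-D)$. This gives
$$\sum_{t}(w_t+1-k)\;\le\; N^{\text{res}(k)}.$$
The key feature is that this bound holds event by event, with the current $w_t$ rather than a fixed $w$.

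\textbf{Step 2: split into early and late events.} Call an event early if it happens while $w_t< 2k$, that is, before $N$ reaches $N_k:=W^{-1}(2k)$. Otherwise call it late.
\begin{itemize}
\item Early events are bounded by applying \Cref{thm:mg-normal} to the prefix of length $N_k$. This gives $O(\epsilon N_k^{1-\alpha})=O(k^{1/\alpha-1})$ in the sublinear case. In the linear case, only the $\log_2 k$ doubling epochs below size $2k$ can contribute, each contributing $O(1)$ after normalization by $\epsilon$. This is the source of the additive $\log_2 k$ term.
\item For late events, $w_t+1-k\ge w_t/2$. To keep the early events' negative terms $(w_t+1-k)<0$ from contaminating the sum, I will rerun the accounting of Step 1 on the suffix that starts at $N_k$. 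The counters present at that time are treated as initial mass. Only the part of that mass belonging to non-top-$k$ keys is charged, and it is bounded by the residual mass that arrived before $N_k$. This yields $\sum_{\text{late}} w_t/2 \le N^{\text{res}(k)}$.
\end{itemize}

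\textbf{Step 3: converting to a bound on the number of late events.} Group the late events into epochs in which $W$ doubles. In an epoch with roughly $W_j$ slots, the number of decrements is at most $2R_j/W_j$, where $R_j$ is the residual mass charged to that epoch. Since $N\ge N^{\text{res}(k)}$, we have $W(N)\ge W(N^{\text{res}(k)})$. So, as in \Cref{thm:mg-normal}, the worst case is to place all residual mass at the smallest admissible sizes. The geometric sum then telescopes to $O(\epsilon (N^{\text{res}(k)})^{1-\alpha})$ in the sublinear case. In the linear case each epoch contributes $\epsilon$, and there are about $\ln N^{\text{res}(k)}$ epochs, because slot-by-slot growth yields the harmonic-sum form $\sum 1/N$.

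\textbf{Main obstacle.} The delicate step is the restart at $N_k$ in Step 2. The leftover counters at time $N_k$ may hold mass from top-$k$ keys, and that mass must not be charged against $N^{\text{res}(k)}$. My first attempt is to track only the mass of non-top-$k$ keys, bounding it by its arrival count. If that fails, I would fall back to bounding the extra loss by $k\cdot D_{\text{early}}$ and check whether the stated $O(k^{1/\alpha-1})$ term can absorb it.
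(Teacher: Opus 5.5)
Your argument is correct, and it reaches the bound by a somewhat different route than the paper.

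\textbf{How the two routes differ.} The paper proceeds in three steps:
\begin{itemize}
\item It proves a per-epoch charging lemma (\Cref{lemma:mg-res-w}), $E(N_i)\le\sum_j (N^{\text{res}(k)}_j-N^{\text{res}(k)}_{j-1})/(W(N_j)+1-k)$.
\item It handles the prefix before $k$ counters exist by citing \Cref{thm:mg-normal}, and then simply assumes the sketch starts with at least $k$ counters.
\item It bounds the sum by grouping residual increments into blocks of size $\epsilon$ (linear case), or by comparing with a hypothetical sketch that doubles on $N^{\text{res}(k)}$ via $\gamma(i)$ and \Cref{claim:ms-major} (sublinear case).
\end{itemize}
You instead use a single cumulative mass inequality and cut at size $2k$, so every late denominator is at least $w_t/2$. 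You also restart the accounting at $N_k$ explicitly: the carried-over top-$k$ mass cancels, and the non-top-$k$ mass is charged to residual arrivals before $N_k$. This does resolve your ``main obstacle''. Your final step (charge residual mass at the smallest admissible sizes, using $W(N)\ge W(N^{\text{res}(k)})$) is the same idea as the paper's $\gamma(i)$ comparison.

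What your version buys is rigor. The paper's per-epoch inequality $C_i-C_{i-1}\ge N^{(k)}_i-N^{(k)}_{i-1}-d_ik$ fails when non-top-$k$ counters shrink during an epoch; it holds only cumulatively, and Abel summation then recovers the lemma. Your $2k$ cut also sidesteps the small denominators $W(N_j)+1-k$, which the paper handles loosely. The paper's exact denominators, in turn, keep the leading constant $1$ in the linear case.

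\textbf{Two points to make explicit.}

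First, Step~3 needs a time-indexed fact that you do not state. The global bound $\sum_{\text{late}} w_t/2\le N^{\text{res}(k)}$ by itself only gives $D_{\text{late}}\le N^{\text{res}(k)}/k$, so ``smallest admissible sizes'' is undefined without more. The missing fact is that the charge accumulated up to any time $t$ is at most $N_t$, which follows immediately from $\sum_{s\le t}(w_s+1)=N_t-C_t$. Consequently, the event at which the cumulative charge reaches $x$ sees at least $\max(2k,W(x))$ slots. This gives
$D_{\text{late}}\le\int_0^{N^{\text{res}(k)}}dx/\max\bigl(k+1,\,W(x)+1-k\bigr)$,
which evaluates to the claimed bounds.

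Second, relaxing $w_t+1-k$ to $w_t/2$ loses a factor of $2$. In the linear case it yields $2\epsilon\ln N^{\text{res}(k)}$ rather than the stated $\epsilon\ln N^{\text{res}(k)}$. Keeping the exact denominators in that integral gives $2\epsilon+\epsilon\ln N^{\text{res}(k)}$ for $\epsilon\ge 1$, which matches the paper; for $\epsilon\le1$ no decrements occur at all.
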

The additive terms in the error bounds above come from the possibility of
\sketchmg not having $k$ counters from the onset. This forces it to ``wait
out'' the stream until the array grows enough to contain at least~$k$ counters,
introducing these additive error. These additive terms vanish if \sketchmg is
allocated~$k$ counters from the start.

\section{Upper bounds: \sketchcms, \sketchcs, and \sketchmg}\label{sec:ommitted_proofs_upper}
For simplicity, we will first discuss the analysis for the case in which we
have an insertion-only stream, and later discuss why our arguments carry over
when we also have deletions. In the following, we will also assume that, for
both CS and CMS, the number of arrays we employ is $d=1$. This does not change
the expected error (in the case of CMS) or the variance of the error (in the
case of CS), merely the confidence we have over the estimate.

We will do the analysis assuming a general size function~$W(\cdot)$ and then
discuss how the analysis proceeds for specific instantiantions of~$W(\cdot)$
(linear or sublinear regime). Note that we perform doubling each time the
function~$W(\cdot)$ doubles. In the case of CMS, $W(\cdot)$ is a function of~$N
= \sum f(x)$, the total number of keys in the stream. In the case of CS,
$W(\cdot)$ is a function of~$F = \sum (f(x))^2$, the
squared~$\ell_2${\nobreakdash-}norm of the stream. In either case, we can track
the moments \emph{in time} in which the value of the function doubles.
Formally, instead of thinking of~$W(\cdot)$ as a function of the total key
count~$N$ or the squared~$\ell_2${\nobreakdash-}norm of the stream, we think of
it as a function of time such that, at any point in time~$t$, $W_t$ is the
number of counters the FE sketch uses.

\medskip
\textbf{Division of Time into Epochs.}
Let~$T_0$ be the time at which the FE sketch is instantiated. This sketch
has~$W_0$ counters and is used as long as the length of the stream is at
most~$N_0$. We then define~$T_i$ for $i\geq 1$ to be the first moment in time
when the length of the array doubles, i.e., $W_i = 2\cdot W_{i-1}$. Note that,
in between times~$T_{i}$ and $T_{i+1}$, the length of the array stays the same
(i.e., it is equal to~$W_i$). We call this period of time \emph{Epoch~$i$} and
denote the FE sketch in~this~epoch~by~$\mathcal{S}_{i}$.

Formally, let~$(x,1)$ be an insertion that occured in epoch~$j$ before the
current epoch, i.e., epoch~$i$. During expansions, this update operation is
replicated on all the counters in~$\mathcal{S}_i$ that~$x$ could have hashed
into. Namely, if~$x$ hashed to an index~$\mathcal{J}_x$ in epoch~$j$ for one of
the arrays, then, during epoch~$i$, all of the $2^{j-i}$ counters
$\{0,1\}^{j-i} \circ \mathcal{J}_x$ ingest the insertion of~$x$, where $\circ$
denotes concatenation and $\{0,1\}^{j-i}$ denotes all binary strings of length
$j-i$. From the perspective of a counter in~$\mathcal{S}_i$, what this means is
that it ingests previous updates from elements that don't technically hash into
it. We describe this in more detail in what follows.

\medskip
\textbf{Copying Counters from Previous Epochs.}
Let~$\mathcal{I}$ be a counter in epoch~$i$. This counter ingests the following
operations:

\begin{itemize}
    \item Updates~$(x,1)$ within epoch $i$ for which~$h(x)=\mathcal{I}$. This
        event occurs with probability~$1/W_i$ since there are~$W_i$ counters in
        the array during that epoch. For ease of analysis, we
        define~$\mathcal{C}_{i,\mathcal{I}}(x)$ to be the indicator random
        variable that is~$1$ if this event occurs and $0$ otherwise. Then, we
        have that~$\Pr\left[ \mathcal{C}_{i,\mathcal{I}}(x) = 1 \right] = 1/W_i$.

    \item Updates~$(x,1)$ from epoch~$i-1$ that, during epoch~$i-1$, hashed
        into a counter that matches the lower-order~$i - 1$ bits of
        $\mathcal{I}$'s address in binary. This event occurs with
        probability~$2/W_i$. The factor~$2$ accounts for the counter
        in~$\mathcal{S}_{i-1}$ that was copied during the previous expansion.

    \item In general, updates~$(x,1)$ from epoch $j$ where $j < i$ and for
        which~$h(x)$ matches $\mathcal{I}$ in the lower-order~$j$ bits of its
        binary representation. This event occurs with
        probability~$2^{i-j}/W_i$. Similarly as before, we
        define~$\mathcal{C}_{j,\mathcal{I}}(x)$ to be the indicator random
        variable that is~$1$ if this event occurs and $0$ otherwise. Then,
        $\Pr\left[ \mathcal{C}_{j,\mathcal{I}}(x) = 1 \right] =  2^{i-j}/W_i$.
\end{itemize}

While updates from previous epochs have a higher chance of hashing
into~$\mathcal{I}$ than updates in epoch~$i$, we can nevertheless bound the
influence they have on the total error of the estimate returned by CMS, CS, and
MG.

\medskip
\textbf{Streams with deletions.}
In order to account for deletions, we relate the state of \sketch\ in a stream
with deletions to the state of a different \sketch\ with higher errors on an
insertion-only stream. To begin with, notice that, due to the way we perform
contractions, we can pretend that all previous epochs have only insertions in
them (the state of counters in \sketch\ is smaller than if we only had
insertions previously). We now argue about deletions~$(x,-1)$ that only occur
during the current epoch~$i$ and that have not triggered a contraction. If,
during the same epoch~$i$, there is also an insertion~$(x,1)$, then we can act
as if this insertion-deletion pair did not occur in the stream, i.e., the value
of the counter incurs no change during that epoch~$i$. Any extra deletions
during epoch~$i$ that are not cancelled out by insertions during the same epoch
can only decrement the value of the counter, and they do not negatively affect
the upper bounds we compute. In other words, the value of the counter in
epoch~$i$ is smaller than its value in a stream in which we remove these extra
deletions. Most crucially, by considering this insertion-only stream, we do not
miss any errors the counter accumulates from previous epochs. Namely, during
epoch~$i$, counter~$\mathcal{I}$ might still incur errors from previous
insertions of~$x$ from epochs $j<i$ that would not have hashed to it during
epoch~$i$ (i.e., the event~$\mathcal{C}_{j,\mathcal{I}}(x)$ still holds and we
cannot undo it since we do not know when the last insertion of~$x$ occured).

\subsection{Upper Bounds for \sketchcms}\label{sec:upper_bounds_cms}
In the case of \sketchcms, each counter gathers the sum of frequencies of keys
that hash into it or counters that copied into it. Recall that the total key
count~$N$ in a stream is the total sum of frequencies of that stream. We show
the following:
\begin{lemma}\label{lemma:cms-w}
    Letting~$E(N)$ denote the estimation error of \sketchcms when using a size
    function~$W(\cdot)$ and one array on a stream with a total of~$N$ keys, we
    have at the end of epoch~$i$ that
	$$ \expectation{E(N_{i+1})} \leq  \frac{1}{W_i} \cdot \sum_{j=0}^i  2^{i-j} \cdot (N_{j+1} - N_j), $$
    where~$N_{j}$ for $j\geq 0$ denotes the total number of keys in the stream
    at the start of epoch~$j$. 
\end{lemma}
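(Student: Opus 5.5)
Fix a query key $q$ and let $\mathcal{I}$ be the counter it maps to in $\mathcal{S}_i$ (we use $d=1$, as set up above). By the reduction already described for deletions, it suffices to treat an insertion-only stream: the counter value in the actual stream is dominated by its value in that stream, and it is never below $f(q)$. So the error is $E=\sum_{x\neq q}(\text{contributions of } x \text{ to } \mathcal{I})$, and it is nonnegative.

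First I would decompose this error by epoch. Let $f_j(x)$ be the number of insertions of $x$ that occur during epoch $j$. Under consistent hashing, an insertion of $x$ made in epoch $j$ is present in $\mathcal{I}$ exactly when the indicator $\mathcal{C}_{j,\mathcal{I}}(x)$ from the preceding paragraph equals $1$. This holds because expansions copy counters, so the update is replicated on every counter of $\mathcal{S}_i$ whose low-order address bits match $h(x)$ at the size used in epoch $j$. Hence, at the end of epoch $i$,
$$E(N_{i+1}) = \sum_{j=0}^{i}\sum_{x\neq q} f_j(x)\,\mathcal{C}_{j,\mathcal{I}}(x).$$

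Second, I would take expectations using linearity. The probability $\Pr[\mathcal{C}_{j,\mathcal{I}}(x)=1]$ is with respect to $\mathcal{I}=h(q)$ and needs care. Two keys collide in their low-order bits at the epoch-$j$ size with probability $1/W_j=2^{i-j}/W_i$, since $W_j=W_i/2^{i-j}$. Pairwise independence of $h$ gives exactly this probability, and it holds even though $\mathcal{I}$ is determined by $q$. Summing over $x$ gives $\sum_{x\neq q} f_j(x)\le \sum_x f_j(x)=N_{j+1}-N_j$, the number of insertions in epoch $j$. Combining the two steps yields
$$\expectation{E(N_{i+1})}\le \sum_{j=0}^{i}\frac{2^{i-j}}{W_i}(N_{j+1}-N_j),$$
which is the claim.

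The main obstacle is justifying the collision probability. The expansion copies counters rather than rehashing, so the event ``an epoch-$j$ update of $x$ lands in $\mathcal{I}$'' must be shown to coincide with a collision of $h(x)$ and $h(q)$ in the low-order $\log_2 W_j$ bits. Its probability is then $1/W_j$ provided that truncating a pairwise-independent hash to its low-order bits stays pairwise independent. I would argue this by the standard fact that the low-order bits of a uniform value are uniform, applied pairwise. A secondary point is the deletion reduction, which I would simply invoke as stated earlier. The replacement of the pre-expansion counts by the saved sketch state under contractions preserves this domination.
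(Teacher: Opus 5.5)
Your proposal is correct and follows essentially the same route as the paper: you write the error through the indicators $\mathcal{C}_{j,\mathcal{I}}(x)$ as $\sum_{j}\sum_x \mathcal{C}_{j,\mathcal{I}}(x) f_j(x)$, use $\Pr[\mathcal{C}_{j,\mathcal{I}}(x)=1]=2^{i-j}/W_i$, and apply linearity of expectation with $\sum_x f_j(x)=N_{j+1}-N_j$, after the same reduction to insertion-only streams. You are slightly more careful than the paper, which bounds $E(N_{i+1})$ by the whole counter $\mathcal{I}$ and applies the collision probability to every $x$, including the query key (for which $\mathcal{C}_{j,\mathcal{I}}(q)=1$ deterministically); you instead exclude $q$ and justify the $1/W_j$ collision probability via pairwise independence of the truncated hash.
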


\begin{proof}
    During epoch~$i$, we have at most~$N_{i+1}-N_{i}$ fresh insertions (i.e.,
    that have not been cancelled out by deletions). Let~$f_i(x)$ denote the
    increase, if any, in the frequency of~$x$ during epoch $i$. We observe that
    $\sum_x f_j(x) = N_{j+1}-N_j$ for all $j\geq 0$. This implies that, in our previous
    notation, $E(N_{i+1}) \leq \mathcal{I}$, which can be expressed as
	$$ E(N_{i+1}) \leq \mathcal{I} =\sum_{j=0}^i   \sum_x \mathcal{C}_{j,\mathcal{I}}(x) \cdot f_j(x). $$
	\noindent
    In other words, the counter~$\mathcal{I}$ gets the frequency~$f_j(x)$ added
    to it if and only if the event~$\mathcal{C}_{j,\mathcal{I}}(x)$ occurs. We
    also have that~$\expectation{\mathcal{C}_{j,\mathcal{I}}(x)} =
    2^{i-j}/W_i$. By linearity of expectation, we then get
	\begin{align*}
		\expectation{\mathcal{I}} &= \sum_{j=0}^i   \sum_x \frac{2^{i-j}}{W_i} \cdot f_j(x) = \frac{1}{W_i} \cdot  \sum_{j=0}^i   \sum_x 2^{i-j} \cdot f_j(x)\\
		&=	\frac{1}{W_i} \cdot  \sum_{j=0}^i   2^{i-j}  \sum_x f_j(x) = \frac{1}{W_i} \cdot  \sum_{j=0}^i  2^{i-j} \cdot  (N_{j+1} - N_j).
	\end{align*}
	The claim follows.
\end{proof}

In other words, the increase in error we get due to copying the counters is
dominated by the term~$\sum_{j=0}^i  2^{i-j} \cdot (N_{j+1} - N_j)$. This is
where the specific size function~$W(\cdot)$ we employ has an effect, as it
translates into bounds on~$N_j$ and their relative sizes. We now consider
various special cases and discuss their respective behavior:
\thmcms*
\begin{proof} ‌\\
    \textbf{The Linear Case of~$W(N)=N/\epsilon$.}
    In this case, the size function doubles whenever the number of elements
    doubles, i.e., $N_{j+1} = 2 \cdot N_j$ for all $j\geq 0$ and, in general,
    $N_j = 2^j \cdot N_0$ for all $j\geq 0$. The additional error term then
    becomes:
	\begin{align*}
		\sum_{j=0}^i  2^{i-j} \cdot  (N_{j+1} - N_j) & =	\sum_{j=0}^i  2^{i-j} \cdot  (2^{j+1}\cdot N_0 - 2^{j}\cdot N_0)\\
		&=  \sum_{j=0}^i  2^{i-j} \cdot  2^j \cdot N_0 = (i+1) \cdot 2^i \cdot N_0\\
		&= (i+1) \cdot N_i\;.
	\end{align*}	
    Plugging in this bound in the formula from \Cref{lemma:cms-w} (i.e.,
    dividing by~$W(N_i)=N_i/\epsilon$), we get the main claim for
    $W(N)=N/\epsilon$.

	\medskip\noindent
    \textbf{The Sublinear Case of~$W(N)=N^{\alpha}/\epsilon$.}
    In this case, the size function doubles more slowly, i.e.,
    whenever~$N_{i+1} = 2^{1/\alpha} \cdot N_i$. Since~$0 \leq \alpha < 1$,
    $N_{i+1}$ is bigger than~$2\cdot N_i$. This difference is why we do not
    have the extra dependency on~$i$ in the error term. Formally, we have
    that~$N_i = 2^{i/\alpha} \cdot N_0$ and thus

	\begin{align*}
        \sum_{j=0}^i  2^{i-j} \cdot  (N_{j+1} - & N_j) = \sum_{j=0}^i  2^{i-j} \cdot  (2^{(j+1)/\alpha}\cdot N_0 - 2^{j/\alpha}\cdot N_0)\\
		&=  \sum_{j=0}^i  2^{i-j} \cdot  2^{j/\alpha} \cdot N_0 \cdot (2^{1/\alpha} -1) \\
		&= 2^i \cdot N_0 \cdot (2^{1/\alpha} -1) \cdot   \sum_{j=0}^i   2^{j/\alpha- j} \\
        &= 2^{(1-1/\alpha) \cdot (i+1) - 1} \cdot N_{i+1} \cdot (2^{1/\alpha} -1) \cdot \sum_{j=0}^i   2^{j/\alpha- j},
        %&= ,
	\end{align*}	
    where in the last step we have used~$N_0 = 2^{(-1/\alpha)\cdot(i+1)} \cdot
    N_{i+1}$. Simplifying the sum of the geometric series on the right
    as~$\frac{2^{(1/\alpha-1)\cdot(i+1)}-1}{2^{1/\alpha-1}-1}$ then yields
	\begin{align*}
        \sum_{j=0}^i  2^{i-j} & \cdot (N_{j+1} - N_j) \\
        &= 2^{(1-1/\alpha) \cdot (i+1) - 1} \cdot N_{i+1} \cdot (2^{1/\alpha} -1) \cdot \frac{2^{(1/\alpha-1) \cdot (i+1)} - 1}{2^{1/\alpha-1} - 1} \\
        &= N_{i+1} \cdot (2^{1/\alpha} -1) \cdot \frac{2^{-1} - 2^{(1-1/\alpha) \cdot (i+1) - 1}}{2^{1/\alpha-1} - 1} \\
        &= N_{i+1} \cdot (2^{1/\alpha} -1) \cdot \frac{1 - 2^{(1-1/\alpha) \cdot (i+1)}}{2^{1/\alpha} - 2} \\
        &= N_{i+1} \cdot (2^{1/\alpha} -1) \cdot \frac{1 - 2^{(\alpha-1) \cdot \log_2 (N_{i+1}/N_0)}}{2^{1/\alpha} - 2} \\
        &\leq N_{i+1} \cdot \frac{2^{1/\alpha} -1}{2^{1/\alpha} - 2}.
	\end{align*}	
    In the penultimate line, we have used the property that~$i+1$ is equal to
    $\log_{2^{1/\alpha}} (N_{i+1}/N_0) = \log_{2^{1/\alpha}} 2 \cdot \log_2
    (N_{i+1}/N_0) = \alpha \cdot \log_2 (N_{i+1}/N_0)$. Moreover, in the last
    line, we have removed the negative exponential to derive the final upper
    bound. The claim follows by plugging this bound in the expected error from
    \Cref{lemma:cms-w}.
\end{proof}

\subsection{Upper Bounds for \sketchcs}\label{sec:upper_bounds_cs}
For \sketchcs, we obtain a similar bound to that of~\Cref{lemma:cms-w}, except
as a function of the squared~$\ell_2${\nobreakdash-}norm of the stream:
\begin{lemma}
    \label{lemma:cs-w} 
    Like CS, \sketchcms provides unbiased estimates. Moreover, letting~$V(F)$
    denote its estimation variance when using a size function~$W(\cdot)$ and
    one array on a stream with squared~$\ell_2${\nobreakdash-}norm $F = \sum_y
    (f(y))^2$, we have at the end of epoch~$i$ that
    $$ V(F_{i+1}) \leq \frac{1}{W_i} \cdot \sum_{j=0}^i 2^{i-j} \cdot (F_{j+1} - F_j), $$
    where~$F_j$ denotes the squared~$\ell_2${\nobreakdash-}norm of the stream
    at the start of epoch~$j$. 
\end{lemma}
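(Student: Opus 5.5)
We mirror the proof of \Cref{lemma:cms-w}, working with $d=1$ and an insertion-only stream; deletions are handled by the same reduction described above. Fix a query key $q$ and the counter $\mathcal{I}=h(q)$ it maps to in the current sketch $\mathcal{S}_i$. The estimate is $\sigma(q)\cdot\mathcal{I}$. Because every update $(x,1)$ from epoch $j$ is added, with its fixed sign $\sigma(x)$, to all copies of the counter $x$ hashed to, we can write
$$\mathcal{I}=\sum_{j=0}^{i}\sum_{x}\mathcal{C}_{j,\mathcal{I}}(x)\,\sigma(x)\,f_j(x),$$
where $f_j(x)$ is the net increase of $x$'s count during epoch $j$. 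For $x=q$ every indicator equals $1$, since consistent hashing maps $q$ in $\mathcal{S}_i$ to a copy of its earlier counters. Hence $\sigma(q)\mathcal{I}=f(q)+\sum_{x\neq q}\sum_j \mathcal{C}_{j,\mathcal{I}}(x)\,\sigma(q)\sigma(x)\,f_j(x)$.

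\textbf{Unbiasedness.} Since $\sigma$ is pairwise independent and independent of $h$, $\mathbb{E}[\sigma(q)\sigma(x)]=0$ for $x\neq q$, even after conditioning on the hash locations. So every cross term has mean zero and $\mathbb{E}[\sigma(q)\mathcal{I}]=f(q)$.

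\textbf{Variance.} The error is $Z=\sum_{x\neq q}\sigma(q)\sigma(x)\,g(x)$, where $g(x)=\sum_j \mathcal{C}_{j,\mathcal{I}}(x)f_j(x)$ is the total contribution of key $x$ to $\mathcal{I}$ across epochs. Expanding $\mathbb{E}[Z^2]$, the cross terms for $x\neq y$ vanish because $\mathbb{E}[\sigma(x)\sigma(y)]=0$; this uses that the signs are independent of the indicators and that each key has a single sign across epochs. We are left with $\mathbb{E}[Z^2]=\sum_{x\neq q}\mathbb{E}[g(x)^2]$.

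The main obstacle is that $g(x)^2$ contains products $f_j(x)f_{j'}(x)$ across epochs. The indicators are nested: if $x$ matches $\mathcal{I}$ in the low $j'$ bits it also matches in the low $j$ bits for $j<j'$, so $\mathcal{C}_{j,\mathcal{I}}\mathcal{C}_{j',\mathcal{I}}=\mathcal{C}_{j',\mathcal{I}}$ with probability $2^{i-j'}/W_i$. To handle this, write $f(x)^{(j)}=\sum_{k\le j}f_k(x)$ for the cumulative count at the end of epoch $j$, with $f(x)^{(-1)}=0$. Summing over $j'$ with the nested indicators, and then by parts, gives
$$\mathbb{E}[g(x)^2]=\sum_{j}\Pr[\mathcal{C}_{j,\mathcal{I}}(x)]\bigl((f(x)^{(j)})^2-(f(x)^{(j-1)})^2\bigr)$$
up to a reindexing that should be checked carefully. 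In words, the squared contribution telescopes into increments of squared cumulative counts, each weighted by $2^{i-j}/W_i$.

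Summing over $x$ turns $\sum_x\bigl((f(x)^{(j)})^2-(f(x)^{(j-1)})^2\bigr)$ into $F_{j+1}-F_j$, which yields $V(F_{i+1})\le \frac{1}{W_i}\sum_{j=0}^{i}2^{i-j}(F_{j+1}-F_j)$. If the exact telescoping fails on the boundary terms, the fallback is to bound $\mathbb{E}[g(x)^2]$ directly by $\sum_j\frac{2^{i-j}}{W_i}\cdot 2 f_j(x)f(x)^{(j)}$, which is at least $(f(x)^{(j)})^2-(f(x)^{(j-1)})^2$ termwise, and compare.
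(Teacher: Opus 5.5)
Your proposal is correct and follows the paper's route. You write the counter as $\sum_x\sum_j \mathcal{C}_{j,\mathcal{I}}(x)\sigma(x)f_j(x)$, remove cross terms via $\mathbb{E}[\sigma(x)\sigma(y)]=0$, use the nesting $\mathcal{C}_{j,\mathcal{I}}\mathcal{C}_{k,\mathcal{I}}=\mathcal{C}_{k,\mathcal{I}}$ for $j<k$, and telescope into increments of squared cumulative counts; the paper computes $\sum_x\mathbb{E}[\mathcal{C}(x)^2]-f(z)^2$ instead of excluding $q$ up front, but that is the same quantity. The telescoping is exact, since $f_j(x)^2+2f_j(x)f(x)^{(j-1)}=(f(x)^{(j)})^2-(f(x)^{(j-1)})^2$ with $f(x)^{(-1)}=0$, so you need neither the reindexing check nor the fallback (which would only give the bound up to a factor of $2$).
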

\begin{proof}
    We are interested in describing the behavior of the random
    variable~$\hat{f}(z) = \mathcal{I} \cdot \sigma(z)$ for any query~$z$,
    where~$\sigma(z)$ is $z$'s update direction and~$\mathcal{I}$ is the
    counter that~$z$ hashes into. Using the notation we introduced in
    \Cref{sec:upper_bounds_cms}, this counter~$\mathcal{I}$ can be expressed
    as:
    $$ \mathcal{I} = \sum_{j=0}^i \sum_x \mathcal{C}_{j,\mathcal{I}}(x) \cdot f_j(x) \cdot \sigma(x) = \sum_x \mathcal{C}(x), $$
    where~$\mathcal{C}(x) = \sum_{j=0}^i \mathcal{C}_{j,\mathcal{I}}(x) \cdot
    f_j(x) \cdot \sigma(x)$ denotes the contribution of~$x$'s updates to
    counter~$\mathcal{I}$ throughout the epochs.

    We apply a similar analysis to that of a traditional CS to compute the
    expected value of the estimate~$\hat{f}(z)$, showing that~$\mathcal{I}
    \cdot \sigma(z)$ is an unbiased estimator of~$f(z)$, i.e.,
    $\expectation{\mathcal{I} \cdot \sigma(z)} = f(z)$. In the following, we
    denote by~$f_j(x)$ the frequency of key~$x$ within epoch~$j$. By linearity
    of expectation, it holds that
    $$ \expectation{\hat{f}(z)} = \sum_{x \neq z} \expectation{\mathcal{C}(x) \cdot \sigma(z) } + \expectation{\mathcal{C}(z)\sigma(z)} = \sum_{j=0}^i f_j(z) = f(z),$$
    where the last step uses the property that~$\expectation{\sigma(x) \cdot
    \sigma(z) } = 0$ for any $x \neq z$, in conjunction
    with~$\mathcal{C}_{j,\mathcal{I}}(z)=1$ since $h(z)=\mathcal{I}$. This
    proves the first part of the lemma.

    We now turn to bound the variance of this estimator. In particular, we are
    interested in how much bigger it becomes compared to~$F_i/W_i$. Writing the
    definition of the variance, we get that
    \begin{align*}
        \variance{\hat{f}(z)} &= \expectation{\parentheses{\sum_x \mathcal{C}(x) \sigma(z)}^2} - (f(z))^2 \\
        & = \sum_x \expectation{\parentheses{\mathcal{C}(x)}^2} + \sum_x \sum_{y \neq x}\expectation{ \mathcal{C}(x) \mathcal{C}(y)} - (f(z))^2 \\
        & = \sum_x \expectation{\parentheses{\mathcal{C}(x)}^2} - (f(z))^2.
    \end{align*}
    In the third line we have used the property that~$\expectation{\sigma(x)
    \cdot \sigma(y)}=0$ for any $x\neq y$ to
    simplify~$\expectation{\mathcal{C}(x) \mathcal{C}(y)}$ to 0. To
    bound~$\expectation{\parentheses{\mathcal{C}(x)}^2}$, recall that we
    defined~$\mathcal{C}(x) = \sum_{j=0}^i \mathcal{C}_{j,\mathcal{I}}(x) \cdot
    f_j(x) \cdot \sigma(x)$. Then:

    \begin{align*}
        \expectation{\parentheses{\mathcal{C}(x)}^2} &= \expectation{\parentheses{ \sum_{j=0}^i \mathcal{C}_{j,\mathcal{I}}(x) \cdot f_j(x) }^2}\\
    \end{align*}
    \begin{align}
        & = \sum_{j=0}^i \expectation{ \parentheses{ \mathcal{C}_{j,\mathcal{I}}(x) \cdot f_j(x) }^2} \notag \\
        & \;\; + 2\cdot \sum_{j=0}^{i-1}  \sum_{k=j+1}^i   \expectation{ \mathcal{C}_{j,\mathcal{I}}(x) \cdot f_j(x) \cdot  \mathcal{C}_{k,\mathcal{I}}(x) \cdot f_k(x) } \notag \\
        & = \sum_{j=0}^i (f_j(x))^2 \cdot \expectation{ \mathcal{C}_{j,\mathcal{I}}(x)} \notag \\
        & \;\; +  2\cdot \sum_{j=0}^{i-1}  \sum_{k=j+1}^i   f_j(x)  \cdot f_k(x) \cdot   \expectation{ \mathcal{C}_{j,\mathcal{I}}(x) \cdot   \mathcal{C}_{k,\mathcal{I}}(x) }. \label{eq:sketchcs_var}
    \end{align}
    Here, we have used the property~$(\mathcal{C}_{j,\mathcal{I}}(x))^2 =
    \mathcal{C}_{j,\mathcal{I}}(x)$ (which follows
    from~$\mathcal{C}_{j,\mathcal{I}}(x)$ being an indicator random variable)
    to simplify the second line. 

    To bound~$\expectation{ \mathcal{C}_{j,\mathcal{I}}(x)}$, we use the
    definition of the epochs to conclude that~$\expectation{ 
    \mathcal{C}_{j,\mathcal{I}}(x)}  \leq 2^{i-j}/W_i$. For $\expectation{ 
    \mathcal{C}_{j,\mathcal{I}}(x) \cdot \mathcal{C}_{k,\mathcal{I}}(x) }$,
    we note that the product of the two indicator random variables is $1$ only
    when both of them are $1$. Yet, the two random variables are not
    independent: if $\mathcal{C}_{k,\mathcal{I}}(x) =1$, then $
    \mathcal{C}_{j,\mathcal{I}}(x)=1$ for any $k\geq j$. Namely, if during
    epoch $k$, $x$ hashed into a counter whose address was a prefix
    of~$\mathcal{I}$'s address, then the same would have happened during
    epoch~$j \leq k$. This implies that~$\mathcal{C}_{j,\mathcal{I}}(x) \cdot
    \mathcal{C}_{k,\mathcal{I}}(x) = \mathcal{C}_{k,\mathcal{I}}(x)$, thereby
    simplifying the sum to
    \begin{align*}
        \sum_{j=0}^{i-1}  \sum_{k=j+1}^i   f_j(x)  \cdot f_k(x) \cdot & \expectation{ \mathcal{C}_{j,\mathcal{I}}(x) \cdot \mathcal{C}_{k,\mathcal{I}}(x) }  \\ 
        & = \sum_{k=1}^i \sum_{j=0}^{k-1}  f_j(x)  \cdot f_k(x) \cdot   \expectation{ \mathcal{C}_{k,\mathcal{I}}(x) }\\
        & = \sum_{k=1}^i   f_k(x) \cdot   \expectation{ \mathcal{C}_{k,\mathcal{I}}(x) } \cdot \sum_{j=0}^{k-1}  f_j(x)\\
        & = \sum_{k=1}^i   f_k(x) \cdot   \expectation{ \mathcal{C}_{k,\mathcal{I}}(x) } \cdot f_{\leq k-1}(x).
    \end{align*}
    Here, for simplicity, we have denoted~$\sum_{j=0}^{i} f_i(x)$, i.e., the
    sum of frequencies accumulated by~$x$ through epochs~0 to $i$, as $f_{\leq
    {i}}(x)$. Note that we set~$f_{\leq -1}(x)=0$. Plugging the above into
    \Cref{eq:sketchcs_var} yields
    \begin{align*}
        \expectation{\parentheses{\mathcal{C}(x)}^2} &= \sum_{j=0}^i \parentheses{(f_j(x))^2  + 2 f_j(x) \cdot f_{\leq j-1}(x)} \cdot \expectation{ \mathcal{C}_{j,\mathcal{I}}(x)}  \\
        & \mkern-18mu = \sum_{j=0}^i \parentheses{(f_{\leq j}(x) + f_{\leq j-1}(x))^2 - (f_{\leq j-1}(x))^2} \cdot \expectation{ \mathcal{C}_{j,\mathcal{I}}(x)}. \\
        & \mkern-18mu = \sum_{j=0}^i \parentheses{(f_{\leq j}(x))^2  - (f_{\leq j-1}(x))^2} \cdot \expectation{ \mathcal{C}_{j,\mathcal{I}}(x)}.
    \end{align*}
    While the term~${f_{\leq j}(x) }^2  - {f_{\leq j-1}(x) }^2$ might seem
    curious, we note that~$\sum_x{f_{\leq j}(x) }^2  - {f_{\leq j-1}(x) }^2 =
    F_{j+1} - F_j$ denotes the increase in the
    squared~$\ell_2${\nobreakdash-}norm of the stream vector as we move from
    epoch~$j$ to epoch $j+1$. 

    Putting everything together, we get that
    \begin{align*}
        \variance{\hat{f}(z)} &= \sum_x \expectation{\parentheses{\mathcal{C}(x)}^2} - (f(z))^2
    \end{align*}
    \begin{align*}
        & = \sum_x \sum_{j=0}^i \parentheses{(f_{\leq j}(x))^2  - (f_{\leq j-1}(x))^2} \cdot \expectation{ \mathcal{C}_{j,\mathcal{I}}(x)} - (f(z))^2 \\
        & \leq \sum_{j=0}^{i} \parentheses{F_{j+1} - F_j} \cdot \frac{2^{i-j}}{W_i},
    \end{align*}
    proving the claim.
\end{proof}

At this point, the reasoning behind defining the size function to based on the
squared~$\ell_2${\nobreakdash-}norm of the stream becomes more clear. We now
apply \Cref{lemma:cs-w} to derive bounds on the variance depending on the size
function:
\thmcs*
We note that the calculations for proving \Cref{thm:cs-w} are very similar to
those carried out for \sketchcms in \Cref{thm:cms-w}, with the difference that
we now see a dependency on the squared~$\ell_2${\nobreakdash-}norm~$F$ rather
than the total key count~$N$. We include these derivations for~completeness:
\begin{proof}
    We start with the original FE sketch~$\mathcal{S}_0$, which is used to
    process the stream until its squared~$\ell_2${\nobreakdash-}norm reaches at
    most some~$F_0$.

    \textbf{The Linear Case of~$W(F)=F/\epsilon$.}
    In this case, the size function doubles whenever the
    squared~$\ell_2${\nobreakdash-}norm doubles, i.e., $F_{j+1} = 2
    \cdot F_j$ for all $j\geq 0$ and, in general, $F_j = 2^j \cdot F_0$ for all
    $j\geq 0$. We therefore get that
    \begin{align*}
        V(F) &\leq  \sum_{j=0}^{i}  \parentheses{F_{j+1} - F_j} \cdot \frac{2^{i-j}}{W_i} \\
        & \leq \frac{1}{W_i} \cdot \parentheses{ F_0 \cdot 2^{i} + \sum_{j=1}^{i}  \parentheses{2^{j+1} - 2^j} \cdot F_0 \cdot 2^{i-j}} \\
        & \leq \frac{1}{W_i} \cdot  \parentheses{ F_0 \cdot 2^{i} +  \sum_{j=1}^{i}  2^j \cdot F_0 \cdot 2^{i-j}} \\
        & \leq \frac{1}{W_i} \cdot  \parentheses{ F_0 \cdot 2^{i} +  \sum_{j=1}^{i}  F_0 \cdot 2^i} \\
        & \leq (i+1) \cdot \frac{F_i}{W_i}.
    \end{align*}

\medskip\noindent
\textbf{The Sublinear Case of~$W(F)=F^{\alpha}/\epsilon$.}
    In this case, the size function doubles more slowly, i.e.,
    whenever~$F_{i+1} = 2^{1/\alpha} \cdot F_i$. Since~$0 \leq \alpha < 1$,
    $F_{i+1}$ is bigger than~$2F_i$. Plugging this into the statement
    of~\Cref{lemma:cs-w}, we have that~$F_i = 2^{i/(1-\alpha)} \cdot F_0$ and
    thus
    \begin{align*}
        \frac{1}{W_i} \cdot \sum_{j=0}^i  2^{i-j} \cdot & (F_{j+1} - F_j) \\
        & = \frac{1}{W_i} \cdot \sum_{j=0}^i  2^{i-j} \cdot \parentheses{2^{(j+1)/(1-\alpha)}\cdot F_0 - 2^{j/(1-\alpha)} \cdot F_0} \\
        & = \frac{1}{W_i} \cdot \sum_{j=0}^i  2^{i-j} \cdot 2^{j/(1-\alpha)} \cdot F_0 \cdot \parentheses{2^{1/(1-\alpha)}-1} \\
        & = \frac{1}{W_i} \cdot 2^i \cdot F_0 \cdot \parentheses{2^{1/(1-\alpha)}-1} \cdot \sum_{j=0}^i 2^{j/(1-\alpha) - j} \\
        & = O(F_{i+1}/W_i).
    \end{align*}	
\end{proof}

\subsection{Upper Bounds for \sketchmg}\label{sec:upper_bounds_mg_normal}
In the case of \sketchmg, we consider a slightly more general division of time
into epochs. We start with a sketch that uses~$W(N_0)$ counters until the
stream length reaches~$N_0$. At this point in time, the sketch sets the number
of counters to~$W(N_1)$. We leave the relationship between~$W(N_0)$ and
$W(N_1)$ undefined for now. The sketch uses~$W(N_1)$ counters until the stream
length reaches~$N_1$, and we call this period of time epoch~1. Once~$N_2$ is
reached, the sketch switches to~$W(N_2)$ counters (starting epoch~2) and so on.
We then prove the following lemma:

\begin{lemma}
    \label{lemma:mg-w}
    Letting~$E(N)$ denote the estimation error of \sketchmg when using a size
    function~$W(\cdot)$ on a stream with a total of~$N$ keys, we have at the
    end of epoch $i$ that
    $$ E(N_i) \leq \sum_{j=0}^i \frac{N_j - N_{j-1}}{W(N_j)+1},$$
    where~$N_{j}$ for $j\geq 0$ denotes the total number of keys in the stream
    at the end of epoch~$j$.
\end{lemma}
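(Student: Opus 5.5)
The plan is to prove \Cref{lemma:mg-w} by induction on the epoch index $i$, tracking a quantity that bounds the error of every key. In classic MG, the error of any key is at most the number of decrement rounds: each round lowers every tracked counter by one, and it is charged against one untracked arrival. Since MG never overestimates, the estimation error equals $f(x)-\hat f(x)$. This is at most the number of decrement rounds $D$ that have taken place, because
\begin{itemize}
\item each round can remove at most one unit from $x$'s true contribution (either by decrementing its counter, or by discarding an arrival of $x$ while $x$ is untracked);
\item increments of $x$ are otherwise recorded exactly.
\end{itemize}
\sketchmg never discards information when it expands, since added slots start empty and existing pairs are kept. So it suffices to bound the total number $D_i$ of decrement rounds up to the end of epoch~$i$ by $\sum_{j\le i}(N_j-N_{j-1})/(W(N_j)+1)$, with the convention $N_{-1}=0$.

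First I will bound the decrement rounds inside a single epoch~$j$. Throughout that epoch the sketch has $W(N_j)$ slots; with incremental slot growth, the slot count is at least $W(N_j)$ for the part of the epoch that matters. The standard MG counting argument then applies. A decrement round triggered by an untracked arrival discards that arrival's unit and subtracts one from each of the $W(N_j)$ full counters. It therefore removes $W(N_j)+1$ units from the ``mass'' $\sum_x f(x)-\sum_{\text{tracked}}\hat f$.

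The subtle point is that counters carried over from earlier epochs also hold mass that can be destroyed during epoch~$j$. I will handle this with a potential argument. Define $\Phi$ as the sum of the stored counter values plus the total mass destroyed so far. $\Phi$ increases by exactly one per arrival, so $\Phi\le N_j$ at the end of epoch~$j$. Each decrement round in epoch~$j$ destroys $W(N_j)+1$ units.

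The danger is that rounds in epoch~$j$ could consume mass that arrived in earlier epochs, so charging them against $N_j-N_{j-1}$ alone is not immediate. To avoid this, I will charge each round to the arrivals themselves. Every unit destroyed in epoch~$j$ is either the triggering arrival of that epoch or a unit sitting in a counter. I will argue that the counters at the start of epoch~$j$ hold at most $N_{j-1}-(\text{mass destroyed before})$, and bound the cumulative total instead:
\begin{itemize}
\item cumulative destroyed mass is at most $N_i$;
\item $\sum_j (W(N_j)+1)\,d_j\le N_i$, where $d_j$ is the number of rounds in epoch~$j$;
\item in addition, each round in epoch~$j$ consumes one fresh arrival from epoch~$j$, so $d_j\le N_j-N_{j-1}$.
\end{itemize}

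The main obstacle is turning these into the per-epoch bound $d_j\le (N_j-N_{j-1})/(W(N_j)+1)$. That per-epoch form is false in general, because leftover mass from previous epochs can fuel later rounds. I expect the fix is an exchange argument on the non-decreasing sequence $W(N_j)$. Maximizing $\sum_j d_j$ subject to the prefix constraints $\sum_{j\le k}(W(N_j)+1)\,d_j\le N_k$ for every $k$ is a greedy problem. Since later epochs have larger cost per round, the maximum is attained by spending each epoch's fresh mass $N_j-N_{j-1}$ within that same epoch. Summing gives $D_i\le\sum_{j\le i}(N_j-N_{j-1})/(W(N_j)+1)$, and hence $E(N_i)\le D_i$ yields the lemma.
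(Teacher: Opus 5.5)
Your argument is correct and is essentially the paper's proof. Your prefix constraints $\sum_{j\le k}(W(N_j)+1)\,d_j\le N_k$ are the paper's per-epoch identity $C_k=C_{k-1}+N_k-N_{k-1}-d_k\,(W(N_k)+1)$ summed over epochs and combined with $C_k\ge 0$, and your greedy/exchange step is the paper's summation by parts: with $Y_k=\sum_{j\le k}(W(N_j)+1)\,d_j\le N_k$ one has $\sum_{j\le i}d_j=\sum_{j=0}^{i-1}Y_j\bigl(\frac{1}{W(N_j)+1}-\frac{1}{W(N_{j+1})+1}\bigr)+\frac{Y_i}{W(N_i)+1}$, whose coefficients are non-negative because $W$ is non-decreasing, so replacing each $Y_j$ by $N_j$ proves the bound in one line.
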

\begin{proof}
    We begin by first defining some notation. Let~$C_i$ denote the sum of the
    counter stored in the sketch at the end of epoch~$i$ (with $C_{-1}=0$).
    Note that $C_i \geq 0$, since we only decrement counters when they are all
    nonzero. We also denote by~$d_i$ the total number of keys that decrement
    the sketch's counts during epoch~$i$. By design, the following relationship
    holds for each~$i\geq 1$:
    $$ C_i = C_{i-1} + N_i - N_{i-1} - d_i \cdot (W(N_i)+1). $$

    To see why this is true, consider the case in which~$i=1$. At the beginning
    of the first epoch, the counts stored in the sketch add up to~$C_0$. During
    the epoch, we see~$N_1-N_0$ fresh insertions, each of which either
    increases the total count of the sketch by~1, or decrements~$W(N_1)$
    counts. The number of keys among these~$N_1-N_0$ that make such decrements
    is exactly~$d_1$. In other words, there are~$N_1-N_0-d_1$ keys that
    increase the sketch counts by~1, and $d_1$ keys that decrease the sketch
    count by~$W(N_1)$. The total changes are therefore~$N_1-N_0-d_1 - d_1 \cdot
    W(N_1)$, which is how we get the first expression. It follows from the
    general expression that 
    $$ d_i = \frac{N_i - N_{i-1} + C_i - C_{i-1}}{W(N_i)+1}. $$

    Now, for any specific key~$x$, the maximum error that we can make in its
    count is if we decrement its count every time we a key decrements the
    sketch's counts (i.e., this worst-case is achieved when the sketch stores
    all occurences of~$x$ from the beginning and then keeps decrementing one
    from its count every time it seems an unstored key). Formally, if we are at
    the end of epoch~$i$, then the error is bounded as
    $$ E(N_i) \leq f(x) - \hat{f}(x) \leq \sum_{j=0}^i d_j.$$
    \noindent
    Plugging in our expression for~$d_j$ and using $W_j=W(N_j)$ for ease of
    notation yields 
    \begin{align*}
        \sum_{j=0}^i d_j & = \sum_{j=0}^i  \frac{N_j - N_{j-1} + C_{j-1}-C_j}{W_j+1} \\
        &= \sum_{j=0}^i \frac{N_j - N_{j-1}}{W_j+1} + \sum_{j=0}^i \frac{C_{j-1}-C_j}{W_j+1} \\
        &=\sum_{j=0}^i \frac{N_j - N_{j-1}}{W_j+1} + \sum_{j=0}^{i-1} C_j \cdot \parentheses{-\frac{1}{W_j+1}+\frac{1}{W_{j+1}+1}} - \frac{C_i}{W_j+1}\\
        &=\sum_{j=0}^i \frac{N_j - N_{j-1}}{W_j+1} - \sum_{j=0}^{i-1} C_j \cdot \frac{W_{j+1} - W_j}{(W_j+1)(W_{j+1}+1)} - \frac{C_i}{W_i+1}\\
        &\leq \sum_{j=0}^i  \frac{N_j - N_{j-1}}{W_j+1}  \;,
    \end{align*}
    where the last inequality holds because~$W_{j+1}>W_j$, so we are ignoring
    negative factors in the sum.
\end{proof}

In addition to~\Cref{lemma:mg-w}, we require the following technical claim to
prove \Cref{thm:mg-normal}:
\begin{claim}
    \label{claim:technical}
    For any constant~$\alpha<1$, it holds that 
    $$ \sum_{j=0}^i \frac{2^{j/\alpha} - 2^{(j-1)/\alpha}}{2^{j}} = O(1) \cdot \frac{2^{i/\alpha}}{2^i}. $$
\end{claim}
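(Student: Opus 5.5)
We plan to simplify the summand, recognize what remains as a geometric series with ratio larger than one, and bound that series by a constant multiple of its last term. Since $\alpha<1$, we may assume $\alpha\in(0,1)$; the case $\alpha=0$ is degenerate, because the exponents $j/\alpha$ are undefined.

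\textbf{Step 1 (simplify the summand).} Factor the numerator as $2^{j/\alpha}-2^{(j-1)/\alpha} = 2^{j/\alpha}\,(1-2^{-1/\alpha})$. Then
$$\sum_{j=0}^i \frac{2^{j/\alpha} - 2^{(j-1)/\alpha}}{2^{j}} = (1-2^{-1/\alpha}) \sum_{j=0}^i r^j, \qquad r := 2^{1/\alpha-1}.$$
The prefactor $1-2^{-1/\alpha}$ is a constant in $(0,1)$, so we can drop it at the cost of a constant.

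\textbf{Step 2 (bound the geometric sum).} Because $\alpha\in(0,1)$, we have $1/\alpha-1>0$, so $r>1$. Summing the geometric series gives
$$\sum_{j=0}^i r^j = \frac{r^{i+1}-1}{r-1} \le \frac{r}{r-1}\, r^i.$$
The right-hand side equals $\frac{r}{r-1}\cdot 2^{i/\alpha}/2^i$. Combining this with Step 1 yields the claim, with the explicit constant $(1-2^{-1/\alpha})\,\frac{2^{1/\alpha-1}}{2^{1/\alpha-1}-1}$. This constant depends only on $\alpha$, so it is $O(1)$ for fixed $\alpha$.

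\textbf{Main obstacle.} The only delicate point is the behaviour of this constant. It depends only on $\alpha$, so it is $O(1)$ for fixed $\alpha$. However, it blows up like $1/(r-1)\approx 1/((1/\alpha-1)\ln 2)$ as $\alpha\to1$. This matches the logarithmic loss in the linear case of \Cref{thm:mg-normal}. We will therefore state explicitly that the constant in the $O(1)$ is allowed to depend on $\alpha$, which is consistent with how the claim is used when $W(N)=N^\alpha/\epsilon$ is plugged into \Cref{lemma:mg-w}.
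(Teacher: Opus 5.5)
Your proposal is correct and is essentially the paper's argument: the paper also factors out $(1-2^{-1/\alpha})$, then pulls the last term $2^{i/\alpha}/2^i$ out of every summand. What remains is the reversed geometric series $\sum_{j=0}^i 2^{-(i-j)(1/\alpha-1)}$, whose ratio is $1/r<1$ and which is therefore bounded by $r/(r-1)$, the same constant you obtain from the forward series. Your restriction to $\alpha\in(0,1)$ and your remark that the constant blows up as $\alpha\to 1$ are accurate, and they tighten the paper's slightly loose justification (``since $1/\alpha\ge 1$''), which would fail at $\alpha=1$.
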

\begin{proof} 
    We note that
    \begin{align*}
        \frac{2^{j/\alpha} - 2^{(j-1)/\alpha} } {2^j} &= (1-2^{-1/\alpha}) \cdot \frac{2^{j/\alpha} } {2^j} \\
        & = (1-2^{-1/\alpha}) \cdot  \frac{2^{i/\alpha}}{2^i } \cdot \frac{2^{(j-i)/\alpha}}{2^{j-i}} \\
        & =  (1-2^{-1/\alpha}) \cdot \frac{1}{2^{(i-j)\cdot (1/\alpha-1)}} \cdot \frac{2^{i/\alpha}}{2^i }.
    \end{align*}
    Summing these terms yields
    \begin{align*}
        \sum_{j=0}^i \frac{2^{j/\alpha} - 2^{(j-1)/\alpha} } {2^j} & \leq (1-2^{-1/\alpha}) \cdot \frac{2^{i/\alpha}}{2^i} \cdot \sum_{j=0}^i \frac{1}{2^{(i-j)\cdot (1/\alpha-1)}} \\
        &= O(1) \cdot \frac{2^{i/\alpha}}{2^i},
    \end{align*}
    since~$1/\alpha \geq 1$ and the sum~$\sum_{j=0}^i \frac{1}{2^{(i-j) \cdot
    (1/\alpha-1)}}$ converges to a constant.
\end{proof}

\noindent
We are now ready to prove \Cref{thm:mg-normal}:
\thmmgnormal*
\begin{proof} ‌\\
    \textbf{The Linear Case of~$W(N)=N/\epsilon$.} 
    In this case, we use the fact that in \sketchmg we increase the counters
    one by one as opposed to doubling them. In other words, we consider
    divisions into epochs for which~$W(N_{i+1})=W(N_i)+1$ and hence
    $N_{i+1}=N_i+\epsilon$. Plugging this into~\Cref{lemma:mg-w}, we get
    at the end of epoch~$i$ that 
	\begin{align*}
		E(N_i)&\leq \sum_{j=0}^i  \frac{N_j - N_{j-1}}{W(N_j)+1} = \frac{N_0}{W(N_0)} + \sum_{j=1}^i  \frac{\epsilon}{W(N_0)+j} \\
		&= \epsilon + \epsilon \cdot \sum_{j=1}^i  \frac{1}{W(N_0)+j} \leq \epsilon + \epsilon \cdot \int_{W(N_0)+1}^{W(N_0)+i} \frac{1}{x} \,dx\\
		& \leq \epsilon + \epsilon \cdot \ln\parentheses{\frac{W(N_0)+i}{W(N_0)+1}} \leq \epsilon \cdot (1+\ln i) \leq \epsilon \cdot \ln N.
	\end{align*}
    Here, the last line follows from~$i$ being the number of times we expand by
    one counter until we reach a stream length of~$N$, implying that~$i \leq
    N/\epsilon+1$. 

    \medskip\noindent
    \textbf{The Sublinear Case of~$W(N)=N^{\alpha}/\epsilon$.}
    In this case, it is easier to think of divisions into time in which we
    double the number of counters instead of increasing them one by one. That
    is, we define the epochs such that~$W(N_{i+1}) = 2\cdot W(N_i)$ and hence
    $N_{i+1} = 2^{1/\alpha} \cdot N_i$. Noting that~\Cref{lemma:mg-w} holds
    irrespective of the specific definition of the epochs then yields
    \begin{align*}
        \sum_{j=0}^i \frac{N_j - N_{j-1}}{W(N_j)+1} & = \sum_{j=0}^i \frac{2^{j/\alpha} \cdot N_0 - 2^{(j-1)/\alpha} \cdot N_0} {2^j \cdot W(N_0)} \\
        & = \sum_{j=0}^i  \frac{2^{j/\alpha}- 2^{(j-1)/\alpha}} {2^j} \cdot \frac{N_0}{W(N_0)}.
    \end{align*}
    Finally, by applying Claim~\ref{claim:technical}, we get
    \begin{align*}
        \sum_{j=0}^i \frac{N_j - N_{j-1}}{W(N_j)+1} & \leq  O(1) \cdot \frac{2^{i/\alpha}}{2^i} \cdot \frac{N_0}{W(N_0)} = O(1) \cdot \frac{N_i}{W(N_i)}.
    \end{align*}
    proving the desired result.
\end{proof}

\subsection{Tail Bounds for \sketchmg}\label{sec:upper_bounds_mg_res}
We prove the tail error guarantees for \sketchmg following the same general
argument structure of \Cref{sec:upper_bounds_mg_normal}.

\begin{lemma}\label{lemma:mg-res-w}
    Letting~$E(N)$ denote the error of \sketchmg when using a size
    function~$W(\cdot)$ on a stream of total of~$N$ keys, we have at the end of
    epoch~$i$ that
    $$ E(N_i) \leq \sum_{j=0}^i \frac{N^{\text{res}(k)}_j-N^{\text{res}(k)}_{j-1}}{W(N_j)+1-k}, $$
    where~$N^{\text{res}(k)}_j$ for $j\geq 0$ is the total count of the keys
    not among the top~$k$ at the end of epoch~$j$.
\end{lemma}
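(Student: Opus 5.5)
I will follow the template of \Cref{lemma:mg-w}, but bring in the classical Misra--Gries tail argument within each epoch. Throughout epoch~$j$ the sketch has $W_j = W(N_j)$ slots. Let $C_j$ be the sum of stored counts at the end of epoch~$j$, and let $d_j$ be the number of decrement steps performed during epoch~$j$. The per-key error is bounded by the total number of decrement steps, so $E(N_i) \leq \sum_{j=0}^i d_j$. This is the same worst-case argument as in \Cref{lemma:mg-w}. It therefore suffices to bound each $d_j$ in terms of the residual mass $N^{\text{res}(k)}_j - N^{\text{res}(k)}_{j-1}$ gained in epoch~$j$, plus telescoping terms involving $C_j$.

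The key step is the standard tail accounting. Each decrement step in epoch~$j$ removes $W_j + 1$ units of count: one from each of the $W_j$ stored counters and one from the arriving, unstored key. At most one of these units can belong to each of the top-$k$ keys, so each step removes at least $W_j + 1 - k$ units attributable to keys outside the top~$k$. Charging arrivals this way gives
\[
d_j \cdot (W_j + 1 - k) \;\leq\; \bigl(N^{\text{res}(k)}_j - N^{\text{res}(k)}_{j-1}\bigr) + C^{\text{res}}_{j-1} - C^{\text{res}}_j .
\]
Here $C^{\text{res}}_j$ is the stored mass at the end of epoch~$j$ that is attributable to non-top-$k$ keys. Dividing by $W_j + 1 - k$, summing over $j$, and applying the same summation-by-parts manipulation as in \Cref{lemma:mg-w} turns the $C^{\text{res}}$ terms into nonpositive contributions. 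This uses that $W_j$ is nondecreasing in $j$, so $1/(W_j + 1 - k)$ is nonincreasing, together with $C^{\text{res}}_j \geq 0$. The stated bound then follows.

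I expect the main obstacle to be the identity of the top-$k$ keys, which can change over time. The statement uses $N^{\text{res}(k)}_j$ at the end of each epoch, so the "residual" set may differ from epoch to epoch, and the per-epoch differences need not even be nonnegative. My first approach is to fix the top-$k$ set $K$ with respect to the final frequencies at the end of epoch~$i$ and run the whole charging argument with residual mass measured relative to $K$. Since $K$ is a fixed set of $k$ keys, the charging inequality above still holds.

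That fixed-$K$ argument yields a bound with numerators equal to the mass of keys outside $K$ arriving in epoch~$j$. Telescoping these gives $N^{\text{res}(k)}_i$ exactly. For the per-epoch form, I would note that $N^{\text{res}(k)}_j$ is the minimum over all $k$-sets of the residual mass, so the mass outside $K$ up to time $j$ dominates it. I would then handle the reindexing carefully, or else interpret the lemma's telescoping differences as measured relative to this fixed $K$.

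A second, smaller point concerns epochs where $W_j + 1 \leq k$. For these the denominator is nonpositive, and the lemma should be read as applying only once $W_j \geq k$. The earlier epochs are what produce the additive terms in \Cref{thm:mg-res}, and I would treat them separately by bounding their decrements crudely via \Cref{lemma:mg-w}.
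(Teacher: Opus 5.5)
Your proposal is correct, and it takes a different route from the paper's proof. It is also more careful. The paper bounds each $d_i$ on its own. It pairs the identity $C_i - C_{i-1} = N_i - N_{i-1} - d_i\,(W(N_i)+1)$ with the claimed inequality $C_i - C_{i-1} \ge N^{(k)}_i - N^{(k)}_{i-1} - d_i k$, where the top $k$ are taken ``with respect to the whole stream'' (your fixed set $K$). From these it reads off $d_i \le (N^{\text{res}(k)}_i - N^{\text{res}(k)}_{i-1})/(W(N_i)+1-k)$ epoch by epoch.

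That inequality silently assumes that the stored mass of keys outside $K$ cannot shrink during an epoch, and the per-epoch bound it yields is false. For example, take $k=1$ and $W_i = 2W_{i-1}$ with $W_{i-1}\ge 2$. Start epoch $i$ with the old $W_{i-1}$ slots holding large counts of non-$K$ keys, and feed fresh non-$K$ keys. These repeatedly fill the $W_{i-1}$ new slots and then trigger a decrement that empties them. This gives $d_i \approx (N_i - N_{i-1})/(W_{i-1}+1)$, which exceeds $(N_i-N_{i-1})/W_i$.

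Your charging inequality keeps the carried-over term $C^{\text{res}}_{j-1} - C^{\text{res}}_j$ and removes it by the same summation by parts as in \Cref{lemma:mg-w}. This is exactly the missing ingredient: mass destroyed in a later epoch was already charged, at a weight at least as large, in the epoch where it arrived. You give up a per-epoch statement, but the lemma and \Cref{thm:mg-res} only use the summed bound.

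One correction to your last step concerns the reading in which the top-$k$ set is recomputed at each epoch. There, the domination you cite points the wrong way. The mass outside the fixed $K$ up to epoch $j$ is at least $N^{\text{res}(k)}_j$, with equality at $j=i$. Since the weights $1/(W_j+1-k)$ are nonincreasing, Abel summation shows that your fixed-$K$ bound is \emph{at least} the varying right-hand side, so it cannot imply it.

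The repair is short. Sum your charging inequality only over $l \le J$, using $K = K_J$, the top-$k$ set at the end of epoch $J$, and drop $C^{\text{res}}_J \ge 0$. This gives
\[ \sum_{l \le J} d_l\,(W_l+1-k) \;\le\; N^{\text{res}(k)}_J \qquad \text{for every } J \le i. \]
Now write $\sum_{l\le i} d_l = \sum_{l \le i} d_l(W_l+1-k)\cdot\frac{1}{W_l+1-k}$ and apply Abel summation against these prefix bounds. This yields the stated inequality under either reading, so you do not need to reinterpret the lemma.
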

\begin{proof}
    We start similarly as in the proof of~\Cref{lemma:mg-w} and get that, if
    $d_i$ is the number of deletions to the sketch that occur during some epoch
    $i \geq 0$, then it holds that
    $$ C_i = C_{i-1} + N_i - N_{i-1} - d_i \cdot (W(N_i)+1). $$
    Here, $C_i$ denotes the sum of the sketch's at the end of epoch~$i$ (with
    $C_{-1}=0$). We generalize the above to
    $$ C_i - C_{i-1} \geq  N^{(k)}_i - N^{(k)}_{i-1} - d_i \cdot k,$$
    where~$N^{(k)}_i $ denotes total count of the top~$k$ keys at the end of
    epoch~$i$ (the top $k$ with respect to the whole stream). In other words,
    the change in the sketch counts is at least the change in the sketch counts
    restricted to the top~$k$ keys. This is because the total count restriced
    to these~$k$ keys increases by~$N^{(k)}_i - N^{(k)}_{i-1}$ and each of
    these~$k$ keys gets its counter decremented at most~$d_i$ times.
    Combining the equation and the inequality above and rearranging, we get
    that
    $$ d_i \leq \frac{N_i - N_{i-1} -N^{(k)}_i + N^{(k)}_{i-1}}{W(N_i)+1-k} =  \frac{N^{\text{res}(k)}_i- N^{\text{res}(k)}_{i-1}}{W(N_i)+1-k}. $$
    Summing these terms proves the desired result.
\end{proof}

We now proceed to prove the full tail bound of \Cref{thm:mg-res} following the
same proof structure as that of \Cref{thm:mg-normal}:
\thmmgres*
\begin{proof}
    Note that the additive terms in the bounds of \Cref{thm:mg-res} come from
    the application of \Cref{thm:mg-normal} to the portion of the stream that
    must be ``waited out'' by \sketchmg to create the~$k${\nobreakdash-}th
    counter for the first time. We therefore focus on deriving the remaining
    part of the error bound, which corresponds to the error resulting from
    processing the stream with an instance of \sketchmg initialized with at
    least~$k$ counters.

    \textbf{The Linear Case of~$W(N)=N/\epsilon$.} 
    In this case, $W(N_{i+1})=W(N_i)+1$ and $N_{i+1}=N_i +
    \epsilon$. We consider an epoch~$i$ and define a series of
    integers~$i_\ell$ as follows. We start by defining~$i_1$ as the largest
    integer such that
    $$ \sum_{j=1}^{i_1} \parentheses{N^{\text{res}(k)}_j- N^{\text{res}(k)}_{j-1}} \leq \epsilon. $$
    In words, we group consecutive terms of the form~$N^{\text{res}(k)}_j-
    N^{\text{res}(k)}_{j-1}$ until we are about to reach~$\epsilon$. Note that
    $$ \sum_{j=1}^{i_1+1} \parentheses{N^{\text{res}(k)}_j- N^{\text{res}(k)}_{j-1}} \geq \epsilon, $$
    as the maximality of~$i_1$ implies that the extra
    term~$N^{\text{res}(k)}_{i_1+1}- N^{\text{res}(k)}_{i_1}$ causes the sum to
    grow beyond~$\epsilon$. We also define~$i_2 > i_1$ in a similar fashion as
    the largest index such that 
    $$ \sum_{j=i_1+1}^{i_2} \parentheses{N^{\text{res}(k)}_j - N^{\text{res}(k)}_{j-1}} \leq \epsilon. $$
    Like before, we stop right before we gather a sum of~$\epsilon$.
    Moreover, as before, including the extra
    term~$N_{i_2+1}^{\text{res}(k)}-N_{i_2}^{\text{res}(k)}$ yields
    $$ \sum_{j=i_{1}+1}^{i_2+1} \parentheses{N^{\text{res}(k)}_j- N^{\text{res}(k)}_{j-1}} \geq \epsilon. $$
    We continue as such for~$\ell$ steps until we have that:
    $$ \sum_{j=i_{\ell+1}}^{i} \parentheses{N^{\text{res}(k)}_j- N^{\text{res}(k)}_{j-1}} \leq \epsilon. $$

    We now define~$i_0=0$ and upper bound~$\ell$ as
    \begin{align*}
        N^{\text{res}(k)}_i  & \geq \sum_{j=1}^{i_{\ell}} \parentheses{N^{\text{res}(k)}_j- N^{\text{res}(k)}_{j-1}}\\
        &\geq \frac{1}{2} \cdot \sum_{s=0}^{\ell} \sum_{j=i_{s}+1}^{i_{s+1}+1} \parentheses{N^{\text{res}(k)}_j- N^{\text{res}(k)}_{j-1}} \\
        &\geq \frac{1}{2} \cdot \sum_{s=0}^{\ell}  \epsilon = \epsilon \cdot \ell/2,
    \end{align*}
    i.e., $\ell \leq 2 N^{\text{res}(k)}_i/\epsilon$. Now we also observe that
    for every~$0 \leq s \leq \ell$ it holds that
    \begin{align*}
        \sum_{j=i_s+1}^{i_{s+1}}\frac{N^{\text{res}(k)}_j- N^{\text{res}(k)}_{j-1}}{W(N_j)+1-k} &\leq \sum_{j=i_s+1}^{i_{s+1}} \frac{N^{\text{res}(k)}_j- N^{\text{res}(k)}_{j-1}}{W(N_0)+s+1-k} \\
        & \leq \frac{\epsilon}{W(N_0)+s+1-k}.
    \end{align*}
    This is because~$W(N_j)+1-k = W(N_0)+j+1-k \geq W(N_0)+s+1-k$, since $j
    \geq i_s+1$ by the bounds of the summation and~$i_s+1 \geq s$ by the way we
    parametrize the~$i_s$ while defining them. Putting everything together
    yields
    \begin{align*}
        \sum_{j=0}^i  \frac{N^{\text{res}(k)}_j- N^{\text{res}(k)}_{j-1}}{W(N_j)+1-k} &\leq \frac{N^{\text{res}(k)}_0} {W(N_0)+1-k} + \sum_{s=0}^{\ell}  \frac{\epsilon}{W(N_0)+s+1-k} \\
        & \leq \epsilon + \epsilon \cdot \sum_{s=0}^{\ell}  \frac{1}{W(N_0)+s+1-k} \\
        & \leq \epsilon + \epsilon \cdot \int_{W(N_0)+1-k}^{W(N_0)+\ell+1-k}  \frac{1}{x} dx \\
        & \leq \epsilon + \epsilon \cdot \ln \ell \leq 2 \epsilon + \epsilon \cdot \ln N^{\text{res}(k)}_i.
    \end{align*}

    \medskip\noindent
    \textbf{The Sublinear Case of~$W(N)=N^{\alpha}/\epsilon$.}
    In this case, we switch to the notion of epochs where the number of
    counters doubles as opposed to increasing by one. That is, we define the
    epochs such that~$W(N_{i+1}) = 2\cdot W(N_i)$ and hence~$N_{i+1} =
    2^{1/\alpha} \cdot N_i$. We further define~$\gamma(i)$ as the smallest
    integer such that~$2^{\gamma(i)/\alpha}\cdot N^{\text{res}(k)}_0 \geq
    N^{\text{res}(k)}_i$, with~$\gamma(0)=0$. It immediately follows that
    $$ 2^{\gamma(i)/\alpha}\cdot N^{\text{res}(k)}_0 \leq 2^{1/\alpha} \cdot N^{\text{res}(k)}_i. $$
    We use $\gamma(i)$ to define a hypothetical upper bound on the error
    incurred if we were to double our counters as a function
    of~$N^{\text{res}(k)}$ rather than $N$, which is strictly worse for
    accuracy, yet creates a similar scenario to the proof of
    \Cref{thm:mg-normal} and yields the bounds \Cref{thm:mg-res}. We express
    this upper bound as
    $$ E'_i = \sum_{\ell=0}^{\gamma(i)} \frac{2^{\ell/\alpha} \cdot N^{\text{res}(k)}_0 - 2^{(\ell-1)/\alpha} \cdot N^{\text{res}(k)}_0}{2^{\ell} \cdot W\parentheses{N^{\text{res}(k)}_0 }-k+1}. $$
    We will relate the actual error of \sketchmg to~$E'$ to prove the desired
    upper bound. To this end, we make the following claim that for all~$i \geq
    0$:
    \begin{claim}
        \label{claim:ms-major}
        $$ \sum_{j=0}^i  \frac{N^{\text{res}(k)}_j- N^{\text{res}(k)}_{j-1}}{W(N_j)+1-k} \leq 4\cdot E'_i. $$
    \end{claim}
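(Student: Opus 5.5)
Since every summand is nonnegative, I plan to compare the actual sum with $E'_i$ by grouping the real epochs $j$ according to the ``residual epoch'' they fall in. For each $j\le i$ I define $\ell(j)=\gamma(j)$, the smallest $\ell$ with $2^{\ell/\alpha}N^{\text{res}(k)}_0\ge N^{\text{res}(k)}_j$. Then $\gamma$ is nondecreasing in $j$, because $N^{\text{res}(k)}_j$ is nondecreasing in an insertion-only stream. Fix $\ell$ and let $J_\ell=\{j:\gamma(j)=\ell\}$ be the contiguous block of real epochs whose residual count lies in $\bigl(2^{(\ell-1)/\alpha}N^{\text{res}(k)}_0,\,2^{\ell/\alpha}N^{\text{res}(k)}_0\bigr]$.

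\emph{Numerators.} Within $J_\ell$, the increments $N^{\text{res}(k)}_j-N^{\text{res}(k)}_{j-1}$ telescope. Their sum is at most the width of the interval plus the single increment that crosses into it from below. That crossing increment is itself at most $2^{\ell/\alpha}N^{\text{res}(k)}_0-2^{(\ell-2)/\alpha}N^{\text{res}(k)}_0$, so it can be charged to blocks $\ell$ and $\ell-1$. After this bookkeeping, each block-$\ell$ numerator mass is at most twice the numerator $2^{\ell/\alpha}N^{\text{res}(k)}_0-2^{(\ell-1)/\alpha}N^{\text{res}(k)}_0$ appearing in $E'_i$, possibly shifted to the neighbouring index. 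I expect this to supply one factor of $2$.

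\emph{Denominators.} The key fact is that $N_j\ge N^{\text{res}(k)}_j$, so $W(N_j)\ge W(N^{\text{res}(k)}_j)$ because $W$ is monotone. For $j\in J_\ell$ with $\ell\ge1$ we have $N^{\text{res}(k)}_j>2^{(\ell-1)/\alpha}N^{\text{res}(k)}_0$. This gives
\[
W(N_j)\ge (N^{\text{res}(k)}_j)^\alpha/\epsilon\ge 2^{\ell-1}W\bigl(N^{\text{res}(k)}_0\bigr).
\]
It follows that $W(N_j)+1-k\ge 2^{\ell-1}W(N^{\text{res}(k)}_0)+1-k$. I then need to compare this with $2^{\ell}W(N^{\text{res}(k)}_0)+1-k$, the denominator in $E'_i$, losing at most a factor of $2$. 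This works whenever the numbers involved are large relative to $k$, since $x+1-k\ge \tfrac12(2x+1-k)$ holds iff $x\ge \ldots$, and it needs care near $k$. The block $\ell=0$ is handled directly from $W(N_j)\ge W(N_0)\ge W(N^{\text{res}(k)}_0)$.

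\emph{Main obstacle.} The hard step is the interaction with $-k$ in the denominator. The ratio $\frac{2^{\ell}W_0+1-k}{2^{\ell-1}W_0+1-k}$ is not bounded by $2$ when $2^{\ell-1}W_0$ is close to $k$. My first attempt is to use the standing assumption from the theorem's proof that \sketchmg starts with at least $k$ counters, and to strengthen it to $W(N^{\text{res}(k)}_0)\ge 2k$ (if necessary, absorbing the extra initial prefix into the additive ``waiting'' term). With that assumption, $2^{\ell-1}W_0+1-k\ge\tfrac14(2^{\ell}W_0+1-k)$ when combined with the index-shift above, which yields the constant $4$ overall. If that assumption cannot be arranged, my fallback is to exploit the fact that real epochs double $W(N_j)$, so the $W(N_j)$ are in fact much larger than the residual-based bound for all but the first few blocks, and to treat those first blocks separately. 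Summing the per-block bounds over $\ell=0,\dots,\gamma(i)$ then gives the claim.
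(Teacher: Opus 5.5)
Your route differs from the paper's, and in a useful way. The paper proves, for each real epoch $i\ge1$, that $\frac{N^{\text{res}(k)}_i-N^{\text{res}(k)}_{i-1}}{W(N_i)+1-k}\le 2\sum_{\ell=\gamma(i-1)}^{\gamma(i)}T_\ell$, where $T_\ell$ is the $\ell$-th summand of $E'_i$. It bounds the denominators via $2^{\gamma(i)}W(N^{\text{res}(k)}_0)\le 2W(N_i)$ and the numerators by telescoping the interval widths over $[\gamma(i-1),\gamma(i)]$. It then sums over epochs, paying a factor $2$ for shared endpoints.

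That per-epoch bound replaces each increment by full interval widths. So a run of real epochs with the same $\gamma$ (residual mass growing slowly because the top-$k$ keys dominate) charges $T_\ell$ once per epoch, and the final summation only closes with exactly your block grouping and telescoping. Conversely, the paper's range $[\gamma(i-1),\gamma(i)]$ handles a case your write-up gets wrong. Within one real epoch the residual mass can grow by far more than a factor $2^{1/\alpha}$, so $\gamma$ can jump by two or more. The crossing increment is then neither bounded by $(2^{\ell/\alpha}-2^{(\ell-2)/\alpha})N^{\text{res}(k)}_0$ nor chargeable to blocks $\ell$ and $\ell-1$ alone.

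Your constants also do not add up to $4$ as written: a factor $2$ on numerators times the factor $4$ you concede on denominators gives $8$. Both problems disappear if you charge each unit of residual mass exactly once, to the interval it lies in. Write $R_j=N^{\text{res}(k)}_j$, $a=R_0$, $W_0=W(a)$, and $I_m=(2^{(m-1)/\alpha}a,\,2^{m/\alpha}a]$. The length $|I_m|=(2^{m/\alpha}-2^{(m-1)/\alpha})a$ is the numerator of $T_m$.

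For $j\ge1$ and $r\in(R_{j-1},R_j]\cap I_m$ you have $N_j\ge R_j>2^{(m-1)/\alpha}a$, hence $W(N_j)+1-k>2^{m-1}W_0+1-k$. The sets $(R_{j-1},R_j]$ for $1\le j\le i$ are disjoint subsets of $(a,2^{\gamma(i)/\alpha}a]$, so each $I_m$ receives total mass at most $|I_m|$. Therefore
\[
\sum_{j=1}^{i}\frac{R_j-R_{j-1}}{W(N_j)+1-k}\le\sum_{m=1}^{\gamma(i)}\frac{2^mW_0+1-k}{2^{m-1}W_0+1-k}\,T_m\le\Bigl(2+\frac{k-1}{W_0+1-k}\Bigr)\sum_{m=1}^{\gamma(i)}T_m .
\]
The prefactor is at most $4$ exactly when $W_0\ge\frac32(k-1)$, so your assumption $W_0\ge 2k$ suffices. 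The $j=0$ term is at most $T_0/(1-2^{-1/\alpha})<2T_0$, which gives $4E'_i$.

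You are right that a hypothesis tying $W(N^{\text{res}(k)}_0)$ to $k$ is needed, and it should be stated explicitly in the claim. Without it, $E'_i$ can have non-positive denominators. Also, the paper's step $2^{\gamma(i)}W_0-k+1\le 2(W(N_i)+1-k)$ does not follow from $2^{\gamma(i)}W_0\le 2W(N_i)$ once $k\ge2$.
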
	

    Before we prove Claim~\ref{claim:ms-major}, we would first like to discuss
    in more detail how it implies our bounds. Namely, we first note that each
    denominator can be substituted by~$2^{\ell} \cdot
    W\parentheses{N^{\text{res}(k)}_0}$. Then:
    \begin{align*}
        E'_i &\leq  \sum_{\ell=0}^{\gamma(i)} \frac{2^{\ell/\alpha} \cdot N^{\text{res}(k)}_0 - 2^{(\ell-1)/\alpha} \cdot N^{\text{res}(k)}_0}{2^{\ell} \cdot W\parentheses{N^{\text{res}(k)}_0 }} \\
        & =  \sum_{\ell=0}^{\gamma(i)} \frac{2^{\ell/\alpha} - 2^{(\ell-1)/\alpha} }{2^{\ell}}  \cdot \frac{N^{\text{res}(k)}_0}{W\parentheses{N^{\text{res}(k)}_0 }}\\
        & = O(1) \cdot  \frac{2^{\gamma(i)/\alpha} \cdot N^{\text{res}(k)}_0}{2^{\gamma(i)} \cdot W\parentheses{N^{\text{res}(k)}_0 }}, \\
        %	&\leq O(1) \cdot  \frac{2^{1/\alpha} \cdot N^{\text{res}(k)}_i}{2^{\gamma(i)} \cdot W\parentheses{N^{\text{res}(k)}_0 }}
    \end{align*}
    where the last equality holds by Claim~\ref{claim:technical}. We now note
    that, by definition, the numerator is at most~$2^{1/\alpha} \cdot
    N^{\text{res}(k)}_i$, and the denominator satisfies the following
    inequality:
    \begin{align*}
        2^{\gamma(i)} \cdot W\parentheses{N^{\text{res}(k)}_0 } &= 	\parentheses{2^{\gamma(i)/\alpha} \cdot N^{\text{res}(k)}_0}^\alpha /\epsilon \\
        & \geq \parentheses{N^{\text{res}(k)}_i}^\alpha /\epsilon=W\parentheses{N^{\text{res}(k)}_i}.
    \end{align*}
    Therefore, we have that
    $$ E'_i = O(1) \cdot \frac{ N^{\text{res}(k)}_i}{ W\parentheses{N^{\text{res}(k)}_i }}. $$
    In conjunction with~\Cref{lemma:mg-res-w}, we get the desired result. We
    now prove Claim~\ref{claim:ms-major}:
    \begin{proof}[Proof of Claim~\ref{claim:ms-major}]
        For $i=0$, note that $\gamma(0)=0$ and so:
        $$ \frac{N^{\text{res}(k)}_0}{W(N_0)+1-k} = E'_0$$

        \noindent
        For every $i \geq 1$, we prove that
        $$ \frac{N^{\text{res}(k)}_{i}- N^{\text{res}(k)}_{i-1}}{W(N_i)+1-k} \leq 2 \cdot  \sum_{\ell=\gamma(i-1)}^{\gamma(i)} \frac{2^{\ell/\alpha} \cdot N^{\text{res}(k)}_0 - 2^{(\ell-1)/\alpha} \cdot N^{\text{res}(k)}_0}{2^{\ell} \cdot W\parentheses{N^{\text{res}(k)}_0 }-k+1}. $$
        \noindent
        To this end, we first upper bound the denominators on the right-hand
        side as
        $$ 2^{\gamma(i)} \cdot W\parentheses{N^{\text{res}(k)}_0 }-k+1 \leq 2 \cdot \parentheses{{W(N_i)+1-k}}. $$
        \noindent
        This is because:
        \begin{align*}
            2^{\gamma(i)} \cdot W\parentheses{N^{\text{res}(k)}_0 } &=	2^{\gamma(i)} \cdot \parentheses{N^{\text{res}(k)}_0 }^{\alpha} /\epsilon = \parentheses{2^{\gamma(i)/\alpha} \cdot N^{\text{res}(k)}_0}^\alpha /\epsilon\\
            & \leq \parentheses{2^{1/\alpha} \cdot N^{\text{res}(k)}_i}^\alpha /\epsilon\\
            & \leq \parentheses{2^{1/\alpha} \cdot N_i}^\alpha /\epsilon= 2  \cdot W(N_i).
        \end{align*}
        \noindent
        For the numerators of the right-hand side, we observe that
        \begin{align*}
            \sum_{\ell=\gamma(i-1)}^{\gamma(i)} 2^{\ell/\alpha}  - 2^{(\ell-1)/\alpha} 
            & = \parentheses{2^{\gamma(i)/\alpha} - 2^{(\gamma(i-1)-1)/\alpha} }.
        \end{align*}

        Since, by the definition of~$\gamma(i)$, we have
        that~$2^{\gamma(i)/\alpha}\cdot N^{\text{res}(k)}_0 \leq 2^{1/\alpha}
        \cdot N^{\text{res}(k)}_i$ and $2^{(\gamma(i-1)-1)/\alpha}\cdot
        N^{\text{res}(k)}_0 \geq N^{\text{res}(k)}_{i-1}$, the
        numerators must satisfy:
        $$ \sum_{\ell=\gamma(i-1)}^{\gamma(i)} \parentheses{2^{\ell/\alpha} \cdot N^{\text{res}(k)}_0 - 2^{(\ell-1)/\alpha} \cdot N^{\text{res}(k)}_0} \geq N^{\text{res}(k)}_{i}- N^{\text{res}(k)}_{i-1}. $$ 

        Putting everything together yields
        \begin{align*}
            \frac{N^{\text{res}(k)}_{i}- N^{\text{res}(k)}_{i-1}}{W(N_i)+1-k} & \leq 
            \sum_{\ell=\gamma(i-1)}^{\gamma(i)} \frac{2^{\ell/\alpha} \cdot N^{\text{res}(k)}_0 - 2^{(\ell-1)/\alpha} \cdot N^{\text{res}(k)}_0}{W(N_i)+1-k}\\
            &\leq 2 \cdot  \sum_{\ell=\gamma(i-1)}^{\gamma(i)} \frac{2^{\ell/\alpha} \cdot N^{\text{res}(k)}_0 - 2^{(\ell-1)/\alpha} \cdot N^{\text{res}(k)}_0}{2^{\ell} \cdot W\parentheses{N^{\text{res}(k)}_0 }-k+1} \;,
        \end{align*}
        which is what we wanted to prove for every specific epoch~$i$. Summing
        these terms over all epochs and taking the double-counted terms into
        account then proves the claim.
    \end{proof}	
    As such, the theorem follows.
\end{proof}

\section{Lower Bound: Proof of Theorem~\ref{thm:sketch_expandability}}\label{sec:ommitted_proofs_lower}
\thmsketchexpandabilitylabel*
\begin{proof}
    Our proof follows a contradiction argument similar to the proof of
    Theorem~3.1 in~\cite{PaghExpandability}. That is, we consider an FE
    sketch~$\mathcal{S}$ that, given a key~$x$, computes an
    estimate~$\hat{f}(x)$ such that $|\mathbb{E}[\hat{f}(x)] - f(x)| \leq E(N)$
    and $\mathrm{Var}(\hat{f}(x)) \leq V(N) = \delta \cdot (E(N))^2$. Note
    that, by Chebyshev's inequality, this FE sketch guarantees an estimation
    error of~$2E(N)$ with a confidence of~$1-\delta$. We show that if such a
    sketch is ``too space-efficient,'' it can be used to compress many
    sequences of~$n=N/(2E(N))$ keys~$\mathbf{x}=x_1,\dots,x_n$ coming from a
    large universe~$U$ of size~$u \gg n$ to a size smaller than their entropy,
    which must be impossible. Our argument holds regardless of the time
    complexity of~$\mathcal{S}$'s operations. For simplicity, we prove this
    theorem assuming that~$\mathbb{E}[\hat{f}(x)]=f(x)$, which corresponds
    to~$\mathcal{S}$ having an estimation error of~$E(N)$ with a confidence
    of~$1-\delta$. Here, we compress sequences of~$n=N/E(N)$ keys. This proof
    is generalizable to the full statement of the theorem.

    Consider the process of iterating through~$\mathbf{x}$'s keys. We
    partition~$\mathbf{x}$ into contiguous subsequences~$\mathbf{x}_1,
    \mathbf{x}_2, \dots, \mathbf{x}_\ell$, with $\mathbf{x}_i$ having a length
    of~$2^i \cdot \sqrt{n}$ and $\ell=\log_2 n/2$. While iterating through
    these subsequences, we reinsert each key enough times into~$\mathcal{S}$
    such that by the time we are done with~$\mathbf{x}_1,\dots,\mathbf{x}_i$
    for any $i$, we have inserted each key a total~$E(N_i)$ times, where
    $N_i=E(N_i) \cdot \sum_{j=1}^i |\mathbf{x}_j|$. We denote the state
    of~$\mathcal{S}$ after processing the subsequences~$\mathbf{x}_1, \dots,
    \mathbf{x}_i$ as $\mathcal{S}_i$. We denote $\mathcal{S}_i$'s estimate for
    the count of key~$x$ as $\hat{f}_i(x)$ while denoting the actual count
    of~$x$ in $\mathbf{x}_1, \dots, \mathbf{x}_i$ as~$f_i(x)$. 

    We first show that, the estimates of one of the~$\mathcal{S}_i$s can be
    extended to an estimator that has a ``small'' second moment of at
    most~$V/\ell = 2 \cdot V/\log_2 n$. We will use these estimates to
    compress the sequence~$\mathbf{x}$. To this end, we apply the law of total
    variance to get, for any key~$x$,
    \begin{align*}
        \mathrm{Var} \left( \hat{f}_\ell(x) \right) & = \mathbb{E}\left[ \mathrm{Var}\left( \hat{f}_{\ell}(x) \;\middle|\; \hat{f}_{\ell - 1}(x) \right) \right]
                                                      + \mathrm{Var}\left( \mathbb{E}\left[ \hat{f}_{\ell}(x) \;\middle|\; \hat{f}_{\ell - 1}(x) \right] \right).
    \end{align*}
    Applying the law of total variance once more to the right summand
    simplifies it to
    \begin{align*}
        \mathrm{Var}\left( \mathbb{E}\left[ \hat{f}_\ell(x) \;\middle|\; \hat{f}_{\ell - 1}(x) \right] \right) & = \mathbb{E}\left[ \mathrm{Var}\left( \mathbb{E}\left[ \hat{f}_{\ell}(x) \;\middle|\; \hat{f}_{\ell - 1}(x) \right] \;\middle|\; \hat{f}_{\ell - 2}(x) \right) \right] \\
                                                      & \; \; + \mathrm{Var}\left( \mathbb{E}\left[ \mathbb{E}\left[ \hat{f}_{\ell}(x) \;\middle|\; \hat{f}_{\ell - 1}(x) \right] \;\middle|\; \hat{f}_{\ell - 2}(x) \right] \right) \\
                                                      & = \mathbb{E}\left[ \mathrm{Var}\left( \mathbb{E}\left[ \hat{f}_{\ell}(x) \;\middle|\; \hat{f}_{\ell - 1}(x) \right] \;\middle|\; \hat{f}_{\ell - 2}(x) \right) \right] \\
                                                      & \; \; + \mathrm{Var}\left( \mathbb{E}\left[ \hat{f}_\ell(x) \;\middle|\; \hat{f}_{\ell - 2}(x) \right] \right).
    \end{align*}
    Here, we have simplified the inner expectation in the second summand using
    the law of total expectation. Recursively applying the law of total
    variance this term yields, for any~$i$, that
    \begin{align*}
        \mathrm{Var}\left( \mathbb{E}\left[ \hat{f}_\ell(x) \;\middle|\; \hat{f}_i(x) \right] \right) & = \mathbb{E}\left[ \mathrm{Var}\left( \mathbb{E}\left[ \hat{f}_{\ell}(x) \;\middle|\; \hat{f}_i(x) \right] \;\middle|\; \hat{f}_{i - 1}(x) \right) \right] \\
                                                      & \; + \mathrm{Var}\left( \mathbb{E}\left[ \hat{f}_\ell(x) \;\middle|\; \hat{f}_{i - 1}(x) \right] \right),
    \end{align*}
    with~$\mathrm{Var}\left( \mathbb{E}\left[ \hat{f}_\ell(x) \;\middle|\; \hat{f}_0(x) \right] \right)=0$ 
    due to~$\hat{f}_0(x)$ and 
    thus~$\mathbb{E}\left[ \hat{f}_\ell(x) \;\middle|\; \hat{f}_0(x) \right]$
    being constants. Now, we utilize the definition of the variance to simplify
    the expectations of the variances as
    \begin{align*}
        & \mathbb{E}\left[ \mathrm{Var}\left( \mathbb{E}\left[ \hat{f}_\ell(x) \;\middle|\; \hat{f}_i(x) \right] \;\middle|\; \hat{f}_{i - 1}(x) \right) \right] \\
        & = \mathbb{E}\left[ \mathbb{E}\left[ \left( \mathbb{E}\left[ \hat{f}_\ell(x) \;\middle|\; \hat{f}_i(x) \right] - \mathbb{E}\left[ \mathbb{E}\left[ \hat{f}_\ell(x) \;\middle|\; \hat{f}_i(x) \right] \right] \right)^2 \;\middle|\; \hat{f}_{i - 1}(x) \right] \right] \\
        & = \mathbb{E}\left[ \left( \mathbb{E}\left[ \hat{f}_\ell(x) \;\middle|\; \hat{f}_i(x) \right] - \mathbb{E}\left[ \mathbb{E}\left[ \hat{f}_\ell(x) \;\middle|\; \hat{f}_i(x) \right] \right] \right)^2 \right] \\
        & = \mathbb{E}\left[ \left( \mathbb{E}\left[ \hat{f}_\ell(x) \;\middle|\; \hat{f}_i(x) \right] - \mathbb{E}\left[ \hat{f}_\ell(x) \right] \right)^2 \right] \\
        & = \mathbb{E}\left[ \left( \mathbb{E}\left[ \hat{f}_\ell(x) \;\middle|\; \hat{f}_i(x) \right] - f_\ell(x) \right)^2 \right] = \hat{e}_i(x).
    \end{align*}
    The third and fourth lines apply the law of total expectation to simplify
    the double expectations, and the fifth line utilizes the assumption
    of~$\mathbb{E}[\hat{f}_\ell(x)] = f_\ell(x)$. Intuitively, $\hat{e}_i(x)$
    is a measure of how much the conditional frequency
    estimate~$\mathbb{E}\left[ \hat{f}_\ell(x) \;\middle|\; \hat{f}_i(x) \right]$ 
    can deviate from~$f_\ell(x)$. Summing these terms yields
    \begin{align}
        V(N) \geq \mathrm{Var}\left( \hat{f}_\ell(x) \right) & = \mathrm{Var}\left( \mathbb{E}\left[ \hat{f}_\ell(x) \;\middle|\; \hat{f}_{\ell - 1}(x) \right] \right) + \sum_{i=1}^{\ell - 1} \hat{e}_i(x) \notag \\ 
        & = \sum_{i=1}^\ell \hat{e}_i(x). \label{eq:total_variance}
    \end{align}
    Here we have simplified the term~$\mathrm{Var}\left( \mathbb{E}\left[ \hat{f}_\ell(x) \;\middle|\; \hat{f}_{\ell - 1}(x) \right] \right)$ 
    to $\hat{e}_\ell$ by expressing~$\hat{f}_\ell(x)$ as 
    $\mathbb{E}\left[ \hat{f}_\ell(x) \;\middle|\; \hat{f}_\ell(x) \right]$ and
    simplifying similarly to the other terms. 

    \Cref{eq:total_variance} implies that, for at least half of the~$i$s, it
    holds that $\hat{e}_{i^*}(x) \leq 2 \cdot V(N)/\ell$. By an averaging
    argument, this further implies that there is some~$i^*$ such that
    $\hat{e}_{i^*}(x) \leq 2 \cdot V(N)/\ell$ holds for at least half of the
    elements~$x$ from the universe~$U$, which we denote by~$U'$ with $|U'| = u'
    \geq u/2 \gg n$. 

    We now focus on sequences~$\mathbf{x}$ consisting of the elements of~$U'$.
    We leverage the property of low~$\hat{e}_{i^*}(x)$ to compress the
    subsequence~$\mathbf{x}_i$. Our encoding proceeds as follows:
    \begin{enumerate}[label=(\roman*)]
        \item We write the explicit representations of the keys in all
            subsequences~$\mathbf{x}_i$, where $i > i^*$. The total cost for
            writing these representations is~$(n-\sum_{j=1}^{i^*}
            |\mathbf{x}_j|) \cdot \log_2 u'$ bits.
        \item We write the state of the FE sketch~$\mathcal{S}_{i^*}$. This
            takes~$|\mathcal{S}_{i^*}|$ bits, which is the quantity we want to
            bound. \\
            Since we know the sequences~$\mathbf{x}_{i^*+1}, \dots,
            \mathbf{x}_\ell$ and the state of the FE
            sketch~$\mathcal{S}_{i^*}$, and because we do not limit the running
            time, we can compute~$g(x)=\mathbb{E}\left[ \hat{f}_\ell(x)
            \;\middle|\; \hat{f}_{i^*}(x) \right]$ by setting the remaining
            randomness of~$\mathcal{S}_{i^*}$ to all possible combinations,
            processing $\mathbf{x}_{i^*+1}, \dots, \mathbf{x}_\ell$, and
            averaging the estimates~$\hat{f}_\ell(x)$. Note that here we are
            assuming that~$\mathcal{S}$ is a linear FE sketch, which allows us
            to scale its state to increase the counts of the keys
            in~$\mathbf{x}_1,\dots,\mathbf{x}_{i^*}$ to the count of~$E(N)$ of
            the entire stream without scanning them. Recall that 
            $$ \;\;\;\;\;\;\;\;\;\;\; 2 \cdot V(N)/\ell \geq \hat{e}_{i^*}(x) = \mathbb{E}\left[ \left( \mathbb{E}\left[ \hat{f}_\ell(x) \;\middle|\; \hat{f}_{i^*}(x) \right] - f_\ell(x) \right)^2 \right]. $$
            This implies that the estimate~$g(x)$ is unlikely to deviate more
            than~$E(N)$ from $f_\ell(x)$. That is, when~$g(x)/E(N)$ is smaller
            than the number of times~$x$ appears in~$\mathbf{x}$ outside
            of~$\mathbf{x}_1,\dots,\mathbf{x}_{i^*}$ by more than~$1/2$, $x$
            most probably does not appear
            in~$\mathbf{x}_1,\dots,\mathbf{x}_{i^*}$. This allows for reducing
            the number of bits required to encode each element
            in~$\mathbf{x}_1,\dots,\mathbf{x}_{i^*}$ in our compression scheme.
        \item Yet, since~$g(x)$ can suffer from severe underestimations, one
            cannot rely purely on its value to compress and decompress the
            subsequences~$\mathbf{x}_1,\dots,\mathbf{x}_{i^*}$. To address this
            issue, we store a bit for each key
            in~$\mathbf{x}_1,\dots,\mathbf{x}_{i^*}$ indicating whether~$g(x)$
            underestimates its count by in the entire stream by more
            than~$E(N)/2$. Each bit is~1 if the corresponding key has a large
            underestimation error, and~0 otherwise. This takes a total
            of~$\sum_{j=1}^{i^*} |\mathbf{x}_j|=(2^{i^*+1}-1) \cdot \sqrt{n}$
            bits.
        \item For each bit set to 1 in the previous step, we store the exact
            representation of the corresponding key using~$\log_2 u'$ bits. All
            keys in~$\mathbf{x}_1,\dots,\mathbf{x}_{i^*}$ associated with a~0
            bit either have overestimation errors or an absolute error smaller
            than~$E(N)/2$. As such, they only need to be disambiguated from the
            set of keys~$x$ where $g(x) \geq E(N)/2$.
        \item For each bit set to 0 in the third step, we store the index of
            the corresponding key in~$\mathbf{x}_{i^*}$ from the set of all
            keys~$y$ where $g(y)/E(N)$ is larger than the number of times~$y$
            appears in the rest of the sequence by at least~$1/2$. The cost per
            each key is the logarithm of the size of this set.
    \end{enumerate}

    We now fix the internal randomness of~$\mathcal{S}_{i^*}$ to convert it to
    a deterministic FE sketch giving rise to a~$g(\cdot)$ that, for many
    sequences~$\mathbf{x}$, only has estimation errors larger than~$E(N)/2$ on
    a $64\delta/\log_2 n$ fraction of the elements in~$U'$. We denote the
    fraction of such keys for a given sequence~$\mathbf{x}$
    as~$\mu(\mathbf{x})$. Notice how~$\mu(\mathbf{x})$ is relative size of the
    set we use to encode each key in the fifth step of the encoding. We also
    show that one can pick the internal randomness such that, in addition the
    aforementioned condition, $g(\cdot)$ errs more than~$E(N)$ on a fraction of
    at most~$64\delta/\log_2 n$ of the keys in the special
    subsequence~$\mathbf{x}_{i^*}$. We denote this fraction
    by~$\eta(\mathbf{x})$. Notice also how~$\eta(\mathbf{x})$ is fraction of
    the bits set to 1 in the third step of the encoding, which dictates the
    number of keys we store explicitly in the fourth step. 

    We prove the above statement following a similar argument to that of
    Section~3.1 in~\cite{PaghExpandability}:
    \begin{lemma}
        \label{lemma:derandomization}
        We can fix the internal randomness of the FE sketch~$\mathcal{S}_{i^*}$
        such that, for at least half of the possible sequences~$\mathbf{x}$,
        i.e., ${u'}^n/2$ many of them, $\mu(\mathbf{x}) \leq 64\delta/\log_2 n$
        and $\eta(\mathbf{x}) \leq 64\delta/\log_2 n$.
    \end{lemma}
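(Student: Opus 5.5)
We fix the randomness by a probabilistic (averaging) argument over the joint distribution of the internal randomness $r$ of $\mathcal{S}_{i^*}$ and a uniformly random sequence $\mathbf{x}\in {U'}^n$. The two quantities are bounded separately in expectation, Markov's inequality is applied to each, and a union bound combines them. Finally, an averaging argument over $r$ yields a single fixed $r$ that is good for at least half of the sequences.

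\textbf{Step 1: bounding $\mathbb{E}[\mu(\mathbf{x})]$.} For every key $x\in U'$, the choice of $i^*$ guarantees $\hat{e}_{i^*}(x)=\mathbb{E}[(g(x)-f_\ell(x))^2]\le 2V(N)/\ell = 4\delta (E(N))^2/\log_2 n$. Chebyshev's inequality then gives
$$\Pr\left[|g(x)-f_\ell(x)|>E(N)/2\right]\le \frac{4\hat{e}_{i^*}(x)}{(E(N))^2}\le \frac{16\delta}{\log_2 n}.$$
Here the probability is over both $r$ and the randomness of $\mathbf{x}$. Summing over $x\in U'$ and dividing by $u'$ shows $\mathbb{E}_{r,\mathbf{x}}[\mu(\mathbf{x})]\le 16\delta/\log_2 n$.

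\textbf{Step 2: bounding $\mathbb{E}[\eta(\mathbf{x})]$.} The same Chebyshev bound applies to each position of $\mathbf{x}_{i^*}$. The key placed at a given position is uniform over $U'$, and $\hat{e}_{i^*}(x)$ is bounded uniformly in $x$. Hence each position is bad with probability at most $16\delta/\log_2 n$ (the threshold $E(N)$ gives an even smaller bound). By linearity of expectation, $\mathbb{E}_{r,\mathbf{x}}[\eta(\mathbf{x})]\le 16\delta/\log_2 n$.

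\textbf{Step 3: Markov and union bound.} Markov's inequality at threshold $64\delta/\log_2 n$ gives $\Pr[\mu>64\delta/\log_2 n]\le 1/4$ and $\Pr[\eta>64\delta/\log_2 n]\le 1/4$. By a union bound, both inequalities hold simultaneously with probability at least $1/2$ over $(r,\mathbf{x})$. Writing this probability as an average over $r$ of the fraction of good sequences, some fixed $r$ attains at least this average. For that $r$, at least ${u'}^n/2$ sequences satisfy both inequalities. This concludes the proof.

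\textbf{Main obstacle.} The hardest step is justifying that Chebyshev's bound holds uniformly for every $x\in U'$ when $\mathbf{x}$ is random. The issue is that $i^*$ and $U'$ were chosen via $\hat{e}_{i^*}(x)$, which is defined with respect to a particular stream. We must ensure that this bound holds for a typical random $\mathbf{x}$, or in expectation over $\mathbf{x}$, rather than only for one fixed stream.

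I would handle this by choosing $i^*$ and $U'$ from the averaged quantity $\mathbb{E}_{\mathbf{x}}[\hat{e}_i(x)]$. The variance assumption $V(N)$ holds for every stream, so Equation~\eqref{eq:total_variance} also holds after averaging over $\mathbf{x}$. The averaging arguments selecting $i^*$ and $U'$ then go through unchanged. Because of this, the proof is a pure averaging argument with slack constants, in the style of Section~3.1 of~\cite{PaghExpandability}.
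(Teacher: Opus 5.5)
Your skeleton matches the paper's: Chebyshev gives a per-key error probability of $16\delta/\log_2 n$, Markov at $64\delta/\log_2 n$ gives $1/4$ for each of $\mu$ and $\eta$, a union bound follows, and averaging fixes one good $r$. The paper, however, uses $\hat{e}_{i^*}(x)\le 2V(N)/\ell$ pointwise, for every fixed $\mathbf{x}\in{U'}^n$ and every $x\in U'$. It bounds $\mathbb{E}_r[\mu(\mathbf{x})]$ and $\mathbb{E}_r[\eta(\mathbf{x})]$ separately for each $\mathbf{x}$, applies Markov over $r$ alone, and averages over $\mathbf{x}$ only at the end. Under that hypothesis your joint Markov over $(r,\mathbf{x})$ is a harmless variant. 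Step~2 then needs no uniformity of the key at position $p$: every key in $\mathbf{x}_{i^*}$ lies in $U'$ and obeys the Chebyshev bound for that fixed stream.

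The gap is in the remedy you commit to in your last paragraph, on which Steps~1--2 are explicitly built: selecting $i^*$ and $U'$ from $\mathbb{E}_{\mathbf{x}}[\hat{e}_i(x)]$.

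First, for a fixed key $y$, the average $\mathbb{E}_{\mathbf{x}}[\hat{e}_{i^*}(y)]$ is dominated by streams that do not contain $y$. A random stream contains $y$ with probability only about $n/u'$, and $u'\gg n$. So the average says essentially nothing about $g$'s error on keys that actually occur. But $\eta$ is exactly about keys occurring in $\mathbf{x}_{i^*}$. You need
$\Pr_{r,\mathbf{x}}[|g(x_p)-f_\ell(x_p)|>E(N)]=\frac{1}{u'}\sum_{y}\Pr_{r,\mathbf{x}}[|g(y)-f_\ell(y)|>E(N)\mid x_p=y]$,
which requires bounding $\mathbb{E}_{\mathbf{x}}[\hat{e}_{i^*}(y)\mid x_p=y]$. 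Uniformity of $x_p$ does not let you replace this conditional quantity by the unconditional average. The error event is strongly correlated with whether $y$ occurs; even $f_\ell(y)$ depends on it.

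Second, the averaging that defines $U'$ must use a stream distribution fixed before $U'$ exists. If that distribution is uniform on $U^n$, restricting to ${U'}^n$ (roughly a $2^{-n}$ fraction) does not inherit the bound. If it is uniform on ${U'}^n$, the definition is circular.

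So the averaged total-variance decomposition does not ``go through unchanged.'' It controls $\hat{e}_i$ for a fixed key, not for the key sitting at a position of $\mathbf{x}_i$. Your worry that $\hat{e}_{i^*}$ depends on the stream is legitimate, since the paper simply assumes the pointwise bound. But your fix does not resolve it, and as written your Step~2 does not follow. Either adopt the paper's pointwise hypothesis, or supply a new selection of $i^*$ that controls errors on keys conditioned on their occurring in $\mathbf{x}_{i^*}$.
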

    \begin{proof}
        By Markov's inequality, the~$g(x)$ resulting from the randomized
        version of~$\mathcal{S}_{i^*}$ has errors bounded by~$E(N)/2$ with a
        confidence of at least~$1-8\delta/\ell=1-16\delta/\log_2 n$ for any $x
        \in U'$ when processing the subsequence~$\mathbf{x}_{i^*}$. Thus,
        denoting the internal random bits of~$\mathcal{S}_{i^*}$ by $r$, for
        any~$\mathbf{x} \in {U'}^n$, we have that~$\mathbb{E}_r\left[ 
        \mu(\mathbf{x}) \right] \leq 16\delta/\log_2 n$ and 
        $\mathbb{E}_r\left[ \eta(\mathbf{x}) \right] \leq 16\delta/\log_2 n$.
        Therefore, once again by Markov's inequality, we have
        $$ \Pr_r\left[ \mu(\mathbf{x}) \geq 64\delta/\log_2 n \right] \leq \frac{1}{4}, \;\;\;\; \Pr_r\left[ \eta(\mathbf{x}) \geq 64\delta/\log_2 n \right] \leq \frac{1}{4}. $$
        Thus, by a union bound, we have that 
        $$ \Pr_r\left[ \mu(\mathbf{x}) \geq 64\delta/\log_2 n \;\vee\; \eta(\mathbf{x}) \geq 64\delta/\log_2 n \right] \leq \frac{1}{2}. $$
        Taking the complement of the above yields 
        $$ \Pr_r\left[ \mu(\mathbf{x}) < 64\delta/\log_2 n \;\wedge\; \eta(\mathbf{x}) < 64\delta/\log_2 n \right] \geq \frac{1}{2}. $$
        That is, each~$\mathbf{x}$ satisfies~$\mu(\mathbf{x}) < 64\delta/\log_2
        n$ and $\eta(\mathbf{x}) < 64\delta/\log_2 n$ with a probability of at
        least~$1/2$. An averaging argument implies that there is some choice
        for the internal randomness of~$\mathcal{S}_{i^*}$, i.e., $r$, that
        satisfies~$\mu(\mathbf{x}) < 64\delta/\log_2 n$ and $\eta(\mathbf{x}) <
        64\delta/\log_2 n$ at the same time for at least half of all~${u'}^n$
        sequences.
    \end{proof}

    We thus fix the internal randomness of~$\mathcal{S}_{i^*}$ to the string
    indicated by \Cref{lemma:derandomization} and use the estimation
    function~$g(\cdot)$ that arises from it to encode the an arbitrary sequence
    from the ${u'}^n/2$ sequences in \Cref{lemma:derandomization}. The first,
    second, and third steps of the encoding take
    up~$(n-(2^{i^*+1}-1)\cdot\sqrt{n}) \cdot \log_2 u'$, $|\mathcal{S}_{i^*}|$,
    and $(2^{i^*+1}-1)\cdot\sqrt{n}$ bits, respectively. Since we have
    $\eta(\mathbf{x}) \leq 64\delta/\log_2 n$ for all sequencees $\mathbf{x}$
    that we consider, the number of 1s we store in the bitmap in the third step
    is at most~$\frac{64\delta}{\log_2 n} \cdot (2^{i^*+1}-1)\cdot\sqrt{n}$ bits. We
    explicitly store the keys that correspond to these bits in the fourth step,
    taking~$\frac{64\delta}{\log_2 n} \cdot (2^{i^*+1}-1)\cdot\sqrt{n} \cdot
    \log_2 u'$ bits. Finally, for the keys corresponding to the~0 bits in the
    bitmap from the third step, since~$\mu(\mathbf{x})$ determines the size of
    the set we index in the last step, it takes
    { \small
    \begin{align*}
        & \left( 1 - \frac{64\delta}{\log_2 n} \right) \cdot (2^{i^*+1}-1)\cdot\sqrt{n} \cdot \log_2 \left( \mu(\mathbf{x}) \cdot u' \right) \\
        & \;\;\;\; = \left( 1 - \frac{64\delta}{\log_2 n} \right) \cdot (2^{i^*+1}-1)\cdot\sqrt{n} \cdot \left[ \log_2 u' - \log_2\log_2 n - \log_2 \frac{1}{\delta} + 4 \right]
    \end{align*}
    }
    bits to encode them. The sum of these costs cannot be smaller than the
    entropy of encoding a random sequence from these~${u'}^n/2$ sequences,
    i.e., $n \cdot \log_2 u' - 1$ bits. That is, summing up the costs of the
    encoding, the following inequality holds:
    \begin{align*}
        |&\mathcal{S}_{i^*}| + n \cdot \log_2 u' \\ 
        & + (2^{i^*+1}-1) \cdot \sqrt{n} \cdot \left( 1 - \left( 1-\frac{64\delta}{\log_2 n} \right) \cdot \left[ \log_2\log_2 n + \log_2 \frac{1}{\delta} - 4 \right] \right) \\
        & \geq n \cdot \log_2 u' - 1.
    \end{align*}
    Simplifying the above inequality, we get that $|\mathcal{S}_{i^*}|$ must be
    at least
    \begin{align*}
        (2^{i^*+1}-1) \sqrt{n} \cdot \left( \left( 1-\frac{64\delta}{\log_2 n} \right) \cdot \left[ \log_2\log_2 n + \log_2 \frac{1}{\delta} - 4 \right] - 1 \right) - 1,
    \end{align*}
    which is equivalent to
    \begin{align*}
        |\mathcal{S}_{i^*}| \geq \sum_{j=1}^{i^*} |\mathbf{x}_j| \cdot \left( \log_2\log_2 n + \log_2 \frac{1}{\delta} - \Theta(1) \right),
    \end{align*}
    for~$n$ large enough or~$\delta$ small enough. Letting~$n' =
    \sum_{j=1}^{i^*} |\mathbf{x}_j|$ and noting that~$\log_2\log_2 n \geq
    \log_2\log_2 n'$, we have that 
    \begin{align*}
        |\mathcal{S}_{i^*}| \geq n' \cdot \left( \log_2\log_2 n' + \log_2 \frac{1}{\delta} - \Theta(1) \right).
    \end{align*}
    Finally, recalling that the stream processed by~$\mathcal{S}_{i^*}$ had a
    length of~$N_{i^*}=n' \cdot E(N_{i^*})$ and rewriting $n'$ as
    $N_{i^*}/E(N_{i^*})$ yields
    \begin{align*}
        |\mathcal{S}_{i^*}| \geq N_{i^*}/E(N_{i^*}) \cdot \left[ \log_2\log_2 (N_{i^*}/E(N_{i^*})) + \log_2 \frac{1}{\delta} - \Theta(1) \right],
    \end{align*}
    which proves the desired result.
\end{proof}
\fi

\end{document}
\endinput
%%
%% End of file `sample-acmsmall.tex'.

%% Rights management information.  This information is sent to you
%% when you complete the rights form.  These commands have SAMPLE
%% values in them; it is your responsibility as an author to replace
%% the commands and values with those provided to you when you
%% complete the rights form.
%\setcopyright{acmlicensed}
%\copyrightyear{2026}
%\acmYear{2026}
%\acmDOI{XXXXXXX.XXXXXXX}
%% These commands are for a PROCEEDINGS abstract or paper.
%\acmConference[SIGMOD '26]{Make sure to enter the correct
%  conference title from your rights confirmation email}{May 31--June 05,
%  2026}{Bengaluru, India}
%%
%%  Uncomment \acmBooktitle if the title of the proceedings is different
%%  from ``Proceedings of ...''!
%%
%%\acmBooktitle{Woodstock '18: ACM Symposium on Neural Gaze Detection,
%%  June 03--05, 2018, Woodstock, NY}
%\acmISBN{978-1-4503-XXXX-X/2018/06}